\documentclass[cmp,11pt,tightenlines,referee,nofootinbib]{revtex4}

\usepackage{pre_all}
\usepackage{pre_GSF}
\DeclareFontFamily{U}{bbold}{}
\DeclareFontShape{U}{bbold}{m}{n}{<-> bbold10}{}
\usepackage{xurl}
\usepackage{microtype}
\usepackage{upquote}
\allowdisplaybreaks

\hypersetup{
pdfstartview = {FitH},
}
\hypersetup{
	colorlinks=true,
	linkcolor=brown,
	citecolor=red,
	urlcolor=blue
}

\newcommand{\Vsig}{\mathbb V_{\sigma}}
\newcommand{\Msig}{\mathcal M_{\sigma}}
\newcommand{\etasig}{\eta_{\sigma}}
\newcommand{\Osig}{\mathcal O_{\sigma}}
\newcommand{\ambip}[2]{(#1,#2)_{\sigma}}
\newcommand{\tanip}[2]{\langle #1,#2\rangle}
\newcommand{\trspin}{\operatorname{tr}_{2}}
\newcommand{\trvec}{\operatorname{Tr}_{3}}
\newcommand{\Inertia}{\operatorname{In}}
\newcommand{\impart}{\operatorname{Im}}
\newcommand{\sgnop}{\operatorname{sgn}}

\definecolor{currentrevision}{RGB}{0,120,65}

\usepackage{titlesec}

\begin{document}

\title{Holonomy Reconstruction and Character Varieties of Lorentzian Curved Tetrahedra}

\author{{\bf Hongguang Liu}}\email{liuhongguang@westlake.edu.cn}
\affiliation{Institute for Theoretical Sciences, and Departments of Physics and Astronomy, Westlake University, Hangzhou 310030, China}
\affiliation{Institute of Natural Sciences, Westlake Institute for Advanced Study, Hangzhou 310024, China}

\author{{\bf Qiaoyin Pan}}\email{qpan@tsinghua.edu.cn}
\affiliation{Yau Mathematical Sciences Center, Tsinghua University, Jingzhai, Haidian district, Beijing 100084, China}

\date{\today}

\begin{abstract}
We give a recognition and reconstruction theorem for finite strictly convex
tetrahedra in $\dS^3$ and $\AdS^3$ with spacelike, timelike, or null faces.
{The input consists of four nontrivial based $\SO^+(1,2)$ holonomies
satisfying the closure relation. We construct their Gram data and give a
global condition ensuring that the reconstructed tetrahedron lies
within the required geometric domain. This condition is strict copositivity
of the signed inverse Gram form on the outward branch. Together with
nondegeneracy of the Gram data, it guarantees a unique finite tetrahedron up to
ambient isometry, with exactly the prescribed Levi--Civita face holonomies.
We express this criterion as finitely many polynomial inequalities in
trace coordinates, leading to an identification of the finite tetrahedra and a subset of the relative
$\SL(2,\mathbb R)$ character varieties of the four-holed sphere. We also
discuss the degenerate Gram strata, vector closure in the
curvature-zero limit, and the projective dual tetrahedra, which include ideal and hyperideal tetrahedral sectors. The results provide classical geometric input for quantizing
Lorentzian curved tetrahedra and for quantum gravity models with a nonzero
cosmological constant.}
\end{abstract}

\maketitle

\tableofcontents

\section{Introduction}

\medskip

The classical theorem of Minkowski is one of the basic reconstruction results of
convex geometry. If a collection of nonzero vectors
$\boldsymbol{A}_{i}=A_i\boldsymbol{n}_i\in\mathbb R^3$, with
$A_i>0$ and $\boldsymbol{n}_i$ pairwise distinct unit normals, spans $\mathbb R^3$ and satisfies
$$
  \sum_i \boldsymbol{A}_i=0\,,
$$
then these vectors are the outward area-normal vectors of a convex Euclidean
polyhedron, unique up to translation \cite{Minkowski:1897,Schneider:2014}.
For a tetrahedron, this statement converts a finite set of algebraic data into a
complete geometry.
Lorentzian Minkowski problems and convex polyhedra in flat and
constant-curvature Lorentzian space forms have been studied in
\cite{Schlenker:2001,Fillastre:2015,BonsanteFillastre2017}.
In discrete gravity, such area-normal vectors are the classical face
fluxes, and closure is the local constraint expressing their compatibility
as the boundary of a polyhedral cell. With a nonzero cosmological constant,
constant-curvature {complexes} provide  a corresponding discretization of
geometry \cite{BahrDittrich2010}. The reconstruction of these cells from
finite parallel transports is therefore a natural prerequisite for their
use as gravitational data. %
{In this paper, we restrict attention to tetrahedra, the
three-dimensional simplices that serve as elementary cells in
simplicial discretizations of geometry. }

At nonzero constant curvature, however, area-normal vectors at different
faces do not naturally live in a single vector space. Comparing them requires
parallel transport, and the additive closure relation
must be replaced by a relation between based holonomies.
The natural finite
curvature datum attached to a triangular face is instead its Levi--Civita
holonomy.
A nontrivial face holonomy determines an invariant normal line and a
stabilizer element. In the elliptic and hyperbolic sectors, the latter is
described by a rotation angle or a boost parameter. In the parabolic sector,
the based holonomy determines a nilpotent logarithm vector.
In the compact real form, closing holonomies
were used by Haggard, Han, and Riello to reconstruct spherical and hyperbolic
tetrahedra \cite{Haggard:2015ima}.
The corresponding compact phase space and its quantization have been
studied in \cite{HanHsiaoPan:2024,HsiaoPan:2025}.
The Lorentzian problem involves the noncompact real form
$\SL(2,\mathbb R)$ and all three causal types of faces. It also requires a
global test ensuring that the reconstructed vertices bound a finite
tetrahedron in the selected de Sitter or anti-de Sitter domain.

The same group-valued closure relation also has a moduli-space meaning. Four
closing based holonomies, modulo a common change of frame, are precisely
boundary monodromy data for a flat connection on a four-holed sphere. After
the boundary conjugacy classes are fixed, the natural parameter space is a
relative $\PSL(2,\mathbb R)$ or $\SL(2,\mathbb R)$ character variety.
This connects the reconstruction problem with the Hamiltonian
description of Chern--Simons theory and with its role in three-dimensional
gravity \cite{AchucarroTownsend1986,Witten1988Gravity,Witten:1989}.
 { Note that the flat connection on the four-holed
sphere encodes their closure relation, not the vanishing of the curvature
of the tetrahedron.}
The question is then which points of this phase space describe finite
Lorentzian tetrahedra.

This paper answers both the pointwise reconstruction question and the
recognition question on the character variety. We prove a holonomy
Minkowski theorem for finite tetrahedra in $\dS^3$ and $\AdS^3$, including
spacelike, timelike, and null faces. We then express its hypotheses as
finite trace-polynomial tests on the relative character surface. These
tests distinguish the two curved reconstruction loci and identify the
geometric information that survives on their degenerate strata.

We use the unit-radius quadrics
$\mathcal M_{+1}=\dS^3\subset\mathbb R^{1,3}$ and
$\mathcal M_{-1}=\AdS^3\subset\mathbb R^{2,2}$.
Both have tangent signature $(1,2)$, so their based face holonomies lie in
the same group $\SO^+(1,2)$, with spin cover $\SL(2,\mathbb R)$. 
 Precisely speaking, the curved tetrahedra considered in our reconstruction
theorem are finite quadric sections of simplicial cones whose every
nonzero ray has the sign of the corresponding quadric
(Definition~\ref{def:generalized_tetrahedron}). {Their} faces are
ordinary triangular patches of totally geodesic supporting surfaces.
Null faces are allowed, but null vertex rays and through-infinity or
two-sheeted branches are outside the reconstruction theorem, compared to \cite{Haggard:2015ima}.

\medskip
{\it 1. Lorentzian holonomy data and the first main result.}

Fix an abstract tetrahedron with labeled vertices
$V_1,\ldots,V_4$. Choose a common base vertex and a distinguished edge to base the data of the face opposite that vertex.
The orientations and basing of the four face-boundary loops
follow the simple-path convention of
Definition~\ref{def:simple_path_convention}. The four resulting face loops
give the based closure relation for face holonomies
$$
  O_4O_3O_2O_1=\id\,,
  \qquad O_i\in\SO^+(1\,,2)\,.
$$
Every nontrivial $O_i$ has an invariant normal line. Choose a representative
$n_i$ on that line, with squared norm $-1$ or $+1$ when it is non-null.
Elliptic, hyperbolic, and parabolic holonomies correspond, respectively, to
spacelike, timelike, and null faces. For a parabolic holonomy write its
logarithm vector as $\kappa_i=\Theta_i n_i$. The rescaling
$n_i\mapsto c_i n_i$, $\Theta_i\mapsto\Theta_i/c_i$, with $c_i\neq0$,
leaves the holonomy unchanged.

The simple-path convention determines a symmetric reconstructed
matrix $G$ { from the invariant normal representatives $n_i$, with the prescribed parallel transports inserted in their pairings.} It is constructed from the holonomies
before an ambient tetrahedron is known to exist. Four ordered Lorentzian
triple products $\chi_i$ of normal representatives $n_i$, with the same transport convention, determine
the normal signs. Their squares satisfy
$\det G_{\hat i}=-\chi_i^2$, where $G_{\hat i}$ is obtained by deleting
row and column $i$. When these minors are nonzero, fixing global chirality
allows a unique choice of signs for which every $\chi_i$ is positive.

\begin{theorem}[Theorem~\ref{theorem:unified_minkowski}]
Let $O_1,\ldots,O_4\in\SO^+(1,2)$ be nontrivial and satisfy
$O_4O_3O_2O_1=\id$. Fix the simple-path convention and suppose that
$G$ and every $G_{\hat i}$ are nonsingular. Select the normal signs so that
every $\chi_i>0$, and denote the resulting matrix again by $G$.
If {
\be  \lambda^\top\bigl[-\sgnop(\det G)G^{-1}\bigr]\lambda>0
  \qquad\text{for every }0\neq\lambda\in\mathbb R_{\geq0}^4\,,
  \label{eq:copositivity}
\ee}
the data determine a unique finite strictly convex tetrahedron in
$\mathcal M_{\sigma_G}$, up to ambient isometry, where
$$
  \sigma_G=-\sgnop(\det G)\,.
$$
Its outward face-normal Gram matrix is $G$, and its based Levi--Civita
face holonomies are exactly the prescribed $O_i$.
\end{theorem}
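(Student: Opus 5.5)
We reconstruct the tetrahedron from $G$ by linear algebra in the ambient space, and then identify its holonomies with the given $O_i$ using the simple-path convention.

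\emph{Step 1: an ambient model for $G$.} Since $G$ is nonsingular and symmetric, its inertia fixes which ambient form it is compatible with. Write $\sigma=\sigma_G=-\sgnop(\det G)$. The goal is to realize $G$ as the Gram matrix of four vectors $N_1,\dots,N_4$ in $\mathbb R^{1,3}$ when $\sigma=+1$, and in $\mathbb R^{2,2}$ when $\sigma=-1$. The determinant of a Gram matrix of a basis of $\mathbb R^{1,3}$ is negative, and that of $\mathbb R^{2,2}$ is positive. Sylvester's law therefore reduces the matching to checking the full inertia. For this we use $\det G_{\hat i}=-\chi_i^2<0$: each $3\times3$ principal minor is the Gram matrix of three normals tangent at a common vertex, so it has tangent signature $(1,2)$. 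Interlacing then forces $G$ to have inertia $(1,3)$ or $(2,2)$, and the sign of $\det G$ decides which. Hence there exist $N_i$, unique up to $\mathrm O(\Vsig)$, with $\ambip{N_i}{N_j}=G_{ij}$.

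\emph{Step 2: vertices and finiteness.} Define the dual basis $v_i$ by $\ambip{v_i}{N_j}=\delta_{ij}$, so that the Gram matrix of the $v_i$ is $G^{-1}$. Every nonzero ray $\lambda\cdot v=\sum_j\lambda_j v_j$ in the cone spanned by the vertices has squared norm $\lambda^\top G^{-1}\lambda$. Condition \eqref{eq:copositivity} says exactly that $\sigma\,\lambda^\top G^{-1}\lambda>0$ for all $0\neq\lambda\ge0$. Because $\sigma=-\sgnop(\det G)$ and $\sigma^2=1$, this is the same as
$$
-\sgnop(\det G)\,\lambda^\top G^{-1}\lambda>0 .
$$
So the whole simplicial cone has the sign of $\Msig$. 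This is Definition~\ref{def:generalized_tetrahedron}, and normalizing the cone's rays onto $\Msig$ gives a finite strictly convex tetrahedron. Its faces lie in the hyperplanes $N_i^\perp$. Whether $N_i$ points inward or outward is settled by $\ambip{v_i}{N_i}=1$; a global sign flip and orientation may be needed here.

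\emph{Step 3: holonomies.} For a tetrahedron built this way, the based Levi--Civita holonomy of face $i$ is the product of parallel transports along its boundary. That product is an element of $\SO^+(1,2)$ fixing the tangent normal and rotating or boosting by the face area. The positivity of the triple products $\chi_i$ is what fixes the sign and chirality of each normal, and we choose the orientation so that these conventions agree. We then compare the holonomy-built $G$, with its transports inserted as in the simple-path convention, against the geometric Gram matrix. Both are assembled from the same transported normals, and the closure relation $O_4O_3O_2O_1=\id$ ensures there are no inconsistencies.

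\emph{Main obstacle.} The hard step is showing that the holonomies of the reconstructed tetrahedron equal the $O_i$, not merely that they have the same normals. A face holonomy is determined by its invariant line and its angle, boost, or nilpotent parameter. The Gram data fixes the lines and the dihedral angles, hence the face geometry and therefore the face area. We need the stabilizer parameter $\Theta_i$ to equal that area. My first attempt is to use the classical fact that the face holonomy is the rotation by the area for the tangential curvature sign. The closure relation then forces consistency: three holonomies determine the fourth. A rigidity argument handles the rest, since the map from tetrahedra to relative characters is locally injective by nondegeneracy of $G$. The parabolic case, where normals are defined only up to the scaling $c_i$, would be treated separately via $\kappa_i$.

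Uniqueness up to ambient isometry follows from the uniqueness of the $N_i$ in Step 1.
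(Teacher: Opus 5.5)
Your Steps 1 and 2 follow the paper's route: inertia from $\det G_{\hat i}=-\chi_i^2$ and interlacing, a Sylvester factorization, and the dual-basis computation of Lemma~\ref{lemma:simplicial_halfspace_realization}. One correction there: the cone generated by your $v_i$ has the $N_i$ as \emph{inward} normals, so take the cone generated by the $-v_i$, for which $\ambip{-v_i}{N_i}<0$.

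The genuine gap is Step 3, which asserts the holonomy matching $O_i=\widehat O_i(T_G)$ rather than proving it. There are two missing pieces.
\begin{itemize}
\item ``Both are assembled from the same transported normals'' is not automatic. Fix the frame at $V_4$ so that the reconstructed normals of $F_1,F_2,F_3$ are the input $n_1,n_2,n_3$. The special-edge transport $u$ of the fourth reconstructed normal is then constrained only by $\tanip{n_1}{u}=G_{14}$, $\tanip{n_3}{u}=G_{34}$ and $\tanip{u}{u}=G_{44}$; the entry $G_{24}$ involves $\widehat O_1$ and does not constrain $u$ alone. So $u=n_4+c\,n_1\times n_3$ with two admissible values of $c$, and you need the sign of $\chi_2$ to conclude $u=n_4$.
\item More seriously, your rigidity argument points the wrong way and is only local. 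Local injectivity of tetrahedra $\mapsto$ characters says nothing about whether the \emph{input} tuple, which shares its normal lines and its $G$ with $T_G$, is the holonomy tuple of $T_G$. What you need is that the stabilizer coordinates are globally determined by the fixed representatives, closure, and $G$. Closure alone leaves parameters free, and you cannot compare the input $\Theta_i$ with areas of $T_G$ before matching is known.
\end{itemize}

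The missing ingredient is Lemma~\ref{lemma:global_stabilizer_parameter_uniqueness}. Write $O_i=\exp(J(s_in_i))$ and $\widehat O_i(T_G)=\exp(J(t_in_i))$.
\begin{itemize}
\item Both tuples give the same exceptional entry: $f_1(s_1)=f_1(t_1)=G_{24}$ with $f_1(u)=\tanip{n_2}{\exp(J(un_1))n_4}$. Its derivative $f_1'(u)=\tanip{n_2\times n_1}{\exp(J(un_1))n_4}$ is exactly the transported triple product $\chi_3$.
\item $f_1$ has the form $A+B\cos u+C\sin u$, $A+B\cosh u+C\sinh u$, or $A+Bu+\frac12Cu^2$. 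Hence two distinct solutions of $f_1=G_{24}$ have derivatives summing to zero. The level set can have two points, and $\chi_3>0$ (for the input by branch selection, for $T_G$ by outward convexity) forces $s_1=t_1$.
\item Cancel $\exp(J(s_1n_1))$ in the two closure relations, apply both sides to $n_2$, and pair with $n_4$. This gives the same situation for $f_3(u)=\tanip{n_4}{\exp(J(un_3))n_2}$, whose derivative is the triple product $\chi_1$, positive at both $s_3$ and $t_3$. So $s_3=t_3$.
\item Then $\exp(J((t_4-s_4)n_4))$ fixes $\exp(J(s_3n_3))n_2$, which is not in $\mathbb Rn_4$ because $\chi_1\neq0$. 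Hence $s_4=t_4$, and closure gives $s_2=t_2$.
\end{itemize}

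So the positive triple products have a second, essential role your proposal does not use. Besides selecting outward representatives, they select the correct root of the exceptional-entry equation. The argument is uniform in the causal type, so parabolic faces need no separate treatment via $\kappa_i$ once a null representative is fixed.
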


The inequality \eqref{eq:copositivity} is called strict copositivity of the signed
inverse Gram form. Its geometric meaning is that every nonzero nonnegative
combination of the four selected vertex rays stays in the required
quadric domain. 
We therefore also call this the domain condition required for the reconstruction. The determinant sign selects $\dS^3$ when $\det G<0$
and $\AdS^3$ when $\det G>0$. The principal minors ensure four finite
vertices, while the triple products select outward representatives.
Neither of these tests guarantees the domain condition.
Lemma
\ref{lemma:finite_domain_admissibility_criterion} makes the domain
condition effective by reducing it to tests on the six edges, four faces,
and interior of the coefficient simplex.

The proof first factors $G$ in the selected ambient signature.
Strict copositivity gives a finite tetrahedron, and stabilizer-parameter
uniqueness establishes that its based face holonomies equal the input
tuple, completing the reconstruction.

For noncentral spin lifts $H_i\in\SL(2,\mathbb R)$ with
$H_4H_3H_2H_1=\epsilon\id$, $\epsilon=\pm1$, the projections satisfy the
same criterion (Corollary~\ref{cor:SL_reconstruction}). This central sign
is independent of the outward-normal branch. A consistent choice of spin lifts for all edge transports realizes the prescribed tuple when $\epsilon=+1$. For
$\epsilon=-1$, changing one central sign produces a closing tuple with
unchanged projected geometry. Central holonomies do not determine normal
lines and are excluded, whereas nontrivial parabolic holonomies are included.

\medskip
{\it {2.} Character varieties and the second main result.}

After choosing closing spin lifts, the relation
$H_4H_3H_2H_1=\id$ is precisely the relation among the boundary
holonomies of a flat $\SL(2,\mathbb R)$ connection on a four-holed
sphere. This observation allows us to formulate the reconstruction
problem within the moduli space of flat connections. The second
main result identifies, using trace coordinates, which points of
this moduli space describe finite curved tetrahedra.

Let $S_{0,4}$ be an oriented four-holed sphere, with boundary loops $\gamma_i$
ordered so that $\gamma_4\circ\gamma_3\circ\gamma_2\circ\gamma_1=1$. For four prescribed
noncentral conjugacy classes $\boldsymbol C=(C_1,\ldots,C_4)$ in
$\SL(2,\mathbb R)$, consider
$$
  \mathcal X_{\boldsymbol C}
  :=\left\{(H_1\,,\ldots\,,H_4)\in\prod_{i=1}^4 C_i
  \,\middle|\,H_4H_3H_2H_1=\id\right\}
  \big/\SL(2\,,\mathbb R)\,,
$$
where the quotient is by simultaneous conjugation.   Its adapted trace coordinates are
$$
  t_i:=\trspin(H_i)\,,\qquad
  x:=\trspin(H_2H_1)\,,\quad
  y:=\trspin(H_3H_2)\,,\quad
  z:=\trspin(H_1H_3)\,.
$$
They satisfy the Fricke equation $\mathcal F_{\mathbf t}(x,y,z)=0$ of
\eqref{eq:adapted_fricke_cubic}. It is easy to see that this relative character
variety is generically two-dimensional.  Real points of the Fricke surface need not arise from $\mathrm{SL}(2,\mathbb R)$ representations. We therefore restrict to the trace-coordinate image
$\mathcal Y$ of a connected
component of $\mathcal X_{\boldsymbol C}$.

Connected traces give the polynomial matrix
$\mathcal G_{\mathbf t}(x,y,z)$ of
\eqref{eq:character_connected_gram}, including the exceptional transported
pairing. It is diagonally congruent to the reconstructed Gram matrix $G$, even
when some faces are null. Define
$$
  \Delta_{\mathbf t}:=\det\mathcal G_{\mathbf t}\,,
  \qquad
  m_i:=\det(\mathcal G_{\mathbf t})_{\hat i}\,.
$$
To express the outward domain condition polynomially, set
$$
  P_i:=-\frac{\partial\mathcal F_{\mathbf t}}{\partial t_i}\,,\qquad
D_P:=\operatorname{diag}(P_1\,,\ldots\,,P_4)\,,\qquad
  A_{\mathbf t}:=-D_P\operatorname{adj}(\mathcal G_{\mathbf t})D_P\,.
$$
The $P_i$ encode the transported oriented triple products, and the
factors $D_P$ incorporate the outward branch choice into $A_{\mathbf t}$
(Lemma~\ref{lemma:character_trace_triples}).
On the Fricke surface, $m_i=-4P_i^2$. When $\Delta_{\mathbf t}\neq0$
and every $m_i\neq0$, strict copositivity of $A_{\mathbf t}$ is equivalent
to the domain condition in the first theorem.

\begin{theorem}[Theorem~\ref{theorem:character_geometric_chambers}]
For $\sigma\in\{+1,-1\}$, the finite nondegenerate reconstruction locus
in $\mathcal Y$ for $\mathcal M_\sigma$ is
$$
\mathcal X_\sigma:=\bigl\{p\in\mathcal Y\,\big|\,
 {-\sigma\Delta_{\mathbf t}(p)>0}\,,\quad
 m_i(p)\neq0\ \text{for every }i\,,\quad A_{\mathbf t}(p)\ \text{strictly copositive}\bigr\}\,.
$$
Every point determines a finite strictly convex tetrahedron uniquely up to
ambient isometry. The locus
$$
  \mathcal X_{\rm flat}
  :=\{p\in\mathcal Y\mid\Delta_{\mathbf t}(p)=0\,,\quad
  m_i(p)\neq0\ \text{for every }i\}
$$
determines a labeled tetrahedron in flat Minkowski space
$\mathbb R^{1,2}$, with scale fixed by the face area-vector data,
uniquely up to Lorentzian isometries and translations.
These loci may be empty. Their defining conditions are invariant under
closing-lift sign changes and descend to the corresponding relative
$\PSL(2,\mathbb R)$ character locus.
\end{theorem}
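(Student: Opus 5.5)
The plan is to reduce the character-variety statement to the first main theorem (Theorem~\ref{theorem:unified_minkowski}) via the diagonal congruence between $\mathcal G_{\mathbf t}$ and the reconstructed Gram matrix $G$. Fix a point $p\in\mathcal Y$ and choose a representative closing tuple $(H_1,\ldots,H_4)$ with $H_4H_3H_2H_1=\id$; its projections $O_i\in\SO^+(1,2)$ are nontrivial because the classes $C_i$ are noncentral. The first step is to write $\mathcal G_{\mathbf t}=D\,G\,D$ with $D$ diagonal and invertible. The entries of $D$ come from the normalization of the invariant normal lines, e.g.\ factors like $\sqrt{|t_i^2-4|}$ in the non-null case and the chosen nilpotent scale $\Theta_i$ in the parabolic case. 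Then $\sgnop\Delta_{\mathbf t}=\sgnop\det G$, and $m_i$ vanishes if and only if $\det G_{\hat i}$ does. So $-\sigma\Delta_{\mathbf t}>0$ is exactly $\sigma_G=\sigma$, and $m_i\neq0$ is the minor hypothesis.

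The second step handles the outward branch. By Lemma~\ref{lemma:character_trace_triples}, $P_i$ equals the transported triple product $\chi_i$ up to a positive factor (the same diagonal weights) and a global sign fixed by the chirality. Flipping $n_i\mapsto -n_i$ conjugates $G$ by a sign matrix $S$. The outward choice is therefore $S=\sgnop D_P$, up to the global sign. Now compute
\[
-D_P\operatorname{adj}(\mathcal G_{\mathbf t})D_P=-\det(\mathcal G_{\mathbf t})\,D_P\,\mathcal G_{\mathbf t}^{-1}D_P .
\]
Substituting the congruence, this is a positive multiple of $-\det G$ times $\Lambda\, S G^{-1}S\,\Lambda$, where $\Lambda$ is a positive diagonal matrix. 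Positive diagonal congruence preserves strict copositivity, since it maps the cone $\mathbb R_{\ge0}^4$ to itself. A positive scalar also preserves it. Hence strict copositivity of $A_{\mathbf t}$ is equivalent to \eqref{eq:copositivity} for the outward-signed $G$. Note that $\sgnop(\det G)\,|\det G|=\det G$, so the sign works out. Theorem~\ref{theorem:unified_minkowski} (through Corollary~\ref{cor:SL_reconstruction} for spin lifts) then gives existence and uniqueness up to isometry. Independence of the representative follows because simultaneous conjugation acts by an ambient isometry.

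The third step is the flat locus. When $\Delta_{\mathbf t}=0$ and all $m_i\neq0$, $G$ has rank three with a one-dimensional kernel spanned by $\lambda$. Every coordinate of $\lambda$ is nonzero, because a zero coordinate would make the corresponding principal minor vanish. I would identify $\lambda$ with the cofactor vector, $\lambda_i\propto\pm P_i$. Then $\sum_i\lambda_i n_i=0$ after factoring $G$ in signature $(1,2)$. This is a vector closure relation with area vectors $\lambda_i n_i$. With outward signs, the flat Lorentzian Minkowski theorem, or the curvature-zero limit argument of the paper, reconstructs a tetrahedron in $\mathbb R^{1,2}$ unique up to Poincaré transformations, with the scale fixed by the $\lambda_i$.

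The last step is invariance. Changing closing-lift signs $H_i\mapsto\epsilon_iH_i$ with $\prod_i\epsilon_i=1$ flips signs of the $t_i$ and of the pair traces. I would check that this acts on $\mathcal G_{\mathbf t}$ and on $D_P$ by sign matrices that cancel in $A_{\mathbf t}$ and leave $\Delta_{\mathbf t}$ and $m_i$ unchanged. The conditions then descend to $\PSL(2,\mathbb R)$. I expect the main obstacle to be the sign bookkeeping in the second step, especially the exceptional transported pairing and null faces. There the diagonal factor is not canonical, and I must verify that $P_i$ really carries the same sign as $\chi_i$ under the simple-path convention. I would try first to verify this directly from Lemma~\ref{lemma:character_trace_triples} and the relation $m_i=-4P_i^2$, which pins down $|P_i|$ relative to $|\chi_i|$.
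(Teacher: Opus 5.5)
Your route is the paper's: the congruence $\mathcal G_{\mathbf t}=DGD$, outward signs $\sgnop D_P$ from Lemma~\ref{lemma:character_trace_triples}, and the adjugate computation exhibiting $A_{\mathbf t}$ as a positive multiple of a positive diagonal congruence of $-\sgnop(\det G)\,(SGS)^{-1}$, followed by Theorem~\ref{theorem:unified_minkowski}. The wall is handled the same way, by a nowhere-vanishing kernel vector plus the Minkowski argument, and the descent by sign-matrix invariance. The sign bookkeeping you flag is settled by taking the preliminary representatives $n_i:=b_i/|d_i|$, so that $D>0$. Then $2P_i=(\prod_{p\neq i}d_p)\chi_i$ gives $\sgnop P_i=\sgnop\chi_i$ directly. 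For a general $D$, the ratio $P_i/\chi_i$ carries the face-dependent sign $\sgnop(d_i)\sgnop(\det D)$, not only a global one.

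Three points still need repair.

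(i) The statement identifies the reconstruction locus with $\mathcal X_\sigma$, so you also owe the reverse inclusion. Suppose the holonomies at $p$ are the face holonomies of a finite strictly convex tetrahedron in $\mathcal M_\sigma$. Proposition~\ref{prop:dihedral_equivalence} makes the reconstructed $G$ (up to diagonal congruence) its ambient Gram matrix. Lemmas~\ref{lemma:ambient_gram_reconstruction} and~\ref{lemma:vertex_reality} then give $\sgnop\det G=-\sigma$ and $\det G_{\hat i}\neq0$. Domain admissibility, together with the identity $\sigma\ambip{X}{X}=\mu^\top\sigma G^{-1}\mu$ from the proof of Lemma~\ref{lemma:simplicial_halfspace_realization}, gives strict copositivity on the outward branch. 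The paper's proof contains exactly this converse.

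(ii) On the wall you assume a factorization of $G$ through $\mathbb R^{1,2}$. But $G$ is not a Gram matrix of four tangent vectors, since its $(2,4)$ entry carries the insertion $O_1$. You must therefore show that its nonzero inertia is $(1,2)$. This follows from $\Inertia(G_{\hat i})=(1,2)$ and Cauchy interlacing, as in Lemma~\ref{lemma:lorentzian_gram_signature}.

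(iii) Your scale fixing is wrong as written. We have $\ker G=D\ker\mathcal G_{\mathbf t}$. If $u_i:=d_in_i$ is the induced factorization of $\mathcal G_{\mathbf t}$, then $\ker\mathcal G_{\mathbf t}$ is spanned by $a_i:=(-1)^{i+1}\det(u_1,\ldots,\widehat{u_i},\ldots,u_4)$, with $\operatorname{adj}\mathcal G_{\mathbf t}=-aa^\top$ and $a_i^2=-m_i=4P_i^2$.
\begin{itemize}
\item Hence $\lambda_i\propto d_ia_i$, not $\lambda_i\propto P_i$.
\item The cofactor vector of your factorization of $G$ yields $\lambda_in_i=a_iu_i/\det D$, whose scale depends on the arbitrary null normalizations.
\item The fix is to factor $\mathcal G_{\mathbf t}$ itself and take $\mathbf E_i=a_iu_i$, as the paper does. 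Since $u_i\mapsto\delta_iu_i$ multiplies every $\mathbf E_i$ by $\prod_j\delta_j$, no separate outward selection is needed.
\end{itemize}

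Relatedly, points of $\mathcal Y$ are trace coordinates. Well-definedness of the tetrahedron is therefore best argued from the fact that $S_P\mathcal G_{\mathbf t}S_P$ is a function of $p$, not from conjugation invariance alone.
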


Proposition~\ref{prop:character_polynomial_domain_criterion} replaces
strict copositivity in this statement by finitely many polynomial
inequalities, with simplifications determined by the elliptic, hyperbolic,
or parabolic boundary types. The complementary
strata retain more limited geometric information. More precisely, for
$\Delta_{\mathbf t}\neq0$, the equation $m_i=0$ identifies a null candidate
vertex line{,} while on $\Delta_{\mathbf t}=0$, vanishing principal minors $m_i$ 
distinguish deficient rank-three normal relations from lower-rank collapse.
This gives a full geometrical  classification of the reconstructed Gram data.

These geometric loci are different from the topological and dynamical
decompositions of character varieties. Benedetto--Goldman classified the
topology of nonsingular real Fricke surfaces
\cite{BenedettoGoldman1999}, while Maloni--Palesi--Tan studied the
mapping-class-group action and domains of discontinuity
\cite{MaloniPalesiTan2015}. Our conditions instead recognize a finite
Lorentzian tetrahedron within a selected character component. They do not
assert that each reconstruction locus is connected, or classify how its
components meet as the boundary conjugacy classes vary.

The following table summarizes the geometric roles of the holonomy and
Gram data in the curved reconstruction theorem. In the last row, $G$ is
computed from outward normal representatives. 
\begin{center}
\renewcommand{\arraystretch}{1.2}
\begin{tabular}{@{}ll@{}}
\hline
\parbox[t]{0.40\linewidth}{\raggedright\strut\textbf{Holonomy or Gram datum}\strut}
& \parbox[t]{0.54\linewidth}{\raggedright\strut\textbf{Geometric meaning}\strut}\\
\hline
\parbox[t]{0.40\linewidth}{\raggedright\strut Elliptic, hyperbolic, or parabolic boundary holonomy\strut}
& \parbox[t]{0.54\linewidth}{\raggedright\strut Spacelike, timelike, or null face, respectively\strut}\\[0.35ex]
\parbox[t]{0.40\linewidth}{\raggedright\strut $\det G<0$ or $\det G>0$\strut}
& \parbox[t]{0.54\linewidth}{\raggedright\strut {Ambient model:} $\dS^3$ or $\AdS^3$, respectively\strut}\\[0.35ex]
\parbox[t]{0.40\linewidth}{\raggedright\strut $\det G_{\hat i}\neq0$ for every $i$\strut}
& \parbox[t]{0.54\linewidth}{\raggedright\strut Four real, finite, nondegenerate candidate vertices\strut}\\[0.35ex]
\parbox[t]{0.40\linewidth}{\raggedright\strut Common sign of the $\chi_i$\strut}
& \parbox[t]{0.54\linewidth}{\raggedright\strut {Selection of outward normal representatives}\strut}\\[0.35ex]
\parbox[t]{0.40\linewidth}{\raggedright\strut $-\sgnop(\det G)G^{-1}$ strictly copositive\strut}
& \parbox[t]{0.54\linewidth}{\raggedright\strut {Every nonzero ray of the selected simplicial cone stays
in the required quadric domain}\strut}\\
\hline
\end{tabular}
\end{center}

\medskip
{\it 3. Flat limits and projective duality.}

Restoring the curvature radius $R$ makes the relation with vector
closure explicit. 
As
$R\to\infty$, assume that the associated geometrical data and the based frames have regular limits, as specified in
Section~\ref{sec:flat_limit}.
The limiting face vector $\mathbf E_i^0$ is the integral of the oriented
vector-valued area form. This definition includes null faces without
requiring a unit null normal. The exact face-holonomy formula gives
$$
  O_i(R)=\id-\frac{\sigma}{R^2}J(\mathbf E_i^0)+o(R^{-2})\,,
  \qquad
  \sum_{i=1}^4\mathbf E_i^0=0\,,
$$
where $J:\mathbb R^{1,2}\to\mathfrak{so}(1,2)$ is the standard
identification with infinitesimal Lorentz transformations. If every triple
of limiting face vectors is independent, the classical Minkowski theorem,
applied as in Section~\ref{sec:flat_limit}, reconstructs the bounded affine
Lorentzian tetrahedron. The limiting normal Gram matrix has rank three.
This geometric contraction requires $O_i(R)\to\id$ and generally varies
the boundary conjugacy classes. By contrast, the construction on
$\mathcal X_{\rm flat}$ uses rank-three Gram data and does not realize
fixed noncentral holonomies as flat Levi--Civita face holonomies.

Projective duality provides a second interpretation of the causal sectors.
The projective lines $[N_i]$ of the extrinsic face normals are the vertices
of a projective dual tetrahedron, defined by polarity with respect to
the ambient bilinear form. Null faces give ideal dual vertices, so an
all-null tetrahedron has an ideal dual tetrahedron.
This parallels the lightlike/ideal correspondence studied in
\cite{MeusburgerScarinci2022}. In the all-timelike
$\AdS^3$ sector, the dual tetrahedron has four hyperideal vertex lines. It is a
standard hyperideal AdS tetrahedron precisely when unit representatives
$X_i$ can be chosen coherently with
$(X_i,X_j)_{-}<-1$ for every $i\neq j$
(Proposition~\ref{prop:edge_adapted_hyperideal_polar_locus}). Conversely,
the dual of a standard nondegenerate hyperideal AdS tetrahedron is finite
and domain-contained
(Proposition~\ref{cor:standard_hyperideal_polar_copositivity}). This duality identifies a proper subset of the finite all-timelike
AdS sector. We also give the comparison
with compact spherical and hyperbolic reconstruction in \cite{Haggard:2015ima} from changing the noncompact
$\SL(2,\mathbb R)$ holonomies in our case to compact
$\SU(2)$ holonomies therein.

\medskip
{\it 4. Geometric and physical scope.}

On its smooth locus represented by irreducible flat connections,
$\mathcal X_{\boldsymbol C}$ carries the Goldman--Atiyah--Bott symplectic
structure with fixed boundary conjugacy classes
\cite{AtiyahBott:1983,Goldman:1984,GuruprasadHuebschmannJeffreyWeinstein1997}.
It is the reduced phase space of $\SL(2,\mathbb R)$ Chern--Simons theory
on the four-holed sphere. Trace functions of curves are Wilson-loop
observables. The reconstruction formulas express the causal types,
dihedral pairings, curvature sign, and domain condition in this observable
algebra. They therefore identify which classical boundary states have a
finite tetrahedral interpretation, rather than merely satisfying group
closure.
A related curvature-dependent quantization of moduli spaces of
three-dimensional gravity was constructed in \cite{KimScarinci2024}.

This gives a concrete starting point for quantizing Lorentzian curved
tetrahedra. Combinatorial quantization and quantum trace constructions
provide noncommutative counterparts of curve observables
\cite{AlekseevGrosseSchomerus1995,BonahonWong2011}. The polynomial Gram
and domain expressions derived here are classical symbols that such a
construction should reproduce.
For quantum states describing finite tetrahedra, the spectral values of the relevant geometric operators should, in the semiclassical regime, satisfy the reconstruction conditions, including the copositivity inequalities.
 Implementing this restriction, together
with operator ordering and domains in noncompact representations, is a
further problem. The present results establish the classical geometric
criterion for a quantum state-space construction, {if one exists}.

There is also a connection with quantum Teichm\"uller theory and
tetrahedral amplitudes. Changes of pants decomposition are described by
fusion kernels and modular-double $b$-$6j$ symbols
\cite{PonsotTeschner2001,NidaievTeschner2013}, whose semiclassical
geometry includes hyperbolic tetrahedra
\cite{TeschnerVartanov:2014}. Related structures enter Virasoro TQFT
formulations of three-dimensional gravity
\cite{CollierEberhardtZhang2023,CollierEberhardtZhang2024}.
Recent asymptotic results also involve truncated hyperideal AdS tetrahedra
\cite{Liu:2025tzv} and decorated ideal AdS tetrahedra \cite{liu2026mathrm}. On the coherent hyperideal dual locus identified
above, the present theorem supplies finite Lorentzian tetrahedra dual to
that geometric class. It suggests studying the volume and co-volume of the dual
tetrahedron as functions of the reconstructed trace data.

Finally, four-holed-sphere flat connections occur as tetrahedral
boundary data in Chern--Simons formulations of four-dimensional spinfoam models with a
cosmological constant \cite{Han:2021tzw,Han:2025mkc}. The Lorentzian
criterion developed here addresses the local reconstruction problem when
such boundary data lie in the split real form. Its treatment of spacelike,
timelike, and null faces provides the geometric input for considering
causal sectors beyond purely spacelike tetrahedral boundaries. A full
spinfoam application must additionally impose gluing, shape matching, and
the four-dimensional critical-point equations.

\medskip

The paper is organized as follows. Sections~\ref{sec:unified_notation}
and~\ref{sec:tetrahedra} fix the ambient and tangent conventions and define
the tetrahedral geometry. Section~\ref{sec:holonomy_data} constructs the
Gram and triple-product data from based holonomies.
Section~\ref{sec:orientation_convexity} establishes the model-selection,
outward-sign, and finite domain criteria, and
Section~\ref{sec:main_reconstruction} proves exact holonomy reconstruction.
Section~\ref{sec:flat_limit} derives the curvature-zero contraction.
Section~\ref{sec:character_varieties_flat_connections} gives the
trace-polynomial reconstruction loci and analyzes null candidate vertex
lines and the determinant wall. Section~\ref{sec:dual_tetra} develops
projective duality and the comparison of real forms.
Section~\ref{sec:conclusion_discussion} discusses further geometric and
physical questions. 

\section{Unified Ambient and Tangent Conventions}
\label{sec:unified_notation}

We write $\sigma=+1$ for the de Sitter case and $\sigma=-1$ for the anti-de Sitter case.  The ambient model is
\be
\begin{array}{c|c|c|c|c}
\sigma & \Msig & \Vsig & \etasig & \Osig\\
\hline
+1 & \dS^3 & \mathbb R^{1,3}
& \mathrm{diag}(-1,1,1,1) & \mathrm O(1,3)\\
-1 & \AdS^3 & \mathbb R^{2,2}
& \mathrm{diag}(-1,-1,1,1) & \mathrm O(2,2)
\end{array}
\ee
with ambient inner product
\be
\ambip{X}{Y}:=X^\top\etasig Y\,,
\qquad
\Msig:=\{X\in\Vsig\mid \ambip{X}{X}=\sigma\}\,.
\label{eq:ambient_models}
\ee
Equivalently, in explicit form,
\be
\dS^3
:=\left\{
X=(X^0,X^1,X^2,X^3)\in\mathbb R^{1,3}
\ \middle|\ 
-(X^0)^2+(X^1)^2+(X^2)^2+(X^3)^2=+1
\right\}\,,
\label{eq:dS3_def}
\ee
and
\be
\AdS^3
:=\left\{
X=(X^0,X^1,X^2,X^3)\in\mathbb R^{2,2}
\ \middle|\ 
-(X^0)^2-(X^1)^2+(X^2)^2+(X^3)^2=-1
\right\}\,.
\label{eq:AdS3_def}
\ee
The induced metric on each space is the pullback of the corresponding ambient metric.  Throughout, we use the canonical identification $T_v\Vsig\simeq\Vsig$ given by translation, so that $v$ also denotes the radial tangent vector at the point $v$.  For $v\in\Msig$, the vector $v$ itself spans the normal line $\mathbb Rv$ to $\Msig$ at $v$.  We call this the radial direction.  Its squared norm is $\sigma$, so the symbol $\sigma$ is both the label of the geometry and the squared norm of the radial vector.

For $v\in\Msig$,
\be
T_v\Msig=v^\perp=\{Y\in\Vsig\mid \ambip{Y}{v}=0\}\,.
\ee

\begin{lemma}
\label{lemma:tangent_signature}
For every $v\in\Msig$, the induced metric on $T_v\Msig$ has signature $(1,2)$.
\end{lemma}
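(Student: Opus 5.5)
We argue by orthogonal decomposition and Sylvester's law of inertia. Fix $v\in\Msig$. Since $\ambip{v}{v}=\sigma\neq0$, the radial line $\mathbb Rv$ is non-degenerate. Hence $\Vsig=\mathbb Rv\oplus v^\perp$ is an $\etasig$-orthogonal direct sum, and $v^\perp=T_v\Msig$ is itself non-degenerate of dimension three. The signature of $\etasig$ is the sum of the signatures of the two summands. Inertia is additive under orthogonal direct sums, and Sylvester's law makes the count independent of the chosen orthogonal basis. Thus we will subtract the contribution of the radial line from the ambient signature.

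For $\sigma=+1$, the ambient signature of $\eta_{+1}$ is $(1,3)$: one negative and three positive eigenvalues. The radial line has $\ambip{v}{v}=+1$ and contributes one positive direction. Therefore $v^\perp$ has one negative and two positive directions, i.e.\ signature $(1,2)$.

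For $\sigma=-1$, the ambient signature of $\eta_{-1}$ is $(2,2)$, and the radial line contributes one negative direction. The remainder again has one negative and two positive directions, i.e.\ signature $(1,2)$.

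A concrete cross-check avoids invoking additivity abstractly. Both $\Osig$ act transitively on $\Msig$: $\mathrm O(1,3)$ on the unit spacelike hyperboloid and $\mathrm O(2,2)$ on the unit timelike quadric, by Witt's theorem. Since they preserve $\etasig$, we may take $v=e_3$ for $\sigma=+1$ and $v=e_0$ for $\sigma=-1$. Then $v^\perp$ is spanned by the remaining three basis vectors, and the restricted metric is $\mathrm{diag}(-1,1,1)$ in both cases.

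The only point needing care is the non-degeneracy of $\mathbb Rv$. This holds because $\sigma\neq0$, and it guarantees that $v^\perp$ is a complement rather than containing $v$. We do not expect any real obstacle.
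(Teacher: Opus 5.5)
Your proposal is correct and is essentially the paper's own proof. You split off the non-degenerate radial line $\mathbb Rv$, whose squared norm is $\sigma$, and subtract its sign from the ambient inertia $(1,3)$ or $(2,2)$ to obtain $(1,2)$. Your Witt-theorem cross-check with $v=e_3$ or $v=e_0$ is also valid, but the proof does not need it.
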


\begin{proof}
In $\dS^3\subset\mathbb R^{1,3}$ the radial vector has positive norm, so removing the radial line from the ambient signature $(1,3)$ leaves $(1,2)$.  In $\AdS^3\subset\mathbb R^{2,2}$ the radial vector has negative norm, so removing the radial line from $(2,2)$ also leaves $(1,2)$.
\end{proof}

After choosing an oriented pseudo-orthonormal frame in $T_v\Msig$, we identify it with a fixed model vector space 
\be
\mathbb R^{1,2}\,,\qquad
\eta_T=\mathrm{diag}(-1,1,1)\,,\qquad
\tanip{u}{w}:=u^\top\eta_Tw\,.
\ee
Here $\eta_T$ represents the restricted ambient metric on $v^\perp$ in the chosen frame.
The ambient pairing $\ambip{\cdot}{\cdot}$ and the tangent pairing $\tanip{\cdot}{\cdot}$ will always be kept distinct.  The {preceding} signature computation is standard in semi-Riemannian geometry \cite{ONeill:1983}, but we record it because it is the reason the same tangent holonomy group appears in both signs of curvature.

We use the Lorentzian cross product on $\mathbb R^{1,2}$ defined by
\be
(u\times w)^\mu=\eta_T^{\mu\lambda}\varepsilon_{\lambda\nu\rho}u^\nu w^\rho\,,
\qquad
\varepsilon_{012}=+1\,.
\label{eq:cross_product}
\ee
Indices on $\varepsilon$ are raised and lowered with $\eta_T$.
Then
\be
\tanip{u\times w}{z}=\det(u,w,z)
\ee
in every positively oriented pseudo-orthonormal tangent frame.  The identities used below are
\begin{subequations}
\begin{align}
u\times(w\times z)&=z\,\tanip{u}{w}-w\,\tanip{u}{z}\,,\label{eq:vector_triple_identity}\\
\varepsilon_{\mu\nu\rho}\varepsilon_{\mu'}{}^{\nu\rho}&=-2(\eta_T)_{\mu\mu'}\,.\label{eq:epsilon_contraction}
\end{align}
\end{subequations}

The map
\be
J:\mathbb R^{1,2}\longrightarrow \mathfrak{so}(1,2)\,,
\qquad
J(u)w:=u\times w
\label{eq:def-J}
\ee
is an isomorphism of vector spaces.  With the conventions above,
\begin{subequations}
\begin{align}
J(u)^3&=\tanip{u}{u}J(u)\,,\label{eq:J_cube}\\
\trvec\big(J(u)J(w)\big)&=2\tanip{u}{w}\,,\label{eq:trace_JJ}\\
\trvec\big(J(u)J(w)J(z)\big)&=\tanip{u\times w}{z}\,.\label{eq:trace_JJJ}
\end{align}
\end{subequations}
These formulas fix the sign convention for all triple products in the paper.

\section{Tetrahedra, Face Normals, and Gram Matrices}
\label{sec:tetrahedra}

A totally geodesic face in $\Msig$ is the intersection of $\Msig$ with a linear hyperplane through the origin of $\Vsig$.  We write
\be
\Pi_i:=N_i^\perp=\{X\in\Vsig\mid \ambip{X}{N_i}=0\}\,,
\qquad
       \widetilde{\Pi}_i :=\Pi_i\cap\Msig\,,
\ee
where $N_i\in\Vsig$ is a nonzero \emph{extrinsic}, or ambient, face normal. {We call $\widetilde{\Pi}_i$ a supporting surface.} Its causal sign is
\be
\nu_i:=\ambip{N_i}{N_i}\in\{-1,0,+1\}
\ee
after unit normalization in the non-null cases.  A timelike normal gives a spacelike face, a spacelike normal gives a timelike face, and a null normal gives a null face.  For null faces, no unit normalization exists, so one chooses any nonzero representative of the projective null-normal line.  Its inverse rescaling is absorbed by the parabolic stabilizer coordinate.

The corresponding \emph{intrinsic}, or tangent, face normal is denoted by $n_i$.  More precisely, if $p\in {\widetilde{\Pi}_i}$, then
\be
n_i(p):=N_i\in T_p\Msig\,,
\label{eq:intrinsic_extrinsic_normal}
\ee
where the equality means that the constant ambient vector $N_i$ is being viewed as a tangent vector at $p$.  This is legitimate because $\ambip{p}{N_i}=0$ on {$\widetilde{\Pi}_i$}, and hence $N_i\in T_p\Msig=p^\perp$.  The intrinsic and extrinsic squared norms agree:
\be
\tanip{n_i(p)}{n_i(p)}
=\ambip{N_i}{N_i}
=\nu_i\,.
\ee
After a base point $v$ and a path from $p$ to $v$ have been chosen, we write
\be
n_i:=P_{p\to v}\,n_i(p)\in T_v\Msig\simeq\mathbb R^{1,2}
\label{eq:transported_intrinsic_normal}
\ee
for the transported intrinsic normal representative.  Thus $N_i$ is the ambient object used in the four-dimensional Gram matrix, while $n_i$ is the tangent-space object preserved by the based face holonomy.  For faces passing through the base point and evaluated there, this distinction is only notational, while for a face not containing $v$, such as $F_4$ below, the transport path is part of the data.
Here $P_{p\to v}$ denotes Levi-Civita parallel transport along the chosen path from $p$ to $v$.

\begin{definition}[Tetrahedral configurations]
\label{def:tetrahedral_configuration}
A {\it tetrahedral configuration} is a linearly independent ordered quadruple $(N_1,N_2,N_3,N_4)$ of extrinsic face normals, together with the four hyperplanes $\Pi_i=N_i^\perp$ and {supporting surfaces} $\widetilde{\Pi}_i=\Pi_i\cap\Msig$.
Its candidate vertices opposite $\widetilde{\Pi}_i$ are the points of
$\widetilde{\Pi}_j\cap \widetilde{\Pi}_k\cap \widetilde{\Pi}_l$, whenever this intersection is nonempty.
\end{definition}
More explicitly, the three ambient hyperplanes opposite $\widetilde{\Pi}_i$ determine the real
line
\be
L_i:=\Pi_j\cap\Pi_k\cap\Pi_l\,,
\qquad
\{i,j,k,l\}=\{1,2,3,4\}\,.
\ee
Since
\be
\widetilde{\Pi}_j\cap \widetilde{\Pi}_k\cap \widetilde{\Pi}_l=L_i\cap\Msig\,,
\ee
this intersection is nonempty precisely when $L_i$ has the quadric sign
$\sigma$. In that case it is an antipodal pair
\be\label{eq:antipodal_pair}
L_i\cap\Msig=\{V_i^+,V_i^-\}\,.
\ee
A choice of oriented half-spaces selects one member of each pair, denoted
by $V_i$. After choosing four
oriented half-spaces, it may become a tetrahedron in the sense below.

For a {supporting surface} $\widetilde{\Pi}_i=\Pi_i\cap\Msig$ with normal $N_i$, a closed half-space bounded by $\widetilde{\Pi}_i$ means one of the two sign domains
\be
\Pi_i^\varepsilon
:=\left\{X\in\Msig\,\middle|\,\varepsilon\,\ambip{X}{N_i}\geq0\right\}\,,
\qquad
\varepsilon=\pm1\,.
\label{eq:curved_halfspace}
\ee
Its boundary in $\Msig$ is $\widetilde{\Pi}_i$.  Replacing $N_i$ by $-N_i$ exchanges the two choices. Choose the signs of $N_i$ so that the selected half-spaces are given by $\ambip{X}{N_i}\leq0$.

Define the associated ambient cone by
\be
C_N:=
\left\{
X\in\Vsig
\,\middle|\,
\ambip{X}{N_i}\leq0 
\text{ for all }i
\right\}\,.
\label{eq:ambient_cone}
\ee
The choice {of half-spaces} is called {\it domain-admissible} if
\be
\sigma\ambip{X}{X}>0
\qquad
\text{for every nonzero }X\in C_N\,.
\label{eq:domain_admissible}
\ee
Domain admissibility ensures that every positive ray of $C_N$ has a unique normalization in $\Msig$.

\begin{definition}[Tetrahedra]
\label{def:generalized_tetrahedron}

A {\it tetrahedron} in $\Msig$ is a tetrahedral configuration together
with a domain-admissible choice of one closed half-space $\Pi_i^-$ bounded
by each $\widetilde{\Pi}_i$. Its underlying region and face patches are
\be
T:=\bigcap_{i=1}^4\Pi_i^-\,,
\qquad
F_i:=T\cap \widetilde{\Pi}_i\,.
\label{eq:triangular_face_patch}
\ee
The points selected from the candidate antipodal pairs are its vertices
$V_1,\ldots,V_4$.

\end{definition}
Such a choice of outward normals $N_i$ fixes the notation
\be
\Pi_i^-=\left\{X\in\Msig\,\middle|\, 
\ambip{X}{N_i}\leq0
\right\}\,,
\label{eq:halfspace_convention}
\ee
and the opposite vertex satisfies
\be
\ambip{V_i}{N_i}<0
\label{eq:outward_support}
\ee
for a strictly convex tetrahedron.
Here strict convexity means the strict support inequalities above and each $F_i$ being the {face} determined by the other three vertices{. Here,} ``outward'' refers to the supporting covector $\ambip{\cdot}{N_i}$, positive on the exterior side.

\begin{definition}[Ambient Gram matrix]
\label{def:ambient_gram}
The ambient Gram matrix of a tetrahedral configuration is
\be
G_{ij}:=\ambip{N_i}{N_j}\,.
\label{eq:ambient_gram}
\ee
For non-null faces one may take $G_{ii}=\nu_i$.  For null faces $G_{ii}=0$, and the remaining entries depend on the chosen null representatives.
\end{definition}

\begin{definition}[Inertia]
\label{def:inertia}
For a nondegenerate real symmetric matrix $G$, we denote by%
\be
\Inertia(G):=(n_-(G),n_+(G))
\ee
the {numbers} of negative and positive eigenvalues of $G$, counted with multiplicity.  Equivalently, $\Inertia(G)$ is the signature of the corresponding symmetric bilinear form, written in the order ``negative, positive''.  With this convention,
\be
\Inertia(\eta_+)=(1,3)\,,
\qquad
\Inertia(\eta_-)=(2,2)\,.
\ee
\end{definition}

\begin{lemma}
\label{lemma:ambient_gram_reconstruction}
Let $G$ be a real symmetric nondegenerate $4\times4$ matrix.  There exist linearly independent vectors $N_i\in\Vsig$ satisfying $G_{ij}=\ambip{N_i}{N_j}$ if and only if
\be
\Inertia(G)=\Inertia(\etasig)\,.
\label{eq:inertia_condition}
\ee
The reconstructed quadruple is unique up to the global group $\Osig$.
\end{lemma}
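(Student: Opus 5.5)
The proof is an application of Sylvester's law of inertia together with the fact that a linearly independent quadruple is a basis. Write $\mathcal N$ for the $4\times 4$ matrix whose columns are the components of $N_1,\ldots,N_4$ in the standard basis of $\Vsig$. The condition $G_{ij}=\ambip{N_i}{N_j}$ for all $i,j$ is then the single matrix identity
\be
G=\mathcal N^\top\etasig\,\mathcal N\,.
\ee
Linear independence of the $N_i$ is equivalent to $\mathcal N$ being invertible. So the lemma asserts that $G$ is congruent to $\etasig$ through an invertible matrix exactly when the inertias agree, with the congruence unique up to $\Osig$.

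\textbf{Necessity.} Suppose such independent $N_i$ exist. Then $G=\mathcal N^\top\etasig\mathcal N$ with $\mathcal N$ invertible. Sylvester's law of inertia says congruence by an invertible matrix preserves the numbers of negative and positive eigenvalues, so $\Inertia(G)=\Inertia(\etasig)$.

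\textbf{Sufficiency.} Suppose $\Inertia(G)=\Inertia(\etasig)$ and $G$ is nondegenerate.
\begin{itemize}
\item By the spectral theorem, $G=Q^\top DQ$ with $Q$ orthogonal and $D$ diagonal with nonzero entries.
\item Rescale each eigenvector by $|d_k|^{-1/2}$, where $d_k$ is the corresponding entry of $D$. Then permute so that the negative entries come first. This gives an invertible $S$ with $G=S^\top\eta' S$, where $\eta'=\operatorname{diag}(-1,\ldots,-1,1,\ldots,1)$ has $n_-(G)$ minus signs.
\item Because the inertias match, $\eta'=\etasig$.
\item Set $\mathcal N:=S$ and take $N_i$ to be its columns. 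These are linearly independent and realize $G$.
\end{itemize}

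\textbf{Uniqueness.} Let $\mathcal N$ and $\mathcal N'$ be two realizations. Define $\Lambda:=\mathcal N'\mathcal N^{-1}$, so that $\Lambda N_i=N'_i$ for each $i$. Then
\be
\Lambda^\top\etasig\Lambda=\mathcal N^{-\top}\bigl(\mathcal N'^\top\etasig\mathcal N'\bigr)\mathcal N^{-1}=\mathcal N^{-\top}G\,\mathcal N^{-1}=\etasig\,,
\ee
so $\Lambda\in\Osig$. Conversely, any $\Lambda\in\Osig$ maps a realizing quadruple to another realizing quadruple. Hence the realization is unique exactly up to the global action of $\Osig$.

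\textbf{Where care is needed.} No step is genuinely hard. The only point to watch is that the inertia convention of Definition~\ref{def:inertia} is applied consistently, so that $\eta'$ really coincides with $\etasig$ for both $\sigma=\pm1$. In the uniqueness step, the uniqueness is up to the full group $\Osig$, not $\SO$ or the identity component: orientation and time-orientation are fixed only later by the triple-product and branch choices, not by this lemma.
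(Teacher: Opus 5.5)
Your proof is correct and follows the paper's argument. Both rest on Sylvester's law of inertia applied to $G=\mathcal N^\top\eta_\sigma\mathcal N$, with existence of the congruence factorization when the inertias agree, and uniqueness via $\Lambda=\mathcal N'\mathcal N^{-1}$ satisfying $\Lambda^\top\eta_\sigma\Lambda=\eta_\sigma$; you additionally spell out the spectral-theorem construction of the factorization that the paper leaves implicit (there $S$ should be read as the inverse of the matrix of rescaled eigenvectors).
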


\begin{proof}
This is Sylvester's law of inertia \cite{HornJohnson:2013}.  If $N$ is the matrix whose columns are the $N_i$, then $G=N^\top\etasig N$, so $G$ and $\etasig$ have the same inertia.  Conversely, if the inertias agree, a congruence factorization $G=N^\top\etasig N$ exists.  Any two such factorizations differ by a matrix $\Lambda$ with $\Lambda^\top\etasig\Lambda=\etasig$.
\end{proof}

\begin{remark}
For $\dS^3$, the required inertia is $(1,3)$, while for $\AdS^3$, it is $(2,2)$.  Hence $\det G<0$ for a nondegenerate $\dS^3$ tetrahedron and $\det G>0$ for a nondegenerate $\AdS^3$ tetrahedron.  For an arbitrary symmetric matrix, the full inertia is the invariant reconstruction condition.
\end{remark}

\begin{lemma}
\label{lemma:vertex_reality}
Assume that $G$ satisfies the inertia condition \eqref{eq:inertia_condition}, and let $G_{\hat i}$ denote the principal $3\times3$ submatrix obtained by deleting the $i$-th row and column.  For $\{i,j,k,l\}=\{1,2,3,4\}$, set
\be
U_i:=\operatorname{span}(N_j,N_k,N_l)\,.
\ee
If $G_{\hat i}$ is nondegenerate, then $N_j,N_k,N_l$ are independent and the three ambient hyperplanes $\Pi_j,\Pi_k,\Pi_l$ intersect in the real line
\be
L_i:=U_i^\perp\subset\Vsig\,.
\ee
This line gives a finite candidate vertex pair precisely when it contains two vectors $V_i^{\pm}$ satisfying
\be
\ambip{V_i^{\pm}}{V_i^{\pm}}=\sigma\,,
\ee
so that $V_i^{\pm}\in\Msig$.  This happens if and only if
\be
\Inertia(G_{\hat i})=(1,2)\,.
\label{eq:vertex_minor_inertia}
\ee
When this holds for all $i$, the four reconstructed hyperplanes have four candidate vertex pairs on the real manifold $\Msig$, with one point from each pair selected by the half-space orientation.
\end{lemma}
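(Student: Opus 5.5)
The proof works in the ambient space $\Vsig$, where Lemma~\ref{lemma:ambient_gram_reconstruction} (applied to the inertia hypothesis) realizes $G$ as the Gram matrix of linearly independent vectors $N_1,\ldots,N_4$. Everything then reduces to linear algebra in the $4$-dimensional space with metric $\etasig$.

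\textbf{Step 1: independence and the line $L_i$.} Suppose $\det G_{\hat i}\neq0$ and some combination $aN_j+bN_k+cN_l$ vanishes. Pairing it with $N_j,N_k,N_l$ gives $G_{\hat i}(a,b,c)^\top=0$, so $a=b=c=0$. Hence $U_i$ is $3$-dimensional. Since $\etasig$ is nondegenerate, $U_i^\perp$ is a line. By definition of orthogonal complement, $U_i^\perp=\Pi_j\cap\Pi_k\cap\Pi_l$, so $L_i=U_i^\perp$.

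\textbf{Step 2: the causal sign of $L_i$.} Because the restriction of the metric to $U_i$ has Gram matrix $G_{\hat i}$ and is nondegenerate, we get the orthogonal splitting $\Vsig=U_i\oplus L_i$. Inertia is additive over orthogonal sums, so
\[
\Inertia(\etasig)=\Inertia(G_{\hat i})+\Inertia\bigl(\ambip{\cdot}{\cdot}|_{L_i}\bigr),
\]
and $L_i$ is non-null. Choose a spanning vector $W$ of $L_i$. The line meets $\Msig$ if and only if $\ambip{W}{W}$ has sign $\sigma$. In that case the two points are $V_i^\pm=\pm W/\sqrt{\sigma\ambip{W}{W}}$, giving an antipodal pair as in \eqref{eq:antipodal_pair}.

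\textbf{Step 3: the inertia criterion.} We check the two ambient models separately.
\begin{itemize}
\item For $\sigma=+1$, $\Inertia(\etasig)=(1,3)$. The line $L_i$ is spacelike, and so meets $\Msig$, exactly when its contribution is $(0,1)$, i.e.\ exactly when $\Inertia(G_{\hat i})=(1,2)$.
\item For $\sigma=-1$, $\Inertia(\etasig)=(2,2)$. The line $L_i$ must be timelike, contributing $(1,0)$, so again the condition is $\Inertia(G_{\hat i})=(1,2)$.
\end{itemize}
In both cases the alternative inertia, $(0,3)$ or $(2,1)$ respectively, gives a line of the wrong sign and no real vertex.

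\textbf{Step 4: all four vertices.} When \eqref{eq:vertex_minor_inertia} holds for every $i$, Steps 1–3 apply to each $i$ and produce four antipodal pairs on $\Msig$. The orientation of the half-spaces then selects one member of each pair via the sign of $\ambip{V_i}{N_i}$. This pairing is nonzero, since otherwise $V_i$ would be orthogonal to all four $N$'s, which span $\Vsig$, forcing $V_i=0$.

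The only real content is the additivity of inertia in Step 2. It rests on the nondegeneracy of $U_i$, which Step 1 supplies from $\det G_{\hat i}\neq0$.
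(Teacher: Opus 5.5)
Your proof is correct and follows essentially the same route as the paper: both rest on the orthogonal splitting $\Vsig=U_i\oplus U_i^\perp$ and on subtracting inertias to read off the causal sign of $L_i$. The only difference is in the forward direction, where the paper identifies $U_i=V_i^\perp=T_{V_i}\Msig$ and cites Lemma~\ref{lemma:tangent_signature}, which is itself the same inertia subtraction; you instead run the additivity argument symmetrically in both directions, and you also spell out the independence step and why $\ambip{V_i}{N_i}\neq0$ lets the half-space orientation pick one point of each pair.
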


\begin{proof}
The restricted metric on $U_i$ is represented by $G_{\hat i}$.  If $G_{\hat i}$ is nondegenerate, then $U_i$ is three-dimensional and the common intersection of the three hyperplanes is the line $L_i=U_i^\perp$.
If $L_i$ contains a point $V_i\in\Msig$, then $V_i$ is non-null and $L_i$ has the quadric sign $\sigma$. According to \eqref{eq:antipodal_pair}, there is an {antipodal} pair $V_i \in \{V_i^{+}, V_i^{-}\}$. We then have $U_i=V_i^\perp=T_{V_i}\Msig$.  Lemma \ref{lemma:tangent_signature} gives
\be
\Inertia(G_{\hat i})=\Inertia(U_i)=(1,2)\,.
\ee
Conversely, suppose that $\Inertia(G_{\hat i})=(1,2)$.  Then $U_i$ is nondegenerate, so the ambient space splits orthogonally as
\be
\Vsig=U_i\oplus U_i^\perp\,.
\ee
Subtracting inertias from $\Inertia(\Vsig)=(1,3)$ for $\sigma=+1$ and $\Inertia(\Vsig)=(2,2)$ for $\sigma=-1$, the one-dimensional space $U_i^\perp$ is positive in the de Sitter case and negative in the anti-de Sitter case.  Thus any nonzero generator $w_i$ of $U_i^\perp$ has
\be
\sgn\ambip{w_i}{w_i}=\sigma\,.
\ee
After multiplying $w_i$ by a real scalar, we obtain $V_i\in U_i^\perp$ with $\ambip{V_i}{V_i}=\sigma$, hence $V_i\in\Msig$.
\end{proof}

\begin{remark}
\label{remark:wrong_causal_vertex_line}
The phrase ``vertex on $\Msig$'' is meant in this real pseudo-Riemannian sense.  The three linear equations always determine a real ambient line when the corresponding normals are independent, but that line may have the wrong causal sign or be null.  In that case it cannot be normalized to $\ambip{V_i}{V_i}=\sigma$, so it does not meet the real pseudo-sphere $\Msig$. A non-null line of the wrong causal sign admits such a normalization after complexification. A null line remains null even after complexification and never meets the quadric of squared norm $\sigma$.
This happens for the polar dual tetrahedra defined in Section \ref{subsec:polar_dual_ideal_hyperideal}. 
\end{remark}

\section{Holonomy Data and Dihedral Invariants}
\label{sec:holonomy_data}

\subsection{Simple paths and vector holonomies}
\label{sec:based_holonomies}

The holonomy around a face boundary is naturally an automorphism of the tangent space at the chosen base point of that boundary loop.  Since different faces have different convenient base points, the first task is to choose a common basing convention.  This is the purpose of the simple paths below: they turn the four face-boundary holonomies into four linear maps of one and the same tangent space, so that a closure product can be stated.

Label the vertices so that $F_i$ is the face opposite $V_i$.  Thus $F_1,F_2,F_3$ meet at the common base vertex
\be
v:=V_4=F_1\cap F_2\cap F_3\,.
\ee
The fixed labeled {tetrahedra} used in the following path convention are shown in Figure \ref{fig:dS-AdS-tetra}.  The same vertex labels and edge orientations are used for both signs of $\sigma$. The drawings only indicate the de Sitter and anti-de Sitter geometric realizations.

\begin{figure}[h!]
\centering
\begin{minipage}{0.3\textwidth}
\includegraphics[width=0.8\textwidth]{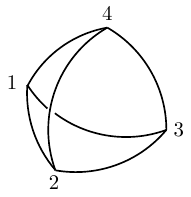}
\subcaption{}{}
\label{fig:dS_tetra}
\end{minipage}
\begin{minipage}{0.3\textwidth}
\includegraphics[width=0.8\textwidth]{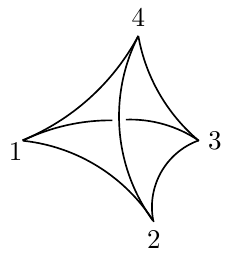}
\subcaption{}{}
\label{fig:AdS_tetra}
\end{minipage}
\caption{{\it (a)} Labeled tetrahedron in $\dS^3$; {\it (b)} labeled tetrahedron in $\AdS^3$. They are illustrated in Euclidean space. }
\label{fig:dS-AdS-tetra}
\end{figure}

For the edge with endpoints $V_a$ and $V_b$, we use the face-intersection notation
\be
E_{ab}:=F_c\cap F_d\,,
\qquad
\{a,b,c,d\}=\{1,2,3,4\}\,.
\ee
 When an orientation is needed, $(ab)$ denotes the oriented edge $E_{ab}$ with source $V_b$ and target $V_a$.  Its Levi-Civita parallel transport is
\be
o_{ab}:T_{V_b}\Msig\longrightarrow T_{V_a}\Msig\,,
\qquad
o_{ba}=o_{ab}^{-1}\,.
\ee
It is a proper, time-orientation-preserving Lorentzian isometry between tangent spaces,
\be
o_{ab}\in\operatorname{Iso}^+\big(T_{V_b}\Msig,T_{V_a}\Msig\big)\,.
\ee
After choosing oriented time-oriented orthonormal frames at the two endpoints, this is represented by a matrix in $\SO^+(1,2)$.  A based loop, whose source and target are both $v$, therefore gives an element of $\SO^+(T_v\Msig)\simeq\SO^+(1,2)$.
Composition is read from right to left.  For example, $(43)\circ(32)\circ(24)$ is the loop based at $V_4$ which successively follows $E_{24}$, $E_{23}$, and $E_{34}$, with the orientations encoded by the parentheses.

The transported intrinsic normal representatives \eqref{eq:transported_intrinsic_normal} used below are fixed as follows.  The normals of $F_1,F_2,F_3$ are evaluated directly at $v$, so $n_i=n_i(v)$ is just the extrinsic normal $N_i$ viewed in $T_v\Msig$ for $i=1,2,3$.  The face $F_4$ does not contain $v$, so we choose $V_2\in F_4$ as its reference point and transport $n_4(V_2)$ to $v$ along the edge
\be
E_{24}=F_1\cap F_3
\ee
with orientation $(42)$:
\be
n_4(v):=o_{42}\,n_4(V_2)\,.
\label{eq:n4_transport}
\ee
We call this distinguished transport edge $E_{24}=F_1\cap F_3$ the \textbf{special edge}.  If $n_4$ is first evaluated at another point of $F_4$, it is transported inside $F_4$ to $V_2$ and then to $v$ along the special edge by \eqref{eq:n4_transport}.  Since $F_4$ is totally geodesic, this produces the same transported intrinsic representative.

\begin{definition}[Simple-path convention]
\label{def:simple_path_convention}
Fix the ordered vertex labels, the base vertex $v=V_4$, and the special edge $E_{24}=F_1\cap F_3$.  The \emph{simple-path convention} is the following common basing of the four face normals and face-boundary holonomies.  The normal representatives are fixed as in \eqref{eq:n4_transport}: $n_1,n_2,n_3$ are evaluated at $v$, while $n_4$ is transported to $v$ along the special edge.  The four based face loops are
\be
\begin{aligned}
\ell_1&:=(43)\circ(32)\circ(24)\,,&
\ell_2&:=(41)\circ(13)\circ(34)\,,&
\ell_3&:=(42)\circ(21)\circ(14)\,,
\\
\ell_4&:=(42)\circ(23)\circ(31)\circ(12)\circ(24)\,.
\end{aligned}
\label{eq:simple_loops}
\ee
Here $\ell_i$ goes once around $\partial F_i$ and is based
at $v$.  For $i=1,2,3$ this is immediate because $F_i$ contains $v$.
For $F_4$, the loop first follows the special edge $E_{24}$ with
orientation $(24)$, runs around $\partial F_4$, and returns along
$E_{24}$ with orientation $(42)$.
\end{definition}

\begin{figure}[h!]
\begin{subfigure}[t]{0.2\linewidth}
\includegraphics{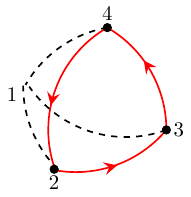}
\caption{$\ell_1$}
\end{subfigure}
\begin{subfigure}[t]{0.2\linewidth}
\includegraphics{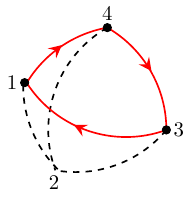}
 \caption{$\ell_2$}
\end{subfigure}
\begin{subfigure}[t]{0.2\linewidth}
\includegraphics{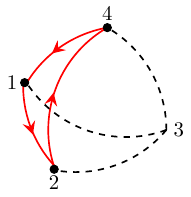}
 \caption{$\ell_3$}
\end{subfigure}
\begin{subfigure}[t]{0.2\linewidth}
\includegraphics{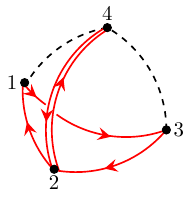}
 \caption{$\ell_4$}
\end{subfigure}
\caption{The set of simple paths ({\it in red}) $\{\ell_1,\ell_2,\ell_3,\ell_4\}$ \eqref{eq:simple_loops} defined with vertex $V_4$ as the base point and the edge $E_{24}=F_1\cap F_3$ as the special edge. They satisfy $\ell_4\circ\ell_3\circ\ell_2\circ\ell_1=1$. Simple paths on a tetrahedron with negative curvature are defined similarly. }
\label{fig:simple_paths}
\end{figure}

The displayed order in Definition \ref{def:simple_path_convention} is part of the fixed labeled {tetraheron} convention in Figure \ref{fig:dS-AdS-tetra}: once the ordered vertex labels, the base vertex $V_4$, and the special edge $E_{24}=F_1\cap F_3$ are chosen, these loops define which boundary orientation is paired with the chosen outward normal by Stokes' convention.  An odd relabeling of the tetrahedron reverses the chirality of this convention.  If the relabeling changes the base vertex or the oriented use of the special edge, for example $V_2\leftrightarrow V_4$, one must first rewrite the simple paths with the new base data rather than compare the old formulas literally.
The combinatorial content of this convention is displayed in Figure \ref{fig:simple_paths}.

Let $\rho$ be the parallel-transport representation of these based loops and set
\be
O_i:=\rho(\ell_i)\in\SO^+(1,2)\,.
\ee
Thus each $O_i$ is now a linear isometry of the same Lorentzian vector space $T_v\Msig$. {We call it the vector holonomy of the face $F_i$.} The reason for introducing these based holonomies is that the four oriented face loops satisfy one path relation, and this relation becomes the holonomy closure condition used in the reconstruction theorem.
Equivalently,
\be
O_1=o_{43}o_{32}o_{24}\,,\quad 
O_2=o_{41}o_{13}o_{34}\,,\quad 
O_3=o_{42}o_{21}o_{14}\,,\quad
O_4=o_{42}o_{23}o_{31}o_{12}o_{24}\,.
\label{eq:Oi_simple_paths}
\ee
The ordering in \eqref{eq:simple_loops} is chosen so that the four oriented face boundaries close around the tetrahedron:
\be
\ell_4\circ\ell_3\circ\ell_2\circ\ell_1=1\,.
\ee
Applying $\rho$ gives the based closure relation
\be
O_4O_3O_2O_1=\id\,,
\qquad
O_i\in\SO^+(1,2)\,.
\label{eq:SO_closure}
\ee

We next recall the standard elliptic, hyperbolic, and parabolic classification of nontrivial elements of $\SO^+(1,2)\cong\PSL(2,\mathbb R)$ \cite{Beardon:1983}, adapted to the face-normal convention above.  For later use, we call the one-dimensional coordinate along a chosen connected stabilizer subgroup the {\it stabilizer coordinate}.
For a normalized non-null representative, this is the {rotation} angle or boost rapidity.  
In contrast, a parabolic holonomy determines the scaled null generator, which is the invariant product of this coefficient and the null representative, rather than either factor separately.

\begin{lemma}
\label{lemma:SO_axis}
Let $F_i\subset \widetilde{\Pi}_i$ be an
oriented {face} in the totally geodesic support surface $\widetilde{\Pi}_i$
and let $O_i\in\SO^+(1,2)$ be its nontrivial
based boundary holonomy.  Choose a nonzero transported intrinsic normal representative $n_i\in T_v\Msig$, defined from the extrinsic normal $N_i$ by \eqref{eq:intrinsic_extrinsic_normal}--\eqref{eq:transported_intrinsic_normal}, and normalize it when it is non-null.  Then $O_i n_i=n_i$ and
\be
O_i=\exp\big(\Theta_iJ(n_i)\big)\,,
\label{eq:SO_axis_exp}
\ee
where the stabilizer coordinate belongs to
{\be
\Theta_i\in
\begin{cases}
\mathbb R/2\pi\mathbb Z\,, & \tanip{n_i}{n_i}=-1\,,\\[0.08cm]
\mathbb R\,, & \tanip{n_i}{n_i}=+1\ \hbox{or}\ 0\,.
\end{cases}
\label{eq:stabilizer_coordinate_domains}
\ee}
The three cases are respectively elliptic, hyperbolic, and parabolic.
\end{lemma}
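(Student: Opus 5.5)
The argument has two parts. First we show that $n_i$ is fixed by $O_i$. Then we identify $O_i$ as an element of the one-parameter stabilizer of $n_i$, using the elliptic/hyperbolic/parabolic classification of $\SO^+(1,2)$.

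\emph{Invariance of $n_i$.} Start with $i=1,2,3$, where $v\in F_i$. The loop $\ell_i$ lies entirely in the totally geodesic surface $\widetilde\Pi_i$. In the ambient model, parallel transport along a curve in $\Msig$ is the tangential projection of the ambient derivative. The constant vector $N_i$ is orthogonal to every point of $\widetilde\Pi_i$, so it is a tangent field along any curve in $\widetilde\Pi_i$. Its ambient derivative vanishes, so it is parallel along the curve. Hence transport around $\ell_i$ returns $N_i=n_i(v)$ to itself, and $O_in_i=n_i$.

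For $i=4$, apply the same argument to the inner loop $(23)\circ(31)\circ(12)$ based at $V_2\in F_4$: it fixes $n_4(V_2)$. Conjugating by $o_{42}$ and using \eqref{eq:n4_transport} gives $O_4n_4=n_4$.

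\emph{Structure of the stabilizer.} The stabilizer of $n_i$ in $\SO^+(1,2)$ preserves $n_i^\perp$. For non-null $n_i$, $n_i^\perp$ is a nondegenerate plane, and orientation is preserved. The plane is spacelike when $\tanip{n_i}{n_i}=-1$ and timelike when $\tanip{n_i}{n_i}=+1$. In these cases the stabilizer is $\SO(2)$, respectively $\SO^+(1,1)$. Both are connected one-parameter groups, and their Lie algebra is spanned by $J(n_i)$, because $J(n_i)n_i=n_i\times n_i=0$. By \eqref{eq:J_cube}, $J(n_i)^3=\mp J(n_i)$. So $\exp(\Theta J(n_i))$ is $\id+\sin\Theta\,J+(1-\cos\Theta)J^2$ in the elliptic case and $\id+\sinh\Theta\,J+(\cosh\Theta-1)J^2$ in the hyperbolic case. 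These formulas give periodicity $2\pi$ in the elliptic case and injectivity in $\Theta\in\mathbb R$ in the hyperbolic case, which are the domains in \eqref{eq:stabilizer_coordinate_domains}.

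\emph{Null case: the main obstacle.} Here one must show that the connected stabilizer of a null vector in $\SO^+(1,2)$ is exactly the unipotent group $\{\exp(\Theta J(n_i))\}$, with $J(n_i)^3=0$ by \eqref{eq:J_cube}. I would pass to a null basis $(n_i,m,e)$ with $\tanip{n_i}{m}=-1$, $e\perp n_i,m$, and $\tanip{e}{e}=1$. An isometry $O$ fixing $n_i$ preserves the degenerate plane $n_i^\perp=\operatorname{span}(n_i,e)$. Orthogonality to $n_i$ and norm preservation force $Oe=e+a n_i$. The conditions $\tanip{Om}{Om}=0$, $\tanip{Om}{n_i}=-1$, and $\tanip{Om}{Oe}=0$ then determine $Om$ uniquely as a function of $a$.

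The remaining ambiguity is the choice $Oe=\pm e+a n_i$. This sign is fixed by $\det O=1$ together with time orientation: the sign $-e$ corresponds to an element outside the identity component. The stabilizer is therefore a one-parameter family, and it matches $\exp(\Theta J(n_i))$ with $a$ proportional to $\Theta$.

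\emph{Conclusion.} Since $O_i$ is nontrivial, $\Theta_i$ is nonzero in its domain. The three causal types of $n_i$ yield elliptic, hyperbolic, and parabolic elements, respectively. This is seen from the trace: $\trvec O_i$ equals $1+2\cos\Theta_i$, $1+2\cosh\Theta_i$, or $3$, using \eqref{eq:trace_JJ}.
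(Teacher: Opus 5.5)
Your proposal is correct and follows the paper's overall strategy. The constant ambient normal is parallel along the totally geodesic face, so the based loop fixes $n_i$, and $O_i$ is then identified inside the one-parameter stabilizer of $n_i$. Your invariance step is the paper's argument, and you additionally spell out the conjugation by $o_{42}$ for $F_4$, which the paper leaves to ``the simple-path basing convention.''

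The stabilizer step is done differently. The paper argues infinitesimally and uniformly in the causal type: $J(u)n=u\times n=0$ forces $u\in\mathbb R n$, so the connected stabilizer has Lie algebra $\mathbb R J(n)$. It then takes for granted that $O_i$ lies in that connected stabilizer. You instead determine the whole stabilizer of $n_i$ in $\SO^+(1,2)$ as a group: $\SO(2)$ in the timelike case, $\SO^+(1,1)$ in the spacelike case, and, by your null-frame computation, the unipotent group $\{\exp(\Theta J(n_i))\}$ in the null case. This is longer and case-by-case, but it proves what the paper uses implicitly: the full stabilizer is connected, so $O_i$ really is an exponential of a multiple of $J(n_i)$ and not merely some element fixing $n_i$. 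In the spacelike case, time-orientation preservation is exactly what excludes $-\id$ on $n_i^\perp$, and it is worth saying so.

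One small presentational fix: orthogonality to $n_i$ and norm preservation give $Oe=\pm e+a n_i$, not $Oe=e+an_i$ as you first write. The sign is then fixed by $\det O=1$ alone, since the $-e$ branch has determinant $-1$.
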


\begin{proof}
Write $\widetilde{\Pi}_i=\Msig\cap N_i^\perp$.  For $p\in \widetilde{\Pi}_i$,
\be
\ambip{N_i}{p}=0\,.
\ee
Thus the constant ambient vector $N_i$ is tangent to $\Msig$ along the
support surface and is orthogonal to $T\widetilde{\Pi}_i$.  Since its ambient derivative
vanishes, its Levi-Civita derivative on $\Msig$ also vanishes after tangent
projection.  Hence $n_i$ is parallel along $\partial F_i$.  After
applying the simple-path basing convention, parallel transport around the
face loop fixes the transported representative:
\be
O_in_i=n_i\,.
\ee

It remains to identify the connected stabilizer of the representative $n_i$.  Since $J$ is identified with the Lorentzian cross product \eqref{eq:def-J}, an infinitesimal stabilizer of $n$ is of the form $J(u)$ with
\be
J(u)n=u\times n=0\,.
\ee
For every nonzero $n$, including a null vector, this forces $u\in\mathbb Rn$, so the connected stabilizer has Lie algebra $\mathbb RJ(n)$.  If $\tanip{n}{n}=-1$, it acts as rotations on the spacelike plane $n^\perp$, and the parameter is an angle modulo $2\pi$.  If $\tanip{n}{n}=+1$, it acts as boosts on the Lorentzian plane $n^\perp$, and the parameter is a real rapidity.  If $\tanip{n}{n}=0$, its nontrivial elements are parabolic and the parameter is real.  Therefore,
\be
O_i=\exp\big(\Theta_iJ(n_i)\big)\,.
\ee
\end{proof}

For a based face holonomy $O_i$, let $\mathcal N_i\subset T_v\Msig$ be its invariant normal line,
{\be
\mathcal N_i=\mathbb R n_i\,.
\ee}
The representative and branch ambiguities in this choice are collected in Remark \ref{remark:branches}.

Using \eqref{eq:J_cube}, the closed forms of \eqref{eq:SO_axis_exp} are
\be
O_i=\id+r_iJ(n_i)+\widetilde r_iJ(n_i)^2\,,
\qquad
(r_i,\widetilde r_i)=
\begin{cases}
\big(\sin\Theta_i,1-\cos\Theta_i\big)\,, & \tanip{n_i}{n_i}=-1\,,\\[0.12cm]
\big(\sinh\Theta_i,\cosh\Theta_i-1\big)\,, & \tanip{n_i}{n_i}=+1\,,\\[0.12cm]
{\left(\Theta_i,\frac12\Theta_i^2\right)}\,, & \tanip{n_i}{n_i}=0\,.
\end{cases}
\label{eq:SO_exp_closed}
\ee
Thus $r_i$ is the coefficient of the infinitesimal generator $J(n_i)$ in the vector holonomy.  Consequently,
\be
\frac{O_i-O_i^{-1}}{2}
=r_iJ(n_i)\,,
\label{eq:Oi_minus_inverse}
\ee
and
\be
\trvec(O_i)=
\begin{cases}
1+2\cos\Theta_i\,, & \tanip{n_i}{n_i}=-1\,,\\
1+2\cosh\Theta_i\,, & \tanip{n_i}{n_i}=+1\,,\\
3\,, & \tanip{n_i}{n_i}=0\,.
\end{cases}
\label{eq:trace_Oi}
\ee
Thus elliptic holonomies ($\trvec(O_i)<3$) correspond to spacelike faces, hyperbolic holonomies ($\trvec(O_i)>3$) to timelike faces, and parabolic holonomies ($\trvec(O_i)=3$) to null faces.

The classification above is algebraic.  When the holonomy is the actual Levi-Civita holonomy of a tetrahedron, the same stabilizer coordinate has a geometric meaning.  The next lemma explains this meaning: the angle, rapidity, or coefficient relative to a chosen null representative is the coefficient of the based area-normal flux in the stabilizer algebra.  This identifies the geometric content of the stabilizer coordinate once a tetrahedron and an oriented face branch are already given.

Let $T\subset\Msig$ be a tetrahedron with the simple-path convention fixed.  For its $i$-th face, set
\be
q_i=
\begin{cases}
v\,, & i=1,2,3\,,\\
V_2\,, & i=4\,,
\end{cases}
\ee
and let $Q_i:T_{q_i}\Msig\to T_v\Msig$ be the identity for $i=1,2,3$ and the special-edge transport for $i=4$.

$\mathcal{M}_\sigma$ is equipped with a Lorentzian co-triad field $e^I=e^I_\mu dx^\mu$.  On the selected face, we use the frame based at $q_i$ in which the chosen nonzero face normal has constant components. 
 Define the $\R^{1,2}$-valued area-normal two-form
\be
\mathcal S^I:=\frac12\varepsilon^I{}_{JK}e^J\wedge e^K\,.
\label{eq:area_normal_two_form}
\ee
Here $\varepsilon_{012}=+1$, with internal indices raised and lowered by $\eta_T$.
The flux vector of the {face} $F_i$ in the
frame based at $q_i$ is
\be
{\bf E}_{i}(q_i)^{I} =
\int_{F_i}\mathcal S^I\,.
\label{eq:face_flux_qi}
\ee
We base the flux at $v$ by
\be
\mathbf E_i:=Q_i\mathbf E_i(q_i)\in T_v\Msig\,.
\label{eq:based_face_flux}
\ee

For a non-null face $F_i$, let $\mathfrak n_i(q_i)$ be the intrinsic outward-pointing unit normal at $q_i$, let $\nu_i=\tanip{\mathfrak n_i(q_i)}{\mathfrak n_i(q_i)}=\pm1$, and let $\mathfrak n_i=Q_i\mathfrak n_i(q_i)$ be the based outward unit normal.  The oriented area form $\rd A_i^{\rm or}$ is the scalar two-form on $F_i$ defined by
\be
\rd A_i^{\rm or}:=\tanip{\mathcal S}{\mathfrak n_i(q_i)}
=
\eta_{IL}\mathcal S^I\mathfrak n_i(q_i)^L\,.
\label{eq:oriented_area_form_def}
\ee
Its orientation is the one induced by the chosen oriented boundary loop of $F_i$.

We have the following lemma.

\begin{lemma}
\label{lemma:stabilizer_parameter_area}
Let $O_i\in\SO^+(1,2)$ be the based Levi-Civita holonomy around the oriented boundary of $F_i$.  
With the based flux defined above, we have
\be
O_i=\exp\big(-\sigma J(\mathbf E_i)\big)\,.
\label{eq:face_holonomy_vector_area}
\ee
If $F_i$ is non-null, then
\be
\mathbf E_i=\mathfrak a_i \mathfrak n_i \,,\quad \mathfrak a_i:=\nu_i\int_{F_i}\rd A_i^{\rm or}\,,
\label{eq:theta_signed_holonomy_area}
\ee 
where $\rd A_i^{\rm or}$ is defined by \eqref{eq:oriented_area_form_def}.  If the representative used in \eqref{eq:SO_axis_exp} is $n_i=\epsilon_i\mathfrak n_i$, $\epsilon_i=\pm1$, then the stabilizer coordinate satisfies
\be
\Theta_i\equiv -\epsilon_i\sigma\mathfrak a_i\,,
\label{eq:theta_flux_meaning}
\ee
with ordinary equality for hyperbolic holonomies and equality modulo $2\pi$ for elliptic holonomies.
		 
If $F_i$ is null, there is no unit normal and no canonical scalar area decomposition.  
 For the chosen nonzero representative $n_i$, there is a unique real coefficient $\Theta_i$ such that
\be
\sigma\mathbf E_i=-\Theta_i n_i\,.
\ee

\end{lemma}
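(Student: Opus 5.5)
We will prove the lemma by computing the holonomy of the Levi-Civita connection around the totally geodesic face $F_i$ with a non-abelian Stokes argument. The argument is exact, not perturbative, because the face is totally geodesic and the ambient curvature is constant.

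\textbf{Step 1: reduction to the reference point.} Work first at the reference point $q_i$ with the frame in which $n_i(q_i)$ has constant components. By Lemma~\ref{lemma:SO_axis}, parallel transport around $\partial F_i$ lies in the one-parameter stabilizer group $\exp(\mathbb R J(n_i))$. Because this group is abelian, non-abelian Stokes reduces to the ordinary Stokes theorem applied to the component of the connection along the stabilizer generator. Concretely, restrict the spin connection $\omega$ to $F_i$. Since $N_i$ is parallel along the support surface, the restricted connection takes values in $\mathbb R J(n_i)$ in the adapted frame. Its curvature on $F_i$ is then abelian, $F=d\omega$, and the holonomy equals $\exp(\int_{F_i}F)$. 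The orientation of the integral is the one induced by the oriented boundary loop.

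\textbf{Step 2: the curvature is an area form.} For $\Msig$ the Riemann tensor is $R_{abcd}=\sigma(g_{ac}g_{bd}-g_{ad}g_{bc})$. Translated to the internal $\mathfrak{so}(1,2)$ language through $J$, the curvature two-form equals $\sigma$ times $e\wedge e$. Using $\varepsilon$ and the contraction identity \eqref{eq:epsilon_contraction}, this becomes $F=-\sigma J(\mathcal S)$ with $\mathcal S$ as in \eqref{eq:area_normal_two_form}. The main sign bookkeeping is to check that the convention $J(u)w=u\times w$ together with \eqref{eq:trace_JJ} gives exactly the factor $-\sigma$, and not $+\sigma$. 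I would verify this on a small spacelike disc in $\dS^3$, where the holonomy rotation angle must be positive relative to the boundary orientation.

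Integrating gives $\int_{F_i}F=-\sigma J(\mathbf E_i(q_i))$. Transporting by $Q_i$, which conjugates $J(u)$ to $J(Q_iu)$, yields \eqref{eq:face_holonomy_vector_area}.

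\textbf{Step 3: the non-null case.} On a totally geodesic face, $\mathcal S^I$ restricted to $F_i$ is proportional to the normal, because $e^J\wedge e^K$ is tangent to the face. Hence $\mathcal S^I=\nu_i\,\mathfrak n_i^I\,\tanip{\mathcal S}{\mathfrak n_i}$, using $\tanip{\mathfrak n_i}{\mathfrak n_i}=\nu_i$. Integrating gives \eqref{eq:theta_signed_holonomy_area}. Substituting $n_i=\epsilon_i\mathfrak n_i$ into $\exp(-\sigma\mathfrak a_iJ(\mathfrak n_i))$ and comparing with \eqref{eq:SO_axis_exp} gives $\Theta_i=-\epsilon_i\sigma\mathfrak a_i$. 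The uniqueness of the stabilizer coordinate makes this an ordinary equality for boosts. For rotations it holds only modulo $2\pi$, because the exponential is periodic there.

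\textbf{Step 4: the null case.} The same proportionality argument shows that $\mathbf E_i$ lies on the null line $\mathbb R n_i$. Indeed, the face's tangent plane is $n_i^\perp$, and the cross product of two vectors in $n_i^\perp$ is parallel to $n_i$. The coefficient is unique because $n_i\neq0$. Comparing with \eqref{eq:SO_axis_exp} gives $\sigma\mathbf E_i=-\Theta_in_i$.

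\textbf{Main obstacle.} The chief difficulty is the orientation and sign consistency in Step 2: the Stokes orientation of the loops \eqref{eq:simple_loops}, the right-to-left composition convention, and the sign of $\sigma$ in the curvature. I would fix all of these through the explicit small-disc check described there.
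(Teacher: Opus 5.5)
Your proposal is essentially the paper's proof: abelian Stokes in the gauge where $n_i$ has constant components, the constant-curvature identity $R^I{}_J=\sigma e^I\wedge e_J$, the decomposition $\mathcal S^I=\nu_i\,\mathfrak n_i^I\,\rd A_i^{\rm or}$ on the face, and comparison with \eqref{eq:SO_axis_exp}; your cross-product argument for the null case also spells out what the paper only calls ``the same flux construction.'' One correction concerns where the minus sign comes from: the contraction \eqref{eq:epsilon_contraction} actually gives $J(\mathcal S)^I{}_J=e^I\wedge e_J$, so the curvature is $+\sigma J(\mathcal S)$, and the minus sign in \eqref{eq:face_holonomy_vector_area} comes instead from parallel transport being $\mathcal P\exp(-\int\omega)$ for $D=\rd+\omega$, so that $O_i=\exp(-\int_{F_i}R)$; with that convention your two intermediate signs ($F=-\sigma J(\mathcal S)$ and holonomy $\exp(\int F)$) are each flipped, they cancel so the result is unaffected, and your small-disc check would only confirm the net sign, not catch this.
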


\begin{proof}
Let $\omega^I{}_{J}$ be the Levi-Civita spin connection in the co-triad $e^I$, with $D=\rd+\omega$ and parallel transport $\mathcal P\exp(-\int\omega)$, and let
\be
R^I{}_{J}=\rd\omega^I{}_{J}+\omega^I{}_{K}\wedge\omega^K{}_{J}
\ee
be its curvature two-form.  Since $\mathcal M_\sigma$ has constant sectional curvature $\sigma$, the Gauss equation gives
\be
R(X,Y)Z
=
\sigma\big(\tanip{Y}{Z}X-\tanip{X}{Z}Y\big)\,,\quad X,Y,Z\in T\mathcal M_{\sigma}\,.
\label{eq:constant_curvature_operator}
\ee
In co-triad and spin-connection notation, this becomes
\be
R^I{}_{J}=\sigma e^I\wedge e_J
\quad \Longrightarrow\quad
R^I:=\frac12{\varepsilon^I{}_{JK}}R^{JK}
=
\frac{\sigma}{2}\varepsilon^I{}_{JK}e^J\wedge e^K\,.
\label{eq:curvature_two_form_constant}
\ee
With the convention above, $R^{\mathcal M_\sigma}=\sigma J(\mathcal S)$.
Strictly speaking, the holonomy of a connection is a path-ordered
exponential of the {negative of} connection one-form around the boundary.
Because $F_i$ is simply connected and its chosen nonzero
normal vector $n_i$ is parallel along a totally geodesic
face, choose a trivialization in which $n_i$ is constant. Then $D n_i=0$ implies that the restricted connection takes values in the one-dimensional fixed-vector algebra $\mathfrak{stab}(n_i)$, which
is abelian. For a null face, fixing the vector rather than only its line
is essential. Hence path ordering drops out, and the ordinary Stokes
formula applies.
Using the identification $J:\mathbb R^{1,2}\to\mathfrak{so}(1,2)$ and the equivariance $Q_iJ(u)Q_i^{-1}=J(Q_iu)$, the based boundary holonomy is, therefore,
\be
O_i
=
\exp\left(
{-}Q_i\left(\int_{F_i}
R^{\mathcal M_\sigma}\right)Q_i^{-1}
\right)
=
\exp\big(-\sigma J(\mathbf E_i)\big)\,.
\label{eq:face_holonomy_curvature_integral}
\ee
This gives \eqref{eq:face_holonomy_vector_area}.

For a non-null face, the vector-valued two-form \eqref{eq:area_normal_two_form} is normal to $F_i$.  Hence, on $F_i$, it has the form $\mathcal S^I=c\,\mathfrak n_i(q_i)^I$ for some scalar two-form $c$.  Contracting the internal vector index with the unit normal gives
\be
\tanip{\mathcal S}{\mathfrak n_i(q_i)}
=
c\,\tanip{\mathfrak n_i(q_i)}{\mathfrak n_i(q_i)}
=
c\,\nu_i\,.
\ee
By \eqref{eq:oriented_area_form_def}, this contraction is $\rd A_i^{\rm or}$, so $c=\nu_i\rd A_i^{\rm or}$.  Therefore,
\be
\mathcal S^I
=
\nu_i\,\mathfrak n_i(q_i)^I\,\rd A_i^{\rm or}
\qquad \hbox{on }F_i\,.
\ee
Integrating gives
\be
\mathbf E_i(q_i)
=
\nu_i\left(\int_{F_i}\rd A_i^{\rm or}\right)\mathfrak n_i(q_i)\,.
\ee
Applying $Q_i$ and comparing with $\mathbf E_i=\mathfrak a_i\mathfrak n_i$ proves
\be
\mathfrak a_i=\nu_i\int_{F_i}\rd A_i^{\rm or}\,.
\ee
Since
\be
O_i=\exp\big(-\sigma\mathfrak a_iJ(\mathfrak n_i)\big)
=\exp\big(-(\epsilon_i\sigma\mathfrak a_i)J(n_i)\big)\,,
\ee
comparison with \eqref{eq:SO_axis_exp} gives \eqref{eq:theta_flux_meaning}, with the usual $2\pi$ ambiguity in the elliptic case.

The null case is the same flux construction without a unit normalization of the normal line.
\end{proof}
We call $\mathfrak a_i$ defined in \eqref{eq:theta_signed_holonomy_area} the {\it algebraic oriented area} of the non-null triangle $F_i$.

\begin{remark}%
\label{remark:branches}
Lemma \ref{lemma:SO_axis} uses one symbol $n_i$ for a chosen nonzero representative of the invariant normal line in every causal case.  The vector holonomy alone does not canonically determine the outward representative on that line.  For a hyperbolic holonomy,
\be
\exp\big(\Theta J(n)\big)=\exp\big((- \Theta)J(-n)\big)\,,
\ee
while for an elliptic holonomy with principal angle $0<\Theta<2\pi$,
\be
\exp\big(\Theta J(n)\big)=\exp\big((2\pi-\Theta)J(-n)\big)\,.
\ee
For a null representative there is, in addition, no unit normalization.  The replacement
{\be
n_i\longmapsto c_i n_i\,,
\qquad
\Theta_i\longmapsto c_i^{-1}\Theta_i\,,
\qquad
c_i\in\mathbb R^\times\,,
\label{eq:null_rescaling}
\ee}
leaves the parabolic holonomy unchanged.  Equivalently, the holonomy fixes the nilpotent logarithm vector
{\be
\kappa_i:=\Theta_i n_i\,,
\qquad
O_i=\exp\big(J(\kappa_i)\big)\,,
\qquad
\tanip{\kappa_i}{\kappa_i}=0\,,
\label{eq:null_logarithm_vector}
\ee}
but not its decomposition into a null representative and a stabilizer coordinate.  The special case $c_i=-1$ is precisely the sign reversal of the null representative.

Thus the outward representative is a global tetrahedral branch choice for every causal type, fixed by the simultaneous convexity condition in Section \ref{sec:orientation_convexity} {below}, not by the positivity of one stabilizer coordinate.  Once this branch is fixed, Lemma \ref{lemma:stabilizer_parameter_area} gives the geometric meaning of that coordinate.
\end{remark}

\subsection{\texorpdfstring{$\SL(2,\mathbb R)$}{SL(2,R)} spin lifts}
\label{sec:spin_lifts}

The previous section described the based face holonomies as elements of the common tangent Lorentz group $\SO^+(1,2)$.  For trace formulas it is useful to pass to its spin double cover
\be
\Spin^+(1,2)\cong\SL(2,\mathbb R)\,.
\ee
Since the holonomies act on the Lorentzian tangent space $T_v\Msig$, and this tangent space has signature $(1,2)$ for both $\dS^3$ and $\AdS^3$, the same spin cover and the same $\mathfrak{sl}(2,\mathbb R)$ formulas apply in both cases.  

After choosing an oriented time-oriented frame at $v$, we identify $T_v\Msig$ with the fixed reference Lorentzian vector space $(\mathbb R^{1,2},\eta_T)$, and then identify $\mathbb R^{1,2}$ with $\mathfrak{sl}(2,\mathbb R)$.  We choose the generators
\be
\tau_0=\frac12\mat{cc}{0&1\\-1&0}\,,
\qquad
\tau_1=\frac12\mat{cc}{0&1\\1&0}\,,
\qquad
\tau_2=\frac12\mat{cc}{1&0\\0&-1}\,,
\ee
which satisfy
\be
\trspin(\tau_\mu\tau_\nu)=\frac12(\eta_T)_{\mu\nu}\,,
\qquad
\trspin(\tau_\mu\tau_\nu\tau_\rho)=\frac14\varepsilon_{\mu\nu\rho}\,,
\label{eq:tau_traces}
\ee
where the Levi-Civita symbol satisfies the contraction \eqref{eq:epsilon_contraction}. 
For $x=x^\mu e_\mu\in\mathbb R^{1,2}$, set
\be
\mathcal T(x):=x^\mu\tau_\mu\in\mathfrak{sl}(2,\mathbb R)\,.
\label{eq:T_map}
\ee
This is a component contraction with the chosen generators. %
The causal type of $x$ determines the real Cartan type of the generator $\mathcal T(x)$:
\be
\mathcal T(x)^2=\frac14\tanip{x}{x}\id\,.
\label{eq:T_square}
\ee
Thus a timelike unit vector gives a compact Cartan generator, conjugate to $\pm\tau_0$, and its exponential is elliptic.  A spacelike unit vector gives a split Cartan generator, conjugate to $\tau_2$, and its exponential is hyperbolic.  A null vector gives a nilpotent generator, namely a nonzero traceless matrix $\mathcal T(n)$ with $\mathcal T(n)^2=0$, and hence a parabolic exponential.  
The covering map $\pi:\SL(2,\mathbb R)\to\SO^+(1,2)$ sends the spin exponential
\be
H_i=\epsilon_i \exp\big(\Theta_i\mathcal T(n_i)\big)
\label{eq:spin_exp}
\ee
to \eqref{eq:SO_axis_exp}, where $\epsilon_i=\pm1$ is the central ambiguity of $\SL(2,\mathbb R)\to\SO^+(1,2)$.  Using \eqref{eq:T_square}, its closed form for all three causal types is
\be
H_i=\epsilon_i\lb c_i\id+s_i\mathcal T(n_i)\rb\,,
\qquad
(c_i,s_i)=
\begin{cases}
\left(\cos\frac{\Theta_i}{2},\,2\sin\frac{\Theta_i}{2}\right)\,,
& \tanip{n_i}{n_i}=-1\,,\\[0.15cm]
\left(\cosh\frac{\Theta_i}{2},\,2\sinh\frac{\Theta_i}{2}\right)\,,
& \tanip{n_i}{n_i}=+1\,,\\[0.15cm]
{\left(1,\,\Theta_i\right)}\,,
& \tanip{n_i}{n_i}=0\,.
\end{cases}
\label{eq:spin_closed_form}
\ee
Thus $\epsilon_is_i$ is the coefficient of $\mathcal T(n_i)$ in the spin lift of every causal case.  For a parabolic holonomy, $s_i=\Theta_i$ and \eqref{eq:null_logarithm_vector} gives $s_in_i=\kappa_i$.  Hence the traceless part fixes $\epsilon_i\kappa_i$, while the separation into a null representative and a stabilizer coordinate retains the rescaling freedom \eqref{eq:null_rescaling}.

The spin version contains no new geometric input.  Once a lift of the based Levi-Civita holonomy \eqref{eq:face_holonomy_vector_area} is chosen, the corresponding spin holonomy is
\be
H_i=\epsilon_i\exp\big(-\sigma\mathcal T(\mathbf E_i)\big)\,.
\ee

\subsection{Dihedral invariants from based normals}
\label{sec:dihedral_invariants}

At the base vertex $v=V_4$, the extrinsic normals $N_1,N_2,N_3$ are tangent to $\Msig$ and therefore give the intrinsic representatives $n_1,n_2,n_3\in T_v\Msig$ directly.  The fourth extrinsic normal $N_4$ is first viewed as $n_4(V_2)\in T_{V_2}\Msig$ and then represented at $v$ by the special-edge transport \eqref{eq:n4_transport}.  For most adjacent face pairs, this chosen transport already computes the correct ambient inner product.  The only exceptional entry is the pair $(F_2,F_4)$, whose common edge is not compatible with the special-edge representative of $n_4$ without inserting a face holonomy.

For an already existing tetrahedron, the next proposition proves that the following tangent-space formula gives the ambient Gram matrix.  In the converse direction of the reconstruction theorem as we will see in Section \ref{sec:main_reconstruction}, the same formula will be used first as the {\it definition} of the reconstructed Gram matrix from holonomy data.  Only after Lemmas \ref{lemma:lorentzian_gram_signature}, \ref{lemma:ambient_gram_reconstruction}, and \ref{lemma:vertex_reality} are applied is it interpreted as the Gram matrix of ambient normals.

\begin{prop}
\label{prop:dihedral_equivalence}
Let $n_i\in T_v\Msig$ be the transported intrinsic normal representatives associated with a tetrahedron in $\Msig$.  With the simple-path convention of Definition \ref{def:simple_path_convention}, the ambient Gram matrix is recovered by
\be
G_{ij}=
\begin{cases}
\tanip{n_i}{n_j}\,, & (i,j)\neq(2,4),(4,2)\,,\\[0.12cm]
\tanip{n_2}{O_1n_4}
=\tanip{n_2}{O_3^{-1}n_4}\,, & (i,j)=(2,4),(4,2)\,.
\end{cases}
\label{eq:reconstructed_gram}
\ee
If $H_i\in\SL(2,\mathbb R)$ are spin lifts with $\pi(H_i)=O_i$, their action on tangent vectors is the adjoint action transported through $\mathcal T$:
\be
H_i\cdot x
:=\mathcal T^{-1}\!\left(H_i\,\mathcal T(x)\,H_i^{-1}\right)
=O_ix\,,
\qquad x\in T_v\Msig\simeq\mathbb R^{1,2}\,.
\label{eq:spin_adjoint_vector_action}
\ee
Thus the exceptional entry may equivalently be written as
\be
G_{24}=G_{42}
=\tanip{n_2}{H_1\cdot n_4}
=\tanip{n_2}{H_3^{-1}\cdot n_4}\,,
\label{eq:reconstructed_gram-H}
\ee
while all nonexceptional entries are unchanged.
\end{prop}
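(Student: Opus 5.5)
The plan is to compare, for each pair of faces, the ambient pairing $\ambip{N_i}{N_j}$ with a tangent pairing at one point of the common edge $F_i\cap F_j$, and then carry that pairing to $v$ by parallel transport. Two facts do all the work. First, for $p\in\widetilde\Pi_i\cap\widetilde\Pi_j$ both constant vectors $N_i,N_j$ lie in $T_p\Msig=p^\perp$, and the induced metric is the restriction of $\ambip{\cdot}{\cdot}$, so $\tanip{n_i(p)}{n_j(p)}=\ambip{N_i}{N_j}$. Second, by the argument in the proof of Lemma~\ref{lemma:SO_axis}, the constant field $N_i$ is Levi-Civita parallel along every curve in $\widetilde\Pi_i$. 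Hence transporting $n_i(p)$ along any path inside $F_i$ gives $n_i$ at the other endpoint. Parallel transport is an isometry, so pairings are preserved.

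The diagonal entries and the pairs among $\{1,2,3\}$ are immediate: take $p=v$, which lies on $F_1\cap F_2\cap F_3$.

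For $G_{14}$ and $G_{34}$, use the special edge $E_{24}=F_1\cap F_3$, which lies in both $F_1$ and $F_3$. Along it, $N_1$ and $N_3$ are parallel, so $o_{42}n_1(V_2)=n_1(v)$ and $o_{42}n_3(V_2)=n_3(v)$. Since $V_2\in F_4$, we also have $o_{42}n_4(V_2)=n_4$ by \eqref{eq:n4_transport}. Therefore $\tanip{n_i}{n_4}=\tanip{n_i(V_2)}{n_4(V_2)}=\ambip{N_i}{N_4}$ for $i=1,3$.

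The exceptional pair $(2,4)$ is handled with the common edge $E_{13}=F_2\cap F_4$, between $V_1$ and $V_3$. The loop $\ell_1=(43)(32)(24)$ runs inside $F_1$; I would cut it as $o_{43}\circ(o_{32}o_{24})$. The first piece, $o_{32}o_{24}$, carries $n_4$ from $v$ back to $V_2$ along $E_{24}$, producing $n_4(V_2)$. It then carries it along $E_{23}=F_1\cap F_4$ to $V_3$; this edge lies in $F_4$, so the result is $n_4(V_3)$. The second piece, $o_{43}$, transports along $E_{34}=F_1\cap F_2$, which lies in $F_2$. This edge carries $n_2(V_3)$ to $n_2(v)$, so $o_{34}n_2=n_2(V_3)$. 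Combining,
\[
\tanip{n_2}{O_1n_4}=\tanip{o_{34}n_2}{o_{32}o_{24}n_4}=\tanip{n_2(V_3)}{n_4(V_3)}=\ambip{N_2}{N_4}.
\]
The $O_3^{-1}$ form is the same argument with $O_3^{-1}=o_{41}o_{12}o_{24}$. Here $o_{12}o_{24}$ carries $n_4$ along $E_{24}$ and then $E_{12}=F_3\cap F_4\subset F_4$, giving $n_4(V_1)$. The last step $o_{41}$ runs along $E_{14}=F_2\cap F_3$, so $o_{14}n_2=n_2(V_1)$. Alternatively, one can derive it from closure $O_4O_3O_2O_1=\id$ together with $O_2n_2=n_2$ and $O_4n_4=n_4$:
\[
\tanip{n_2}{O_1 n_4}=\tanip{O_2^{-1}n_2}{O_3^{-1}O_4^{-1}n_4}=\tanip{n_2}{O_3^{-1}n_4}.
\]
This second route is the more robust one.

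The spin statement reduces to the standard equivariance of $\mathcal T$ under the double cover $\pi$: $H\mathcal T(x)H^{-1}=\mathcal T(\pi(H)x)$. This holds because conjugation preserves $\det\mathcal T(x)=-\tfrac14\tanip{x}{x}$ by \eqref{eq:T_square}, and $\pi$ is defined as this adjoint action. The sign $\epsilon_i$ cancels under conjugation. Substituting into the previous formulas gives \eqref{eq:reconstructed_gram-H}.

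The main obstacle is bookkeeping rather than ideas: for each transport one must verify which face contains the edge used, so that the appropriate normal is parallel along it. The step most likely to go wrong is correctly inverting the composite edge transports, given the right-to-left, target-first notation $(ab)$.
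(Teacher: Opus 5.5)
Your proposal is correct and follows essentially the paper's own proof. Both compare the normals at a point of the common edge (at $v$ for faces $1,2,3$, at $V_2$ for the special-edge pairs, at $V_3$ or $V_1$ for the exceptional pair $(2,4)$), use metric compatibility of the edge transports, and cancel the central sign under conjugation; your closure-based derivation of the $O_3^{-1}$ form is a harmless extra.

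Two small slips, neither affecting the result:
\begin{itemize}
\item $G_{44}$ is obtained at $V_2$ via the isometry $o_{42}$, not at $v\notin F_4$.
\item In the closure identity the first slot should read $O_2n_2$ rather than $O_2^{-1}n_2$. This is immaterial because $O_2$ fixes $n_2$.
\end{itemize}
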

\begin{proof}
For $i,j\in\{1,2,3\}$, both faces meet at the base vertex $v$.  The extrinsic normals $N_i,N_j$ are already tangent vectors at $v$, namely $n_i,n_j$, so the ambient pairing restricts to the tangent pairing:
\be
\ambip{N_i}{N_j}=\tanip{n_i}{n_j}\,.
\ee

Next consider a pair involving $F_4$ and one of the two faces sharing the special edge, namely $F_1$ or $F_3$.  At the reference point $V_2\in F_4$, both relevant normals are tangent to $\Msig$.  Transporting them to $v$ along the same special edge preserves the Lorentzian inner product, so again
\be\label{eq:metric_compatibility_13}
\ambip{N_i}{N_4}=\tanip{n_i}{n_4}\,,
\qquad i=1,3\,.
\ee
The same argument also gives the diagonal entries.

The only remaining off-diagonal entry is the adjacent pair $(F_2,F_4)$.  Their common edge is not the special edge used to define $n_4$.  To compare the two normals at the endpoint $V_3$ of $F_2\cap F_4$, one transports $n_4(V_2)$ first along $(32)$ and then transports the result to $v$ along $(43)$.  Relative to the special-edge representative $n_4=o_{42}n_4(V_2)$, this is exactly
\be
o_{43}o_{32}n_4(V_2)
=o_{43}o_{32}o_{24}n_4
=O_1n_4\,.
\ee
Metric compatibility therefore gives
\be
\ambip{N_2}{N_4}
=\tanip{n_2}{O_1n_4}\,.
\ee
Doing the same comparison through the other endpoint $V_1$ gives the equivalent expression $\tanip{n_2}{O_3^{-1}n_4}$.  Finally, \eqref{eq:spin_adjoint_vector_action} is precisely the vector action of the projected holonomy $\pi(H_i)=O_i$, and the central sign of a spin lift cancels in the conjugation.  This proves the spin-lift form as well.
\end{proof}

For vector holonomies, the trace formulas are most uniform if one uses the scaled Lie-algebra representative
\be
a_i:=r_in_i\,,
\qquad
A_i:=\frac{O_i-O_i^{-1}}{2}=J(a_i)\,,
\label{eq:vector_Ai_scaled}
\ee
where $r_i$ is defined in \eqref{eq:SO_exp_closed}.
Note that this antisymmetric part is useful only when $r_i\neq0$.  In particular, for an elliptic half-turn one has $\Theta_i=\pi$ modulo $2\pi$, hence $r_i=\sin\Theta_i=0$ and
\be
O_i=O_i^{-1}\,,
\qquad
\frac{O_i-O_i^{-1}}{2}=0\,.
\ee
The invariant normal line is then still well defined, namely $\ker(O_i-\id)$, but it cannot be extracted from \eqref{eq:Oi_minus_inverse}.  The reconstruction theorem below is, therefore, formulated in terms of invariant normal lines, not only in terms of the antisymmetric part.

When this is not the case, we have
\begin{align}
\frac12\trvec(A_iA_j)&=\tanip{a_i}{a_j}\,,\label{eq:SO_connected_pair}\\
\trvec(A_iA_jA_k)&=\tanip{a_i\times a_j}{a_k}\,.
\label{eq:SO_connected_triple}
\end{align}
For the exceptional Gram entry $G_{24}$, the normal $n_4$ is replaced by $O_1n_4$ or equivalently by $O_3^{-1}n_4$.  Since $A_4=J(a_4)$ is an endomorphism of $T_v\Msig$, the same transport acts on $A_4$ by conjugation.  Equivalently, the conjugated holonomy $O_1O_4O_1^{-1}$ has invariant normal line $\mathbb R(O_1n_4)$.  The corresponding Lie-algebra generators are therefore
\be
A_{4|1}:=\frac{O_1O_4O_1^{-1}-(O_1O_4O_1^{-1})^{-1}}{2}
=O_1A_4O_1^{-1}
=J(O_1a_4)\,,
\label{eq:vector_conj_A41}
\ee
and
\be
A_{4|3^{-1}}:=O_3^{-1}A_4O_3
=J(O_3^{-1}a_4)\,.
\label{eq:vector_conj_A4_3inv}
\ee
The conjugation appears because $J$ is equivariant under the orientation-preserving Lorentz action: $O\,J(a)\,O^{-1}=J(Oa)$.  Equations \eqref{eq:SO_connected_pair} and \eqref{eq:SO_connected_triple} then apply without further change, using $A_{4|1}$ or $A_{4|3^{-1}}$ in place of $A_4$ for the exceptional insertion.  Whenever the relevant factors $r_i$ are nonzero, the Gram entries of the chosen representatives $n_i$ are recovered by dividing by the appropriate products of $r_i$.  For a null face, $r_i=\Theta_i$ and the trace formulas before this division compute invariants of the logarithm-fixed vector $\kappa_i=\Theta_i n_i$.

For spin holonomies, similarly write the traceless part
\be
B_i:=\impart(H_i):=H_i-\frac12\trspin(H_i)\id=\mathcal T(b_i)\,,
\qquad
b_i:=
\epsilon_i s_in_i\,,
\label{eq:spin_Bi_scaled}
\ee
Here $s_i$ is the coefficient introduced in \eqref{eq:spin_closed_form}.  

For any integer $m\geq2$, define the connected spin trace of a word by taking the ordinary trace of the corresponding traceless parts:
\be
\big\langle H_{i_1}\cdots H_{i_m}\big\rangle_{C,\SL}
:=\frac12\trspin\big(\impart(H_{i_1})\cdots\impart(H_{i_m})\big)
=\frac12\trspin\big(B_{i_1}\cdots B_{i_m}\big)\,.
\label{eq:SL_connected_general}
\ee
In particular, for $m=2$ and $m=3$,
\begin{align}
\langle H_iH_j\rangle_{C,\SL}
&=\frac12\trspin\big(\mathcal T(b_i)\mathcal T(b_j)\big)
=\frac{1}{4}\tanip{b_i}{b_j}\,,\label{eq:SL_connected_pair}\\
\langle H_iH_jH_k\rangle_{C,\SL}
&=\frac12\trspin\big(\mathcal T(b_i)\mathcal T(b_j)\mathcal T(b_k)\big)
=\frac{1}{8}\tanip{b_i\times b_j}{b_k}\,.
\label{eq:SL_connected_triple}
\end{align}
In the last line, we have used \eqref{eq:tau_traces} and the cross-product convention \eqref{eq:cross_product}. Explicitly, $\trspin(\mathcal T(u)\mathcal T(v)\mathcal T(w))=\frac14\tanip{u\times v}{w}$.

For non-null normalized representatives, $b_i=\epsilon_i s_in_i$ and
\be
\langle H_i^2\rangle_{C,\SL}
=\frac{s_i^2}{4}\tanip{n_i}{n_i}\,.
\ee
Recall $\nu_i:=\tanip{n_i}{n_i}\in\{\pm1\}$ and let 
\be
\rho_i:=2\sqrt{\nu_i\langle H_i^2\rangle_{C,\SL}}=|s_i|\,.
\ee
Then
\be
n_i=\pm_i\,\frac{b_i}{\rho_i}\,,
\label{eq:branch_sign_extraction}
\ee
where the signs $\pm_i$ are the branch choices in extracting the normalized representative from the traceless part.
Equivalently, if a parametrization $H_i=\epsilon_i\bigl[c_i\id+s_i\mathcal T(n_i)\bigr]$ has already been chosen, then $\pm_i=\sgnop(\epsilon_i s_i)$.  Thus the signs are chosen once for each face and then used consistently in all Gram entries and triple products.  They are not independent signs attached to individual formulas.  With these branches, the normalized inner products are equivalently
\begin{align}
\tanip{n_i}{n_j}
&=
\frac{4\,\pm_i\pm_j\,\langle H_iH_j\rangle_{C,\SL}}
{\rho_i\rho_j}\,,\label{eq:SL_normalized_pair}\\
\tanip{n_i\times n_j}{n_k}
&=
\frac{8\,\pm_i\pm_j\pm_k\,\langle H_iH_jH_k\rangle_{C,\SL}}
{\rho_i\rho_j\rho_k}\,.
\label{eq:SL_normalized_triple}
\end{align}
These are the normalized non-null versions of \eqref{eq:SL_connected_pair}--\eqref{eq:SL_connected_triple}.  The global outward criterion for fixing the signs is given in Section \ref{sec:orientation_convexity}.  The normalized formulas are not used for null faces, because both unit normalization and the displayed denominators degenerate in the parabolic case.
For the exceptional insertion, define the conjugated spin lifts and their traceless parts by
\be
H_{4|1}:=H_1H_4H_1^{-1}\,,
\qquad
B_{4|1}:=\impart(H_{4|1})=H_1B_4H_1^{-1}
=\mathcal T(H_1\cdot b_4)\,,
\label{eq:spin_conj_B41}
\ee
and similarly
\be
H_{4|3^{-1}}:=H_3^{-1}H_4H_3\,,
\qquad
B_{4|3^{-1}}:=\impart(H_{4|3^{-1}})
=H_3^{-1}B_4H_3
=\mathcal T\big((H_3^{-1})\cdot b_4\big)\,.
\label{eq:spin_conj_B4_3inv}
\ee
The exceptional spin-trace expressions are obtained from \eqref{eq:SL_connected_pair} and \eqref{eq:SL_connected_triple} by replacing $B_4$ with $B_{4|1}$, or equivalently with $B_{4|3^{-1}}$.  For non-null normalized representatives one divides by products of $s_i$.  For null representatives one divides by the chosen nonzero parabolic coordinates $\Theta_i$, together with the spin-lift signs where appropriate.  Equivalently, without making these choices, the formulas directly compute invariants of the logarithm-fixed vectors $\kappa_i$.

\section{Orientation, Chirality, and Convexity}
\label{sec:orientation_convexity}

This section supplies the orientation and global convexity information not contained in the Gram matrix alone.  The vertex triple products relate the principal-minor conditions to the ambient model and select the outward representatives of the four normal lines.  The support-sign and inverse-Gram criteria then determine whether the selected half-spaces form a finite strictly convex tetrahedron.  Configurations that fail the strict domain condition are treated as obstructions rather than as tetrahedra.  These results provide the geometric input for the reconstruction theorem of Section \ref{sec:main_reconstruction}.

The Gram matrix reconstructs the ambient hyperplanes but not the outward branch.  The missing information is orientation-sensitive.  At the vertex opposite $F_i$, let $\{j,k,l\}=\{1,2,3,4\}\setminus\{i\}$ in the cyclic order induced by the chosen boundary orientation, and define
\be
\chi_i(V_i):=\tanip{n_j(V_i)\times n_k(V_i)}{n_l(V_i)}\,.
\label{eq:chi_at_vertex}
\ee
Parallel transport preserves the Lorentzian scalar triple product.  With the base point $v=V_4$ and the same simple-path convention, we write simply $\chi_i=\chi_i(v)$, with
\be
\begin{aligned}
\chi_1&=\tanip{n_3\times n_2}{O_3^{-1}n_4}\,,\\
\chi_2&=\tanip{n_1\times n_3}{n_4}\,,\\
\chi_3&=\tanip{n_2\times n_1}{O_1n_4}\,,\\
\chi_4&=\tanip{n_1\times n_2}{n_3}\,.
\end{aligned}
\label{eq:transported_chi}
\ee
These quantities also provide the operational rule for the branch signs $\pm_i$ in \eqref{eq:branch_sign_extraction}.  Start with any preliminary choices of the four signs.  If the preliminary branch of face $F_p$ is changed by a sign $\delta_p\in\{\pm1\}$, then $\chi_q$ changes exactly when $p\neq q$. 

The transformation law only tells us how the candidate orientation data react to branch flips.  It does not yet explain why a common sign should encode the outward branch.  The next lemma records the signature information already forced by the reconstructed Lorentzian vertex data.  Then Proposition \ref{prop:det_chi} relates the three-dimensional triple product at a vertex to an ambient invariant that is independent of which vertex is used.

\begin{lemma}%
\label{lemma:lorentzian_gram_signature}
Let $G$ be the reconstructed Gram matrix \eqref{eq:reconstructed_gram}, and let $G_{\hat i}$ be the principal submatrix obtained by deleting the $i$-th row and column.  With the simple-path convention,
\be
\det G_{\hat i}=-\chi_i^2\,.
\label{eq:principal_minor_chi}
\ee
Consequently,
\be
\det G_{\hat i}\neq0
\Longleftrightarrow
\chi_i\neq0\,,
\ee
and when this holds,
\be
\Inertia(G_{\hat i})=(1,2)\,,
\qquad i=1,\ldots,4\,.
\ee
If in addition $G$ is nondegenerate and every $G_{\hat i}$ is nondegenerate, then
\be
\Inertia(G)\in\{(1,3),(2,2)\}\,.
\ee
Equivalently,
\be
\det G<0\Longleftrightarrow \Inertia(G)=(1,3)\,,
\qquad
\det G>0\Longleftrightarrow \Inertia(G)=(2,2)\,.
\label{eq:det_sign_inertia}
\ee
\end{lemma}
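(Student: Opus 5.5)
The plan is to reduce each principal minor to a $3\times3$ Gram determinant of three vectors in the single tangent space $T_v\Msig\simeq\mathbb R^{1,2}$. Then I use the identity $\det\bigl(\tanip{u_a}{u_b}\bigr)=\det(\eta_T)\det(u_1,u_2,u_3)^2=-\det(u_1,u_2,u_3)^2$, together with $\tanip{u\times w}{z}=\det(u,w,z)$.

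The key point is to pick, for each $i$, three tangent representatives whose pairwise pairings reproduce every entry of $G_{\hat i}$ in \eqref{eq:reconstructed_gram}.
\begin{itemize}
\item For $i=4$, take $(n_1,n_2,n_3)$. No exceptional entry occurs, so $\det G_{\hat 4}=-\det(n_1,n_2,n_3)^2=-\chi_4^2$.
\item For $i=2$, take $(n_1,n_3,n_4)$. The exceptional pair $(2,4)$ is absent, so again $\det G_{\hat 2}=-\chi_2^2$.
\item For $i=3$, use $(n_1,n_2,O_1n_4)$. The entry $G_{24}=\tanip{n_2}{O_1n_4}$ is correct, and $G_{44}=\tanip{O_1n_4}{O_1n_4}$ holds since $O_1$ is an isometry. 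I must also check that $\tanip{n_1}{O_1n_4}=\tanip{n_1}{n_4}$. This follows from $O_1^{-1}n_1=n_1$ (Lemma~\ref{lemma:SO_axis}) and invariance of the pairing. Hence $\det G_{\hat 3}=-\tanip{n_2\times n_1}{O_1n_4}^2=-\chi_3^2$, where the reordering only affects the sign before squaring.
\item For $i=1$, use $(n_2,n_3,O_3^{-1}n_4)$. Here $\tanip{n_3}{O_3^{-1}n_4}=\tanip{O_3n_3}{n_4}=\tanip{n_3}{n_4}$, and the $(2,4)$ entry matches the second expression in \eqref{eq:reconstructed_gram}. This gives $\det G_{\hat 1}=-\chi_1^2$.
\end{itemize}
The equality of the two forms of $G_{24}$, which relies on closure, is taken from Proposition~\ref{prop:dihedral_equivalence}. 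In the converse direction it should follow from $O_1=O_3^{-1}\cdots$ combined with the closure relation and the fixed-vector property. I will note this as the one point needing care, since it is the only place where $O_4O_3O_2O_1=\id$ enters.

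The equivalence $\det G_{\hat i}\neq0\Leftrightarrow\chi_i\neq0$ is now immediate. For the inertia, when $\chi_i\neq0$ the three chosen vectors form a basis of $\mathbb R^{1,2}$. So $G_{\hat i}$ is congruent to $\eta_T$ via the matrix of components, and Sylvester's law gives $\Inertia(G_{\hat i})=(1,2)$.

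Finally, suppose $G$ is nondegenerate. Cauchy interlacing for the $3\times3$ principal submatrix $G_{\hat 4}$, which has eigenvalues of signs $(-,+,+)$, forces $G$ to have at least one negative and at least two positive eigenvalues. Hence $\Inertia(G)\in\{(1,3),(2,2)\}$, and the sign of $\det G=(-1)^{n_-}\cdot|\det G|$ distinguishes the two cases, giving \eqref{eq:det_sign_inertia}. Only one nondegenerate minor is actually needed for this last step.
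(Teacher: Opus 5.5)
Your proposal is correct and follows the paper's proof: identify $G_{\hat i}$ as the Lorentzian Gram matrix of the three vectors entering $\chi_i$, use $\det(\eta_T)=-1$ and congruence to $\eta_T$, then apply Cauchy interlacing (the paper likewise needs only one nondegenerate minor there), and your explicit entry checks via $O_1n_1=n_1$ and $O_3n_3=n_3$ fill in what the paper leaves implicit. For the point you flagged, closure gives $O_1=O_2^{-1}O_3^{-1}O_4^{-1}$ (not $O_3^{-1}\cdots$), so $\tanip{n_2}{O_1n_4}=\tanip{O_2n_2}{O_3^{-1}O_4^{-1}n_4}=\tanip{n_2}{O_3^{-1}n_4}$, using that $O_2$ fixes $n_2$ and $O_4$ fixes $n_4$.
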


\begin{proof}
For each $i$, take the three transported intrinsic normal representatives entering the definition of $\chi_i$ in \eqref{eq:transported_chi}.  By the algebraic definition of the reconstructed Gram matrix in \eqref{eq:reconstructed_gram}, including the same exceptional transport insertion for the pair $(F_2,F_4)$, their Lorentzian Gram matrix is exactly $G_{\hat i}$.  If $M_i$ is the $3\times3$ matrix whose columns are these three tangent vectors in an oriented pseudo-orthonormal frame, then
\be
G_{\hat i}=M_i^\top\eta_TM_i\,.
\ee
Taking determinants gives
\be
\det G_{\hat i}
=\det(\eta_T)\,\det(M_i)^2
=-\chi_i^2\,,
\ee
because $\det(\eta_T)=-1$ and $\det(M_i)$ is the Lorentzian scalar triple product defining $\chi_i$, up to the fixed sign already built into the ordering convention.  Thus $\det G_{\hat i}\neq0$ is equivalent to $\chi_i\neq0$.  In that case $M_i$ is invertible, so $G_{\hat i}$ is congruent to $\eta_T$ and has inertia $(1,2)$.

For the final claim, choose any $i$ and let the eigenvalues of $G$ be
\be
\lambda_1\leq\lambda_2\leq\lambda_3\leq\lambda_4\,,
\ee
while those of $G_{\hat i}$ are
\be
\mu_1\leq\mu_2\leq\mu_3\,.
\ee
The Cauchy interlacing theorem \cite{HornJohnson:2013} gives
\be
\lambda_1\leq\mu_1\leq\lambda_2\leq\mu_2
\leq\lambda_3\leq\mu_3\leq\lambda_4\,.
\ee
Since $\Inertia(G_{\hat i})=(1,2)$, we have $\mu_1<0<\mu_2\leq\mu_3$.  The inequalities above imply $\lambda_1<0$, so $G$ has at least one negative eigenvalue, and $\lambda_3>0$, $\lambda_4>0$, so $G$ has at least two positive eigenvalues.  Because $G$ is nondegenerate, the remaining eigenvalue $\lambda_2$ is either negative or positive.  Therefore, the only possibilities are $\Inertia(G)=(2,2)$ and $\Inertia(G)=(1,3)$.  The determinant sign distinguishes them.
\end{proof}

\begin{remark}
This is the Lorentzian analogue of the principal-minor simplification in the $\SO(3)$ curved Minkowski theorem in \cite{Haggard:2015ima}.  The difference is that the tangent metric is $\eta_T$ rather than the Euclidean metric, so the nonzero principal minors are negative instead of positive.
\end{remark}

\begin{prop}
\label{prop:det_chi}
For a tetrahedral configuration in $\Msig$,
\be
\det(N_1,N_2,N_3,N_4)
=\sigma_{\rm chir}\,\ambip{V_i}{N_i}\,\chi_i\,,
\label{eq:det_chi_unified}
\ee
where $\sigma_{\rm chir}\in\{\pm1\}$ is a fixed global sign determined once and for all by the convention relating the ambient orientation of $\Vsig$ to the oriented tangent frames of $\Msig$.
\end{prop}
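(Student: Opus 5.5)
The identity will be proved at one vertex at a time, using the orthogonal splitting $\Vsig=\mathbb RV_i\oplus T_{V_i}\Msig$; the vertex index $i$ then drops out.

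\emph{Step 1: the ambient determinant at $V_i$.} Fix $i$ and $\{j,k,l\}$ as in \eqref{eq:chi_at_vertex}. Since $V_i\in\Pi_j\cap\Pi_k\cap\Pi_l$, the normals $N_j,N_k,N_l$ satisfy $\ambip{V_i}{N_m}=0$ for $m=j,k,l$. They therefore lie in $V_i^\perp=T_{V_i}\Msig$ and coincide there with $n_j(V_i),n_k(V_i),n_l(V_i)$. Decompose the remaining normal as
\[
N_i=\sigma\ambip{V_i}{N_i}\,V_i+N_i^{\parallel},\qquad N_i^{\parallel}\in T_{V_i}\Msig,
\]
using $\ambip{V_i}{V_i}=\sigma$. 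Substitute this into $\det(N_1,N_2,N_3,N_4)$. The tangential part $N_i^\parallel$ drops out, because it is linearly dependent on the three tangent vectors $N_j,N_k,N_l$ in the three-dimensional space $T_{V_i}\Msig$.

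\emph{Step 2: reduce to the tangent triple product.} What remains is $\sigma\ambip{V_i}{N_i}\det(\dots,V_i,\dots)$, with $V_i$ in slot $i$ and $N_j,N_k,N_l$ in the other slots. Move $V_i$ to the first slot and reorder the others into the cyclic order $(j,k,l)$ used in \eqref{eq:chi_at_vertex}; this costs a permutation sign $s_i$. Next fix the orientation convention: an oriented pseudo-orthonormal tangent frame $(e_0,e_1,e_2)$ at $V_i$ is declared positive exactly when $\det(V_i,e_0,e_1,e_2)=c$ for a fixed constant $c=\pm1$. This convention is continuous along $\Msig$ and invariant under $\SO^+$. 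In such a frame, $\det(V_i,X,Y,Z)=c\det_T(X,Y,Z)=c\,\tanip{X\times Y}{Z}$ for tangent $X,Y,Z$. The result is
\[
\det(N_1,\dots,N_4)=\sigma\,c\,s_i\,\ambip{V_i}{N_i}\,\chi_i .
\]

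\emph{Step 3: the sign $\sigma c s_i$ does not depend on $i$.} This is the main obstacle. The cyclic orders in \eqref{eq:transported_chi} are $(3,2,4)$ for $i=1$, $(1,3,4)$ for $i=2$, $(2,1,4)$ for $i=3$ and $(1,2,3)$ for $i=4$, all induced from the boundary orientation. I would compute the parity of the permutation $(i,j,k,l)$ of $(1,2,3,4)$ in each case directly. These come out as $(1324)$ odd, $(2134)$ odd, $(3214)$ odd and $(4123)$ odd, so $s_i$ takes the same value for all four vertices. This is exactly why the ordering convention \eqref{eq:transported_chi} was chosen.

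Two further checks are needed. First, the transported expressions in \eqref{eq:transported_chi} agree with $\chi_i(V_i)$: parallel transport along the simple paths is orientation- and metric-preserving, so it preserves the triple product, and the exceptional insertions $O_1,O_3^{-1}$ reproduce the correct transported $n_4$, as in Proposition~\ref{prop:dihedral_equivalence}. Second, $c$ must be the same at every vertex. This holds because $\Msig$ is connected and the radial vector field $V$ gives a continuous orientation rule.

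Setting $\sigma_{\rm chir}:=\sigma c s$ then yields \eqref{eq:det_chi_unified} with a single global sign. That sign depends only on $\sigma$ and on how the ambient orientation is matched to oriented tangent frames, which is exactly the convention the statement refers to.
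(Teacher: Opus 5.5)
Your proposal is correct and follows the same route as the paper. Both arguments split $N_i$ into its radial component $\sigma\ambip{V_i}{N_i}V_i$ plus a tangential part that drops out of the determinant, and then read off $\chi_i$ from $\det(V_i,N_j,N_k,N_l)$ in an oriented pseudo-orthonormal frame at $V_i$. Your explicit parity check, showing that the reordering sign is $-1$ for all four orderings defining the $\chi_i$, adds something the paper does not do: the paper absorbs this permutation sign into $\sigma_{\rm chir}$ without verifying that it is independent of $i$.
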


\begin{proof}
At vertex $V_i$, the three incident normals $N_j,N_k,N_l$ lie in $T_{V_i}\Msig$.  Here $\{j,k,l\}=\{1,2,3,4\}\setminus\{i\}$ with the same cyclic order used in the definition of $\chi_i$.  Decompose the remaining normal into a tangent part and a component along $V_i$:
\be
N_i=N_i^\top+\alpha_i V_i\,,
\qquad
N_i^\top\in T_{V_i}\Msig\,,
\ee
where $N_i^\top$ is orthogonal to $V_i$, not to $N_i$ in general.  Since $\ambip{V_i}{V_i}=\sigma$,
\be
0=\ambip{N_i^\top}{V_i}
=\ambip{N_i-\alpha_iV_i}{V_i}
=\ambip{N_i}{V_i}-\alpha_i\sigma\,,
\qquad
\alpha_i=\sigma\,\ambip{V_i}{N_i}\,.
\ee
Thus
\be
N_i=N_i^\top+\sigma\,\ambip{V_i}{N_i}\,V_i\,.
\ee
The tangent vectors $N_j,N_k,N_l$ span $T_{V_i}\Msig$.  Hence $N_i^\top$ is a linear combination of them, so the determinant obtained by replacing $N_i$ with $N_i^\top$ vanishes.  Therefore, only the component along $V_i$ remains:
\be
\det(N_1,N_2,N_3,N_4)
=\epsilon_i\,\sigma\,\ambip{V_i}{N_i}\,
\det(V_i,N_j,N_k,N_l)\,,
\ee
where $\epsilon_i$ is the sign from moving the $i$-th normal to the displayed position.  Now choose an oriented pseudo-orthonormal frame $(e_0,e_1,e_2)$ of $T_{V_i}\Msig$ compatible with the global chirality convention.  In this frame, the tangent determinant of $N_j,N_k,N_l$ is the Lorentzian scalar triple product $\chi_i$.  The sign comparing $(V_i,e_0,e_1,e_2)$ with the ambient orientation, together with $\epsilon_i$ and the factor $\sigma$, is fixed by the global chirality convention and is denoted by $\sigma_{\rm chir}$.  This gives \eqref{eq:det_chi_unified}.
\end{proof}

\begin{remark}
The sign $\sigma_{\rm chir}$ is global because it records only the chosen orientation convention, not the face label.  The support number $\ambip{V_i}{N_i}$ is geometrical and is not absorbed into this convention.  Thus the signs of the $\chi_i$ are vertex-wise shadows of one ambient determinant, up to the support signs.  If the half-spaces are chosen by the outward convention \eqref{eq:halfspace_convention}, then a strictly convex tetrahedron satisfies $\ambip{V_i}{N_i}<0$ for every opposite vertex by \eqref{eq:outward_support}.  This common negative sign is the input for the outward half-space criterion below.
\end{remark}

\begin{prop}
\label{prop:convex_branch}
Assume that a choice of branches for the four transported intrinsic normal representatives gives a tetrahedral configuration in $\Msig$ in the sense of Definition \ref{def:tetrahedral_configuration}, with nondegenerate ambient Gram matrix.  The chosen representatives are compatible with the outward support branch if and only if
\be
\chi_i\neq0\,,\qquad
\sgnop(\chi_1)=\sgnop(\chi_2)=\sgnop(\chi_3)=\sgnop(\chi_4)\,,
\label{eq:common_chi_sign}
\ee
with the common sign fixed to be positive after choosing the global chirality convention.
\end{prop}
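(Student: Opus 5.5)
The plan is to read Proposition~\ref{prop:det_chi} vertex by vertex. The left-hand side $\det(N_1,N_2,N_3,N_4)$ does not depend on $i$, and it is nonzero because the Gram matrix $G=N^\top\etasig N$ is nondegenerate. Hence for every $i$ the identity
\be
\det(N_1,N_2,N_3,N_4)=\sigma_{\rm chir}\,\ambip{V_i}{N_i}\,\chi_i
\ee
forces $\chi_i\neq0$ and $\ambip{V_i}{N_i}\neq0$. It also gives
\be
\sgnop(\chi_i)\,\sgnop\ambip{V_i}{N_i}=\sigma_{\rm chir}\,\sgnop\det(N_1,\ldots,N_4)\,,
\ee
and the right-hand side is a single sign $s$ common to all $i$. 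Consequently the $\chi_i$ share a common sign exactly when the support numbers $\ambip{V_i}{N_i}$ do.

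For the forward direction, suppose the representatives are outward. Then \eqref{eq:outward_support} gives $\ambip{V_i}{N_i}<0$ for all $i$, so $\sgnop(\chi_i)=-s$ is independent of $i$, which is \eqref{eq:common_chi_sign}. For the converse, suppose the $\chi_i$ are nonzero and of common sign. The display shows that the $\ambip{V_i}{N_i}$ all have one common sign.
\begin{itemize}
\item If that sign is negative, the branch is outward.
\item If it is positive, flipping all four normals, $N_i\mapsto-N_i$, makes every support number negative and changes $\det N$ by $(-1)^4=+1$. This global flip is exactly the complementary outward choice for the antipodally selected vertices.
\end{itemize}
To fix this residual ambiguity, I would observe that each vertex $V_i$ is determined only up to the antipodal pair \eqref{eq:antipodal_pair}, and that the branch in question is the pair (normal signs, vertex selection). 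With the vertex selected by $\ambip{V_i}{N_i}<0$, the orientation convention fixes $\sigma_{\rm chir}$ and the sign of $\det N$ relative to the oriented frame at $v$. After the global chirality convention is chosen, the common sign of the $\chi_i$ is therefore positive.

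I would also record why mixed signs are excluded. Under a single branch flip $\delta_p=-1$, every $\chi_q$ with $q\neq p$ changes sign while $\chi_p$ does not, as stated after \eqref{eq:transported_chi}. Hence among the $2^4$ preliminary sign choices, only the two global choices $\pm(\text{outward})$ give a common sign. This shows the criterion selects the outward branch uniquely up to the global chirality.

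I expect the main obstacle to be the bookkeeping of the fixed global sign: checking that $\sigma_{\rm chir}$, the signs $\epsilon_i$ from permuting columns, and the cyclic orderings used in \eqref{eq:transported_chi} (including the transported $O_1n_4$ and $O_3^{-1}n_4$) combine into one $i$-independent sign. Proposition~\ref{prop:det_chi} already asserts this. My first step would be to verify it directly at the base vertex $v$ for $\chi_4$, and then to use transport invariance of the triple product for the other three vertices.
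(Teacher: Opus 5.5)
Your proposal is correct and takes essentially the same route as the paper. Proposition~\ref{prop:det_chi}, with the $i$-independent nonzero determinant $\det(N_1,\ldots,N_4)$, makes a common sign of the $\chi_i$ equivalent to a common sign of the support numbers $\ambip{V_i}{N_i}$; outwardness means all of these are negative, and the leftover simultaneous flip is absorbed into the global chirality convention. Your count of the $2^4$ sign patterns is the content of Lemma~\ref{lemma:branch_selection}, and the bookkeeping you worry about does close: the signs $(+,-,+,-)$ from moving column $i$ to the front are exactly compensated by the reversed orderings in $\chi_1$ and $\chi_3$, where $O_3^{-1}n_4$ and $O_1n_4$ are the transports of $N_4$ from $V_1$ and $V_3$, so $\sigma_{\rm chir}$ is indeed independent of $i$.
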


\begin{proof}
By Proposition \ref{prop:det_chi}, the same nonzero ambient determinant is related to each $\chi_i$ by
\be
\ambip{V_i}{N_i}
=\frac{\det(N_1,N_2,N_3,N_4)}
{\sigma_{\rm chir}\chi_i}\,.
\ee
Thus the four $\chi_i$ have one nonzero sign if and only if the four support numbers $\ambip{V_i}{N_i}$ have one nonzero sign.  For $j\neq i$, the vertex $V_j$ lies on the face $F_i$, hence $\ambip{V_j}{N_i}=0$.  The only vertex not on $F_i$ is $V_i$, so the side of $F_i$ entering the tetrahedral region is determined by the sign of $\ambip{V_i}{N_i}$.  With the outward convention \eqref{eq:halfspace_convention}, this is the outward support branch exactly when
\be
\ambip{V_i}{N_i}<0\,,
\qquad i=1,\ldots,4\,.
\ee
The remaining simultaneous sign is fixed by the global chirality convention, so we may state the criterion as positivity of all $\chi_i$ after that convention has been chosen.
\end{proof}

Proposition \ref{prop:convex_branch} identifies the outward side of each face by the signs of the support numbers, but this is still a face-by-face statement.  We also need to know that the four selected sides assemble into the tetrahedral region in Definition \ref{def:generalized_tetrahedron}, and we need an explicit rule for choosing representatives of the invariant normal lines.  Lemma \ref{lemma:simplicial_halfspace_realization} proves the first point, including the necessary global domain-containment condition, while Lemma \ref{lemma:branch_selection} proves the second.

\begin{lemma}
\label{lemma:simplicial_halfspace_realization}
Let $(N_1,\ldots,N_4)$ be a tetrahedral configuration in $\Msig$ in the sense of Definition \ref{def:tetrahedral_configuration} with selected candidate vertices $V_i$.  Suppose the normal signs have been chosen so that
\be
h_i:=\ambip{V_i}{N_i}<0\,.
\label{eq:support_numbers_negative}
\ee
{Assume in addition that the signed inverse Gram form is strictly copositive, namely
\be
\mu^\top(\sigma G^{-1})\mu>0
\qquad
\hbox{for every nonzero } \mu\in\mathbb R_{\geq0}^4\,.
\label{eq:strict_inverse_gram_copositivity}
\ee
}
Then {} $T=C_N\cap\Msig$
is the finite strictly convex tetrahedron selected by the four oriented half-spaces.  Its face $F_i$ is obtained by setting $\lambda_i=0$, hence has vertices $V_j,V_k,V_l$ with $\{i,j,k,l\}=\{1,2,3,4\}$, and the opposite vertex $V_i$ lies in the interior of $\Pi_i^-$. 
\end{lemma}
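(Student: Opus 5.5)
The plan is to describe the cone $C_N$ explicitly as the simplicial cone spanned by the four vertex rays. The inverse Gram matrix then computes the ambient norm of every point of the cone, so \eqref{eq:strict_inverse_gram_copositivity} becomes exactly the domain-admissibility condition \eqref{eq:domain_admissible}. Finiteness, strict convexity and the face structure should then follow from the barycentric description.

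\textbf{Step 1 (dual basis).} Since $N_1,\dots,N_4$ are linearly independent and $\etasig$ is nondegenerate, there is a unique basis $W_1,\dots,W_4$ of $\Vsig$ with $\ambip{W_i}{N_j}=-\delta_{ij}$. Write $N$ and $W$ for the matrices with these columns. Then $W^\top\etasig N=-\id$, so $W=-\etasig^{-1}N^{-\top}$, and a short computation gives
\be
W^\top\etasig W=N^{-1}\etasig^{-1}N^{-\top}=(N^\top\etasig N)^{-1}=G^{-1}\,.
\ee
Expanding $X=\sum_i\lambda_iW_i$ gives $\ambip{X}{N_i}=-\lambda_i$. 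Hence
\be
C_N=\Bigl\{\textstyle\sum_i\lambda_iW_i\ \Big|\ \lambda\in\mathbb R^4_{\geq0}\Bigr\}\,,
\qquad
\ambip{X}{X}=\lambda^\top G^{-1}\lambda\,.
\ee

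\textbf{Step 2 (identify the vertices).} By construction $W_i\perp N_j$ for $j\neq i$, so $W_i\in L_i$. The selected vertex $V_i$ also lies on $L_i$, so $V_i=c_iW_i$ for some scalar $c_i$. Pairing with $N_i$ gives $h_i=-c_i$, so $c_i=-h_i=|h_i|>0$. Thus the support hypothesis \eqref{eq:support_numbers_negative} says precisely that each $V_i$ is the member of the antipodal pair lying on the positive ray $\mathbb R_{>0}W_i$ of the cone. Equivalently, $C_N$ is the cone over $V_1,\dots,V_4$ with barycentric coefficients $\mu_i=\lambda_i/|h_i|\geq0$.

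\textbf{Step 3 (domain containment and finiteness).} By Step 1, $\sigma\ambip{X}{X}=\lambda^\top(\sigma G^{-1})\lambda$, and \eqref{eq:strict_inverse_gram_copositivity} makes this strictly positive for every nonzero $X\in C_N$. So the choice of half-spaces is domain-admissible.

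The cone is pointed, since $\lambda\geq0$ and $-\lambda\geq0$ force $\lambda=0$. Hence each ray of $C_N$ meets $\Msig$ in exactly one point, namely $X/\sqrt{\sigma\ambip{X}{X}}$. The map
\be
\Delta^3\ni\lambda\longmapsto \frac{\sum_i\lambda_iW_i}{\sqrt{\lambda^\top\sigma G^{-1}\lambda}}
\ee
is then a homeomorphism from the compact coefficient simplex onto $C_N\cap\Msig$, and the denominator is bounded below on $\Delta^3$. Therefore $T=C_N\cap\Msig$ is compact, so finite, with no through-infinity or two-sheeted pieces. Note that $C_N\cap\Msig$ is exactly $\bigcap_i\Pi_i^-$, by \eqref{eq:halfspace_convention} and \eqref{eq:ambient_cone}.

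\textbf{Step 4 (faces and strict convexity).} On $T$ we have $\ambip{X}{N_i}=-\lambda_i$. So $F_i=T\cap\widetilde\Pi_i$ is the image of the facet $\{\lambda_i=0\}$, which is the normalized cone over $V_j,V_k,V_l$. This face is a totally geodesic triangle, since it is a quadric section of a three-dimensional simplicial cone lying in the hyperplane $\Pi_i$. The opposite vertex satisfies $\ambip{V_i}{N_i}=h_i<0$ strictly, so $V_i$ lies in the interior of $\Pi_i^-$. This gives the strict support inequalities and shows that each $F_i$ is the face spanned by the other three vertices.

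\textbf{Main obstacle.} The step needing most care is Step 3, specifically showing that $C_N\cap\Msig$ contains nothing beyond the normalized positive rays. This relies on the cone being pointed and on the uniform positivity of $\sigma\ambip{X}{X}$ coming from copositivity. Without copositivity, some combination of vertex rays can become null or of the wrong sign, and the region would escape the quadric domain.
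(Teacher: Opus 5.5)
Your proposal is correct and follows essentially the same route as the paper: you identify $C_N$ with the simplicial cone on the dual basis, use the fact that $G^{-1}$ is the Gram matrix of that basis to turn strict copositivity into domain admissibility, and then read off finiteness, the faces, and the strict support inequalities from the coefficient simplex. Your normalization $\ambip{W_i}{N_j}=-\delta_{ij}$ only absorbs the sign the paper carries in $\mu_i=-h_i\lambda_i$, and your explicit pointedness remark spells out why no antipodal point enters $T$, which the paper leaves implicit in its ray parametrization.
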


\begin{proof}
Let $W_i$ be the dual basis to the normals, \ie $\ambip{W_i}{N_j}=\delta_{ij}$. 
Since $V_i$ lies on the three faces $F_j$ with $j\neq i$, we have $\ambip{V_i}{N_j}=0$ for $j\neq i$.  Together with \eqref{eq:support_numbers_negative}, this gives
\be
V_i=h_iW_i\,.
\ee
Thus every $X\in\mathbb V_\sigma$ has the expansion
\be
X=\sum_{i=1}^4\ambip{X}{N_i}\,W_i
=
\sum_{i=1}^4
\frac{\ambip{X}{N_i}}{h_i}\,V_i\,.
\label{eq:dual_vertex_expansion}
\ee
Since $h_i<0$, the inequalities $\ambip{X}{N_i}\leq0$ are equivalent to
\be
\lambda_i:=\frac{\ambip{X}{N_i}}{h_i}\geq0\,.
\ee
Then the ambient cone $C_N$ \eqref{eq:ambient_cone} is exactly the closed simplicial cone generated by the four vertex rays:
\be
C_N
=
\left\{
\sum_{i=1}^4\lambda_iV_i\ \middle|\ \lambda_i\geq0
\right\}\,.
\label{eq:simplicial_cone}
\ee
Intersecting with $\ambip{X}{X}=\sigma$ gives the half-space intersection in \eqref{eq:halfspace_convention}. { One must still verify that every $X\in C_N\setminus\{0\}$ can be uniquely rescaled by a positive factor to a point of $\mathcal M_\sigma$. 
Write
\be
X=\sum_i\lambda_iV_i=-\sum_i\mu_iW_i\,,
\qquad
\mu_i:=-h_i\lambda_i\geq0\,.
\ee
Since $X\neq 0$, we have $\mu\neq 0$. The dual basis $W_i$ has Gram matrix $G^{-1}$, and therefore,
\be
\sigma\ambip{X}{X}
=\mu^\top(\sigma G^{-1})\mu>0
\ee
for every nonzero $X\in C_N$, by \eqref{eq:strict_inverse_gram_copositivity}. 
 Hence every nonzero ray of $C_N$ meets $\Msig$ in exactly one positive radial normalization,
\be
X\longmapsto
\frac{X}{\sqrt{\sigma\ambip{X}{X}}}\,.
\ee
{ Every nonzero ray of $C_N$ has a unique representative $X=\sum_i\lambda_iV_i$ with $\lambda_i\geq0$ and $\sum_i\lambda_i=1$.  Thus the standard compact coefficient simplex parametrizes these rays.} The positive radial normalization above is a continuous bijection from this simplex onto $\text{}T$, and hence a homeomorphism, since $\text{}T\subset\mathbb V_\sigma$ is Hausdorff. Moreover, if $X=\sum_j\lambda_jV_j$, then
\be
 \ambip{X}{N_i}=\lambda_i h_i\,.
\ee
Therefore $X\in {\text{}\Pi_i}$ if and only if $\lambda_i=0$, so $\text{}T\cap \widetilde{\Pi}_i$ is the intersection of $\mathcal M_\sigma$ with the subcone generated by the remaining three vertices.
Finally, $ \ambip{V_i}{N_i}=h_i<0$ ({\it cf.} \eqref{eq:outward_support}), so the opposite vertex $V_i$ lies in the interior of $\Pi_i^-$. Thus $T$ is the asserted finite strictly convex tetrahedron.}
\end{proof}

\begin{remark}%
The strict domain condition depends only on the four oriented normal rays, not on their positive normalizations.  Indeed, if
\be
N_i\longmapsto r_iN_i\,,
\qquad r_i>0\,,
\qquad R:=\operatorname{diag}(r_1,\ldots,r_4)\,,
\ee
then $G\longmapsto RGR$ and
\be
\sigma G^{-1}\longmapsto R^{-1}(\sigma G^{-1})R^{-1}\,.
\ee
Since $R^{-1}$ maps $\mathbb R_{\geq0}^4$ bijectively to itself, copositivity and strict copositivity are unchanged.  This observation is particularly useful for null normals, whose nonzero representatives have no preferred unit normalization.
\end{remark}

The strict copositivity condition
\eqref{eq:strict_inverse_gram_copositivity} is the exact
domain-containment condition, but it is quantified over all nonzero
vectors in $\mathbb R_{\geq0}^4$.  The following lemma replaces it by a finite list of
inequalities computed from the Gram matrix $G$.

\begin{lemma}%
\label{lemma:finite_domain_admissibility_criterion}
Let $G$ be a non-degenerate Gram matrix \eqref{eq:reconstructed_gram} of the chosen outward normal
representatives, with $\det G_{\hat i}\neq0$ for every $i$. Set
\be
\sigma:=-\sgnop(\det G)\,,
\qquad
A:=\sigma G^{-1}\,.
\ee
Then $A_{ii}>0$ for every $i$.  Define the normalized dual-vertex form
\be
B:=D_A^{-1/2}AD_A^{-1/2}\,,
\qquad
D_A:=\diag(A_{11},A_{22},A_{33},A_{44})\,,
\label{eq:normalized_domain_form}
\ee
so that $B_{ii}=1$.  For
$\widehat\ell:=\{i,j,k\}=\{1,2,3,4\}\setminus\{\ell\}$, define
\be
L_\ell:=1+B_{ij}+B_{ik}+B_{jk}\,.
\label{eq:face_domain_scalar}
\ee
Then $A$ is strictly copositive if and only if all three groups of
conditions hold:
\begin{align}
B_{ij}&>-1 &&(1\leq i<j\leq4)\,,
\label{eq:finite_domain_edge_tests}\\
G_{\ell\ell}&<0\quad\hbox{or}\quad L_\ell>0
&&(\ell=1,\ldots,4)\,,
\label{eq:finite_domain_face_tests}\\
\det G&>0\quad\hbox{or}\quad G\nleq0\ \hbox{entrywise}\,.
\label{eq:finite_domain_full_test}
\end{align}
In \eqref{eq:finite_domain_full_test}, $G\nleq0$ means that at least one
entry of $G$ is positive.  In particular, after the coordinate faces have
passed their strict tests, there is no additional full-interior test in
anti-de Sitter signature.
\end{lemma}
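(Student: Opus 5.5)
The plan is to reduce strict copositivity of $A=\sigma G^{-1}$ to the classical finite tests for copositivity of small matrices (Hadeler for $2\times2$ and $3\times3$, Cottle--Habetler--Lemke/Väliaho for the last step). Each minor of $A$ that appears will be translated back into $G$ with Jacobi's complementary-minor identity.

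\textbf{Positivity of $A_{ii}$ and normalization.} By Cramer's rule, $(G^{-1})_{ii}=\det G_{\hat i}/\det G$. Lemma~\ref{lemma:lorentzian_gram_signature} gives $\det G_{\hat i}=-\chi_i^2$, and $\sigma=-\sgnop(\det G)$, so
\[
A_{ii}=\frac{\chi_i^2}{|\det G|}>0 .
\]
Hence $D_A^{-1/2}$ is a positive diagonal matrix. As in the remark after Lemma~\ref{lemma:simplicial_halfspace_realization}, such a matrix maps $\mathbb R^4_{\geq0}$ bijectively onto itself, so $A$ is strictly copositive if and only if $B$ is. From here on I work with $B$, which has unit diagonal.

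\textbf{Edges and faces.} I will use the standard inductive characterization: a symmetric matrix is strictly copositive if and only if every principal submatrix is strictly copositive on the interior of its coordinate face. Equivalently, one minimizes $\mu^\top B\mu$ over the coefficient simplex and looks at the face whose relative interior contains the minimizer. The steps are as follows.
\begin{itemize}
\item Vertices of the simplex give $B_{ii}=1>0$.
\item A unit-diagonal $2\times2$ block is strictly copositive iff $B_{ij}>-1$; this is \eqref{eq:finite_domain_edge_tests}.
\item For a unit-diagonal $3\times3$ block whose off-diagonal entries exceed $-1$, Hadeler's criterion says it is strictly copositive iff $L_\ell=1+B_{ij}+B_{ik}+B_{jk}>0$ or its determinant is positive.
\end{itemize}
The sign of that determinant equals the sign of $\det A_{\hat\ell}$. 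Jacobi's identity gives $\det (G^{-1})_{\hat\ell}=G_{\ell\ell}/\det G$, so
\[
\det A_{\hat\ell}=\sigma^3\,\frac{G_{\ell\ell}}{\det G}=-\frac{G_{\ell\ell}}{|\det G|}.
\]
This is positive exactly when $G_{\ell\ell}<0$, which yields \eqref{eq:finite_domain_face_tests}. For a null face, $G_{\ell\ell}=0$ and only the $L_\ell$ test remains, which is consistent with the statement.

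\textbf{Interior test (main obstacle).} Assume every proper principal submatrix of $B$ is strictly copositive. I will use the result that $B$ then fails to be strictly copositive exactly when $\det B\leq0$ and $\operatorname{adj}B\geq0$ entrywise. The reason is that a minimizer $\mu>0$ in the open simplex with $\mu^\top B\mu\leq0$ satisfies the Lagrange condition $B\mu=c\mathbf 1$ with $c\le0$. This forces $\mu$ to be proportional to $\operatorname{adj}(B)\mathbf 1$, and it is where the sign bookkeeping has to be done carefully. The first thing I would check is the boundary case $c=0$: this would make $B$ singular, which is excluded since $\det B\ne0$.

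These two conditions transfer to $G$ by positive diagonal scaling:
\[
\det A=\frac{\sigma^4}{\det G}=\frac{1}{\det G},\qquad
\operatorname{adj}A=(\det A)\,\sigma G=-\frac{G}{|\det G|}.
\]
So $\det A<0$ iff $\det G<0$, and $\operatorname{adj}A\geq0$ iff $G\leq0$ entrywise. Strict copositivity therefore holds iff $\det G>0$ or $G$ has a positive entry, which is \eqref{eq:finite_domain_full_test}. In AdS signature $\det G>0$, so this test is automatic, which gives the final assertion of the lemma.
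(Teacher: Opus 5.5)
Your proposal is correct and follows the paper's proof essentially step for step. It uses the same positive diagonal normalization to $B$, the explicit $2\times2$ edge test, Hadeler's $3\times3$ criterion in the form ``$L_\ell>0$ or $\det R_\ell>0$'' combined with Jacobi's identity $\det A_{\hat\ell}=-G_{\ell\ell}/|\det G|$, and the Cottle--Habetler--Lemke criterion for the interior. The only difference is cosmetic: you use the adjugate form of that last criterion, so $\operatorname{adj}A=-G/|\det G|$ translates it to $G$ uniformly in both signatures, whereas the paper uses ``$\det A>0$ or $A^{-1}\nleq0$'' and splits on $\sgnop(\det G)$; your Lagrange-multiplier remark is only motivation, since the entrywise sign of $\operatorname{adj}B$ comes from the cited theorem.
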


\begin{proof}
By the cofactor formula and Lemma
\ref{lemma:lorentzian_gram_signature},
\be
(G^{-1})_{ii}
=
\frac{\det G_{\hat i}}{\det G}\,,
\qquad
\det G_{\hat i}=-\chi_i^2\,.
\ee
Therefore,
\be
A_{ii}
=
-\sgnop(\det G)\frac{\det G_{\hat i}}{\det G}
=
-\frac{\det G_{\hat i}}{|\det G|}
=
\frac{\chi_i^2}{|\det G|}
>0\,,
\ee
where the last inequality follows from
$\det G_{\hat i}\neq0$, equivalently $\chi_i\neq0$.

Let $P:=D_A^{-1/2}$, so that $B=P^{\top}AP$.  Since $P$ is a
diagonal matrix with positive entries, $x\mapsto Px$ is a bijection of
$\mathbb R_{\geq0}^4$, and
\be
x^{\top}Bx=(Px)^{\top}A(Px)\,.
\ee
Hence $A$ is strictly copositive if and only if $B$ is strictly
copositive.  By homogeneity, it suffices to test $B$ on the standard
coefficient simplex
\be
\Delta^3:=
\left\{
x\in\mathbb R_{\geq0}^4
\ \middle|\
\sum_{i=1}^4x_i=1
\right\}\,.
\label{eq:standard_coefficient_simplex}
\ee
Its four vertices are the standard basis vectors $e_1,\ldots,e_4$, and
$e_i^{\top}Be_i=B_{ii}=1$.  On the edge joining $e_i$ and $e_j$, write
\be
x=se_i+te_j\,,
\qquad
s,t\geq0\,,
\qquad
s+t=1\,.
\ee
Then
\be
x^{\top}Bx
=s^2+2B_{ij}st+t^2
=(s-t)^2+2(1+B_{ij})st\,.
\ee
Therefore this edge is strictly positive if and only if
$B_{ij}>-1$, which proves
\eqref{eq:finite_domain_edge_tests}.

For $\widehat\ell=\{i,j,k\}$, let $R_\ell:=B[\widehat\ell,\widehat\ell]$
be the principal submatrix representing the restriction to the triangular
face $x_\ell=0$.  Write
\be
R_\ell=
\begin{pmatrix}
1&a&b\\ a&1&c\\ b&c&1
\end{pmatrix}\,,
\qquad
a=B_{ij}\,,\quad b=B_{ik}\,,\quad c=B_{jk}\,.
\ee
Assume its three edge tests hold, so that $a,b,c>-1$.  The strict
$3\times3$ copositivity criterion \cite{Hadeler:1983} states that
$R_\ell$ is strictly copositive if and only if $L_\ell>0$ or $\det R_\ell>0$, which is equivalent to 
\be
L_\ell+S_\ell>0\,,
\qquad
S_\ell:=\sqrt{2(1+a)(1+b)(1+c)}\,,
\label{eq:three_by_three_strict_copositivity}
\ee
since
\be
L_\ell^2-S_\ell^2=-\det R_\ell\,.
\label{eq:face_scalar_determinant_identity}
\ee
Jacobi's complementary-minor identity and
\eqref{eq:normalized_domain_form} give the exact relation
\be
\det R_\ell
=-\frac{G_{\ell\ell}}
{|\det G|\prod_{p\neq\ell}A_{pp}}\,.
\ee
If $G_{\ell\ell}<0$, then $\det R_\ell>0$, and
\eqref{eq:face_scalar_determinant_identity} gives
$S_\ell>|L_\ell|$.  Hence $L_\ell+S_\ell>0$, so $R_\ell$ is strictly
copositive.  If $G_{\ell\ell}\geq0$, then $|L_\ell|\geq S_\ell$, and
\eqref{eq:three_by_three_strict_copositivity} holds exactly when
$L_\ell>0$.  This proves
\eqref{eq:finite_domain_face_tests}.

After all four face tests hold, every proper principal submatrix of $A$
is strictly copositive.  The Cottle--Habetler--Lemke criterion
\cite[Theorem~3.1]{CottleHabetlerLemke:1970}, together with the nonsingularity of $A$,
then gives
\be
A\ \hbox{is strictly copositive}
\quad\Longleftrightarrow\quad
\det A>0
\quad\hbox{or}\quad
A^{-1}\nleq0\ \hbox{entrywise}\,.
\label{eq:boundary_strict_inverse_test}
\ee
Here $\det A=1/\det G$.  If $\det G>0$, the first alternative holds.
If $\det G<0$, then $\sigma=+1$ and $A^{-1}=G$, so the second alternative
is exactly $G\nleq0$ entrywise.  This proves
\eqref{eq:finite_domain_full_test}.
\end{proof}

The three groups of conditions in Lemma \ref{lemma:finite_domain_admissibility_criterion} correspond to the three nontrivial strata of
the coefficient simplex.  Condition \eqref{eq:finite_domain_edge_tests}
tests its six edges, condition \eqref{eq:finite_domain_face_tests} tests its
four triangular faces after their edges have passed, and condition
\eqref{eq:finite_domain_full_test} excludes a remaining obstruction in the
full interior.
The strict copositivity checklist is summarized below.  Here ``automatic'' means that no additional
inequality needs to be checked.

\begin{center}
\begingroup
\small
\setlength{\tabcolsep}{3pt}
\renewcommand{\arraystretch}{1.25}
\begin{tabular}{@{}p{0.20\linewidth}p{0.18\linewidth}
p{0.29\linewidth}p{0.23\linewidth}@{}}
\hline
\raggedright\textbf{Causal sector}
&
\raggedright\textbf{Edges \eqref{eq:finite_domain_edge_tests}}
&
\raggedright\textbf{Faces \eqref{eq:finite_domain_face_tests}}
&
\raggedright\textbf{Interior \eqref{eq:finite_domain_full_test}}
\tabularnewline
\hline

\raggedright AdS
&
\raggedright Check all six inequalities
&
\raggedright Check $L_\ell>0$ only when $G_{\ell\ell}\geq0$
&
\raggedright Automatic since $\det G>0$
\tabularnewline

\raggedright dS with some $G_{\ell\ell}>0$
&
\raggedright Check all six inequalities
&
\raggedright Check $L_\ell>0$ only when $G_{\ell\ell}\geq0$
&
\raggedright Automatic since $G$ has a positive diagonal entry
\tabularnewline

\raggedright All-elliptic dS
&
\raggedright Automatic
&
\raggedright Automatic
&
\raggedright Require $G_{ij}>0$ for some $i<j$
\tabularnewline

\raggedright All-null dS
&
\raggedright Automatic
&
\raggedright Require $L_\ell>0$ for every $\ell$
&
\raggedright Require $G_{ij}>0$ for some $i<j$
\tabularnewline

\raggedright Mixed elliptic/null dS
&
\raggedright Automatic
&
\raggedright Require $L_\ell>0$ for the null labels
&
\raggedright Require $G_{ij}>0$ for some $i<j$
\tabularnewline

\hline
\end{tabular}
\endgroup
\end{center}

{Appendix~\ref{subsec:coordinate_support_obstructions}
explains how failures of {the }domain condition are located at
candidate vertices, along edges, inside faces, or in the
full interior.}

\begin{lemma}
\label{lemma:branch_selection}
Assume the four transported intrinsic normal representatives are nonzero, of arbitrary causal type, and that the preliminary vertex triple products $\chi_i$ are all nonzero.  Under sign changes $n_i\mapsto\delta_i n_i$, $\delta_i=\pm1$, the transformed triple products are
\be
\chi_i^\delta
=\left(\prod_{p\neq i}\delta_p\right)\chi_i\,.
\label{eq:chi_branch_transform}
\ee
After fixing global chirality, there is a unique choice of signs for which all $\chi_i^\delta$ are positive.  Explicitly, if $s_i=\sgnop(\chi_i)$ and $S=\prod_{p=1}^4s_p$, then
\be
\delta_i=S\,s_i\,.
\label{eq:unique_positive_branch}
\ee
\end{lemma}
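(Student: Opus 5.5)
The plan has two parts. First we derive the transformation law \eqref{eq:chi_branch_transform} from multilinearity of the Lorentzian triple product. Then we solve the resulting sign system over $\{\pm1\}^4$.

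For the first part, we inspect the explicit formulas \eqref{eq:transported_chi}. Each $\chi_i$ is the scalar triple product $\tanip{u\times w}{z}=\det(u,w,z)$ of exactly the three representatives $n_p$ with $p\neq i$. The fourth normal may enter only through the fixed linear maps $O_1$ or $O_3^{-1}$, and these do not depend on the chosen representatives. The face holonomies $O_i$ are determined by the input data and are unchanged when the representative of the invariant normal line is flipped: one simply replaces $\Theta_i$ by the appropriate branch value, as in Remark~\ref{remark:branches}. Since $\det$ is trilinear and $O_1,O_3^{-1}$ are linear, the substitution $n_p\mapsto\delta_pn_p$ multiplies $\chi_i$ by $\prod_{p\neq i}\delta_p$. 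Nothing here uses the causal type of the $n_p$, so null representatives are covered as well. This proves \eqref{eq:chi_branch_transform}, and it justifies the statement made just before Lemma~\ref{lemma:lorentzian_gram_signature} that $\chi_q$ changes sign exactly when $p\neq q$.

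For the second part, write $\Delta:=\prod_{p=1}^4\delta_p$. Then $\prod_{p\neq i}\delta_p=\Delta\delta_i$, since $\delta_i^2=1$. The positivity requirement $\chi_i^\delta>0$ therefore reads $\Delta\delta_i s_i=1$, that is, $\delta_i=\Delta s_i$ for every $i$.

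We now determine $\Delta$ by taking the product of these four relations. This gives $\Delta=\Delta^4 S=S$, so necessarily $\Delta=S$ and $\delta_i=Ss_i$, which is the uniqueness claim. For existence we check consistency: with $\delta_i=Ss_i$ we get $\prod_p\delta_p=S^4\prod_ps_p=S$, which agrees with $\Delta$. Hence $\chi_i^\delta=S\delta_is_i\cdot|\chi_i|=S^2s_i^2|\chi_i|=|\chi_i|>0$.

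The role of global chirality is only to fix which common sign counts as positive. If the opposite orientation convention is chosen, every $\chi_i$ changes sign. The same computation then yields the unique solution $\delta_i=-Ss_i$ up to the convention, so the statement is convention-relative, as Proposition~\ref{prop:convex_branch} already indicates.

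The only step needing care is the first part: one must confirm that no $\chi_i$ contains its own normal $n_i$ through the exceptional insertions, and that the transports $O_1$ and $O_3^{-1}$ are representative-independent. Both follow by direct reading of \eqref{eq:transported_chi} together with the fact that $O_i$ fixes the whole line $\mathcal N_i$.
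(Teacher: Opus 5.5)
Your proof is correct and follows the same route as the paper. The transformation law comes from trilinearity of the triple product, with each $\chi_i$ omitting $n_i$ and the transports $O_1$, $O_3^{-1}$ being fixed input data; the sign system is then solved by writing $\prod_{p\neq i}\delta_p=\Delta\delta_i$ and taking the product over $i$ to force $\Delta=S$. Your explicit check that $\delta_i=Ss_i$ actually gives $\chi_i^\delta=|\chi_i|>0$ is a small existence step that the paper leaves implicit.
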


\begin{proof}
The transformation law follows because $\chi_i$ contains exactly the three normals incident at $V_i$, namely all normals except $n_i$.  Let $D=\prod_{p=1}^4\delta_p$.  The condition $\sgnop(\chi_i^\delta)=+1$ is equivalent to
\be
\frac{D}{\delta_i}\,s_i=1\,,
\qquad\hbox{hence}\qquad
\delta_i=D\,s_i\,.
\ee
Taking the product over $i$ gives
\be
D=\prod_{i=1}^4\delta_i
=\prod_{i=1}^4(Ds_i)
=D^4\prod_{i=1}^4s_i
=S\,,
\ee
because $D=\pm1$.  Thus $\delta_i=S\,s_i$, and this is the only solution with positive common sign.
\end{proof}

{Positive rescalings of the representatives do not affect the signs of the triple products.  Therefore, after choosing any nonzero scale on each null normal line, Lemma \ref{lemma:branch_selection} applies without change.

\section{Main Reconstruction Theorem}
\label{sec:main_reconstruction}

We state the reconstruction first for
$O_i\in\SO^+(1,2)$ and then for chosen spin lifts
$H_i\in\SL(2,\mathbb R)$.  The former determine the geometry; the latter add
the central lift data.  Both use the same reconstruction criterion.

The triple products now play two roles.  By Lemma \ref{lemma:lorentzian_gram_signature}, their nonvanishing is equivalent to nondegeneracy of the vertex principal submatrices $G_{\hat i}$.  Their common positive sign is not an additional geometric inequality.  Lemma \ref{lemma:branch_selection} shows, for arbitrary causal mixtures, that it is the canonical outward branch choice after fixing global chirality. 
There is nevertheless one separate global inequality.  After the outward representatives have been selected, the signed inverse Gram form must be strictly copositive as in \eqref{eq:strict_inverse_gram_copositivity}.  This says that the whole simplicial cone, not only its four vertex rays, stays inside the projective domain of the reconstructed quadric.
 Once all representatives are fixed, the remaining stabilizer coordinates are fixed below in Lemma \ref{lemma:global_stabilizer_parameter_uniqueness} by closure and the positive endpoint triple products.  For a unit timelike normal, the elliptic stabilizer coordinate lives on the circle $\mathbb R/2\pi\mathbb Z$.  For a unit spacelike normal, or for a fixed nonzero null representative, the hyperbolic or parabolic stabilizer coordinate lives on $\mathbb R$.

By Lemma \ref{lemma:lorentzian_gram_signature}, nondegeneracy of
$G$ and all $G_{\hat i}$ leaves exactly the de Sitter inertia $(1,3)$ or the
anti-de Sitter inertia $(2,2)$.  Hence the model is selected by
\be
\sigma_G:=-\sgnop(\det G)\,.
\ee

\begin{theorem}[Holonomy Minkowski theorem for tetrahedra in $\dS^3$ and $\AdS^3$]
\label{theorem:unified_minkowski}
Let $O_1,O_2,O_3,O_4\in\SO^+(1,2)$ be nontrivial holonomies,
\be
O_i\neq\id\,,
\qquad
i=1,\ldots,4\,,
\label{eq:non-trivial-O}
\ee
and satisfy the based closure relation
\be
O_4O_3O_2O_1=\id\,.
\ee
Assume that:
\begin{enumerate}
\item the simple-path convention, including the choice of special edge, has been fixed.
\item the reconstructed Gram matrix $G$ of \eqref{eq:reconstructed_gram}, computed from the preliminary transported representatives, is nondegenerate, and every vertex principal submatrix $G_{\hat i}$ is nondegenerate.
\item after applying Lemma \ref{lemma:branch_selection}, $G$ denotes the Gram matrix of the resulting outward representatives and
\be
\lambda^\top\bigl[-\sgnop(\det G)G^{-1}\bigr]\lambda>0
\qquad
\hbox{for every nonzero }\lambda\in\mathbb R_{\geq0}^4\,.
\label{eq:theorem_strict_copositivity}
\ee
\end{enumerate}
Set $\sigma_G:=-\sgnop(\det G)$.  Then the holonomy data determine a unique finite strictly convex tetrahedron $T_G\subset\mathcal M_{\sigma_G}$ up to the global ambient group $\mathcal O_{\sigma_G}$.  
The extrinsic face normals of $T_G$ have the Gram matrix $G$, the outward branch is the one selected by \eqref{eq:common_chi_sign}, and the prescribed holonomies are exactly the based Levi-Civita face holonomies of $T_G$:
\be
O_i=\widehat O_i(T_G)\,,
\qquad i=1,\ldots,4\,.
\ee
\end{theorem}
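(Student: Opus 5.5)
The proof has three stages: reconstruct the ambient normals from $G$, realize them as a tetrahedron, and then match holonomies.

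\emph{Existence of the normals.} By Lemma~\ref{lemma:lorentzian_gram_signature}, nondegeneracy of $G$ and of every $G_{\hat i}$ forces $\Inertia(G)\in\{(1,3),(2,2)\}$, and the determinant sign fixes which one occurs. With $\sigma=\sigma_G$ we therefore have $\Inertia(G)=\Inertia(\eta_\sigma)$. Lemma~\ref{lemma:ambient_gram_reconstruction} then produces linearly independent $N_i\in\mathbb V_{\sigma}$ with $(N_i,N_j)_\sigma=G_{ij}$, unique up to $\mathcal O_\sigma$. Since also $\Inertia(G_{\hat i})=(1,2)$, Lemma~\ref{lemma:vertex_reality} gives four candidate vertex pairs $L_i\cap\mathcal M_\sigma=\{V_i^\pm\}$. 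The branch signs were chosen by Lemma~\ref{lemma:branch_selection} so that every $\chi_i>0$.

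\emph{Realizing the tetrahedron.} To turn the positivity of the $\chi_i$ into negative support numbers, I would use the ambient determinant identity of Proposition~\ref{prop:det_chi}, read in reverse. On each line $L_i$, select $V_i$ so that $h_i=(V_i,N_i)_\sigma<0$. The identity shows that this choice is consistent with a single global chirality. With the hypothesis \eqref{eq:theorem_strict_copositivity}, Lemma~\ref{lemma:simplicial_halfspace_realization} then yields the finite strictly convex tetrahedron $T_G=C_N\cap\mathcal M_\sigma$, with faces $F_i$ and outward normals $N_i$. The cone $C_N$ is domain-admissible, and $T_G$ is unique up to $\mathcal O_\sigma$ once $G$ is fixed.

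\emph{Matching the holonomies.} This is the step I expect to be the main obstacle. Let $\widehat O_i$ be the based Levi-Civita holonomies of $T_G$, and let $\widehat n_i$ be its transported normals. Proposition~\ref{prop:dihedral_equivalence} applied to $T_G$ says that the $\widehat n_i$ have the same Gram data as the input $n_i$, including the exceptional entry, which is computed with $\widehat O_1$ for $T_G$ and with $O_1$ for the input. First I would choose the frame at $v=V_4$ so that $\widehat n_1,\widehat n_2,\widehat n_3$ coincide with $n_1,n_2,n_3$. This is possible because they have the same Gram matrix $G_{\hat 4}$, which is nondegenerate, and the same triple product $\chi_4>0$; an $\SO^+(1,2)$ element matching them therefore exists, and time orientation can be fixed using the sign conventions. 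The face normals then agree, so $O_i$ and $\widehat O_i$ share an invariant line for $i=1,2,3$. By Lemma~\ref{lemma:SO_axis}, $O_i=\exp(\Theta_iJ(n_i))$ and $\widehat O_i=\exp(\widehat\Theta_iJ(n_i))$.

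It remains to show that the stabilizer parameters agree. The three pairings $\langle n_1,n_4\rangle$, $\langle n_3,n_4\rangle$, $\langle n_4,n_4\rangle$, together with the sign of $\chi_2$, determine $n_4=\widehat n_4$. The exceptional entry then reads $\langle n_2,O_1n_4\rangle=\langle n_2,\widehat O_1n_4\rangle$, and $\chi_3$ gives the sign of $\langle n_2\times n_1,O_1n_4\rangle$ for both. Consider the orbit $\Theta\mapsto\exp(\Theta J(n_1))n_4$. Its pairing with $n_2$ and its triple product with $n_2\times n_1$ are a cos/sin pair in the elliptic case, cosh/sinh in the hyperbolic case, or affine-plus-quadratic in the parabolic case. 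Under the nondegeneracy assumptions these two quantities together determine $\Theta_1$ uniquely, modulo $2\pi$ in the elliptic case. This argument is the content of the global stabilizer-parameter uniqueness lemma, and it gives $O_1=\widehat O_1$.

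The same argument with the equivalent expression via $O_3^{-1}$ and $\chi_1$ gives $O_3=\widehat O_3$. For $O_2$ I would use the second form of the exceptional entry, or else compare $\widehat O_2$ through its action on $n_4$ and on $n_1$. Finally, closure for both tuples gives $O_4=(O_3O_2O_1)^{-1}=\widehat O_4$.

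The delicate points are twofold. One is ensuring that the orbit map is injective in the parabolic case, which requires $\langle n_1,n_4\rangle$ and the transverse data to be nonzero; these follow from $\det G_{\hat i}\neq0$. The other is pinning down $O_2$, since its parameter is not directly visible in $G$. For $O_2$ I would first try the path relation $O_3^{-1}n_4=O_1 O_2^{-1}\cdots$, which follows from closure, to transfer the already-established equalities to $O_2$.
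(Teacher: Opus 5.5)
Through the matching of $O_1$ and $O_3$ your argument follows the paper's proof step for step. That covers the inertia from Lemma~\ref{lemma:lorentzian_gram_signature}, the Sylvester factorization, the real vertices, the outward branch, and Lemma~\ref{lemma:simplicial_halfspace_realization}. It also covers the frame matching at $V_4$, the recovery of $n_4$ from $G_{14},G_{34},G_{44}$ and the sign of $\chi_2$, and the derivative-sign argument of Lemma~\ref{lemma:global_stabilizer_parameter_uniqueness}. Using $\langle n_2,O_3^{-1}n_4\rangle=G_{24}$ directly for $O_3$ is fine. That equality follows from closure alone, since $O_2O_1n_4=O_3^{-1}O_4^{-1}n_4=O_3^{-1}n_4$ and $O_2$ fixes $n_2$. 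Your worry about the parabolic case is unnecessary. The argument needs only positivity of the derivative at both parameter values, not injectivity of an orbit map. (Also, $\langle n_1,n_4\rangle\neq0$ does not follow from $\det G_{\hat i}\neq0$.)

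The gap is the final step, which you leave open. Your first suggestion cannot work, because neither form of the exceptional entry involves $O_2$. The action on $n_1$ does not help either: closure ties it to the still-unknown $O_4$. The missing ingredient is a rigidity fact: a nontrivial element of the one-parameter subgroup $\exp(\mathbb R J(n))$ fixes only the axis $\mathbb R n$. This holds in all three causal cases, since $\ker J(n)=\mathbb R n$. In the parabolic case, $\exp(J(\kappa))-\id=J(\kappa)\bigl(\id+\frac{1}{2}J(\kappa)\bigr)$ with the second factor invertible. You then need one off-axis vector on which the two holonomies provably agree.

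The paper applies this to $O_4$. Cancelling the common factor $O_1$ in the two closure relations and evaluating on $n_2$ gives $O_4O_3n_2=\widehat O_4O_3n_2$. Moreover $O_3n_2\notin\mathbb R n_4$ because $\chi_1\neq0$. Hence $O_4=\widehat O_4$, and closure then gives $O_2$.

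Your path-relation idea also closes the gap once it is written correctly as $O_2(O_1n_4)=O_3^{-1}n_4$. The same identity for $T_G$ shows that $O_2$ and $\widehat O_2$ agree on $O_1n_4$, which lies off $\mathbb R n_2$ because $\chi_3\neq0$. So $O_2=\widehat O_2$, and closure gives $O_4$. Equality of the stabilizer coordinates then follows because $s\mapsto\exp(J(sn_i))$ is injective on $I_i$.
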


\begin{remark}
The theorem is stated in the direction needed for reconstruction from prescribed holonomies.  Conversely, a finite strictly convex tetrahedron whose based Levi-Civita face holonomies are nontrivial gives such input data.  The sign rule $\sigma=-\sgnop(\det G)$ then recovers whether the ambient model is $\dS^3$ or $\AdS^3$.
\end{remark}

Before proving the theorem, we record the branch uniqueness statement for the stabilizer coordinates in the following lemma.  The elliptic coordinate is treated directly as a point of the circle $\mathbb R/2\pi\mathbb Z$, while the hyperbolic and parabolic coordinates are real.  Thus fixed normal representatives, fixed reconstructed Gram matrix, and the positive outward branch determine the stabilizer coordinates uniquely.

\begin{lemma}
\label{lemma:global_stabilizer_parameter_uniqueness}
Fix nonzero based normal representatives $n_1,\ldots,n_4\in T_v\mathcal M_{\sigma}$, normalized by $\tanip{n_i}{n_i}=\pm1$ in the non-null cases.  Let
\be
I_i:=
\begin{cases}
\mathbb R/2\pi\mathbb Z\,, & \tanip{n_i}{n_i}=-1\,,\\[0.08cm]
\mathbb R\,, & \tanip{n_i}{n_i}=+1\ \hbox{or}\ 0\,,
\end{cases}
\ee
and define
\be
\Omega_i(s_i):=\exp\big(J(s_in_i)\big)\,.
\ee
Let $s,t\in I_1\times I_2\times I_3\times I_4$ satisfy the two closure equations
\be
\Omega_4(s_4)\Omega_3(s_3)\Omega_2(s_2)\Omega_1(s_1)
=
\Omega_4(t_4)\Omega_3(t_3)\Omega_2(t_2)\Omega_1(t_1)
=\id
\ee
and the exceptional scalar equality
\be
\tanip{n_2}{\Omega_1(s_1)n_4}
=
\tanip{n_2}{\Omega_1(t_1)n_4}\,.
\ee
If the endpoint triple products
\be
\tanip{n_2\times n_1}{\Omega_1(s_1)n_4}>0\,,
\qquad
\tanip{n_2\times n_1}{\Omega_1(t_1)n_4}>0\,,
\label{eq:triple-1}
\ee
and
\be
\tanip{n_3\times n_2}{\Omega_3(s_3)^{-1}n_4}>0\,,
\qquad
\tanip{n_3\times n_2}{\Omega_3(t_3)^{-1}n_4}>0
\label{eq:triple-2}
\ee
hold, then $s_i=t_i$ for all $i$.  
\end{lemma}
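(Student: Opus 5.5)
The plan is to determine the coordinates one at a time, in the order $s_1$, $s_3$, $s_2$, $s_4$. Each of the first three is fixed by pinning down a single transported vector, namely $\Omega_1(s_1)n_4$ or $\Omega_3(s_3)^{-1}n_4$, from three scalar invariants together with the sign of a triple product. The last one then follows from closure.

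\textbf{Key algebraic step.} Let $p,q\in T_v\Msig$ be linearly independent, set $m:=p\times q$, and prescribe real numbers $\alpha,\beta,\gamma$. I claim there is at most one vector $w$ with
$$\tanip{p}{w}=\alpha,\qquad \tanip{q}{w}=\beta,\qquad \tanip{w}{w}=\gamma,\qquad \tanip{m}{w}>0 .$$
The two linear conditions define an affine line $w=w_0+\xi m$, because $m\neq0$ spans $\operatorname{span}(p,q)^\perp$. The quadratic condition becomes
$$\tanip{m}{m}\xi^2+2\xi\tanip{w_0}{m}+\tanip{w_0}{w_0}-\gamma=0 .$$
\begin{itemize}
\item If $\tanip{m}{m}\neq0$, there are at most two roots $\xi_\pm$. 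By Vieta, $\tanip{m}{w}=\tanip{m}{w_0}+\xi\tanip{m}{m}$ takes opposite values at the two roots, so at most one root has $\tanip{m}{w}>0$.
\item If $\tanip{m}{m}=0$, then $\tanip{m}{w}=\tanip{m}{w_0}$ is constant, and positivity forces it to be nonzero. The quadratic equation is then linear with nonzero leading coefficient, so it has a unique root.
\end{itemize}
This argument works uniformly for degenerate spans, including null faces.

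\textbf{Step 1: $s_1=t_1$.} Apply the claim with $p=n_1$, $q=n_2$, in the orientation for which $m=n_2\times n_1$. The triple products \eqref{eq:triple-1} are nonzero, so $n_1,n_2$ are independent. The vectors $w_s=\Omega_1(s_1)n_4$ and $w_t=\Omega_1(t_1)n_4$ satisfy the following.
\begin{itemize}
\item They share $\tanip{n_1}{w}=\tanip{n_1}{n_4}$, because $\Omega_1$ fixes $n_1$.
\item They share $\tanip{w}{w}=\tanip{n_4}{n_4}$.
\item They share $\tanip{n_2}{w}$, by the exceptional equality.
\item Both have positive triple product with $m$, by \eqref{eq:triple-1}.
\end{itemize}
Hence $w_s=w_t$, so $\Omega_1(s_1-t_1)$ fixes both $n_1$ and $n_4$. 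Positivity of the triple product forces $n_4\notin\mathbb R n_1$.

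The main obstacle is then to show that a nontrivial element $\exp(cJ(n_1))$ fixes only the line $\mathbb R n_1$. I will check this from the closed form \eqref{eq:SO_exp_closed}. In the non-null cases, $\exp(cJ(n_1))$ restricts to $n_1^\perp$ as a rotation or boost, which has no fixed vector unless it is trivial. In the parabolic case, $(r,\widetilde r)=(c,c^2/2)$, and $cJ(n_1)x+\tfrac{c^2}{2}J(n_1)^2x=0$ forces $x\in\mathbb R n_1$ when $c\neq0$. Finally, injectivity of $c\mapsto\exp(cJ(n_1))$ on $I_1$ (modulo $2\pi$ in the elliptic case) gives $s_1=t_1$.

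\textbf{Step 2: $s_3=t_3$.} Closure gives $\Omega_3^{-1}\Omega_4^{-1}=\Omega_2\Omega_1$. Since $\Omega_4$ fixes $n_4$,
$$w'_s:=\Omega_3(s_3)^{-1}n_4=\Omega_2(s_2)\Omega_1(s_1)n_4 .$$
Apply the claim with $p=n_3$, $q=n_2$, $m=n_3\times n_2$ to $w'_s$ and $w'_t$. They satisfy the following.
\begin{itemize}
\item They share $\tanip{n_3}{w'}=\tanip{n_3}{n_4}$.
\item They share the norm of $n_4$.
\item They share $\tanip{n_2}{w'}=\tanip{n_2}{\Omega_1(s_1)n_4}$, because $\Omega_2$ fixes $n_2$; this value agrees for $s$ and $t$ by the exceptional equality.
\item Both have positive triple product with $m$, by \eqref{eq:triple-2}.
\end{itemize}
So $w'_s=w'_t$, and the fixed-vector argument of Step 1 gives $s_3=t_3$.

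\textbf{Step 3: $s_2=t_2$.} From $w'_s=w'_t$ and $s_1=t_1$ we get $\Omega_2(s_2)u=\Omega_2(t_2)u$ with $u=\Omega_1(s_1)n_4$. By \eqref{eq:triple-1}, $u$ is independent of $n_2$, so the same fixed-vector argument yields $s_2=t_2$.

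\textbf{Step 4: $s_4=t_4$.} Closure now gives $\Omega_4(s_4)=\Omega_4(t_4)$. Injectivity of $s\mapsto\exp(J(sn_4))$ on $I_4$ gives $s_4=t_4$.
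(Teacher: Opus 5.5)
Your proof is correct, and it takes a different route from the paper's.

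\textbf{The paper's route.} The paper works directly with the stabilizer coordinate. Along a one-parameter subgroup, $h(u)=\tanip{a}{\Omega_n(u)b}$ has the form $A+B\cos u+C\sin u$, $A+B\cosh u+C\sinh u$, or $A+Bu+\frac12Cu^2$. The identity $h(y)-h(x)=2\sin d\,h'(m)$ and its hyperbolic and parabolic analogues, together with $h'(x)+h'(y)\propto h'(m)$, show that equal values with positive derivatives at both endpoints force $x=y$. The derivatives of $\tanip{n_2}{\Omega_1(u)n_4}$ and $\tanip{n_2}{\Omega_3(u)^{-1}n_4}$ are exactly the two endpoint triple products. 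This gives $s_1=t_1$, and then $s_3=t_3$ after cancelling $\Omega_1$. A single fixed-vector argument then gives $s_4=t_4$, and closure gives $s_2=t_2$.

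\textbf{Your route.} You first recover the transported vectors $\Omega_1(s_1)n_4$ and $\Omega_3(s_3)^{-1}n_4$ from a line-meets-quadric uniqueness lemma, where Vieta gives the two intersection points opposite triple products with $m$. Only then do you return to the parameters, using that a nontrivial stabilizer element fixes only its axis. What this buys:
\begin{itemize}
\item Your lemma is a rigorous version of the argument the paper uses in the main theorem's proof to identify $u=n_4+c\,(n_1\times n_3)$ with $n_4$.
\item It handles a null $m$ (degenerate span) uniformly.
\item It needs no elliptic lift bookkeeping on $(-2\pi,2\pi)$.
\item Your Step~2 gets the scalar equality for $\Omega_3^{-1}n_4$ from closure plus the exceptional equality, without first needing $s_1=t_1$.
\end{itemize}

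\textbf{What it costs.} You invoke the fixed-vector/injectivity property three times instead of once, so it should be fully justified.
\begin{itemize}
\item The parabolic case deserves its one-line check. For null $n$, $J(n)^2x=-\tanip{n}{x}n$, so pairing $cJ(n)x+\frac{c^2}{2}J(n)^2x=0$ with $x$ gives $\tanip{n}{x}=0$. Hence $n\times x=0$ and $x\in\mathbb R n$.
\item In Step~2, say explicitly that independence of $n_2,n_3$ and $n_4\notin\mathbb R n_3$ follow from the second pair of triple-product inequalities, just as you did for Step~1.
\end{itemize}
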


\begin{proof}
We first use a one-parameter observation.  For one fixed nonzero axis $n$, write
\be
\Omega_n(u):=\exp\big(J(un)\big)\,.
\ee
The subscript $n$ only records that this is the one-parameter subgroup stabilizing the line $\mathbb R n$.  Let
\be
h(u):=\tanip{a}{\Omega_n(u)b}
\ee
on this coordinate space.  The closed exponential formulas give the following form of $h$ as
\be
A+B\cos u+C\sin u\,,\qquad
A+B\cosh u+C\sinh u\,,\qquad
A+Bu+\frac12Cu^2\,,
\ee
for elliptic, hyperbolic, and parabolic stabilizers respectively.  In the elliptic case, choose lifts of $x,y\in\mathbb R/2\pi\mathbb Z$ with $y-x\in(-2\pi,2\pi)$.  If $x\neq y$ on the circle, then $y-x\neq0$ and $\sin((y-x)/2)\neq0$.  Let $m=(x+y)/2$ and $d=(y-x)/2$.  Then
\be
h(y)-h(x)=2\sin d\,h'(m)\,.
\ee
In the hyperbolic and parabolic cases, the corresponding identities are
\be
h(y)-h(x)=2\sinh d\,h'(m)\,,
\qquad
h(y)-h(x)=2d\,h'(m)\,.
\ee
Thus $h(x)=h(y)$ with $x\neq y$ forces $h'(m)=0$ in all three cases.  For the same three elementary forms, $h'(m)=0$ also gives
\be
h'(x)+h'(y)=0\,.
\ee
Thus the endpoint derivatives have opposite signs, unless both vanish.  Hence
\be
h(x)=h(y)\,,\qquad h'(x)>0\,,\qquad h'(y)>0
\ee
imply $x=y$. 

Apply this to
\be
f_1(u):=\tanip{n_2}{\Omega_1(u)n_4}\,.
\ee
Since $\Omega_1(u)=\exp(uJ(n_1))$, we have
\be
\frac{d}{du}\Omega_1(u)=J(n_1)\Omega_1(u)\,.
\ee
Therefore, using $J(n_1)w=n_1\times w$ and the cyclicity of the Lorentzian scalar triple product,
\be
f_1'(u)
=\tanip{n_2}{J(n_1)\Omega_1(u)n_4}  
=\tanip{n_2}{n_1\times\Omega_1(u)n_4} 
=\tanip{n_2\times n_1}{\Omega_1(u)n_4}\,.
\ee
{The} exceptional scalar equality and the two positive endpoint signs give $s_1=t_1$.  Cancelling the equal first factors in the closure equations gives
\be
Q:=
\Omega_4(s_4)\Omega_3(s_3)\Omega_2(s_2)
=
\Omega_4(t_4)\Omega_3(t_3)\Omega_2(t_2)\,.
\ee
Applying this equality to $n_2$ and using that $\Omega_2$ fixes $n_2$ gives
\be
\Omega_4(s_4)\Omega_3(s_3)n_2
=
\Omega_4(t_4)\Omega_3(t_3)n_2\,.
\ee
Taking the inner product with $n_4$, and using that both $\Omega_4(s_4)$ and $\Omega_4(t_4)$ fix $n_4$, gives
\be
\tanip{n_4}{\Omega_3(s_3)n_2}
=
\tanip{n_4}{\Omega_3(t_3)n_2}\,.
\ee
Now apply the one-parameter observation to
\be
f_3(u):=\tanip{n_2}{\Omega_3(u)^{-1}n_4}
=\tanip{n_4}{\Omega_3(u)n_2}\,.
\ee
Its derivative is
\be
f_3'(u)=\tanip{n_3\times n_2}{\Omega_3(u)^{-1}n_4}\,,
\ee
so the equality just derived and the two positive endpoint signs give $s_3=t_3$.

Using $s_3=t_3$, apply the equality of $Q$ to $n_2$ once more.  Since $\Omega_2$ fixes $n_2$, we get
\be
\Omega_4(s_4)a=\Omega_4(t_4)a\,,
\qquad
a:=\Omega_3(s_3)n_2\,.
\ee
The positivity of $f_3'(s_3)$ implies $a\notin\mathbb R n_4$.  Therefore, $\Omega_4(t_4-s_4)\equiv \Omega_4(s_4)^{-1}\Omega_4(t_4)$ cannot be a nontrivial stabilizer element, because every nontrivial element of this one-parameter subgroup fixes only the axis $\mathbb R n_4$.  Hence $\Omega_4(s_4)=\Omega_4(t_4)$, and the definition of $I_4$ gives $s_4=t_4$ as stabilizer coordinates.  The closure equation then gives $\Omega_2(s_2)=\Omega_2(t_2)$, and the definition of $I_2$ gives $s_2=t_2$.
\end{proof}

\begin{proof}[Proof of Theorem \ref{theorem:unified_minkowski}]
Each nontrivial $O_i\in\SO^+(1,2)$ has an invariant normal line $\mathcal N_i=\ker(O_i-\id)$.  
Choose any preliminary representative $n_i$ of this line, normalized when it is non-null, and write the holonomy as in Lemma \ref{lemma:SO_axis}.  These choices give the preliminary matrix $G$ in \eqref{eq:reconstructed_gram}. 
Changing any representative by a nonzero real factor transforms $G$ by diagonal congruence.  Hence its inertia, determinant sign, nondegeneracy, and the nondegeneracy of its principal submatrices are independent of these preliminary choices.

At this stage, $G$ is only the matrix computed from the transported tangent normal representatives by \eqref{eq:reconstructed_gram}, and no ambient tetrahedron has yet been constructed.  Lemma \ref{lemma:lorentzian_gram_signature} is therefore the first step: it uses only the Lorentzian tangent model and shows that the nondegenerate principal submatrices force
\be
\Inertia(G)=\Inertia(\eta_{\sigma_G})\,,
\qquad
\Inertia(G_{\hat i})=(1,2)\,.
\ee
Then Lemma \ref{lemma:ambient_gram_reconstruction} applies to this abstract symmetric matrix and reconstructs four linearly independent ambient normals $N_i\in\mathbb V_{\sigma_G}$ uniquely up to $\mathcal O_{\sigma_G}$.  With those ambient normals in hand, Lemma \ref{lemma:vertex_reality} applies to the principal submatrices and shows that the four triple intersections are real vertices of $\mathcal M_{\sigma_G}$.

It remains only to choose the correct half-space branch.  Lemma \ref{lemma:branch_selection} chooses the unique outward signs for all normal representatives, including the null ones, up to global chirality. Replace $G$ by the diagonally congruent Gram matrix of these outward representatives and continue to denote it by $G$. Note that its determinant sign and inertia are unchanged.
The compensating changes of the stabilizer coordinates leave every prescribed holonomy unchanged.  Proposition \ref{prop:convex_branch} then gives support numbers
\be
\ambip{V_i}{N_i}<0\,,
\qquad i=1,\ldots,4\,,
\ee
for the outward convention.  Lemma \ref{lemma:simplicial_halfspace_realization} applies and shows that the selected half-spaces cut out the closed simplicial cone generated by the four vertex rays. Its quadric section is therefore the strictly convex tetrahedron with the prescribed faces and vertices. 
	Indeed, Assumption 3 is precisely the strict copositivity hypothesis \eqref{eq:strict_inverse_gram_copositivity} of that lemma with $\sigma=\sigma_G$.  Thus the whole cone has the correct quadric sign, and the reconstructed tetrahedron is finite.
	
Let $\widehat O_i(T_G)$ be its based Levi-Civita face holonomies.
{The reconstructed tetrahedron determines a based normal line fixed by
each face holonomy. Using the sub Gram matrix $G_{\hat4}$ at the base vertex $V_4$, choose
the base frame so that the three reconstructed outward normals through
$V_4$ are identified with $n_1,n_2,n_3$, respecting the selected
orientation.

Let $u$ be the fourth outward normal transported to the base vertex
along the special edge. By the metric-compatibility argument in the
proof of Proposition \ref{prop:dihedral_equivalence},
\be
\tanip{n_1}{u}=G_{14},\qquad
\tanip{n_3}{u}=G_{34},\qquad
\tanip{u}{u}=G_{44}.
\ee
The prescribed $n_4$ satisfies the same relations by
\eqref{eq:reconstructed_gram}. Hence
\be
u=n_4+c\,(n_1\times n_3)
\ee
for some $c\in\mathbb R$. Since
\be
\det G_{\hat2}=-\chi_2^2\neq0,
\ee
equality of the norms leaves only the two orientation-related choices,
and the selected sign of $\chi_2$ excludes the one opposite to
$n_4$. Thus $u=n_4$. We may therefore write
\be
O_i=\Omega_i(s_i),\qquad
\widehat O_i(T_G)=\Omega_i(t_i),\qquad
\Omega_i(u)=\exp\bigl(J(un_i)\bigr).
\ee}
The hyperbolic rapidity and {the coefficient relative to a fixed null representative} are real coordinates, while the elliptic angle is a coordinate on the $2\pi$-periodic circle.

Both ordered four-tuples of holonomies satisfy closure.  They also have the same exceptional Gram scalar,
\be
\tanip{n_2}{O_1n_4}
=G_{24}
=\tanip{n_2}{\widehat O_1(T_G)n_4}\,,
\ee
where the first equality is the definition of the reconstructed Gram entry and the second is Proposition \ref{prop:dihedral_equivalence} applied to $T_G$.  The endpoint triple products \eqref{eq:triple-1}--\eqref{eq:triple-2} in Lemma \ref{lemma:global_stabilizer_parameter_uniqueness} are positive for the prescribed holonomies by the selected branch and are positive for $\widehat O_i(T_G)$ by outward convexity.  Lemma \ref{lemma:global_stabilizer_parameter_uniqueness} therefore gives $s_i=t_i$ for every $i$, hence
\be
O_i=\widehat O_i(T_G)\,,
\qquad i=1,\ldots,4\,.
\ee

 The uniqueness is the uniqueness of the Gram factorization, modulo the global ambient isometry group, together with the unique outward branch selected by the triple-product signs.

\end{proof}

{Appendix~\ref{subsec:global_branch_examples} illustrates
the reconstruction criterion with two numerical tests starting
from closing $\SO^+(1,2)$ holonomies: an anti-de Sitter
reconstruction test and a domain-obstruction test, both with
mixed causal face types.}

\begin{remark}
The equality $O_i=\widehat O_i(T_G)$ is the point where the stabilizer-coordinate rigidity enters.  Once representatives of the normal lines are fixed, Lemma \ref{lemma:global_stabilizer_parameter_uniqueness} uses closure, the exceptional entry $G_{24}$, and the positive endpoint triple products to fix the remaining angle, boost,{ }or parabolic coefficient.  For a null line, only the invariant product of this coefficient and the chosen representative is intrinsic.
\end{remark}

\begin{remark}%
Null faces are allowed in Theorem \ref{theorem:unified_minkowski}.  The nontriviality assumption does not exclude them: a nontrivial parabolic holonomy has a well-defined null fixed line and a nilpotent logarithm.  In that case, $G_{ii}=0$ for the corresponding face, but Assumption 2 can still hold because a nondegenerate Lorentzian three-plane may contain null vectors.  By Lemma \ref{lemma:lorentzian_gram_signature}, the condition $\det G_{\hat i}\neq0$ is equivalent to $\chi_i\neq0$ and excludes the degenerate case where the three incident normals span a degenerate subspace and {excludes} the case when the opposite vertex is not a real point of the selected model $\mathcal M_{\sigma_G}$.
 For every causal type, the common sign of the $\chi_i$ is a branch-selection rule and not an independent hypothesis.

\end{remark}

Before writing the spin version, recall that $\ker\pi=\{\pm\id\}$.  Thus a projected closure relation in $\SO^+(1,2)$ lifts to
\be
H_4H_3H_2H_1=\epsilon\id\,,
\qquad
\epsilon=\pm1\,.
\ee
The same central freedom is available face by face: replacing $H_i$ by $-H_i$ leaves $\pi(H_i)$ unchanged but flips $B_i=\impart(H_i)$. 
This changes the signed trace representative, while the geometric normal branch is selected independently by Lemma \ref{lemma:branch_selection}.

\begin{corollary}[$\SL(2,\mathbb R)$ form]
\label{cor:SL_reconstruction}
Let $H_1,H_2,H_3,H_4\in\SL(2,\mathbb R)$ be noncentral,
\be
H_i\neq\pm\id\,,
\qquad
i=1,\ldots,4\,,
\label{eq:non-trivial-H}
\ee
and satisfy the spin closure relation
\be
H_4H_3H_2H_1=\epsilon\id\,,
\qquad
\epsilon=\pm1\,.
\ee
Assume that:
\begin{enumerate}
\item the simple-path convention, including the choice of special edge, has been fixed;
\item the Gram matrix $G$ reconstructed from any chosen representatives of the normal lines extracted from the traceless parts is nondegenerate, and every vertex principal submatrix $G_{\hat i}$ is nondegenerate;
\item after selecting the outward representatives as in Lemma \ref{lemma:branch_selection} and denoting their Gram matrix again by $G$, the signed inverse Gram form $-\sgnop(\det G)G^{-1}$ is strictly copositive.
\end{enumerate}
Then, with $\sigma_G:=-\sgnop(\det G)$, the spin data determine a unique finite strictly convex tetrahedron in $\mathcal M_{\sigma_G}$ up to $\mathcal O_{\sigma_G}$ in the sense of Theorem \ref{theorem:unified_minkowski}.  The projected holonomies $\pi(H_i)$ are exactly the based Levi-Civita face holonomies of this tetrahedron.   
\end{corollary}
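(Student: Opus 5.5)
The plan is to reduce the corollary to Theorem~\ref{theorem:unified_minkowski} by projecting through the covering map $\pi:\SL(2,\mathbb R)\to\SO^+(1,2)$. Set $O_i:=\pi(H_i)$. Since $\ker\pi=\{\pm\id\}$, the noncentrality hypothesis $H_i\neq\pm\id$ is equivalent to $O_i\neq\id$. Applying the homomorphism $\pi$ to $H_4H_3H_2H_1=\epsilon\id$ gives $O_4O_3O_2O_1=\pi(\epsilon\id)=\id$ for either value of $\epsilon$. So the projected tuple satisfies the nontriviality and closure hypotheses of the theorem. The central sign $\epsilon$ plays no further role, because it is invisible after projection.

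Next I would check that Assumptions 2 and 3 of the corollary are literally Assumptions 2 and 3 of the theorem for the tuple $(O_i)$. By \eqref{eq:spin_Bi_scaled}, the traceless part $B_i=\mathcal T(b_i)$ has $b_i=\epsilon_i s_i n_i$. Whenever $s_i\neq0$, this vector spans the invariant normal line $\ker(O_i-\id)$. The spin lift has $s_i=0$ only if $H_i=\pm\id$, which is excluded, and this holds even for an elliptic half-turn, where $c_i=0$ and $s_i=\pm2$. Hence the normal lines extracted from the traceless parts coincide with those used in the theorem. The representatives may differ by nonzero real factors, including the face-wise central signs $\epsilon_i$.

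The adjoint action \eqref{eq:spin_adjoint_vector_action} shows that $H_1\cdot n_4=O_1n_4$ and $H_3^{-1}\cdot n_4=O_3^{-1}n_4$, so the exceptional entry \eqref{eq:reconstructed_gram-H} equals \eqref{eq:reconstructed_gram}. The spin-computed $G$ is therefore diagonally congruent to the $\SO^+(1,2)$ one by a nonsingular diagonal matrix. This congruence preserves nondegeneracy of $G$ and of every $G_{\hat i}$, as well as the determinant sign and inertia. After Lemma~\ref{lemma:branch_selection} is applied, both procedures select the same outward rays, since that lemma depends only on the lines and $\chi$ signs. The outward Gram matrices then differ only by positive diagonal rescaling, which by the remark after Lemma~\ref{lemma:simplicial_halfspace_realization} preserves strict copositivity of $-\sgnop(\det G)G^{-1}$.

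With all hypotheses of Theorem~\ref{theorem:unified_minkowski} in place, it yields a unique finite strictly convex tetrahedron in $\mathcal M_{\sigma_G}$ up to $\mathcal O_{\sigma_G}$, whose based Levi-Civita face holonomies are exactly $O_i=\pi(H_i)$. The main point needing care is the identification of the normal lines from traceless parts in every causal case. For the null case I would note that $B_i$ is a nonzero nilpotent matrix whose vector $\epsilon_i\kappa_i$ spans the fixed null line, so any rescaling is absorbed as in \eqref{eq:null_rescaling}. The independence of the whole construction from the signs $\epsilon_i$ and $\epsilon$ then follows because only $\pi(H_i)$ and lines enter.
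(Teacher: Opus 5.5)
Your proposal is correct and follows essentially the same route as the paper. You project to $O_i=\pi(H_i)$, use the nonzero traceless parts on the invariant lines (with the adjoint action for the exceptional entry) to make the spin-trace Gram matrix diagonally congruent to the vector one, let Lemma~\ref{lemma:branch_selection} select the outward rays independently of the central signs, and then invoke Theorem~\ref{theorem:unified_minkowski}. The paper's proof also discusses the lift: a coherent choice of edge-transport spin lifts reproduces all the $H_i$ exactly precisely when $\epsilon=+1$. That goes beyond what the corollary asserts, so your argument is complete for the statement as given.
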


\begin{proof}
The traceless parts $B_i=\impart(H_i)$ determine nonzero vectors $b_i$ on the invariant normal lines. 
{For a parabolic input, \eqref{eq:spin_closed_form} gives $b_i=\epsilon_i\kappa_i=\epsilon_i\Theta_i n_i$.  Thus the connected spin traces determine the logarithm-fixed product, while any chosen nonzero null representative $n_i$ is obtained by a nonzero rescaling and the coefficient $\Theta_i$ changes inversely. } The connected spin-trace formulas \eqref{eq:SL_connected_pair}--\eqref{eq:SL_connected_triple}, with the same exceptional transport insertions as in \eqref{eq:reconstructed_gram-H}, reconstruct a diagonally congruent Gram matrix for any such choices.

Set $O_i:=\pi(H_i)$.  The spin closure implies $O_4O_3O_2O_1=\id$, so the projected holonomies satisfy the vector closure condition.  The agreement with the vector formulas holds because the adjoint action of $H_i$ on $\mathfrak{sl}(2,\mathbb R)$ is the vector action of $O_i$.

The difference from the vector statement is how the normal signs are made available.  In the spin presentation, multiplying one input by the central element $-\id$ preserves the projected holonomy and preserves closure after projection, while it sends $B_i$ and the extracted representative $b_i$ to their negatives.  This applies to parabolic faces as well, because the nilpotent part seen by the connected spin traces changes sign.  For a spacelike face, this sign freedom is therefore not the elliptic periodic ambiguity of the projected holonomy.  It is the central lift freedom $H_i\leftrightarrow -H_i$.  The principal-submatrix assumption makes the triple products nonzero by Lemma \ref{lemma:lorentzian_gram_signature}. 
{Lemma \ref{lemma:branch_selection} then selects the outward representatives from the invariant normal lines, independently of the central lift choices. } Theorem \ref{theorem:unified_minkowski} completes the reconstruction, including the equality between the projected holonomies and the based Levi-Civita face holonomies.

It remains only to interpret the lift.  Choose one coherent system of spin lifts of the six reconstructed edge transports, with the lift of a reversed edge equal to the inverse lift, and denote the induced face products by $\widehat H_i(T_G)$.  Cancellation in the lifted simple-path words gives
\be
\widehat H_4(T_G)\widehat H_3(T_G)
\widehat H_2(T_G)\widehat H_1(T_G)=\id\,.
\ee
Since $\pi(H_i)=\widehat O_i(T_G)=\pi(\widehat H_i(T_G))$, there are signs $\delta_i=\pm1$ such that $H_i=\delta_i\widehat H_i(T_G)$, and comparison of the two products gives $\prod_i\delta_i=\epsilon$.  Changing the sign of one lifted edge transport flips the two incident face products, and such pair flips generate all even face-sign patterns.  Hence $\epsilon=+1$ permits simultaneous agreement with all prescribed $H_i$.  For $\epsilon=-1$ the odd total sign makes such agreement impossible for one coherent lifted system, although every $H_i$ separately is a lift of the correct projected face holonomy.  Changing any one prescribed sign reduces to the even case.
\end{proof}

\begin{remark}
The theorem should be read as a generic reconstruction theorem.  Degenerate cases where vertices go to infinity or hyperplanes become dependent require separate limiting statements.  A holonomy has a non-isolated logarithm when the holonomy no longer determines a unique normal line.  In the vector representation this happens, for example, at $O_i=\id$, since $\exp(2\pi mJ(n))=\id$ for every unit timelike $n$.  In the spin representation the analogous central cases are $H_i=\pm\id$.  These cases are excluded from the generic theorem because the face normal cannot be extracted from the holonomy alone.  By contrast, a rotation by $\pi$ around a timelike normal has a well-defined fixed line, even though the antisymmetric part $(O_i-O_i^{-1})/2$ vanishes.  Such a case should be handled by the eigenspace of $O_i$, not by the antisymmetric-part formula \eqref{eq:Oi_minus_inverse}.
\end{remark}

\section{Curvature-Zero Limit}
\label{sec:flat_limit}

The unit-radius convention used so far hides the relation with the ordinary vector closure condition in flat space.  We denote the curvature radius by $R$ and write
\be
\mathcal M_{\sigma,R}
:=
\{X\in\Vsig\mid \ambip{X}{X}=\sigma R^2\}\,.
\ee
The sectional curvature is $\sigma/R^2$, so the flat limit is $R\to\infty$ in a bounded geodesic coordinate patch.  Equivalently, if $K:=\sigma/R^2$ is the signed curvature, then the flat geometry is the common boundary $K=0$ between the positive-curvature de Sitter branch $K>0$ and the negative-curvature anti-de Sitter branch $K<0$.

Choose a unit radial vector $e\in\Vsig$ with $\ambip{e}{e}=\sigma$ and base points $p_R=R e\in \mathcal M_{\sigma,R}$.  Then
\be
T_{p_R}\mathcal M_{\sigma,R}=e^\perp
\ee
is canonically identified, for all $R$, with the same tangent Minkowski space.  We denote its pairing by
\be
\tanip{x}{y}:=\ambip{x}{y}\,,
\qquad x,y\in e^\perp\,.
\ee
A limiting flat face with normal $n_i\in e^\perp$ and support number $h_i$ is written as
\be
\Pi_i^{\rm flat}
=
\{x\in e^\perp\mid \tanip{x}{n_i}=h_i\}\,.
\ee

\begin{figure}[h!]
\centering
\includegraphics{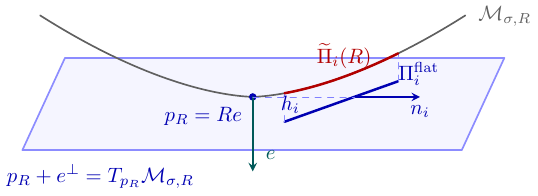}
\caption{A local schematic of the flat limit, suppressing one tangent direction.  The curved model $\mathcal M_{\sigma,R}$ passes through $p_R=Re$ and has affine tangent space $p_R+e^\perp$.    After translating $p_R+e^\perp$ to $e^\perp$, $h_i$ is the affine support coefficient of the limiting face; for null $n_i$ it is not a metric distance.  The curved {supporting surface} ${\widetilde{\Pi}_i(R)}$ is the nearby slice of $\mathcal M_{\sigma,R}$ whose geodesic-coordinate image converges to $\Pi_i^{\rm flat}$ as $R\to\infty$.}
\label{fig:flat_limit_local_schematic}
\end{figure}
The dimension-reduced picture is illustrated in Fig.{~}\ref{fig:flat_limit_local_schematic} with one direction suppressed. 
For a family of curved {supporting surfaces} ${\widetilde{\Pi}_i(R)}=N_i(R)^\perp\cap\mathcal M_{\sigma,R}$ converging to this flat face, choose the extrinsic normals $N_i(R)$ so that their causal normalization, or null scale, converges to that of $n_i$.  The finite-curvature Gram matrix is
\be
G_{ij}(R):=\ambip{N_i(R)}{N_j(R)}\,.
\ee

\begin{lemma}
\label{lemma:flat_limit_first_order_normal_gram}
With the conventions above, suppose that ${\widetilde{\Pi}_i(R)}$ converges to
$\Pi_i^{\rm flat}$ in a bounded coordinate patch and that each chosen normal
representative admits a first-order expansion in $R^{-1}$,
\be
N_i(R)=n_i+\frac{w_i}{R}+O(R^{-2})
\ee
for some $w_i\in\Vsig$.  Then there is a vector $v_i\in e^\perp$ such that
\be
N_i(R)
=n_i+\frac{v_i}{R}-\frac{\sigma h_i}{R}e+O(R^{-2})\,.
\label{eq:flat_limit_first_order_normal_expansion}
\ee
Consequently,
\be
G_{ij}(R)=\tanip{n_i}{n_j}+O(R^{-1})\,.
\label{eq:flat_limit_first_order_gram_rate}
\ee
\end{lemma}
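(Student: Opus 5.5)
The plan is to decompose the first-order correction $w_i$ along the orthogonal splitting $\Vsig=e^\perp\oplus\mathbb R e$. Write $w_i=v_i+\beta_i e$ with $v_i\in e^\perp$ and $\beta_i=\sigma\ambip{w_i}{e}$. The task then reduces to identifying the radial coefficient, $\beta_i=-\sigma h_i$; the tangential part $v_i$ is left arbitrary. The Gram estimate follows immediately from the expansion. Since $n_i,v_i\in e^\perp$ and $\ambip{n_i}{e}=0$, we get $\ambip{N_i(R)}{N_j(R)}=\tanip{n_i}{n_j}+R^{-1}\bigl(\tanip{n_i}{w_j}+\tanip{w_i}{n_j}\bigr)+O(R^{-2})$, and this already yields \eqref{eq:flat_limit_first_order_gram_rate}. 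So the Gram rate only uses the existence of the first-order expansion and $n_i\perp e$, not the precise value of $\beta_i$.

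To find $\beta_i$, I would parametrize the curved model near $p_R$ by the tangent coordinate $x\in e^\perp$ in a bounded patch. A natural choice, agreeing with geodesic coordinates to the order needed, is
\be
X_R(x)=\sqrt{1-\sigma\tanip{x}{x}/R^2}\;Re+x\,,
\ee
so that $X_R(x)=Re+x+O(R^{-1})$ uniformly on bounded sets, and $\ambip{X_R}{X_R}=\sigma R^2$. If geodesic normal coordinates are used instead, $\exp_{p_R}(x)=\cos(\cdot)Re+\ldots$ differs from this chart by $O(R^{-1})$ on bounded sets, which is harmless. The face condition $\ambip{X}{N_i(R)}=0$ then becomes
\be
0=\ambip{Re+x}{\,n_i+R^{-1}(v_i+\beta_i e)}+O(R^{-1})=\tanip{x}{n_i}+\sigma\beta_i+O(R^{-1}),
\ee
using $\ambip{e}{e}=\sigma$, $\ambip{e}{n_i}=\ambip{e}{v_i}=0$, and $\ambip{x}{e}=0$. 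The $R^{+1}$ term $R\ambip{e}{n_i}$ vanishes precisely because $n_i\in e^\perp$, and this is why the expansion starts at the stated order. Hence the coordinate image of $\widetilde\Pi_i(R)$ converges to the affine plane $\{\tanip{x}{n_i}=-\sigma\beta_i\}$. Comparing with the hypothesis that this limit is $\Pi_i^{\rm flat}=\{\tanip{x}{n_i}=h_i\}$ gives $-\sigma\beta_i=h_i$, that is $\beta_i=-\sigma h_i$ (using $\sigma^2=1$). This is \eqref{eq:flat_limit_first_order_normal_expansion}.

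The main obstacle is justifying this comparison of affine planes rigorously. The affine equation determined by a nonzero linear form is fixed only up to scale, so matching the offsets requires that the linear parts agree exactly, namely both equal $n_i$. This is where the normalization hypothesis (causal normalization or null scale of $N_i(R)$ converging to that of $n_i$) is used: it is what allows the offset to be read off directly, including in the null case, where $h_i$ is only an affine support coefficient. I would also check that the convergence in the bounded patch is uniform, so that the $O(R^{-1})$ error in the defining function passes to convergence of the zero sets. The uniformity of the chart remainder on bounded sets handles this, and nondegeneracy of the linear form $\tanip{\cdot}{n_i}$ ensures the zero sets converge.
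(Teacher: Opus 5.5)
Your proposal is correct and follows the paper's proof essentially step for step. It uses the same splitting $w_i=v_i+a_ie$, the same graph chart $X_R(x)=R\sqrt{1-\sigma\langle x,x\rangle/R^2}\,e+x$, the same first-order evaluation of the face equation giving $\langle x,n_i\rangle+\sigma a_i+O(R^{-1})$, and the same direct expansion of $G_{ij}(R)$. Your extra remarks are sound refinements of points the paper leaves implicit: the comparison of geodesic and graph charts, the uniform convergence of zero sets, and the observation that the Gram rate does not need the value of the radial coefficient.
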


\begin{proof}
Write $w_i=v_i+a_i e$, where $v_i\in e^\perp$ and $a_i\in\mathbb R$.
Every point of $\mathcal M_{\sigma,R}$ near $p_R=R e$ can be written uniquely as
\be
X_R(x)=\alpha_R(x)e+x\,,
\qquad x\in e^\perp\,,
\ee
on the branch with $\alpha_R(0)=R$.  The equation $\ambip{X_R(x)}{X_R(x)}=\sigma R^2$ gives
\be
\sigma\alpha_R(x)^2+\tanip{x}{x}
=
\sigma R^2\,,
\ee
therefore
\be
\alpha_R(x)
=
R\sqrt{1-\frac{\sigma\tanip{x}{x}}{R^2}}
=
R-\frac{\sigma\tanip{x}{x}}{2R}
+O(R^{-3})
\label{eq:flat_limit_determinant_signed_asymptotic}
\ee
for bounded $x$.
Substitution in the curved face equation gives
\be
\ambip{X_R(x)}{N_i(R)}
=\tanip{x}{n_i}+\sigma a_i+O(R^{-1})\,.
\ee
Its zero set defining the curved face converges to $\tanip{x}{n_i}=h_i$, so
$\sigma a_i=-h_i$ and hence $a_i=-\sigma h_i$.  This proves
\eqref{eq:flat_limit_first_order_normal_expansion}.

Taking ambient inner products and using $e\perp n_i,v_i$ gives
\be
G_{ij}(R)
=\tanip{n_i}{n_j}
+\frac{\tanip{v_i}{n_j}+\tanip{n_i}{v_j}}{R}
+O(R^{-2})\,,
\ee
which proves \eqref{eq:flat_limit_first_order_gram_rate}.
\end{proof}

For every nondegenerate finite-radius configuration,
$G(R)=\mathsf N(R)^\top\eta_\sigma\mathsf N(R)$, where the columns of
$\mathsf N(R)$ are the four ambient normals. Hence
\be
\det G(R)
=
-\sigma\det\mathsf N(R)^2\,,
\ee
and therefore $\sgnop(\det G(R))=-\sigma$. In the flat limit,
$G(R)$ converges to the Gram matrix of four vectors in the
three-dimensional space $e^\perp$, so the limiting Gram matrix is singular.

\begin{prop}%
\label{prop:automatic_flat_holonomy_expansion}
Let $T_R\subset\mathcal M_{\sigma,R}$ be finite domain-admissible
tetrahedra whose ordinary triangular face $F_i(R)\subset \widetilde{\Pi}(R)$, based edge
paths, and based frames converge in $C^1$ on one bounded normal-coordinate
chart to the corresponding data of a bounded affine tetrahedron.  Let
$\mathbf E_i(R)$ be their based area fluxes.  Then
\be
\mathbf E_i(R)\longrightarrow\mathbf E_i^0\,,
\qquad
O_i(R)=\exp\!\left(-\frac{\sigma}{R^2}J(\mathbf E_i(R))\right)
=\id-\frac{\sigma}{R^2}J(\mathbf E_i^0)+o(R^{-2})\,.
\label{eq:flat_limit_holonomy_expansion}
\ee
If the curved closure holds for every $R$, then
\be
\sum_{i=1}^4\mathbf E_i^0=0\,.
\ee
For an abstract closing group family, the same conclusion holds whenever
the displayed asymptotic expansion is assumed.
\end{prop}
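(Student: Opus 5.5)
The plan is to reduce everything to Lemma \ref{lemma:stabilizer_parameter_area}, rescaled to radius $R$, and then pass to the limit in the flux integrals. First I will redo the Gauss-equation computation on $\mathcal M_{\sigma,R}$. The sectional curvature is now $\sigma/R^2$, so the curvature two-form becomes $R^I{}_J=(\sigma/R^2)e^I\wedge e_J$, which gives $R^{\mathcal M}=(\sigma/R^2)J(\mathcal S)$. The Stokes argument of Lemma \ref{lemma:stabilizer_parameter_area} carries over verbatim. Each face is totally geodesic, and its chosen normal is parallel along it, so the restricted connection takes values in the abelian algebra $\mathfrak{stab}(n_i)$ and path ordering drops out. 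After basing with $Q_i(R)$, this gives the exact formula
\[
O_i(R)=\exp\!\left(-\tfrac{\sigma}{R^2}J(\mathbf E_i(R))\right),
\]
which holds for every $R$ and every causal type, null faces included.

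Next I will prove $\mathbf E_i(R)\to\mathbf E_i^0$. In the fixed bounded normal-coordinate chart, the faces $F_i(R)$, the co-triads $e^I(R)$, and the basing transports $Q_i(R)$ converge in $C^1$ by hypothesis. Hence the vector-valued two-forms $\mathcal S^I(R)=\tfrac12\varepsilon^I{}_{JK}e^J\wedge e^K$ converge uniformly on compact sets to the flat area form. Pulling each face back to a fixed reference triangle by $C^1$-convergent parametrizations, the integrands converge uniformly on a compact domain, so the integrals converge. Because the $Q_i(R)$ also converge, the based fluxes converge to $\mathbf E_i^0$, which is the integral of the oriented vector-valued area form on the affine face. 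I will stress that this definition needs no unit normal, so null faces are covered. The expansion in \eqref{eq:flat_limit_holonomy_expansion} then follows from the power series of $\exp$. The norms $\|\mathbf E_i(R)\|$ are bounded, so the quadratic and higher terms are $O(R^{-4})$, while $R^{-2}\bigl(J(\mathbf E_i(R))-J(\mathbf E_i^0)\bigr)=o(R^{-2})$.

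Finally, for closure I will substitute the expansion into $O_4O_3O_2O_1=\id$. Multiplying out, the product equals
\[
\id-\tfrac{\sigma}{R^2}J\!\left(\sum_i\mathbf E_i^0\right)+o(R^{-2}),
\]
because all cross terms are $O(R^{-4})$. Multiplying the closure relation by $R^2$ and letting $R\to\infty$ then gives $J(\sum_i\mathbf E_i^0)=0$. Since $J$ is an isomorphism and $\sigma\neq0$, this yields $\sum_i\mathbf E_i^0=0$. For an abstract closing family, only this last step is used, so assuming the expansion suffices.

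The main obstacle I expect is the convergence step, specifically the face $F_4$, which is based through the special-edge transport. There I must check that $Q_4(R)$, the Levi-Civita transport along a converging geodesic edge, converges to the identification of flat tangent spaces. I will handle this by writing the transport as the solution of a linear ODE whose connection coefficients are $O(R^{-2})$ on the bounded chart and applying a Grönwall estimate. The same estimate also shows that the based frames at $p_R$ are uniformly controlled.
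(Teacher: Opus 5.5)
Your proposal is correct and follows essentially the same route as the paper. The paper also argues in three steps: the rescaled curvature relation $R^I=(\sigma/R^2)\mathcal S^I$ with abelian Stokes in the normal-adapted gauge gives the exact exponential, the assumed $C^1$ convergence gives $\mathbf E_i(R)\to\mathbf E_i^0$, and expanding closure to order $R^{-2}$ together with injectivity of $J$ gives the vector closure. Your bounded-norm remainder estimate and your Gr\"onwall argument for the special-edge transport $Q_4(R)$ spell out the convergence of the basing transports, which the paper takes directly from the hypothesis.
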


\begin{proof}
After restoring the curvature radius,
\eqref{eq:curvature_two_form_constant} reads
$R^I=(\sigma/R^2)\mathcal S^I$.  In the normal-adapted gauge of Lemma
\ref{lemma:stabilizer_parameter_area}, Abelian Stokes gives the displayed
exponential exactly.  The assumed $C^1$ convergence of the patches, paths,
frames implies convergence of the vector-valued
area forms, their integrals, and the basing transports; hence
$\mathbf E_i(R)\to\mathbf E_i^0$.  Expanding the exponential and
$O_4O_3O_2O_1=\id$ gives $J(\sum_i\mathbf E_i^0)=0$.  Since $J$ is an
isomorphism, the stated closure follows.
\end{proof}

{
 Let $T_R\subset\mathcal M_{\sigma,R}$ be a family of finite
domain-admissible non-degenerate tetrahedra.  
Suppose that, in normal coordinates centered at $p_R$, the triangular
face and chosen basing paths lie in a fixed bounded coordinate
domain and converge in $C^1$, while the chosen base frames converge.
Then $\mathbf E_i(R)\to\mathbf E_i^0$, and Lemma
\ref{lemma:stabilizer_parameter_area} gives
\be
O_i(R)
=
\exp\!\left(-\frac{\sigma}{R^2}J(\mathbf E_i(R))\right)
=
\id-\frac{\sigma}{R^2}J(\mathbf E_i^0)+o(R^{-2})\,.
\ee
Expanding the closure relation $O_4O_3O_2O_1=\id$ therefore yields
\be
\sum_{i=1}^4\mathbf E_i^0=0\,.
\label{eq:flat_Lorentzian_Minkowski_closure}
\ee
For an abstract closing group family, the same conclusion holds if
the displayed holonomy expansion is assumed. Assume $\chi_i(R)\neq0$ for every $i$ {in the limit $R\rightarrow\infty$}, and that each $\mathbf E_i^0$ is a
nonzero multiple of the corresponding normal{. Then} every three
$\mathbf E_i^0$ are linearly independent. 
In the fixed reference frame
$\eta_T=\mathrm{diag}(-1,1,1)$, define the Euclidean area vectors $\mathbf E_i^{\mathrm E}:=\eta_T\mathbf E_i^0$. 
Since $\eta_T$ is invertible, they are again linearly independent in every
triple, and \eqref{eq:flat_Lorentzian_Minkowski_closure} gives
\be
\sum_{i=1}^4\mathbf E_i^{\mathrm E}=0\,.
\ee
The classical Minkowski theorem therefore reconstructs a unique bounded
convex Euclidean tetrahedron, up to translation, with outward area vectors
$\mathbf E_i^{\mathrm E}$.  Regard the same affine tetrahedron as a subset of
$\mathbb R^{1,2}$.  Since the Euclidean and Lorentzian volume forms agree in
the chosen frame and $\eta_T^2=\id$, its Lorentzian face vectors are
\be
\eta_T\mathbf E_i^{\mathrm E}=\mathbf E_i^0\,.
\ee
Thus it is the required convex Lorentzian tetrahedron with the prescribed
area scale.  The argument remains valid when some $\mathbf E_i^0$ are
null,
and the
prescribed vectors fix the scale. %

\section{Character Varieties and Flat Connections}
\label{sec:character_varieties_flat_connections}

The closure relation also has a standard moduli-space interpretation in the language of relative character varieties and flat connections \cite{Goldman:1984,AtiyahBott:1983}.  Let $S_{0,4}$ be an oriented four-holed sphere, with boundary loops $\gamma_i$ ordered so that
\be
\gamma_4\circ\gamma_3\circ\gamma_2\circ\gamma_1=1\,.
\ee
For a Lie group $G$ and prescribed boundary conjugacy classes $\mathcal C_i\subset G$, the relative character variety is
\be
\mathcal M_{\vec{\mathcal C}}(S_{0,4},G)
=
\left\{
(g_1,g_2,g_3,g_4)\in\mathcal C_1\times\cdots\times\mathcal C_4
\,\middle|\,
g_4g_3g_2g_1=\id
\right\}/G\,,
\label{eq:relative_character_variety}
\ee
where $G$ acts by simultaneous conjugation.  Thus the based holonomies of Theorem \ref{theorem:unified_minkowski}, modulo the common-frame freedom, define points in the relative character variety for $G=\SO^+(1,2)\cong\PSL(2,\mathbb R)$.  The conjugacy classes encode the face type and its holonomy parameter: elliptic for spacelike faces, hyperbolic for timelike faces, and parabolic for null faces.
Nonemptiness of \eqref{eq:relative_character_variety} supplies
closing holonomies only.  A finite tetrahedron additionally requires a
nondegenerate reconstructed Gram matrix, the outward branch, and strict
copositivity of its signed inverse (see Theorem
\ref{theorem:character_geometric_chambers}).

For spin holonomies, there is one small but important refinement.  A projected $\PSL(2,\mathbb R)$ representation admits lifts satisfying $H_4H_3H_2H_1=\epsilon\id$, with $\epsilon=\pm1$.  If we change the lifts by central signs $H_i\mapsto\delta_iH_i$, then
\be
(\delta_4H_4)(\delta_3H_3)(\delta_2H_2)(\delta_1H_1)
=
\left(\prod_{i=1}^4\delta_i\right)\epsilon\,\id\,.
\label{eq:spin_lift_sign_change}
\ee
Hence one obtains the standard $\SL(2,\mathbb R)$ character-variety relation precisely when $\prod_i\delta_i=\epsilon$.  For example, if $\epsilon=-1$, replacing any one lift by its negative changes the product to $\id$.  The sign in Corollary \ref{cor:SL_reconstruction} therefore records the chosen spin lift, not an additional obstruction in the projected geometry.

This is the Chern--Simons phase-space viewpoint \cite{ChernSimons:1974,Witten:1989}.
 On the smooth locus represented by irreducible flat connections, the moduli space of flat $G$-connections on $S_{0,4}$ with fixed boundary conjugacy classes carries the Goldman, equivalently Atiyah--Bott, symplectic structure \cite{Goldman:1984,AtiyahBott:1983}.  In the compact real form $G=\SO(3)$ or $\SU(2)$, this is the phase space used in the curved Minkowski theorem of Haggard--Han--Riello \cite{Haggard:2015ima}.  In the Lorentzian real form $G=\PSL(2,\mathbb R)$ or $\SL(2,\mathbb R)$, it is the natural boundary phase space for $\SL(2,\mathbb R)$ Chern--Simons theory.  Our reconstruction theorem can therefore be viewed as a local holonomy-to-tetrahedron reconstruction problem on this character-variety phase space.
The following subsections express this geometric selection
directly in trace coordinates and describe its boundary.

\subsection{Geometric chambers of the relative character surface}
\label{subsec:character_geometric_chambers}

The reconstruction criterion can be expressed directly as a semialgebraic chamber decomposition of the relative character surface.  The geometric reconstruction depends only on the projected $\PSL(2,\mathbb R)\cong \SO^+(1,2)$ holonomies, but ordinary traces are naturally functions of chosen $\SL(2,\mathbb R)$ lifts.  We therefore formulate the computation in trace coordinates on the real Fricke surface and then restrict it to the trace-coordinate image of the relative $\SL(2,\mathbb R)$ character variety. The resulting geometric chambers descend to the corresponding liftable relative $\PSL(2,\mathbb R)$ character variety.  Choose spin lifts satisfying
\be
H_4H_3H_2H_1=\id\,,
\label{eq:H-closure}
\ee
and write
\be
t_i:=\trspin(H_i)\,.
\label{eq:t_i-def}
\ee
We call $\{H_1,H_2,H_3,H_4\}$ satisfying \eqref{eq:H-closure} a closing $\SL(2,\mathbb R)$ quadruple.
Fixing the four traces determines the usual affine trace surface, while fixing four conjugacy classes inside $\SL(2,\mathbb R)$ additionally specifies connected components or sheets inside its real locus.  This distinction is relevant because the trace alone does not always distinguish the oriented elliptic classes or the two nontrivial parabolic classes {at each trace $+2$ or $-2$} in $\SL(2,\mathbb R)$.

Introduce the trace coordinates adapted to the simple-path convention,
\be
x:=\trspin(H_2H_1)\,,
\qquad
y:=\trspin(H_3H_2)\,,
\qquad
z:=\trspin(H_1H_3)\,.
\label{eq:adapted_fricke_coordinates}
\ee
The closure relation and cyclicity of the trace give
\be
x=\trspin(H_4H_3)\,,
\qquad
y=\trspin(H_1H_4)\,,
\ee
and
\be
z
=\trspin\!\left(H_2H_1H_4H_1^{-1}\right)\,.
\label{eq:exceptional_fricke_coordinate}
\ee
Indeed, closure gives $H_2H_1H_4H_1^{-1}=(H_1H_3)^{-1}$, and an $\SL(2,\mathbb R)$ matrix and its inverse have the same trace.  Thus the third trace coordinate is precisely the one required by the exceptional pair $(F_2,F_4)$.

The traces $\mathbf t=(t_1,t_2,t_3,t_4)$ defined in \eqref{eq:t_i-def} and $x,y,z$ defined in \eqref{eq:adapted_fricke_coordinates} obey the Fricke relation for the four-holed sphere written in the coordinates \cite{vogt1889invariants,fricke1897vorlesungen,Nekrasov:2014yra}:
\be
\begin{split}
\mathcal F_{\mathbf t}(x,y,z)
:={}&x^2+y^2+z^2+xyz
-(t_1t_2+t_3t_4)x\\
&-(t_1t_4+t_2t_3)y
-(t_1t_3+t_2t_4)z
+\sum_{i=1}^4t_i^2+t_1t_2t_3t_4-4
=0\,.
\end{split}
\label{eq:adapted_fricke_cubic}
\ee

For fixed real boundary traces $\mathbf t$, we define the {\it real Fricke surface} to be the real affine cubic
{\be
\mathfrak F_{\mathbf t}(\mathbb R)
:=
\left\{
(x,y,z)\in\mathbb R^3
\,\middle|\,
\mathcal F_{\mathbf t}(x,y,z)=0
\right\}\,.
\label{eq:real_fricke_surface}
\ee}
Here ``real'' refers only to the trace coordinates $t_i,x,y,z$.  It does not impose an $\SL(2,\mathbb R)$ real-form condition on a representation.  For example, when $t_i=2\cos\theta_i$ with $|t_i|<2$, the phase eigenvalues $e^{\pm\mathrm i\theta_i}$ occur both for elliptic $\SL(2,\mathbb R)$ elements and for $\SU(2)$ elements.  The trace-coordinate image of the relative $\SL(2,\mathbb R)$ character variety with conjugacy classes prescribed inside $\SL(2,\mathbb R)$ is therefore, in general, only a semialgebraic subset, often a union of components or sheets, inside $\mathfrak F_{\mathbf t}(\mathbb R)$. The full real affine cubic can also contain compact-real-form characters with the same real traces, and fixing the traces alone need not distinguish the two elliptic conjugacy classes with opposite rotation directions, or the two nontrivial parabolic conjugacy classes at trace $+2$ or $-2$.
The topology of the nonsingular real Fricke surfaces associated with the
four-holed sphere was classified by Benedetto--Goldman
\cite{BenedettoGoldman1999}, while Maloni--Palesi--Tan studied their
mapping-class-group dynamics and domains of discontinuity
\cite{MaloniPalesiTan2015}.

The signed trace coordinates themselves do not descend to $\PSL(2,\mathbb R)$. However, consider any other closing lifts that are of the form
\be
H_i\longmapsto\delta_iH_i\,,
\qquad
\delta_i\in\{\pm1\}\,,
\qquad
\prod_{i=1}^4\delta_i=1\,.
\label{eq:closing_lift_action}
\ee
They act on the trace coordinates by
\be
t_i\longmapsto\delta_it_i\,,
\qquad
x\longmapsto\delta_1\delta_2x\,,
\qquad
y\longmapsto\delta_2\delta_3y\,,
\qquad
z\longmapsto\delta_1\delta_3z\,.
\label{eq:fricke_lift_action}
\ee
Every term of \eqref{eq:adapted_fricke_cubic} is invariant under \eqref{eq:fricke_lift_action}.  Thus, closing-lift changes \eqref{eq:closing_lift_action} identify sign-related $\SL(2,\mathbb R)$ Fricke surfaces lying over the same liftable $\PSL(2,\mathbb R)$ character locus.

Define the symmetric trace-polynomial matrix
\be
\mathcal G_{\mathbf t}(x,y,z)
:=
\begin{pmatrix}
t_1^2-4 & 2x-t_1t_2 & 2z-t_1t_3 & 2y-t_1t_4\\
2x-t_1t_2 & t_2^2-4 & 2y-t_2t_3 & 2z-t_2t_4\\
2z-t_1t_3 & 2y-t_2t_3 & t_3^2-4 & 2x-t_3t_4\\
2y-t_1t_4 & 2z-t_2t_4 & 2x-t_3t_4 & t_4^2-4
\end{pmatrix}\,.
\label{eq:character_connected_gram}
\ee

\begin{lemma}
\label{lemma:adapted_fricke_gram_map}

Suppose that $\{H_1,\ldots,H_4\}$ is a closing $\SL(2,\R)$ quadruple with noncentral entries, and let
$(x,y,z)$ be their adapted trace triple. Let $G$ be the reconstructed Gram matrix
\eqref{eq:reconstructed_gram}, formed using normalized non-null
representatives and arbitrary nonzero null representatives $n_i$. Then there is an invertible real diagonal matrix $D$ such that
\be
\mathcal G_{\mathbf t}=DGD\,.
\label{eq:character_gram_congruence}
\ee
  Consequently, $\mathcal G_{\mathbf t}$ and $G$ have the same inertia and determinant sign, while corresponding principal minors have the same signs and the same zero loci.
\end{lemma}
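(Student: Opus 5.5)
The plan is to compute every entry of $\mathcal G_{\mathbf t}$ with $\SL(2,\mathbb R)$ trace identities and match it with the connected spin traces of Section~\ref{sec:dihedral_invariants}. The basic identity is that for any $H,K\in\SL(2,\mathbb R)$ with traceless parts $B_H=\impart(H)$ and $B_K=\impart(K)$,
\[
\trspin(HK)=\tfrac12\trspin(H)\trspin(K)+\trspin(B_HB_K)\,,
\]
because the cross terms $\trspin(B_H)$ and $\trspin(B_K)$ vanish. Hence
\[
2\trspin(HK)-\trspin(H)\trspin(K)=2\trspin(B_HB_K)=4\langle HK\rangle_{C,\SL}=\tanip{b_H}{b_K}\,,
\]
using \eqref{eq:SL_connected_pair}. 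For the diagonal we use $\trspin(H_i^2)=t_i^2-2$, which gives $t_i^2-4=\tanip{b_i}{b_i}$. This can also be checked from \eqref{eq:spin_closed_form}: $t_i^2-4=4(c_i^2-1)=s_i^2\nu_i$, and it vanishes for parabolic faces, consistent with $G_{ii}=0$.

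Next we run through the off-diagonal entries using the closure relation. With $x=\trspin(H_2H_1)$, $y=\trspin(H_3H_2)$ and $z=\trspin(H_1H_3)$, the entries $(1,2)$, $(2,3)$, $(1,3)$ give $\tanip{b_1}{b_2}$, $\tanip{b_2}{b_3}$, $\tanip{b_1}{b_3}$. The identities $x=\trspin(H_4H_3)$ and $y=\trspin(H_1H_4)$ give $\tanip{b_3}{b_4}$ and $\tanip{b_1}{b_4}$, which are the nonexceptional pairings with $n_4$, in agreement with \eqref{eq:reconstructed_gram}. For the exceptional entry we use \eqref{eq:exceptional_fricke_coordinate}, $z=\trspin(H_2H_{4|1})$ with $H_{4|1}=H_1H_4H_1^{-1}$. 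Since $\trspin(H_{4|1})=t_4$ and $\impart(H_{4|1})=\mathcal T(H_1\cdot b_4)$ by \eqref{eq:spin_conj_B41}, we get
\[
2z-t_2t_4=\tanip{b_2}{O_1b_4}\,,
\]
which is exactly the $(2,4)$ entry of \eqref{eq:reconstructed_gram-H} for the vectors $b_i$.

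We then write $b_i=d_in_i$ with $d_i\neq0$. For non-null faces $d_i=\epsilon_is_i$, and this is nonzero because $H_i\neq\pm\id$ forces $s_i\neq0$ (the relevant case is the elliptic one, where $\sin(\Theta_i/2)\neq0$). For null faces $d_i=\epsilon_i\Theta_i$ relative to the chosen representative, nonzero by noncentrality. Bilinearity, together with linearity of $O_1$ in the exceptional entry, then gives $\mathcal G_{\mathbf t}=DGD$ with $D=\operatorname{diag}(d_1,\ldots,d_4)$ invertible.

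The consequences follow directly. Sylvester's law gives equal inertia, and $\det\mathcal G_{\mathbf t}=(\prod_i d_i)^2\det G$ gives equal determinant sign. Principal minors scale by squares of products of the $d_i$, so their signs and zero loci agree.

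The main obstacle is bookkeeping. One has to make sure that the pairing in each position uses the same transport convention as \eqref{eq:reconstructed_gram}, in particular that $y$ rewritten as $\trspin(H_1H_4)$ pairs $b_1$ with the special-edge representative $b_4$ and not a conjugate. One also has to check that the central signs $\epsilon_i$ and the closing-lift choice enter only through the diagonal factors $d_i$.
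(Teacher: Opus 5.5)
Your proposal is correct and follows the paper's proof essentially step by step. You use the same connected-trace identity $2\trspin(H_iH_j)-t_it_j=\tanip{b_i}{b_j}$ and the same closure-rewritten adapted coordinates, including $z=\trspin(H_2H_1H_4H_1^{-1})$ for the exceptional entry. You then rescale by $b_i=d_in_i$ with $d_i\neq0$ to get $\mathcal G_{\mathbf t}=DGD$, and conclude with Sylvester's law and the determinant and minor scaling. Your explicit derivation of $\trspin(HK)=\tfrac12\trspin(H)\trspin(K)+\trspin(B_HB_K)$ and the closed-form check of the diagonal add detail but do not change the argument.
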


\begin{proof}
Let $B_i=\impart(H_i)=\mathcal T(b_i)$ be the traceless representatives defined in \eqref{eq:spin_Bi_scaled}.
The Cayley--Hamilton identity gives
\be
\trspin(H_i^2)=t_i^2-2\,,
\ee
while the connected-trace definition gives
\be
\tanip{b_i}{b_j}
=4\langle H_iH_j\rangle_{C,\SL}
=2\trspin(H_iH_j)-t_it_j\,.
\label{eq:character_connected_pair}
\ee
Using \eqref{eq:adapted_fricke_coordinates} and \eqref{eq:exceptional_fricke_coordinate} in \eqref{eq:character_connected_pair} gives \eqref{eq:character_connected_gram}.  For every face, write
\be
b_i=d_in_i\,,
\qquad
d_i=\epsilon_i s_i\,,
\label{eq:character_gram_scale_factors}
\ee
where $\epsilon_i$ and $s_i$ are defined in \eqref{eq:spin_closed_form}. All $d_i$ are nonzero because a noncentral holonomy has $s_i\neq0$ in every causal case.  Since the exceptional representative also scales linearly,
\be
H_1\cdot b_4=d_4(H_1\cdot n_4)=d_4O_1n_4\,,
\ee
the same factors apply to every entry, including the exceptional pair $(F_2,F_4)$.  Therefore,
\be
\mathcal G_{\mathbf t}=DGD\,,
\qquad
D:=\operatorname{diag}(d_1,d_2,d_3,d_4)\,.
\ee
Sylvester's law of inertia gives equality of inertia.  Moreover,
\be
\det\mathcal G_{\mathbf t}=(\det D)^2\det G\,,
\qquad
\det(\mathcal G_{\mathbf t})_{\hat i}
=\bigl(\det D_{\hat i}\bigr)^2\det G_{\hat i}\,,
\ee
which proves the remaining claims.
\end{proof}

Under the closing-lift change \eqref{eq:closing_lift_action}, the traceless representatives transform as $b_i\mapsto\delta_ib_i$.  Hence
\be
\mathcal G_{\mathbf t}
\longmapsto
D_\delta\mathcal G_{\mathbf t}D_\delta\,,
\qquad
D_\delta:=\operatorname{diag}(\delta_1,\delta_2,\delta_3,\delta_4)\,.
\label{eq:character_gram_lift_action}
\ee
It follows that the inertia, determinant, and all principal minors of $\mathcal G_{\mathbf t}$ are independent of the closing lifts.  These quantities therefore define functions on the corresponding liftable relative $\PSL(2,\mathbb R)$ character locus, even though the individual signed trace coordinates do not.

Define functions on the real Fricke surface $\mathfrak F_{\mathbf t}(\mathbb R)$ by
\be
\Delta_{\mathbf t}(x,y,z)
:=\det\mathcal G_{\mathbf t}(x,y,z)\,,
\qquad
m_i(x,y,z)
:=\det\bigl(\mathcal G_{\mathbf t}(x,y,z)\bigr)_{\hat i}\,.
\label{eq:character_discriminants}
\ee
Also define
\be
P_i:=-\partial\mathcal F_{\mathbf t}/\partial t_i\,.
\label{eq:def-Pi}
\ee
\begin{lemma}
\label{lemma:character_principal_minor_squares}
For arbitrary real trace coordinates, the principal determinants of \eqref{eq:character_connected_gram} satisfy
{\be
m_i
=16\mathcal F_{\mathbf t}(x,y,z)-4P_i^2\,,
\qquad i=1,\ldots,4\,.
\label{eq:character_principal_minor_square_identity}
\ee}
Consequently, on the entire real Fricke surface $\mathfrak F_{\mathbf t}(\mathbb R)$ \eqref{eq:real_fricke_surface},
\be
m_i=-4P_i^2\leq0\,.
\label{eq:character_principal_triples}
\ee
\end{lemma}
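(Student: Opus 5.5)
The identity \eqref{eq:character_principal_minor_square_identity} is a polynomial identity in the seven independent variables $(t_1,\ldots,t_4,x,y,z)$. I would prove it for one index and then transfer it to the others by symmetry. Start with $i=4$, so that $(\mathcal G_{\mathbf t})_{\hat4}$ is the upper-left $3\times3$ block, with entries $t_a^2-4$ on the diagonal and $2x-t_1t_2$, $2z-t_1t_3$, $2y-t_2t_3$ off it. Compute
\be
P_4=-\partial\mathcal F_{\mathbf t}/\partial t_4=t_3x+t_1y+t_2z-2t_4-t_1t_2t_3\,.
\ee
Expanding the $3\times3$ determinant gives a polynomial of degree at most $6$. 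Compare it with $16\mathcal F_{\mathbf t}-4P_4^2$, collecting terms by monomials in $t_4$. Both sides contain $t_4$ only through $16(t_4^2-(t_3x+t_1y+t_2z)t_4+t_1t_2t_3t_4)$ and $-4(\,\cdot-2t_4)^2$. These $t_4$-dependent pieces cancel: the quadratic terms give $16t_4^2-16t_4^2=0$, and the linear terms give $-16t_4(t_3x+t_1y+t_2z-t_1t_2t_3)+16t_4(t_3x+t_1y+t_2z-t_1t_2t_3)=0$.

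The identity therefore reduces to a $t_4$-free identity. Write $Q:=t_3x+t_1y+t_2z-t_1t_2t_3$. What must be shown is
\be
\det(\mathcal G_{\mathbf t})_{\hat4}=16\bigl(x^2+y^2+z^2+xyz-t_1t_2x-t_2t_3y-t_1t_3z+t_1^2+t_2^2+t_3^2-4\bigr)-4Q^2\,.
\ee
This is the classical Gram-determinant form of the Fricke relation for three $\SL(2)$ matrices. I would check it by direct expansion, matching the $xyz$ term ($2\cdot8xyz=16xyz$ from the cyclic product against $16xyz$ on the right), the $x^2$ terms, and so on. This bookkeeping is the only real obstacle. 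To reduce errors I would instead argue conceptually on a Zariski-dense set. For generic $H_1,H_2,H_3$ one has $\det(\mathcal G)_{\hat4}=16\det\bigl(\langle H_aH_b\rangle_C\bigr)=-4\cdot\bigl(8\langle H_1H_2H_3\rangle_C\bigr)^2$ by the triple-product identity $\det(M^\top\eta_TM)=-\det M^2$ used in Lemma~\ref{lemma:lorentzian_gram_signature}. Then express $2\trspin(B_1B_2B_3)$ in traces, namely $\trspin(H_1H_2H_3)-\tfrac12 Q$ up to normalization. Finally use the standard relation
\be
\trspin(H_1H_2H_3)^2-\trspin(H_1H_2H_3)\,Q'+(\ldots)=0\,,
\ee
in which $t_4=\trspin(H_1H_2H_3)^{-1}$-trace and $\mathcal F_{\mathbf t}$ appears as the resulting discriminant-type quantity. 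Since polynomials in $(t_1,t_2,t_3,x,y,z)$ that agree on the image of $\SL(2,\mathbb C)^3$ (which is dense in $\mathbb C^6$) agree identically, this gives the general identity.

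For the other indices I would use the symmetries of $\mathcal F_{\mathbf t}$ and $\mathcal G_{\mathbf t}$. The permutation $(1\,3)(2\,4)$ of labels fixes $x$, swaps $x$-type pairs, and maps $\mathcal G_{\mathbf t}$ to itself up to simultaneous row and column permutation. The permutation $(1\,2)(3\,4)$ fixes $x$ and exchanges $y\leftrightarrow z$. Both leave $\mathcal F_{\mathbf t}$ invariant, so together they carry $i=4$ to every other $i$.

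The second statement follows immediately: on $\mathfrak F_{\mathbf t}(\mathbb R)$ we have $\mathcal F_{\mathbf t}=0$, hence $m_i=-4P_i^2\le0$ for real $P_i$.
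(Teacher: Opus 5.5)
Your route is correct in outline but differs from the paper's. The paper applies the symmetric $3\times3$ determinant formula to each of the four principal submatrices separately, collects the terms proportional to $\mathcal F_{\mathbf t}$, and completes the square. You expand a single minor and transfer to the others by symmetry.

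Your $t_4$-elimination has a clean meaning. Write $\mathcal F^{(3)}$ for the $t_4$-free bracket and $Q=t_3x+t_1y+t_2z-t_1t_2t_3$. Then $\mathcal F_{\mathbf t}=t_4^2-Qt_4+\mathcal F^{(3)}$ and $P_4=Q-2t_4$, so $16\mathcal F_{\mathbf t}-4P_4^2=-4(Q^2-4\mathcal F^{(3)})$ is $-4$ times the discriminant of $\mathcal F_{\mathbf t}$ in $t_4$. This buys three things:
\begin{itemize}
\item it explains why $m_i$ is free of $t_i$;
\item it reduces the bookkeeping to one expansion;
\item it makes your density variant legitimate, because the reduced identity is $t_4$-free, whereas agreement only on the hypersurface $\mathcal F_{\mathbf t}=0$ would not force a seven-variable identity.
\end{itemize}
That variant uses $\det(\mathcal G_{\mathbf t})_{\hat4}=-\widetilde\chi_4^2$ and the classical relation $T^2-QT+\mathcal F^{(3)}=0$ for $T=\trspin(H_1H_2H_3)$. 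It reverses the paper's logical order (the paper proves the polynomial identity first and then recovers Lemma~\ref{lemma:lorentzian_gram_signature} from it), but it is not circular. The paper's expansion, by contrast, needs no representation theory at all.

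Two details need correcting.

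First, $(1\,2)(3\,4)$ does not exchange $y$ and $z$. It maps $\{2,3\}\mapsto\{1,4\}$ and $\{1,3\}\mapsto\{2,4\}$, so it fixes both. Every element of the Klein group $\{e,(1\,2)(3\,4),(1\,3)(2\,4),(1\,4)(2\,3)\}$ preserves each of the three pairings. Hence it fixes $x,y,z$ and acts only by $t_i\mapsto t_{\pi(i)}$. Under this action $\mathcal F_{\mathbf t}$ is invariant, $\mathcal G_{\mathbf t}$ is conjugated by a permutation matrix, and $m_i\mapsto m_{\pi(i)}$, $P_i\mapsto P_{\pi(i)}$; this is what carries $i=4$ to every label. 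If you impose $y\leftrightarrow z$ as you state, $\mathcal F_{\mathbf t}$ is not invariant, because $t_1t_4+t_2t_3$ and $t_1t_3+t_2t_4$ end up multiplying the wrong coordinates. That exchange belongs to $(1\,2)$ or $(3\,4)$.

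Second, the constants in the density argument are off. Since $(\mathcal G_{\mathbf t})_{\hat4}=4\bigl(\langle H_aH_b\rangle_{C,\SL}\bigr)$, one has
$\det(\mathcal G_{\mathbf t})_{\hat4}=64\det\bigl(\langle H_aH_b\rangle_{C,\SL}\bigr)=-\bigl(8\langle H_1H_2H_3\rangle_{C,\SL}\bigr)^2=-16\,\trspin(B_1B_2B_3)^2$,
with $\trspin(B_1B_2B_3)=T-\tfrac12Q$. Together with $T^2-QT+\mathcal F^{(3)}=0$ this gives exactly $-4(2T-Q)^2=16\mathcal F^{(3)}-4Q^2$. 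Your factors $16$ and $-4(8\,\cdot\,)^2$ would instead give the right-hand side multiplied by $4$.
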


\begin{proof}
For a symmetric $3\times3$ matrix,
{\be
\det
\begin{pmatrix}
A&u&v\\
u&B&w\\
v&w&C
\end{pmatrix}
=ABC+2uvw-Aw^2-Bv^2-Cu^2\,.
\ee}
Applying this formula to each principal submatrix of \eqref{eq:character_connected_gram}, collecting the terms proportional to the Fricke polynomial \eqref{eq:adapted_fricke_cubic}, and completing the remaining square gives \eqref{eq:character_principal_minor_square_identity}.  Restriction to $\mathcal F_{\mathbf t}=0$ gives \eqref{eq:character_principal_triples}.
\end{proof}

\begin{lemma}%
\label{lemma:character_trace_triples}
Suppose that a point of $\mathfrak F_{\mathbf t}(\mathbb R)$ is represented by a closing $\SL(2,\mathbb R)$ quadruple $H_i$, and let $B_i=\mathcal T(b_i)$ be its traceless representatives.  Define the ordered scaled Lorentzian triple products
{\be
\begin{aligned}
\widetilde\chi_1
&:=\tanip{b_3\times b_2}{(H_3^{-1})\cdot b_4}\,,
&
\widetilde\chi_2
&:=\tanip{b_1\times b_3}{b_4}\,,\\
\widetilde\chi_3
&:=\tanip{b_2\times b_1}{H_1\cdot b_4}\,,
&
\widetilde\chi_4
&:=\tanip{b_1\times b_2}{b_3}\,.
\end{aligned}
\label{eq:character_scaled_triples}
\ee}
With precisely these orderings,
{\be
\widetilde\chi_i=2P_i\,,
\qquad i=1,\ldots,4\,.
\label{eq:character_P_triple_relation}
\ee}
Moreover, if $b_i=d_in_i$ as in \eqref{eq:character_gram_scale_factors}, then
{\be
2P_i=\widetilde\chi_i
=\left(\prod_{p\neq i}d_p\right)\chi_i\,.
\label{eq:character_P_normalized_triple_relation}
\ee}
Here the $\chi_i$ are the  triple products
defined in \eqref{eq:transported_chi}.
\end{lemma}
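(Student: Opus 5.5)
The second identity \eqref{eq:character_P_normalized_triple_relation} is immediate once the first is established. We have $b_i=d_in_i$. The vector action of $H_1$ and $H_3^{-1}$ is linear, with $H_1\cdot b_4=d_4O_1n_4$ and $(H_3^{-1})\cdot b_4=d_4O_3^{-1}n_4$. The triple product is trilinear. Hence each $\widetilde\chi_i$ is $\prod_{p\neq i}d_p$ times the corresponding $\chi_i$ of \eqref{eq:transported_chi}, and the orderings agree term by term. So the real work is the identity $\widetilde\chi_i=2P_i$.

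For that identity I will convert each triple product into a trace of $2\times2$ matrices. By \eqref{eq:SL_connected_triple}, $\tanip{b_i\times b_j}{b_k}=8\cdot\frac12\trspin(B_iB_jB_k)=4\trspin(B_iB_jB_k)$. The exceptional insertions are handled by \eqref{eq:spin_conj_B41}–\eqref{eq:spin_conj_B4_3inv}, which replace $B_4$ by $H_1B_4H_1^{-1}$ or $H_3^{-1}B_4H_3$. I then substitute $B_i=H_i-\frac{t_i}{2}\id$ and expand. Each trace becomes a combination of traces of words of length at most three in the $H_i$, together with $t_i$. For $\widetilde\chi_4$ one gets
\[
4\trspin(B_1B_2B_3)=4\trspin(H_1H_2H_3)-2t_3\trspin(H_1H_2)-2t_1\trspin(H_2H_3)-2t_2\trspin(H_1H_3)+2t_1t_2t_3 .
\]
The key input is the standard trace identity
\[
\trspin(H_1H_2H_3)+\trspin(H_3H_2H_1)=t_1\trspin(H_2H_3)+t_2\trspin(H_1H_3)+t_3\trspin(H_1H_2)-t_1t_2t_3 .
\]
With it, the expression becomes $2\bigl[\trspin(H_1H_2H_3)-\trspin(H_3H_2H_1)\bigr]$, the familiar antisymmetric "$\operatorname{tr}[A,B]$-type" combination. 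By closure, $H_3H_2H_1=H_4^{-1}$, so $\trspin(H_3H_2H_1)=t_4$. Writing $\trspin(H_1H_2H_3)$ as $w$, the triple product becomes $2(w-t_4)$ up to the ordering sign.

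It remains to identify $2(w-t_4)$ with $2P_4=-2\partial_{t_4}\mathcal F_{\mathbf t}=2\bigl(2t_4+t_1t_2t_3-t_3x-t_1y-t_2z\bigr)$. For this I use the trace identity above once more to express $w=\trspin(H_1H_2H_3)$ as $t_1y+t_2z+t_3x-t_1t_2t_3-t_4$. Here $\trspin(H_2H_3)=y$, $\trspin(H_1H_3)=z$ and $\trspin(H_1H_2)=x$, using cyclicity and $\trspin(g)=\trspin(g^{-1})$. Substituting gives $w-t_4=-P_4$ up to sign. The sign is then fixed by the prescribed ordering $b_1\times b_2$ versus $b_2\times b_1$, combined with \eqref{eq:tau_traces}. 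I will pin this sign down carefully, for example by testing on an explicit closing quadruple built from the $\tau_\mu$.

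The remaining three identities follow the same route. For $\widetilde\chi_2$ I use the triple $(1,3,4)$ with words $H_1H_3H_4$, noting that $H_4H_3H_1$ is not directly a closure word. The cleanest approach is to apply the permutation symmetry of the Fricke polynomial. Alternatively, one can rewrite each exceptional word, $H_2H_1B_4H_1^{-1}$ and its relatives, using $H_1H_3=(H_2H_1H_4H_1^{-1})^{-1}$ as in \eqref{eq:exceptional_fricke_coordinate}, so that each case reduces to a word whose reversed trace is a boundary trace. I expect this bookkeeping to be the main obstacle: identifying, for each $i$, which reversed word collapses by closure to $t_i^{\pm1}$, and tracking the orientation sign induced by the orderings in \eqref{eq:character_scaled_triples}. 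As a consistency check, squaring the result must reproduce Lemma \ref{lemma:character_principal_minor_squares} via Lemma \ref{lemma:lorentzian_gram_signature}. Indeed, $-\widetilde\chi_i^2=\det(\mathcal G_{\mathbf t})_{\hat i}=-4P_i^2$, so the only remaining content is the sign, which the explicit test configuration settles.
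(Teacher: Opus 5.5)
Your strategy is the paper's: write each $\widetilde\chi_i$ as $4\trspin$ of three traceless parts, expand $B_i=H_i-\frac{t_i}{2}\id$, and use the three-matrix trace identity together with closure. Your expansion already gives the exact, order-respecting identity $\tanip{b_a\times b_b}{b_c}=4\trspin(B_aB_bB_c)=2\bigl[\trspin(H_aH_bH_c)-\trspin(H_cH_bH_a)\bigr]$. It also holds with $B_{4|1}$ or $B_{4|3^{-1}}$ in place of $B_4$, so there is no leftover ``ordering sign''. The sign ambiguity you report for $i=4$ is an arithmetic slip. In fact $-\partial\mathcal F_{\mathbf t}/\partial t_4=t_3x+t_1y+t_2z-2t_4-t_1t_2t_3$, the negative of what you wrote. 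Your own formula $w=t_1y+t_2z+t_3x-t_1t_2t_3-t_4$ then gives $w-t_4=P_4$ exactly, hence $\widetilde\chi_4=2P_4$. The sign is the whole content of ``with precisely these orderings'', and your fallback for fixing it does not work as stated. The squared relation $\widetilde\chi_i^2=4P_i^2$ only gives $\widetilde\chi_i=\pm2P_i$ pointwise. A single test configuration fixes the sign only if you already know it is constant on the whole closing-quadruple locus, which needs a further argument (e.g.\ irreducibility of that variety) or simply the exact computation.

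The more serious gap is that three of the four cases are left undone. The step you expect to be the main obstacle is exactly the key observation of the paper's proof, and it works uniformly. For every ordered triple in \eqref{eq:character_scaled_triples}, the \emph{reversed} word is a cyclic rotation of a closure word, and its trace is the omitted boundary trace. Contrary to your remark, $\trspin(H_4H_3H_1)=\trspin(H_1H_4H_3)=\trspin(H_2^{-1})=t_2$, because $H_1H_4H_3H_2=\id$. Likewise $\trspin(H_{4|1}H_1H_2)=\trspin(H_2H_1H_4)=t_3$ and $\trspin(H_{4|3^{-1}}H_2H_3)=\trspin(H_4H_3H_2)=t_1$. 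The forward words then reduce by the same trace identity, using $\trspin(H_1H_{4|1})=y$, $\trspin(H_2H_{4|1})=z$, $\trspin(H_3H_{4|3^{-1}})=x$ and $\trspin(H_2H_{4|3^{-1}})=z$, each by cyclicity and closure. For instance $2[\trspin(H_1H_3H_4)-t_2]=2(t_1x+t_3y+t_4z-2t_2-t_1t_3t_4)=2P_2$, and the cases $i=1,3$ go the same way. No appeal to a permutation symmetry of the Fricke cubic is needed. Such an appeal would also be hard to justify, since the adapted coordinates and the exceptional insertions are tied to the simple-path ordering. Your treatment of the rescaling identity \eqref{eq:character_P_normalized_triple_relation} is fine.
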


\begin{proof}
For any $A,B,C\in\SL(2,\mathbb R)$, the standard $2\times2$ trace identity is
{\be
\begin{split}
\trspin(ABC)+\trspin(ACB)
={}&\trspin(AB)\trspin(C)+\trspin(AC)\trspin(B)\\
&+\trspin(BC)\trspin(A)
-\trspin(A)\trspin(B)\trspin(C)\,.
\end{split}
\label{eq:three_matrix_trace_reversal}
\ee}
For each ordered triple in \eqref{eq:character_scaled_triples}, closure makes the trace of its reversed ordering equal to the trace $t_i$ of the omitted holonomy. Expanding the traceless parts $B_i=H_i-\frac12t_i\id$ in \eqref{eq:three_matrix_trace_reversal} therefore gives
{\be
\begin{aligned}
P_1&=2\trspin(B_3B_2B_{4|3^{-1}})\,,
&
P_2&=2\trspin(B_1B_3B_4)\,,\\
P_3&=2\trspin(B_2B_1B_{4|1})\,,
&
P_4&=2\trspin(B_1B_2B_3)\,.
\end{aligned}
\label{eq:character_P_connected_traces}
\ee}
The identity $\trspin(\mathcal T(u)\mathcal T(v)\mathcal T(w)) =\frac14\tanip{u\times v}{w}$ from \eqref{eq:SL_connected_triple}, together with \eqref{eq:spin_conj_B41} and \eqref{eq:spin_conj_B4_3inv}, proves \eqref{eq:character_P_triple_relation}.  Finally, substituting $b_p=d_pn_p$ into \eqref{eq:character_scaled_triples} and comparing with \eqref{eq:transported_chi} proves \eqref{eq:character_P_normalized_triple_relation}.
\end{proof}

Lemma \ref{lemma:character_trace_triples} applies only on the locus represented by actual $\SL(2,\mathbb R)$ holonomies, where the Lorentzian vectors $b_i$ and triple products are defined.  On the full real Fricke surface, including any compact-real-form locus, the polynomials $P_i$ remain globally defined but need not have a Lorentzian triple-product interpretation.  Equation \eqref{eq:character_principal_triples} nevertheless shows directly that $m_i\leq0$ on the entire real Fricke surface $\mathfrak F_{\mathbf t}(\mathbb R)$.  On the $\SL(2,\mathbb R)$ locus, it also recovers Lemma \ref{lemma:lorentzian_gram_signature}. Hence $m_i\neq0$, equivalently $m_i<0$, is exactly the nondegeneracy condition for the vertex-normal triple associated with the candidate vertex opposite $F_i$.

\begin{lemma}
\label{lemma:character_absolute_irreducibility}
If a closing $\SL(2,\mathbb R)$ quadruple has $m_i\neq0$ for at least one
$i$, then its complexification is irreducible.
\end{lemma}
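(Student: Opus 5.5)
We argue by contraposition: if the complexified representation $\langle H_1,\dots,H_4\rangle\subset\SL(2,\mathbb C)$ is reducible, then every $m_i$ vanishes. Reducibility gives a common eigenline $\ell\subset\mathbb C^2$. In a basis adapted to $\ell$, all $H_i$ are simultaneously upper triangular.

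The key observation is about the traceless parts $B_i=\impart(H_i)=\mathcal T(b_i)$. Each $B_i$ is upper triangular and traceless. With complex coefficients, $\mathcal T$ identifies $\mathbb C^{1,2}$ with $\mathfrak{sl}(2,\mathbb C)$, and the tangent pairing corresponds to $2\trspin(\mathcal T(u)\mathcal T(w))$ by \eqref{eq:tau_traces}. Under this identification, the upper triangular traceless matrices form a two-dimensional Borel subalgebra $\mathfrak b$. The induced form on $\mathfrak b$ is degenerate: the strictly upper triangular line is isotropic and orthogonal to all of $\mathfrak b$, since the product of a diagonal matrix and a nilpotent upper triangular matrix is traceless.

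We also need the two conjugated generators. The matrices $B_{4|1}=H_1B_4H_1^{-1}$ and $B_{4|3^{-1}}=H_3^{-1}B_4H_3$ stay in $\mathfrak b$, because $H_1$ and $H_3$ are upper triangular.

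Now fix $i$. The principal submatrix $(\mathcal G_{\mathbf t})_{\hat i}$ is, by Lemma \ref{lemma:adapted_fricke_gram_map} and \eqref{eq:character_connected_pair}, the Gram matrix of three vectors from the list $b_1,b_2,b_3,b_4,H_1\cdot b_4,H_3^{-1}\cdot b_4$. Which three appear is fixed by the same exceptional insertions as in \eqref{eq:character_scaled_triples}. All three vectors lie in the two-dimensional space $\mathcal T^{-1}(\mathfrak b)$. Hence they are linearly dependent, and their $3\times3$ Gram determinant vanishes, so $m_i=0$. Since this holds for each $i$, the contrapositive gives the statement.

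A cleaner alternative uses Lemma \ref{lemma:character_principal_minor_squares}. It shows $m_i=-4P_i^2$, and by Lemma \ref{lemma:character_trace_triples} and \eqref{eq:character_P_connected_traces} we have $P_i=2\trspin$ of a product of three elements of $\mathfrak b$. The product of three upper triangular traceless matrices has diagonal $(a_1a_2a_3,\,-a_1a_2a_3)$, so its trace vanishes. Thus $P_i=0$ and $m_i=0$. I would present this version, because it avoids any discussion of the complex bilinear form.

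The main obstacle is the identification step. We must confirm that each entry of $(\mathcal G_{\mathbf t})_{\hat i}$, including the exceptional $(2,4)$ entry, really is a trace pairing of elements of $\mathfrak b$. This requires using the $z$-coordinate form \eqref{eq:exceptional_fricke_coordinate} and checking that conjugation by $H_1$ or $H_3^{-1}$ preserves $\mathfrak b$; the latter is immediate from simultaneous triangularity. The identities \eqref{eq:character_connected_pair} and \eqref{eq:character_P_connected_traces} are polynomial in the matrix entries, so they remain valid over $\mathbb C$ without change.
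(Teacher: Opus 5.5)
Your proposal is correct and follows the paper's proof: simultaneous triangularization places all traceless parts, including the conjugated $B_{4|1}$ and $B_{4|3^{-1}}$, in the two-dimensional traceless upper triangular subalgebra $\mathfrak b$, so every $P_i$ vanishes and $m_i=-4P_i^2=0$ on the Fricke surface. Your explicit check that conjugation by $H_1$ and $H_3^{-1}$ preserves $\mathfrak b$ supplies a detail the paper leaves implicit. Your first variant, that three vectors in a two-dimensional span have vanishing Gram determinant, is an equally valid shortcut that does not need the Fricke relation.
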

\begin{proof}
If the complexified representation were reducible, all four matrices could
be made simultaneously upper triangular.  Their traceless parts would then
lie in the two-dimensional space of traceless upper triangular matrices, so
every alternating triple product in
\eqref{eq:character_scaled_triples} would vanish.  Lemma
\ref{lemma:character_trace_triples} would give $P_i=0$ for every $i$, and
\eqref{eq:character_principal_triples} would give $m_i=0$ for every $i$, a
contradiction.
\end{proof}

On the curved finite locus $\Delta_{\mathbf t}\neq0$ and $P_i\neq0$ for
every $i$, let
\be
D_P:=\diag(P_1,P_2,P_3,P_4)\,,
\qquad
A_{\mathbf t}:=-D_P\operatorname{adj}(\mathcal G_{\mathbf t})D_P\,.
\label{eq:character_polynomial_domain_form}
\ee
Their entries are polynomials in the trace coordinates.
Define

\be
\mu_{\mathbf t}(x,y,z)
:=
\min_{\substack{\lambda_i\geq0\\ \sum_i\lambda_i=1}}
\lambda^\top A_{\mathbf t}(x,y,z)\lambda\,.
\label{eq:character_copositivity_simplex_test}
\ee
By homogeneity, strict copositivity of $A_{\mathbf t}$ is equivalent to $\mu_{\mathbf t}>0$.
The following proposition translates this to restrictions on $A_{\mathbf t}$ directly. 

\begin{prop}
\label{prop:character_polynomial_domain_criterion}

{Let $(x,y,z)$ be represented by a closing
$\SL(2,\mathbb R)$ quadruple, and suppose that
\be
\Delta_{\mathbf t}\neq0\,,
\qquad
P_i\neq0
\quad\hbox{for every }i\,.
\ee}
Define the six edge margins
\be
E_{ij}:=(A_{\mathbf t})_{ij}+4P_i^2P_j^2
\qquad(i<j)
\label{eq:character_edge_domain_margins}
\ee
and, for $\widehat\ell=\{i,j,k\}$, the four face margins
\be
\Lambda_\ell
:=4P_i^2P_j^2P_k^2
+P_k^2(A_{\mathbf t})_{ij}
+P_j^2(A_{\mathbf t})_{ik}
+P_i^2(A_{\mathbf t})_{jk}\,.
\label{eq:character_face_domain_margins}
\ee
Then the finite domain condition \eqref{eq:theorem_strict_copositivity} in Assumption~3 of Theorem
\ref{theorem:unified_minkowski} holds if and only if $A_{\mathbf t}$ is
strictly copositive, and these equivalent conditions hold if and only if
\begin{align}
E_{ij}&>0 &&(i<j)\,,
\label{eq:character_edge_domain_tests}\\
t_\ell^2&<4\quad\hbox{or}\quad\Lambda_\ell>0
&&(\ell=1,\ldots,4)\,,
\label{eq:character_face_domain_tests}\\
\Delta_{\mathbf t}&>0\quad\hbox{or}\quad
P_rP_s(\mathcal G_{\mathbf t})_{rs}>0
\quad\hbox{for some }r,s\,.
\label{eq:character_full_domain_test}
\end{align}

\end{prop}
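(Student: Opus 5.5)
The plan is to relate $A_{\mathbf t}$ to the signed inverse Gram form of the outward representatives by a positive diagonal congruence, and then to transcribe the finite tests of Lemma~\ref{lemma:finite_domain_admissibility_criterion} into trace polynomials.

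\emph{Step 1: the congruence.} By Lemma~\ref{lemma:adapted_fricke_gram_map}, $\mathcal G_{\mathbf t}=DGD$ with $D=\diag(d_i)$, where $G$ is built from preliminary representatives. Apply the sign flips of Lemma~\ref{lemma:branch_selection}, $n_i\mapsto\delta_in_i$ with $\delta_i=Ss_i$. Let $G^{\rm out}$ denote the resulting outward Gram matrix. Then $\mathcal G_{\mathbf t}=D'G^{\rm out}D'$ with $D'=\diag(d_i\delta_i)$. Using $\operatorname{adj}(M)=\det(M)M^{-1}$, we get
\[
\operatorname{adj}(\mathcal G_{\mathbf t})=\Delta_{\mathbf t}\,D'^{-1}(G^{\rm out})^{-1}D'^{-1}.
\]
Since $\sgnop\Delta_{\mathbf t}=\sgnop\det G^{\rm out}$, this gives
\[
A_{\mathbf t}=|\Delta_{\mathbf t}|\,(D_PD'^{-1})\bigl[-\sgnop(\det G^{\rm out})(G^{\rm out})^{-1}\bigr](D_PD'^{-1}).
\]

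\emph{Step 2: positivity of the diagonal factor.} It remains to check that $P_i/(d_i\delta_i)$ has a common sign for all $i$. By Lemma~\ref{lemma:character_trace_triples}, $2P_i=(\prod_{p\neq i}d_p)\chi_i$. With $\Pi_d:=\prod_p d_p$, $\chi_i=s_i|\chi_i|$ and $\delta_i s_i=S$, this gives
\[
\frac{P_i}{d_i\delta_i}=\frac{\Pi_d\,s_i|\chi_i|}{2d_i^2\delta_i}=\frac{S\,\Pi_d\,|\chi_i|}{2d_i^2}.
\]
The factor $S\Pi_d$ is the same for every $i$. Any global sign of the diagonal congruence cancels, because congruence uses it twice. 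So $A_{\mathbf t}$ is a positive diagonal congruence of the signed inverse form. Strict copositivity is invariant under such congruences, since they map $\mathbb R^4_{\ge0}$ to itself (the Remark after Lemma~\ref{lemma:simplicial_halfspace_realization}). This proves the first equivalence.

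\emph{Step 3: transcribing the finite tests.} Apply Lemma~\ref{lemma:finite_domain_admissibility_criterion} directly to $A_{\mathbf t}$. Its proof only uses a positive diagonal normalization, so the same three strata apply. First compute the diagonal: $(A_{\mathbf t})_{ii}=-P_i^2m_i=4P_i^4>0$, using $m_i=-4P_i^2$ from Lemma~\ref{lemma:character_principal_minor_squares}. The normalized form is therefore $B_{ij}=(A_{\mathbf t})_{ij}/(4P_i^2P_j^2)$.
\begin{itemize}
\item[] \emph{Edges.} The condition $B_{ij}>-1$ becomes $E_{ij}>0$.
\item[] \emph{Faces.} The scalar $L_\ell$, multiplied by $4P_i^2P_j^2P_k^2>0$, equals $\Lambda_\ell$. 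The condition $G_{\ell\ell}<0$ becomes $t_\ell^2<4$, since $(\mathcal G_{\mathbf t})_{\ell\ell}=t_\ell^2-4$ has the sign of $d_\ell^2G_{\ell\ell}$.
\item[] \emph{Interior.} The condition $\det G>0$ is $\Delta_{\mathbf t}>0$. For the alternative, $G^{\rm out}_{rs}=(\mathcal G_{\mathbf t})_{rs}/(d_r\delta_rd_s\delta_s)$, whose sign by Step~2 is that of $P_rP_s(\mathcal G_{\mathbf t})_{rs}$. So "$G^{\rm out}\nleq0$ entrywise" becomes $P_rP_s(\mathcal G_{\mathbf t})_{rs}>0$ for some $r,s$.
\end{itemize}

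\emph{Expected obstacle.} The delicate point is the sign bookkeeping in Step~2. The argument only works if the triple-product orderings in Lemma~\ref{lemma:character_trace_triples} match those of \eqref{eq:transported_chi} exactly, including the exceptional insertions. I will check this first. The interior alternative in Step~3 relies on the same sign identification, so it is settled once Step~2 is.
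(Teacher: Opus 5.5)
Your proposal is correct and follows essentially the paper's route. You establish a positive diagonal congruence between $A_{\mathbf t}$ and the outward signed inverse Gram form via Lemma~\ref{lemma:character_trace_triples}, then transcribe the three strata of Lemma~\ref{lemma:finite_domain_admissibility_criterion}. The only difference is bookkeeping: the paper fixes the outward representatives at the outset and writes the sign selection as $S_P\mathcal G_{\mathbf t}S_P=D_{|d|}GD_{|d|}$, whereas you start from preliminary representatives and compose with the branch flips $\delta_i$, which yields the same congruence.
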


This is the character-polynomial version of Lemma \ref{lemma:finite_domain_admissibility_criterion}, and \eqref{eq:character_edge_domain_tests} -- \eqref{eq:character_full_domain_test} correspond to \eqref{eq:finite_domain_edge_tests} -- \eqref{eq:finite_domain_full_test} respectively. 

\begin{proof}
Choose the $n_i$ in Lemma \ref{lemma:adapted_fricke_gram_map} to be the
outward representatives selected by Lemma \ref{lemma:branch_selection}, and
let $G$ denote their Gram matrix.  Thus every $\chi_i$ is positive.
{
Recall from \eqref{eq:character_gram_scale_factors} that the vector
$b_i$ determined by the traceless part of $H_i$ satisfies
\be
b_i=d_in_i\,,
\qquad
d_i=\epsilon_is_i\neq0\,,
\ee
where $n_i$ is the chosen outward normal representative.
} Let
\be
S_P:=\diag(\sgnop P_1,\ldots,\sgnop P_4)\,,
\qquad
D_{|d|}:=\diag(|d_1|,\ldots,|d_4|)\,.
\ee
By \eqref{eq:character_P_normalized_triple_relation},
$\sgnop(P_id_i)=\sgnop(\prod_p d_p)$, independently of $i$.  Thus the
vectors $\sgnop(P_i)b_i$ are, up to one common overall sign, $|d_i|n_i$.
By Lemma \ref{lemma:adapted_fricke_gram_map}, their Gram matrix is consequently
\be
\widehat{\mathcal G}_{\mathbf t}
{=}S_P\mathcal G_{\mathbf t}S_P
=D_{|d|}GD_{|d|}\,,
\label{eq:character_sign_selected_gram_congruence}
\ee
Thus $\widehat{\mathcal G}_{\mathbf t}$ is positively diagonally congruent
to the Gram matrix of outward representatives $n_i$.  Moreover,

{\be\begin{split}
A_{\mathbf t}&=
-D_P\operatorname{adj}(\mathcal G_{\mathbf t})D_P=
-\Delta_{\mathbf t}
D_P\mathcal G_{\mathbf t}^{-1}D_P=
-\Delta_{\mathbf t}
D_{|P|}S_P
\mathcal G_{\mathbf t}^{-1}
S_PD_{|P|}\\
&=
-\Delta_{\mathbf t}
D_{|P|}
\widehat{\mathcal G}_{\mathbf t}^{-1}
D_{|P|}=|\Delta_{\mathbf t}|D_{|P|}
\left[-\sgnop(\Delta_{\mathbf t})
\widehat{\mathcal G}_{\mathbf t}^{-1}\right]D_{|P|}\,,
\end{split}
\label{eq:character_polynomial_outward_congruence}
\ee
where $D_{|P|}:=\diag(|P_i|)$. We have used $\operatorname{adj}(\mathcal G_{\mathbf t}) =\Delta_{\mathbf t}\mathcal G_{\mathbf t}^{-1}$ in the second equality, $D_P=D_{|P|}S_P=S_PD_{|P|}$ in the third equality and $-\Delta_{\mathbf t}=|\Delta_{\mathbf t}|\bigl[-\operatorname{sgn}(\Delta_{\mathbf t})\bigr]$ in the last equality. }
  Positive diagonal
congruence preserves the nonnegative orthant, so this proves that strict
copositivity of $A_{\mathbf t}$ is exactly the finite domain condition \eqref{eq:theorem_strict_copositivity}.
It also leaves the diagonal normalization \eqref{eq:normalized_domain_form}
unchanged.

We now translate the three conditions of Lemma
\ref{lemma:finite_domain_admissibility_criterion}. First, the diagonal cofactors and
\eqref{eq:character_principal_triples} give
\be
(A_{\mathbf t})_{ii}=-P_i^2m_i=4P_i^4\,.
\label{eq:character_polynomial_diagonal}
\ee
Consequently, diagonal normalization of $A_{\mathbf t}$ gives ({\it  cf.} \eqref{eq:normalized_domain_form})
\be
B_{ij}=\frac{(A_{\mathbf t})_{ij}}{4P_i^2P_j^2}\,.
\ee
Thus the edge test $B_{ij}>-1$ is $E_{ij}>0$.  For
$\widehat\ell=\{i,j,k\}$, clearing the positive denominator in
$L_\ell=1+B_{ij}+B_{ik}+B_{jk}$ gives
\be
\Lambda_\ell=4P_i^2P_j^2P_k^2L_\ell\,.
\ee
Apply Lemma \ref{lemma:finite_domain_admissibility_criterion} to
$\widehat{\mathcal G}_{\mathbf t}$.  Its diagonal entries are unchanged by
the sign selection, so
\be
(\widehat{\mathcal G}_{\mathbf t})_{\ell\ell}
=(\mathcal G_{\mathbf t})_{\ell\ell}=t_\ell^2-4\,.
\ee
Then the face condition
\be
(\widehat{\mathcal G}_{\mathbf t})_{\ell\ell}<0
\quad\hbox{or}\quad
L_\ell>0
\ee
is equivalent to
\be
t_\ell^2<4
\quad\hbox{or}\quad
\Lambda_\ell>0\,.
\ee

Finally,
\be
\det\widehat{\mathcal G}_{\mathbf t}
=
\Delta_{\mathbf t}\,,
\qquad
(\widehat{\mathcal G}_{\mathbf t})_{rs}
=
\sgnop(P_rP_s)(\mathcal G_{\mathbf t})_{rs}\,.
\ee
Hence the interior condition of Lemma
\ref{lemma:finite_domain_admissibility_criterion} is equivalent to
\be
\Delta_{\mathbf t}>0
\quad\hbox{or}\quad
P_rP_s(\mathcal G_{\mathbf t})_{rs}>0
\quad\hbox{for some }r,s\,.
\ee
These are precisely
\eqref{eq:character_edge_domain_tests}--\eqref{eq:character_full_domain_test}. 
\end{proof}

The three groups of conditions \eqref{eq:character_edge_domain_tests}--\eqref{eq:character_full_domain_test} in Proposition
\ref{prop:character_polynomial_domain_criterion} test the edges, triangular
faces, and full interior of the coefficient simplex, respectively.  Their
causal simplifications are summarized below. This is the same as the table after Lemma \ref{lemma:finite_domain_admissibility_criterion}, but now in the character-polynomial language. Again, ``automatic'' here means that
no additional inequality of that type is required.

\begin{center}
\begingroup
\small
\setlength{\tabcolsep}{3pt}
\renewcommand{\arraystretch}{1.25}
\begin{tabular}{@{}p{0.22\linewidth}p{0.18\linewidth}
p{0.29\linewidth}p{0.23\linewidth}@{}}
\hline
\raggedright\textbf{Character sector}
&\raggedright\textbf{Edges \eqref{eq:character_edge_domain_tests}}
&\raggedright\textbf{Faces \eqref{eq:character_face_domain_tests}}
&\raggedright\textbf{Interior \eqref{eq:character_full_domain_test}}
\tabularnewline
\hline

\raggedright AdS: $\Delta_{\mathbf t}>0$
&\raggedright Check all $E_{ij}>0$
&\raggedright Check $\Lambda_\ell>0$ only for $|t_\ell|\geq2$
&\raggedright Automatic
\tabularnewline

\raggedright dS with some $|t_\ell|>2$
&\raggedright Check all $E_{ij}>0$
&\raggedright Check $\Lambda_\ell>0$ only for $|t_\ell|\geq2$
&\raggedright Automatic
\tabularnewline

\raggedright All-elliptic dS: $|t_\ell|<2$ for every $\ell$
&\raggedright Automatic
&\raggedright Automatic
&\raggedright Check $P_iP_j(\mathcal G_{\mathbf t})_{ij}>0$ for some $i<j$
\tabularnewline

\raggedright All-parabolic dS: $|t_\ell|=2$ for every $\ell$
&\raggedright Automatic
&\raggedright Check all $\Lambda_\ell>0$
&\raggedright Check $P_iP_j(\mathcal G_{\mathbf t})_{ij}>0$ for some $i<j$
\tabularnewline

\raggedright Mixed elliptic/parabolic dS
&\raggedright Automatic
&\raggedright Check $\Lambda_\ell>0$ for the parabolic labels
&\raggedright Check $P_iP_j(\mathcal G_{\mathbf t})_{ij}>0$ for some $i<j$
\tabularnewline

\hline
\end{tabular}
\endgroup
\end{center}
Note that the table applies only when $\Delta_{\mathbf t}\neq0$ and every $P_i\neq0$.
It is not used on null-candidate strata or on the rank-three wall, which we will consider in Section \ref{subsec:character_boundary_strata}.

{Fix four noncentral conjugacy classes $\mathcal C_i\subset\SL(2,\mathbb R)$, and let $\mathcal Y\subset\mathfrak F_{\mathbf t}(\mathbb R)$ be the trace-coordinate image of a connected component of the corresponding relative $\SL(2,\mathbb R)$ character variety.}
\begin{theorem}[Geometric reconstruction chambers of the relative character variety]
\label{theorem:character_geometric_chambers}
 The finite nondegenerate de Sitter and anti-de Sitter reconstruction loci in $\mathcal Y$ are respectively
\be
\mathcal X_{\dS}
:=
\left\{
\,(x,y,z)\in\mathcal Y
\,\middle|\,
m_i\neq0\ \text{for all }i\,,\quad
\Delta_{\mathbf t}<0\,,\quad A_{\mathbf t}\ \text{strictly copositive}
\right\}\,,
\label{eq:character_dS_chamber}
\ee
and
\be
\mathcal X_{\AdS}
:=
\left\{
\,(x,y,z)\in\mathcal Y
\,\middle|\,
m_i\neq0\ \text{for all }i\,,\quad
\Delta_{\mathbf t}>0\,,\quad A_{\mathbf t}\ \text{strictly copositive}
\right\}\,.
\label{eq:character_AdS_chamber}
\ee
The nondegenerate flat-normal locus is
{\be
\mathcal X_{\rm flat}
:=
\left\{
\,(x,y,z)\in\mathcal Y
\,\middle|\,
m_i\neq0\ \text{for all }i\,,\quad
\Delta_{\mathbf t}=0
\right\}\,.
\label{eq:character_flat_normal_locus}
\ee}
Each of these loci may be empty.
Every point of $\mathcal X_{\dS}$ or
$\mathcal X_{\AdS}$ reconstructs a finite strictly convex nondegenerate
tetrahedron in $\dS^3$ or $\AdS^3$, respectively. Every point of
$\mathcal X_{\rm flat}$ instead determines {a finite strictly convex tetrahedron uniquely up to
ambient isometry.} All three conditions descend to
the corresponding liftable relative $\PSL(2,\mathbb R)$ character variety.

\end{theorem}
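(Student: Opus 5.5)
The curved chambers reduce to Theorem~\ref{theorem:unified_minkowski} through the dictionary between trace polynomials and Gram data. Take a point $p\in\mathcal Y$. Since $\mathcal Y$ is the trace image of a component of the relative $\SL(2,\mathbb R)$ character variety, $p$ is represented by a closing quadruple $H_4H_3H_2H_1=\id$ with noncentral $H_i\in\mathcal C_i$. By Lemma~\ref{lemma:adapted_fricke_gram_map}, $\mathcal G_{\mathbf t}=DGD$ with $D$ invertible and diagonal. Hence $\Delta_{\mathbf t}\neq0$ if and only if $G$ is nondegenerate, and $m_i\neq0$ if and only if $G_{\hat i}$ is nondegenerate. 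By Lemma~\ref{lemma:character_principal_minor_squares}, the latter is also equivalent to $P_i\neq0$. Moreover $\sgnop\Delta_{\mathbf t}=\sgnop\det G$, so $-\sigma\Delta_{\mathbf t}>0$ picks out $\sigma_G=\sigma$. Proposition~\ref{prop:character_polynomial_domain_criterion} shows that strict copositivity of $A_{\mathbf t}$ is exactly Assumption~3 for the outward branch. Corollary~\ref{cor:SL_reconstruction} with $\epsilon=+1$ then gives a unique finite strictly convex tetrahedron in $\mathcal M_\sigma$ with projected holonomies $\pi(H_i)$. It is unique up to $\mathcal O_\sigma$. The converse inclusion also holds: any finite nondegenerate reconstruction at $p$ satisfies these conditions, since each equivalence above runs in both directions. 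Therefore $\mathcal X_{\dS}$ and $\mathcal X_{\AdS}$ are exactly the reconstruction loci.

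\emph{Invariance and descent.} Under a closing-lift change \eqref{eq:closing_lift_action}, the Gram matrix transforms by \eqref{eq:character_gram_lift_action}, so $\Delta_{\mathbf t}$ and the $m_i$ are unchanged. Since $P_i$ is a triple product of the three $b_p$ with $p\neq i$, we get $P_i\mapsto(\prod_{p\neq i}\delta_p)P_i=\delta_iP_i$, using $\prod\delta_p=1$. The adjugate transforms as $\operatorname{adj}(D_\delta\mathcal G D_\delta)=D_\delta\operatorname{adj}(\mathcal G)D_\delta$. Together these give $A_{\mathbf t}\mapsto A_{\mathbf t}$. All three conditions are therefore functions on the liftable $\PSL(2,\mathbb R)$ locus. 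Simultaneous conjugation changes none of the traces, so they are well defined on $\mathcal Y$. Emptiness is allowed simply because the statement makes no nonemptiness claim.

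\emph{The flat locus.} This is the main obstacle, because Theorem~\ref{theorem:unified_minkowski} does not apply there. When $\Delta_{\mathbf t}=0$ and all $m_i\neq0$, the Gram matrix $G$ has rank exactly three and each $G_{\hat i}$ has inertia $(1,2)$. I would realize the $n_i$ as four vectors in $\mathbb R^{1,2}$, with the pairings $G_{ij}$ fixed by the exceptional transport insertion. Every triple of them is independent, so there is a unique relation $\sum_i\lambda_in_i=0$ with all $\lambda_i\neq0$. Its coefficients are given by Cramer's rule, $\lambda_i\propto\pm\chi_i$, with signs fixed by the alternating pattern in \eqref{eq:transported_chi}. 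After choosing outward signs via Lemma~\ref{lemma:branch_selection}, I would aim to show that all $\lambda_i$ share one sign. This should follow from the common positive $\chi_i$ and the ordering conventions, in parallel with Proposition~\ref{prop:det_chi}.

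The area vectors $\mathbf E_i:=\lambda_in_i$ then close, and every triple of them is independent. The Lorentzian Minkowski argument of Section~\ref{sec:flat_limit} applies through $\eta_T$ and the Euclidean Minkowski theorem. It yields a bounded convex tetrahedron in $\mathbb R^{1,2}$, unique up to translation and Lorentz isometry once the $n_i$ are fixed up to $\SO^+(1,2)$. The scale is set by the normalization of the $\mathbf E_i$. I would fix that normalization as $\mathbf E_i=b_i$ up to one common sign, taken from the traceless parts, and verify that these close. That closure follows from the same Cramer computation using \eqref{eq:character_P_triple_relation}. The sign step is the delicate point. If the common sign of the $\lambda_i$ cannot be forced from the $\chi_i$ alone, I would instead read off closure of $\sgnop(P_i)b_i$ directly from the adjugate of the rank-three matrix $\mathcal G_{\mathbf t}$. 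Its image is spanned by the vector $(P_1,\dots,P_4)$, up to the signs already built into the ordering convention.
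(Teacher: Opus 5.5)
Your treatment of $\mathcal X_{\dS}$, $\mathcal X_{\AdS}$ and of the descent follows the paper's argument and is fine. One small correction: for the converse inclusion, ``each equivalence runs in both directions'' is not enough, because the reconstruction theorem only goes one way. You need that a finite tetrahedron is domain-admissible, so that the computation $\sigma\ambip{X}{X}=\mu^\top(\sigma G^{-1})\mu$ from Lemma~\ref{lemma:simplicial_halfspace_realization} gives strict copositivity; this is how the paper argues.

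The genuine gap is in $\mathcal X_{\rm flat}$, and it has two parts. First, your scale normalization does not close. Take vectors $u_i$ factorizing the rank-three matrix $\mathcal G_{\mathbf t}$; these are the Gram representatives of the $b_i$. Their linear relation is spanned by some $a\in\ker\mathcal G_{\mathbf t}$. Since $\operatorname{adj}(\mathcal G_{\mathbf t})=-aa^\top$ with diagonal entries $m_i=-4P_i^2$, this forces $a_i^2=4P_i^2$. Your own Cramer computation via $\widetilde\chi_i=2P_i$ gives the same coefficients. The relation is unique up to one scalar, so $\sum_i\pm b_i=0$ (closure of $\sgnop(P_i)b_i$) can hold only when $|P_1|=\cdots=|P_4|$. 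Setting $\mathbf E_i=b_i$ therefore does not fix the scale, and you get uniqueness at best up to homothety, not up to isometry.

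Second, the sign step you single out as delicate cannot succeed. Normalize the factorization of the outward $G$ so that $u_p=n_p$ for $p=1,2,3$. Then $u_4$ has the same pairings and norm as $n_4$ against $n_1,n_3$, and the same pairings and norm as $O_1n_4$ against $n_1,n_2$. Hence $\det(n_1,n_3,u_4)=\chi_2$ only if $u_4=n_4$, and $\det(n_2,n_1,u_4)=\chi_3$ only if $u_4=O_1n_4$. Since $a_4=-\chi_4<0$, a common sign of all cofactors with every $\chi_i>0$ requires both conditions. That gives $O_1n_4=n_4$, which is impossible for nontrivial $O_1$ with $n_4\notin\mathbb Rn_1$.

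The missing observation is that no outward selection is needed at all. The paper takes $\mathbf E_i:=a_iu_i$ with $a_i=(-1)^{i+1}\det(u_1,\ldots,\widehat{u_i},\ldots,u_4)$, computed from the factorization itself. This has the properties you need:
\begin{itemize}
\item closure is just the cofactor identity;
\item $a_i^2=-m_i\neq0$, so every $\mathbf E_i$ is nonzero;
\item a flip $u_i\mapsto\delta_iu_i$ multiplies every $\mathbf E_i$ by the same sign $\prod_j\delta_j$;
\item $L\in\mathrm O(1,2)$ acts by $\mathbf E_i\mapsto\det(L)L\mathbf E_i$.
\end{itemize}
The Euclidean Minkowski theorem needs only closure and independence of every triple. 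This construction therefore gives the flat tetrahedron with a canonical scale, unique up to Lorentzian isometry and translation.
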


\begin{proof}
Let $(x,y,z)$ belong to one of the three displayed loci.
Choose a closing spin representative $\{H_i\}$
in the prescribed noncentral conjugacy classes. Its projections
$O_i\in\PSL(2,\mathbb R)\cong\SO^+(1,2)$ are nontrivial and satisfy closure.
By Lemma \ref{lemma:adapted_fricke_gram_map},
{\be
\mathcal G_{\mathbf t}=DGD\, {,}
\ee}
for an invertible real diagonal matrix $D$.  Thus $\mathcal G_{\mathbf t}$ and the reconstructed Gram matrix $G$ have the same inertia and determinant sign, and corresponding principal minors have the same signs and zero loci.

If $\Delta_{\mathbf t}\neq0$, the conditions $m_i\neq0$ imply that every $G_{\hat i}$ is nondegenerate.  By Lemma \ref{lemma:character_principal_minor_squares}, they are equivalently the inequalities $m_i<0$.  Lemma \ref{lemma:lorentzian_gram_signature} gives $\Inertia(G_{\hat i})=(1,2)$ and $\Inertia(G)\in\{(1,3),(2,2)\}$.  Lemma \ref{lemma:vertex_reality} therefore gives four finite vertices, while the sign of $\Delta_{\mathbf t}$ selects the de Sitter or anti-de Sitter ambient model.
  Lemma \ref{lemma:branch_selection} selects the outward representatives.  By \eqref{eq:character_polynomial_domain_form} and the positive-congruence observation following it, strict copositivity of $A_{\mathbf t}$ is equivalent to strict copositivity of their signed inverse Gram form.  Theorem \ref{theorem:unified_minkowski} therefore reconstructs the corresponding finite strictly convex tetrahedron.

If $\Delta_{\mathbf t}=0$, the nonzero principal minors imply
$\operatorname{rank}\mathcal G_{\mathbf t}=3$, with nonzero inertia $(1,2)$.
A Lorentzian Gram factorization therefore gives four normal representatives
$u_i\in\mathbb R^{1,2}$, every three of which are independent.
Using the orientation of $\mathbb R^{1,2}$ fixed by $\varepsilon_{012}=+1$, define
the affine cofactor coefficients
\be
a_i:=(-1)^{i+1}
\det(u_1,\ldots,\widehat{u_i},\ldots,u_4)\,.
\label{eq:character_flat_affine_cofactors}
\ee
Set directly
\be
\mathbf E_i:=a_iu_i\,.
\ee
The standard cofactor relation and the Lorentzian Gram identity give
\be
\sum_{i=1}^4\mathbf E_i=0\,,
\qquad
\operatorname{adj}(\mathcal G_{\mathbf t})=-aa^{\top}\,,
\qquad
a_i^2=-m_i=4P_i^2\,,
\label{eq:character_flat_coarea_vectors}
\ee
{where the last identity comes from Lemma \ref{lemma:character_principal_minor_squares}. }
Indeed, replacing $u_i$ by $\delta_i u_i$ multiplies every
$\mathbf E_i$ by the same sign $\prod_j\delta_j$, and no separate outward
selection is needed in the flat construction.
In particular, the four $\mathbf E_i$ are nonzero and every three are
linearly independent.  The Euclidean Minkowski argument of Section
\ref{sec:flat_limit}, applied via
$\mathbf E_i^{\rm E}:=\eta_T\mathbf E_i$, therefore
reconstructs a unique bounded affine Lorentzian tetrahedron up to translation,
with area vectors $\mathbf E_i$.
{Replacing $u_i$ by $Lu_i$, $L\in O(1,2)$, sends
$\mathbf E_i$ to $\det(L)L\mathbf E_i$ and changes the
tetrahedron only by a Lorentzian isometry.}

Conversely, the normal Gram matrix of a nondegenerate
strictly convex curved tetrahedron, or the area-normal Gram matrix of a
bounded nondegenerate affine Lorentzian tetrahedron, has $m_i\neq0$ for every
$i$, equivalently $m_i<0$ by Lemma
\ref{lemma:character_principal_minor_squares}. Its determinant is negative,
positive, or zero in the de Sitter, anti-de Sitter, or affine flat case,
respectively. In the two curved cases, every nonzero ray of the simplicial
cone of a finite tetrahedron has the quadric sign $\sigma_G$. The calculation
in the proof of Lemma \ref{lemma:simplicial_halfspace_realization} therefore
gives strict copositivity of the outward signed inverse Gram form, hence of
$A_{\mathbf t}$. This proves the three characterizations of
$\mathcal X_{\dS}$, $\mathcal X_{\AdS}$, and $\mathcal X_{\rm flat}$.

Finally,
\eqref{eq:character_gram_lift_action} preserves $\Delta_{\mathbf t}$ and every
$m_i$, while $P_i\mapsto\delta_iP_i$ makes
\eqref{eq:character_polynomial_domain_form} invariant, which proves the descent
statement.
\end{proof}

\subsection{Null candidate vertex lines and the determinant wall}
\label{subsec:character_boundary_strata}

{Theorem
\ref{theorem:character_geometric_chambers} treats the loci on which every
$m_i$ is nonzero.  }
It remains to treat null candidate vertex lines when
$\Delta_{\mathbf t}\neq0$, namely when at least one $m_i=0$, and the
determinant wall $\Delta_{\mathbf t}=0$.
We first treat $\Delta_{\mathbf t}\neq0$ and at least one $m_i$ vanishes.

\begin{lemma}%
\label{lemma:character_candidate_vertex_lines}
Let $(x,y,z)\in\mathcal Y$ satisfy $\Delta_{\mathbf t}\neq0$ and
$m_k=0$ for at least one $k$, and set
\be
\sigma_{\mathcal G}:=-\sgnop(\Delta_{\mathbf t})\,.
\ee
Then $\mathcal G_{\mathbf t}$ has inertia $(1,3)$ when
$\Delta_{\mathbf t}<0$ and $(2,2)$ when
$\Delta_{\mathbf t}>0$, and hence admits an ambient
realization by linearly independent vectors
$\widetilde N_i\in\mathbb V_{\sigma_{\mathcal G}}$ satisfying
{\be
\ambip{\widetilde N_i}{\widetilde N_j}
=(\mathcal G_{\mathbf t})_{ij}\,.
\label{eq:character_ambient_gram_realization}
\ee}
If $\widetilde W_i$
is the dual basis, defined by
$\ambip{\widetilde W_i}{\widetilde N_j}=\delta_{ij}$, then
the candidate vertex line
opposite the $i$-th face is $\mathbb R\widetilde W_i$, and
{\be
\ambip{\widetilde W_i}{\widetilde W_i}
=\frac{m_i}{\Delta_{\mathbf t}}\,.
\label{eq:character_candidate_vertex_norm}
\ee}
Thus
$m_i=0$ exactly when the candidate vertex line opposite $F_i$ is null, while $m_i<0$ gives a finite one.
\end{lemma}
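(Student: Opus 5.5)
The plan has four steps: pin down the inertia of $\mathcal G_{\mathbf t}$, invoke Lemma~\ref{lemma:ambient_gram_reconstruction} for an ambient realization, compute the norm of the dual basis by cofactors, and read off the causal type of each vertex line.

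\emph{Inertia.} The obstacle is that Lemma~\ref{lemma:lorentzian_gram_signature} assumed every $G_{\hat i}$ nondegenerate, and that is exactly what fails here. I will instead work directly with a closing representative $\{H_i\}$ of the point and its traceless vectors $b_i\in\mathbb R^{1,2}$. By the proof of Lemma~\ref{lemma:adapted_fricke_gram_map}, each principal $3\times3$ block $(\mathcal G_{\mathbf t})_{\hat i}$ is the tangent Gram matrix $M_i^\top\eta_T M_i$ of three vectors in $\mathbb R^{1,2}$. Its restriction to any subspace is therefore a pullback of $\eta_T$. Consequently it has at most one negative and at most two positive eigenvalues, whether or not it is degenerate.

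Since $\Delta_{\mathbf t}\neq0$, the matrix $\mathcal G_{\mathbf t}$ is nondegenerate. Cauchy interlacing against a $3\times3$ principal block with eigenvalues $\mu_1\le\mu_2\le\mu_3$ gives $\lambda_2\le\mu_2$ and $\lambda_1\le\mu_1$.
\begin{itemize}
\item[] If the block had two negative eigenvalues this would contradict the pullback bound, so $\mu_2\ge0$ and hence $\lambda_3\ge\mu_2\ge0$. Nondegeneracy upgrades this to $\lambda_3>0$, and since $\lambda_4\ge\lambda_3$ we get $\lambda_3,\lambda_4>0$.
\item[] To show $\lambda_1<0$, suppose instead that $\mathcal G_{\mathbf t}$ were positive definite. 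Then every principal block would be positive definite, which contradicts the pullback bound (at most two positive eigenvalues) for a $3\times3$ block.
\end{itemize}
Hence the inertia is $(1,3)$ or $(2,2)$, distinguished by the sign of $\Delta_{\mathbf t}$. This step is where the degenerate-minor case needs the most care, so I would write out the pullback-of-$\eta_T$ argument explicitly rather than cite Lemma~\ref{lemma:lorentzian_gram_signature}.

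\emph{Realization and dual basis.} With the inertia matching $\eta_{\sigma_{\mathcal G}}$, Lemma~\ref{lemma:ambient_gram_reconstruction} produces linearly independent $\widetilde N_i\in\mathbb V_{\sigma_{\mathcal G}}$ with Gram matrix $\mathcal G_{\mathbf t}$. Because the $\widetilde N_i$ are independent, the dual basis $\widetilde W_i$ exists. By construction $\widetilde W_i$ is orthogonal to $\widetilde N_j$ for $j\neq i$, so it spans the one-dimensional space $\bigcap_{j\ne i}\widetilde N_j^\perp$. This space is the candidate vertex line opposite $F_i$, since the three normals $\widetilde N_j$, $j\neq i$, are independent even when $m_i=0$.

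As in the proof of Lemma~\ref{lemma:simplicial_halfspace_realization}, the Gram matrix of the dual basis is $\mathcal G_{\mathbf t}^{-1}$. The cofactor formula then gives
\[
\ambip{\widetilde W_i}{\widetilde W_i}=(\mathcal G_{\mathbf t}^{-1})_{ii}=\frac{m_i}{\Delta_{\mathbf t}},
\]
which is \eqref{eq:character_candidate_vertex_norm}.

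\emph{Conclusion.} If $m_i=0$, the line $\mathbb R\widetilde W_i$ is null. If $m_i<0$, the norm of $\widetilde W_i$ has sign $-\sgnop\Delta_{\mathbf t}=\sigma_{\mathcal G}$, which is the quadric sign. So $\widetilde W_i$ can be normalized onto $\mathcal M_{\sigma_{\mathcal G}}$ and the line gives a finite candidate vertex pair, as in Lemma~\ref{lemma:vertex_reality}. Lemma~\ref{lemma:character_principal_minor_squares} gives $m_i\le0$, so these two cases are exhaustive.
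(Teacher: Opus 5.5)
Your proof is correct and follows the paper's route: realize each principal block as a Lorentzian tangent Gram matrix and apply Cauchy interlacing to get the inertia, invoke Lemma~\ref{lemma:ambient_gram_reconstruction} for the realization, use the cofactor formula for $(\mathcal G_{\mathbf t}^{-1})_{ii}=m_i/\Delta_{\mathbf t}$, and use $m_i\le0$ for the dichotomy. The only variation is that you rule out the positive-definite case with the pullback bound on a $3\times3$ block rather than with $m_k=0$, which shows the inertia claim does not need the vanishing minor; also note that for the blocks $\hat1$ and $\hat3$ the realizing vectors are $b_2,b_3,H_3^{-1}\cdot b_4$ and $b_1,b_2,H_1\cdot b_4$, as in \eqref{eq:transported_chi}, which is justified by the convention of Lemma~\ref{lemma:lorentzian_gram_signature} rather than by Lemma~\ref{lemma:adapted_fricke_gram_map} alone.
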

\begin{proof}
Each principal submatrix $(\mathcal G_{\mathbf t})_{\hat i}$ is a Gram
matrix in the Lorentzian tangent model, and Lemma
\ref{lemma:character_principal_minor_squares} gives $m_i\leq0$.  The Cauchy
interlacing argument used in the proof of Lemma
\ref{lemma:lorentzian_gram_signature}, now allowing degenerate principal
submatrices,  gives at most two
negative and at most three positive eigenvalues for $\mathcal G_{\mathbf t}$.
If $\Delta_{\mathbf t}<0$, the number of negative
eigenvalues is odd and hence equals one. If $\Delta_{\mathbf t}>0$, it is
even; it cannot vanish because the positive-definite case would have every
$3\times3$ principal minor positive, whereas $m_k=0$. It therefore equals
two. 
Lemma \ref{lemma:ambient_gram_reconstruction}
therefore  gives
\eqref{eq:character_ambient_gram_realization}.
The congruence $\mathcal G_{\mathbf t}=DGD$ of Lemma
\ref{lemma:adapted_fricke_gram_map} identifies these representatives with
$\widetilde N_i=d_iN_i$ after choosing compatible ambient realizations. Thus
they define the same face hyperplanes as the geometric ambient normals
$N_i$.

The dual-basis relation $\ambip{\widetilde W_i}{\widetilde N_j}=\delta_{ij}$ makes $\mathbb R\widetilde W_i$ the common
intersection of the three ambient face hyperplanes opposite
the $i$-th face.
Moreover,
{\be
\ambip{\widetilde W_i}{\widetilde W_j}
=\bigl(\mathcal G_{\mathbf t}^{-1}\bigr)_{ij}\,,
\qquad
\bigl(\mathcal G_{\mathbf t}^{-1}\bigr)_{ii}
=\frac{\det(\mathcal G_{\mathbf t})_{\hat i}}
{\det\mathcal G_{\mathbf t}}
=\frac{m_i}{\Delta_{\mathbf t}}\,,
\ee}
which proves \eqref{eq:character_candidate_vertex_norm}.
Since $m_i\leq0$ by Lemma
\ref{lemma:character_principal_minor_squares}, a nonzero ratio has sign
$\sigma_{\mathcal G}$ and $\widetilde W_i$ can be normalized into
$\mathcal M_{\sigma_{\mathcal G}}$, while a zero ratio makes $\widetilde W_i$ null.
\end{proof}

 This lemma determines the candidate
vertex line types, but
the
domain condition remains to be checked.
 Choose signs
$\varepsilon_i\in\{\pm1\}$ selecting one ray on each
candidate vertex line, and set
\be
S_\varepsilon:=\operatorname{diag}(\varepsilon_1,\ldots,\varepsilon_4)\,,
\qquad
V_i:=\varepsilon_i\widetilde W_i\,,
\qquad
A_\varepsilon
:=\sigma_{\mathcal G}S_\varepsilon {\mathcal G_{\mathbf t}^{-1}} S_\varepsilon\,.
\label{eq:character_signed_vertex_gram}
\ee
With
$\widehat N_i:=-\varepsilon_i\widetilde N_i$, the dual-basis calculation in
the proof of Lemma \ref{lemma:simplicial_halfspace_realization} gives
\be
C_\varepsilon
:=\left\{\sum_i\lambda_iV_i\ \middle|\ \lambda_i\geq0\right\}
=\left\{X\ \middle|\
\ambip{X}{\widehat N_i}\leq0\ \text{for every }i\right\}\,,
\qquad
\sigma_{\mathcal G}\ambip{X(\lambda)}{X(\lambda)}
=\lambda^\top A_\varepsilon\lambda\,.
\ee
Hence the candidate cone lies in the closure of the
selected domain exactly when $A_\varepsilon$ is copositive:
\be
\lambda^\top A_\varepsilon\lambda\geq0
\qquad
\text{for every }\lambda\in\mathbb R_{\geq0}^4\,.
\label{eq:character_copositive_condition}
\ee
To avoid additional null directions, equality must
occur only on the coordinate rays corresponding to the  candidate lines.
This non-strict boundary condition is not implied by the $m_i$ alone. Appendix
\ref{subsec:character_copositivity_counterexample} gives a closing character
point for which no sign choice satisfies it. These boundary configurations,
and exact  holonomy matching, remain outside the reconstruction theorem.

It remains to treat the determinant wall
\be
\Delta_{\mathbf t}=0
\label{eq:character_model_wall}
\ee
separating the two finite-curvature
model signs. We call its rank-three part the flat-normal wall, because it
admits a Lorentzian Gram factorization by four vectors in
$\mathbb R^{1,2}$. The relatively open stratum on which every $m_i\neq0$ is
precisely $\mathcal X_{\rm flat}$ of Theorem
\ref{theorem:character_geometric_chambers}. The following lemma demonstrates
the geometries of the different parts of the determinant wall.

\begin{lemma}%
\label{lemma:character_determinant_wall}

Let $(x,y,z)\in\mathcal Y$ satisfy $\Delta_{\mathbf t}=0$.
If $\operatorname{rank}\mathcal G_{\mathbf t}=3$, let
$0\neq a=(a_1,\ldots,a_4)^\top\in\ker\mathcal G_{\mathbf t}$. Then there
is a constant $c<0$ such that
\be
m_i=ca_i^2
\qquad\text{for every }i\,.
\label{eq:character_wall_kernel_minors}
\ee
For any Lorentzian Gram factorization
$\mathcal G_{\mathbf t}=(\tanip{u_i}{u_j})$, all linear relations among
the $u_i$ are multiples of
\be
\sum_{i=1}^4a_iu_i=0\,.
\ee
Consequently, $(x,y,z)\in\mathcal X_{\rm flat}$ exactly when every $a_i$
is nonzero. If some $a_i=0$, no closure relation has four nonzero
coefficients, so the Gram data determine no bounded nondegenerate affine
tetrahedron.

On the whole determinant wall,
\be
m_i=0\ \text{for every }i
\quad\Longleftrightarrow\quad
\operatorname{rank}\mathcal G_{\mathbf t}\leq2\,.
\label{eq:character_wall_lower_rank}
\ee

\end{lemma}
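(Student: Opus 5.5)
The plan is to treat the rank-three and the lower-rank parts of the wall separately, relying throughout on the adjugate of a corank-one symmetric matrix and on Lemma~\ref{lemma:character_principal_minor_squares}, which gives $m_i\le0$ on $\mathfrak F_{\mathbf t}(\mathbb R)$.

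\emph{Rank three.} A real symmetric $4\times4$ matrix of rank exactly three has an adjugate of rank one whose column space is $\ker\mathcal G_{\mathbf t}$, since $\mathcal G_{\mathbf t}\operatorname{adj}(\mathcal G_{\mathbf t})=\Delta_{\mathbf t}\id=0$. Symmetry then forces $\operatorname{adj}(\mathcal G_{\mathbf t})=c\,aa^\top$ for some real $c\neq0$, and the diagonal entries give $m_i=ca_i^2$. To get the sign of $c$, note that some $m_i$ is nonzero, because the adjugate does not vanish, and every $m_i\le0$. Hence $c<0$.

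Next take a Lorentzian Gram factorization $u_i\in\mathbb R^{1,2}$. Such a factorization exists because the nonzero $m_i<0$ has inertia $(1,2)$ on that principal block, so the nonzero inertia of $\mathcal G_{\mathbf t}$ is $(1,2)$. For this factorization, write $\sum_i\beta_iu_i$. It vanishes if and only if $\mathcal G_{\mathbf t}\beta=0$, by nondegeneracy of $\eta_T$ together with the fact that the $u_i$ span the image of a rank-three Gram matrix in a three-dimensional space. So the relations are exactly $\ker\mathcal G_{\mathbf t}=\mathbb R a$.

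With this, $\mathcal X_{\rm flat}$ is characterized by $m_i\neq0\iff a_i\neq0$. If some $a_i=0$, every relation has a zero coefficient. A bounded nondegenerate affine tetrahedron would instead need area vectors $\mathbf E_i=\alpha_iu_i$ with all $\alpha_i\neq0$ summing to zero, which is a relation with four nonzero coefficients. So no such tetrahedron exists. This uses the Minkowski closure from Section~\ref{sec:flat_limit}, applied in its converse direction.

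\emph{Lower rank.} If $\operatorname{rank}\le2$, every $3\times3$ minor vanishes, so every $m_i=0$. Conversely, suppose every $m_i=0$ on the wall. If the rank were three, the previous paragraph would give $\operatorname{adj}=caa^\top\neq0$ with $m_i=ca_i^2$, and $a\neq0$ makes some $m_i\neq0$, a contradiction. Rank four is excluded by $\Delta_{\mathbf t}=0$. Hence the rank is at most two.

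The only delicate point I expect is the existence of the Lorentzian factorization in the rank-three case, namely checking that the inertia is $(1,2)$ rather than $(2,1)$ or $(0,3)$. That follows from Cauchy interlacing against a nonsingular principal $3\times3$ block, whose inertia is $(1,2)$ by $m_i<0$ and Lemma~\ref{lemma:lorentzian_gram_signature}. Everything else is linear algebra.
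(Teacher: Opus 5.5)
Your proposal is correct and follows the paper's proof essentially step by step. You use the rank-one symmetric adjugate $\operatorname{adj}(\mathcal G_{\mathbf t})=c\,aa^\top$, get $c<0$ from $m_i\le0$, identify all relations among the $u_i$ with $\ker\mathcal G_{\mathbf t}$ through $\operatorname{rank}U=3$, and read off the characterization of $\mathcal X_{\rm flat}$. The only differences are minor. You spell out the Cauchy interlacing argument for the nonzero inertia $(1,2)$, which the paper leaves to the cited lemmas. You also prove the converse of the lower-rank equivalence by reusing $\operatorname{adj}=c\,aa^\top$, so that some $m_i=ca_i^2\neq0$ in rank three, where the paper invokes the principal-rank theorem; this makes your version slightly more self-contained.
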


\begin{proof}
In the rank-three case, $\operatorname{adj}(\mathcal G_{\mathbf t})$ is
a nonzero symmetric rank-one matrix with image
$\ker\mathcal G_{\mathbf t}=\mathbb Ra$. Hence
$\operatorname{adj}(\mathcal G_{\mathbf t})=caa^\top$ for some
$c\neq0$. Its diagonal entries are the $m_i$, giving
\eqref{eq:character_wall_kernel_minors}. Lemma
\ref{lemma:character_principal_minor_squares} gives $m_i\leq0$, and at
least one $m_i$ is nonzero, so $c<0$.
In particular,
$m_i=0$ if and only if $a_i=0$.

Lemmas \ref{lemma:adapted_fricke_gram_map} and
\ref{lemma:lorentzian_gram_signature} give the rank-three Lorentzian Gram
factorization
$\mathcal G_{\mathbf t}=U^\top\eta_TU$, with columns $u_i$. Since
$\operatorname{rank}U=\operatorname{rank}\mathcal G_{\mathbf t}=3$, the
inclusion $\ker U\subseteq\ker\mathcal G_{\mathbf t}$ is an equality. Hence
$Ua=\sum_i a_iu_i=0$. By the definition \eqref{eq:character_flat_normal_locus} of $\mathcal X_{\rm flat}$ and
\eqref{eq:character_wall_kernel_minors}, the point belongs to
$\mathcal X_{\rm flat}$ precisely when every $a_i\neq0$. In that case,
changing the signs of the representatives makes all coefficients in the
normal relation positive.

If some $a_i=0$, no branch choice can turn this
relation into a strictly positive four-normal closure, as required for a
bounded nondegenerate flat tetrahedron. The resulting flat normal
configuration is therefore degenerate or unbounded. If the kernel is spanned by $e_k$, then
$\mathcal G_{\mathbf t}e_k=0$, so the $k$-th row and column vanish. In a
rank-three factorization the vectors $u_i$ span $\mathbb R^{1,2}$, and the
vanishing column gives $u_k=0$. Thus the flat Gram data determine no normal
for the $k$-th face.

Finally, the principal-rank theorem for real symmetric matrices states that
the rank is the maximum order of a nonsingular principal submatrix \cite{HornJohnson:2013}. Thus the
vanishing of $\Delta_{\mathbf t}$ and all four order-three principal minors is
equivalent to rank at most two.

\end{proof}

The term ``flat-normal wall'' must not be confused with a genuine curvature-zero limit of the holonomies.  Such a limit requires a family for which
\be
H_i(R)\longrightarrow\pm\id\,,
\qquad
t_i(R)\longrightarrow\pm2\,,
\qquad
R\longrightarrow\infty\,,
\label{eq:character_genuine_flat_limit}
\ee
and whose first nontrivial terms give
finite flat area-normal vectors
as in Section \ref{sec:flat_limit}.  Except for the special boundary traces $t_i=\pm2$ for every $i$, such a family varies the boundary conjugacy classes and therefore moves between relative character surfaces.  Noncentral parabolic conjugacy classes may remain fixed as their representatives approach $\pm\id$. Thus, on a fixed generic character surface, $\Delta_{\mathbf t}=0$ is a degenerate wall of the finite-curvature ambient reconstruction even when its rank-three part admits a flat-normal interpretation.

Theorem \ref{theorem:character_geometric_chambers} and Lemmas \ref{lemma:character_candidate_vertex_lines} and \ref{lemma:character_determinant_wall} therefore give a Gram-data hierarchy covering every point of the selected noncentral character locus $\mathcal Y$:
\begin{center}
\renewcommand{\arraystretch}{1.25}
\begin{tabular}{c|c|p{8.2cm}}
$\Delta_{\mathbf t}$ & Principal minors $m_i$ & Geometry determined by the Gram data\\
\hline
$<0$ & $m_i\neq0$ for every $i$
& Finite nondegenerate $\dS^3$ tetrahedron if $A_{\mathbf t}$ is strictly copositive;  otherwise outside the tetrahedron reconstruction locus\\
$>0$ & $m_i\neq0$ for every $i$
& Finite nondegenerate $\AdS^3$ tetrahedron if $A_{\mathbf t}$ is strictly copositive; otherwise outside the tetrahedron reconstruction locus\\
$\neq0$ & $m_i=0$ for at least one $i$
& Face-plane configuration with null candidate vertices opposite those
$F_i$ for which $m_i=0$\\
$=0$ & $m_i\neq0$ for every $i$
& Bounded nondegenerate affine Lorentzian tetrahedron  \\
$=0$ & Some, but not all, $m_i$ vanish
& Degenerate or unbounded rank-three flat normal configuration\\
$=0$ & $m_i=0$ for every $i$
& Rank-at-most-two normal collapse\\
\end{tabular}
\end{center}
 In the $\Delta_{\mathbf t}\neq0$ boundary row, \eqref{eq:character_copositive_condition} tests whether the candidate cone stays in the closure of the selected domain, but these null-boundary points remain outside the tetrahedron reconstruction theorem.  In the first two rows, strict copositivity is what upgrades four finite candidate vertices to the finite tetrahedron of the reconstruction theorem.  The table gives a complete interpretation at the level of the reconstructed Gram data.

For fixed boundary traces, these strata can be determined explicitly.  Solving \eqref{eq:adapted_fricke_cubic} for $z$ gives
\be
z_{\pm}(x,y)
=
\frac{t_1t_3+t_2t_4-xy
\pm\sqrt{\mathcal D_{\mathbf t}(x,y)}}{2}\,,
\label{eq:fricke_two_sheets}
\ee
where
\be
\begin{split}
\mathcal D_{\mathbf t}(x,y)
:={}&
\bigl(xy-t_1t_3-t_2t_4\bigr)^2\\
&-4\left[
x^2+y^2
-(t_1t_2+t_3t_4)x
-(t_1t_4+t_2t_3)y
+\sum_{i=1}^4t_i^2+t_1t_2t_3t_4-4
\right]\,.
\end{split}
\label{eq:fricke_projection_discriminant}
\ee
The real Fricke surface $\mathfrak F_{\mathbf t}(\mathbb R)$ projects to the region $\mathcal D_{\mathbf t}(x,y)\geq0$.  Restricting this projection to $\mathcal Y$ and substituting $z_\pm(x,y)$ into $\Delta_{\mathbf t}$ and the four $m_i$ gives a direct sign diagram of the de Sitter, anti-de Sitter, flat-normal, null candidate-vertex, and more degenerate strata for the chosen real $\SL(2,\mathbb R)$ boundary data.
Applying Proposition \ref{prop:character_polynomial_domain_criterion} then selects the actual finite curved chambers.

Figure~\ref{fig:fricke_chambers_t3456}
illustrates this construction for the hyperbolic boundary traces
$\mathbf t=(3,4,5,6)$.  Its two panels are the sheets $z=z_-(x,y)$ and
$z=z_+(x,y)$ of the real Fricke surface.  The diagram makes visible that
Fricke reality and the sign of $\Delta_{\mathbf t}$ do not by themselves
imply tetrahedral reconstruction: on the curved finite locus, the additional
inequality $\mu_{\mathbf t}>0$ selects the domain-admissible subchambers,
while $P_i=0$ marks the null-candidate strata.

\begin{figure}[h!]
\centering
\includegraphics[width=\textwidth]{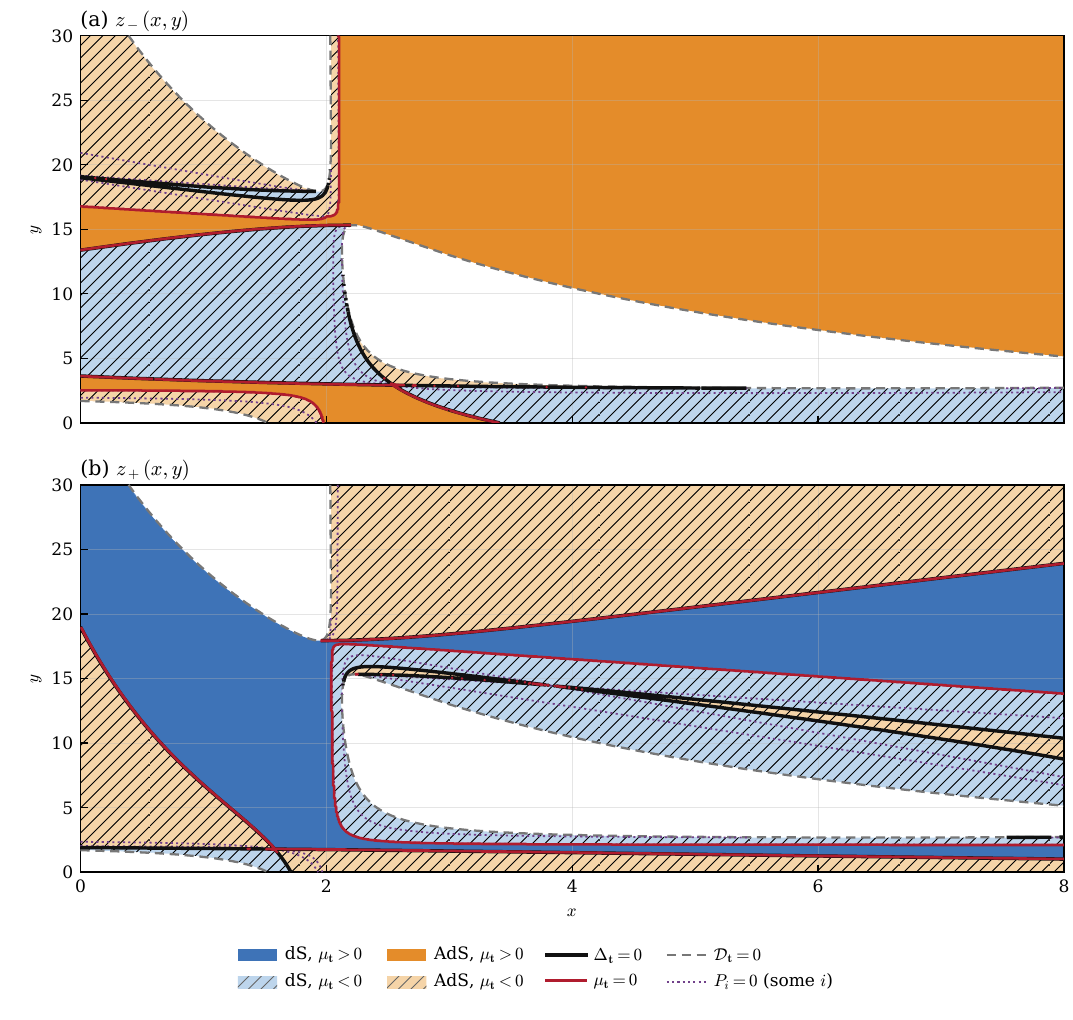}
\caption{Projected chamber diagram for the fixed
hyperbolic boundary traces $\mathbf t=(3,4,5,6)$.  The upper and lower panels
show the branches $z=z_-(x,y)$ and $z=z_+(x,y)$, respectively.  The gray
dashed curves bound the projected real $\SL(2,\mathbb R)$ character locus
$\mathcal D_{\mathbf t}\geq0$.  Within it, the solid black curves
$\Delta_{\mathbf t}=0$ separate the de Sitter and anti-de Sitter sides, and
the solid red curves $\mu_{\mathbf t}=0$ separate domain-admissible
($\mu_{\mathbf t}>0$) from domain-obstructed ($\mu_{\mathbf t}<0$) data.
Dark blue and light hatched blue mark these two regions on the de Sitter side;
dark orange and light hatched orange give the corresponding anti-de Sitter
regions.  The thin purple dotted curves are $P_i=0$, equivalently $m_i=0$,
and mark the null-candidate strata discussed in
Section~\ref{subsec:character_boundary_strata}.}
\label{fig:fricke_chambers_t3456}
\end{figure}
\section{Causal Sectors, Dual Tetrahedra, and Real Forms}
\label{sec:dual_tetra}

We now return to the reconstructed tetrahedron and organize the holonomy data by the causal type of its associated faces.  Since the face type is encoded by the sign of the extrinsic normal $N_i$, the normal line itself has a natural projective-dual interpretation as a vertex of a polar dual tetrahedron.  This dual viewpoint relates the holonomy reconstruction to three familiar geometric sectors: compact real-form curved tetrahedra in the all-spacelike case, hyperideal projective polar duals in the all-timelike AdS case, and ideal projective polar duals with framed parabolic data in the all-null AdS case. %

\subsection{Polar dual tetrahedra}
\label{subsec:polar_dual_ideal_hyperideal}

The definition of a tetrahedron is stronger than the linear independence of four ambient hyperplanes.  The extrinsic normals must be arranged so that the three hyperplanes opposite each label meet $\Msig$ in a real vertex $V_i$, equivalently in Lemma \ref{lemma:vertex_reality}, {and }the vertex line must admit a representative satisfying $\ambip{V_i}{V_i}=\sigma$.  For a general intersection of four independent hyperplanes, this need not hold: as explained in Remark \ref{remark:wrong_causal_vertex_line}, a candidate vertex line may have the wrong causal sign or be null.  In this section we define the polar dual of a tetrahedral configuration and show that, under this duality, the allowed causal types of the extrinsic face normals naturally produce ordinary, ideal, and hyperideal projective tetrahedra.

For a nonzero vector $X\in\Vsig$, write
\be
[X]:=\mathbb R^\times X\in\mathbb P(\Vsig)\,,
\qquad
\mathbb P(\Vsig):=(\Vsig\setminus\{0\})/\mathbb R^\times\,.
\ee
The projective model of $\Msig$ is the set of lines $[X]$ with $\sgnop\ambip{X}{X}=\sigma$, and its ideal boundary is the projectivized light cone
\be
\partial\mathbb P\Msig
:=\{[X]\in\mathbb P(\Vsig)\mid \ambip{X}{X}=0\}\,.
\label{eq:projective_boundary_lightcone}
\ee
The dual quadric is
\be
\Msig^\ast:=\{X\in\Vsig\mid \ambip{X}{X}=-\sigma\}\,.
\label{eq:dual_quadric_general}
\ee
Explicitly,
\be
\AdS^{3*}
=
\left\{
X\in\mathbb R^{2,2}\mid
-(X^0)^2-(X^1)^2+(X^2)^2+(X^3)^2=+1
\right\}\,,
\label{eq:AdS_dual_quadric}
\ee
while
\be
\dS^{3*}
=
\left\{
X\in\mathbb R^{1,3}\mid
-(X^0)^2+(X^1)^2+(X^2)^2+(X^3)^2=-1
\right\}\cong \mathbb{H}^3_+\sqcup \mathbb{H}^3_- \,.
\label{eq:dS_dual_quadric}
\ee
Thus $\AdS^{3*}$ is the space of spacelike unit normals to timelike totally geodesic planes in $\AdS^3$, and $\dS^{3*}$ is the two-sheeted hyperbolic three-space.  After choosing one sheet, $\dS^{3*}$ is the usual hyperbolic three-space in the Beltrami--Klein projective model.

\begin{definition}[Ordinary, ideal, and hyperideal projective vertices]
\label{def:ordinary_ideal_hyperideal_vertices}
Let $L=[X]\in\mathbb P(\Vsig)$ be a projective vertex line.  Relative to $\Msig$, we call $L$
\be
\begin{array}{c|c}
\hbox{ordinary} & \sgnop\ambip{X}{X}=\sigma\,,\\
\hbox{ideal} & \ambip{X}{X}=0\,,\\
\hbox{hyperideal} & \sgnop\ambip{X}{X}=-\sigma\,.
\end{array}
\label{eq:ordinary_ideal_hyperideal_vertex_types}
\ee
An {\bf ideal tetrahedron} is a projective tetrahedron whose four vertex lines are ideal.
A {\bf hyperideal projective tetrahedron} relative to $\Msig$ is a
projective tetrahedron whose four vertex lines are hyperideal. 

For $\sigma=-1$, a hyperideal projective tetrahedron is called a
{\bf standard hyperideal AdS tetrahedron} if its vertex lines admit
representatives $X_i\in\AdS^{3*}$ satisfying
\be
\ambip{X_i}{X_j}<-1
\qquad
(i\neq j)\,.
\label{eq:coherent_hyperideal_segment_condition}
\ee
For such a standard hyperideal AdS tetrahedron, the corresponding {\bf truncated hyperideal tetrahedron} is the region in $\AdS^3$ cut out by its four face planes and the truncation planes $X_i^\perp\cap\AdS^3$.

For $\sigma=+1$, after choosing one sheet of
$\dS^{3*}\cong\mathbb H^3_+\sqcup\mathbb H^3_-$, the four hyperideal
vertex lines determine an ordinary hyperbolic tetrahedron in the dual
space. 
\end{definition}

\begin{definition}[Projective dual tetrahedron]
\label{def:polar_dual_tetrahedron}
Let $\mathcal C=(N_1,\ldots,N_4)$ be a tetrahedral configuration in
$\Msig$.  For each $i$, let
\be
L_i
:=
\bigcap_{r\neq i}\Pi_r
=
\left(
\operatorname{span}\{N_r\mid r\neq i\}
\right)^\perp\,.
\ee
The projective dual of $\mathcal C$ is the ordered projective tetrahedron {configuration}
$T^\ast_{\mathbb P}$ with vertex lines
\be
L_i^\ast:=[N_i]\in\mathbb P(\Vsig)\,,
\qquad
i=1,\ldots,4\,.
\label{eq:polar_dual_vertices}
\ee
Its face {plane} opposite $L_i^\ast$ is
\be
F_i^\ast
:=
\mathbb P\!\left(
\operatorname{span}\{N_r\mid r\neq i\}
\right)
=
\mathbb P(L_i^\perp)\,.
\label{eq:polar_dual_faces}
\ee
If $\mathcal C$ underlies a tetrahedron in the sense of Definition
\ref{def:generalized_tetrahedron}, with selected vertex
$V_i\in L_i\cap\Msig$, then
\be
F_i^\ast=\mathbb P(V_i^\perp)\,.
\ee
\end{definition}

\begin{prop}
\label{prop:dual_vertex_causal_type}

Let $\mathcal C=(N_1,\ldots,N_4)$ be a tetrahedral configuration in
$\Msig$, with projective dual $T^\ast_{\mathbb P}$.
 The causal type of the support surface ${\widetilde{\Pi}_i}=N_i^\perp\cap\Msig$ determines the type of the corresponding dual vertex as follows:
\be
\begin{array}{c|c|c|c}
\hbox{support surface }{\widetilde{\Pi}_i} & \ambip{N_i}{N_i} & \sigma=-1\ (\AdS^3) & \sigma=+1\ (\dS^3)\\
\hline
\hbox{spacelike} & -1 & \hbox{ordinary} & \hbox{hyperideal}\\
\hbox{timelike} & +1 & \hbox{hyperideal} & \hbox{ordinary}\\
\hbox{null} & 0 & \hbox{ideal} & \hbox{ideal}
\end{array}
\label{eq:dual_face_vertex_type_table}
\ee
If all four support surfaces are null, then $T^\ast_{\mathbb P}$ is an
ideal tetrahedron.  If they are all timelike in $\AdS^3$, or all spacelike
in $\dS^3$, then $T^\ast_{\mathbb P}$ has four hyperideal vertex lines.
In the AdS case, it is a standard hyperideal AdS tetrahedron if and only if
these lines admit coherent representatives satisfying
\eqref{eq:coherent_hyperideal_segment_condition}.

If all four support surfaces are spacelike in $\AdS^3$, or all timelike in
$\dS^3$, then every dual vertex is ordinary.  After normalizing these
vertices in $\Msig$, the dual determines a tetrahedral configuration whose
face-normal lines are the original candidate vertex lines $L_i$.
\end{prop}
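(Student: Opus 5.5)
The table follows by comparing $\sgnop\ambip{N_i}{N_i}$ with $\sigma$ in Definition~\ref{def:ordinary_ideal_hyperideal_vertices}. The dual vertex $L_i^\ast=[N_i]$ is ordinary when $\sgnop\ambip{N_i}{N_i}=\sigma$, ideal when $\ambip{N_i}{N_i}=0$, and hyperideal when $\sgnop\ambip{N_i}{N_i}=-\sigma$. Recall that the face causal type is read from $\nu_i=\ambip{N_i}{N_i}$: a timelike normal gives a spacelike face, a spacelike normal a timelike face, and a null normal a null face. Substituting $\nu_i=-1,+1,0$ and $\sigma=\pm1$ then fills in the six entries directly. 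The spacelike/AdS entry is ordinary because $-1=\sigma$, and the timelike/AdS entry is hyperideal because $+1=-\sigma$. The dS column is the mirror image, and the null row is ideal in both cases. The consequences for the all-null and all-hyperideal sectors are immediate from the definitions of an ideal tetrahedron and a hyperideal projective tetrahedron. The only work is to check that the $[N_i]$ really are the vertices of a projective tetrahedron, which holds because the $N_i$ are linearly independent. The standard-hyperideal criterion in the AdS case is just Definition~\ref{def:ordinary_ideal_hyperideal_vertices} restated. One only has to note that spacelike unit normals $N_i$ lie in $\AdS^{3*}$, so admissible representatives exist, up to the sign ambiguity $N_i\mapsto\pm N_i$. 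Hence ``coherent'' refers to a choice of these signs.

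The last assertion carries the real content. Suppose all four dual vertices are ordinary. Normalize each $N_i$ to $\widehat N_i\in\Msig$, which is possible since $\sgnop\ambip{N_i}{N_i}=\sigma$, and view $(\widehat N_1,\dots,\widehat N_4)$ as four points of $\Msig$. The faces of the dual are $F_i^\ast=\mathbb P(L_i^\perp)$ by \eqref{eq:polar_dual_faces}. So the face hyperplane opposite $\widehat N_i$ is $L_i^\perp$, with normal line $L_i$.

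To show that these data form a tetrahedral configuration in the sense of Definition~\ref{def:tetrahedral_configuration}, I take as face normals nonzero generators $w_i\in L_i$, equivalently the dual basis $W_i$ with $\ambip{W_i}{N_j}=\delta_{ij}$. Linear independence of the $w_i$ follows because $(W_i)$ is a basis of $\Vsig$, being dual to the basis $(N_i)$. The candidate vertex opposite face $i$ of this new configuration is $\bigcap_{r\neq i}W_r^\perp=\mathbb R N_i$, because $\ambip{N_i}{W_r}=0$ for $r\neq i$. It therefore meets $\Msig$ exactly in $\pm\widehat N_i$. This is the polarity involution $N\leftrightarrow W$ at the level of Gram matrices, $G\leftrightarrow G^{-1}$. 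So the face-normal lines of the dual configuration are precisely the original candidate vertex lines $L_i$, as claimed.

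The main subtlety is whether the statement requires the $L_i$ to be non-null. If $\mathcal C$ underlies a genuine tetrahedron, each $L_i$ contains $V_i$ and is non-null. In general, though, $L_i$ may be null or have the wrong sign, and then the new configuration is still a linearly independent quadruple of normals, which is all Definition~\ref{def:tetrahedral_configuration} demands. I would make this point explicit and not claim domain admissibility, which is not asserted.
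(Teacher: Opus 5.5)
Your proposal is correct and follows the paper's proof essentially step for step. The table comes from comparing $\sgnop\ambip{N_i}{N_i}$ with $\sigma$. The ideal, hyperideal, and standard-hyperideal statements follow directly from Definition~\ref{def:ordinary_ideal_hyperideal_vertices}. The final claim uses the dual basis $W_i\in L_i$. You also check two points the paper leaves implicit, and both checks are correct. First, the new configuration's candidate vertex lines are $\mathbb R N_i$, so the polarity is involutive with Gram matrix $G^{-1}$. Second, Definition~\ref{def:tetrahedral_configuration} requires no non-nullity of the $L_i$.
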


\begin{proof}
The dual vertex corresponding to ${\widetilde{\Pi}_i}$ is $[N_i]$.  A spacelike support surface has a timelike ambient normal, a timelike support surface has a spacelike ambient normal, and a null support surface has a null ambient normal.  Comparing the sign of $\ambip{N_i}{N_i}$ with the three alternatives in \eqref{eq:ordinary_ideal_hyperideal_vertex_types} gives the table and its ideal and hyperideal consequences.   The standard hyperideal AdS conclusion is
equivalent to the additional coherent segment condition
\eqref{eq:coherent_hyperideal_segment_condition}.  

If all the dual vertices are ordinary, they can be normalized to lie on $\Msig$. 
For each $i$, choose a nonzero
$W_i\in L_i$.  Since the $N_i$ form a basis, $W_i$ may be rescaled so that
\be
\ambip{W_i}{N_j}=\delta_{ij}\,.
\ee
Hence the $W_i$ form the dual basis and are linearly independent.
By
\eqref{eq:polar_dual_faces}, the face opposite $[N_i]$ has ambient normal
line $L_i=[W_i]$.
  Hence the ordinary dual is a tetrahedral configuration in $\Msig$.
\end{proof}

\subsection{All-spacelike sector and the \texorpdfstring{$\SU(2)$}{SU(2)} real form}
\label{subsec:all_spacelike_SU2}

We now isolate the sector in which every face is spacelike.  By the convention of Section \ref{sec:tetrahedra}, this means that every transported intrinsic normal is timelike:
\be
\tanip{n_i}{n_i}=-1\,,
\qquad
i=1,\ldots,4\,.
\ee
Thus all based face holonomies are elliptic,
\be
O_i=\exp\big(\Theta_iJ(n_i)\big)\,,
\label{eq:all_spacelike_elliptic}
\ee
and, choosing the principal spin lift, their spin representatives have the form
\be
H_i
=\cos\frac{\Theta_i}{2}\,\id
 +2\sin\frac{\Theta_i}{2}\,\mathcal T(n_i)\,.
\label{eq:all_spacelike_spin}
\ee
The parameter $\Theta_i$ is the holonomy angle of the Riemannian face.  For $\dS^3$ the spacelike faces have positive intrinsic curvature, while for $\AdS^3$ they have negative intrinsic curvature, so the sign convention for $\Theta_i$ changes exactly as in the spherical and hyperbolic cases of \cite{Haggard:2015ima}.

The comparison with Haggard--Han--Riello
\cite{Haggard:2015ima} is a change of real form, not a map from the actual
$\SL(2,\mathbb R)$ data to $\SU(2)$. This is because the actual Lorentzian holonomies
still live in $\SL(2,\mathbb R)$, and although each face holonomy is elliptic,
elliptic elements do not form a subgroup.  The formal replacement is %
\be
\big(\mathbb R^{1,2},\tanip{\cdot}{\cdot},\SO^+(1,2),\SL(2,\mathbb R)\big)
\longleftrightarrow
\big(\mathbb R^3,\cdot,\SO(3),\SU(2)\big)\,.
\label{eq:real_form_dictionary}
\ee
Under the replacement \eqref{eq:real_form_dictionary}, the algebraic reconstruction mechanism of Theorem \ref{theorem:unified_minkowski} becomes the compact curved Minkowski theorem for spherical or hyperbolic tetrahedra encoded by closing $\SU(2)$ holonomies in \cite{Haggard:2015ima}.  More concretely, the Lorentzian timelike unit normal $n_i$ is replaced by a Euclidean unit normal $\mathbf n_i$, and the corresponding compact face holonomy is
\be
R_i=\exp\big(\Theta_i\,\mathsf J_E(\mathbf n_i)\big)\,,
\qquad
R_i\in\SO(3)\,,
\label{eq:SO3_face_holonomy}
\ee
where $\mathsf J_E(\mathbf u)\mathbf v=\mathbf u\times_E\mathbf v$.  The simple-path convention is unchanged because it only depends on the labeled {tetrahedron}, so closure becomes
\be
R_4R_3R_2R_1=\id
\label{eq:SO3_closure}
\ee
in the Euclidean real form.  The Gram and orientation tests are the same formulas with the Euclidean inner product and cross product:
\be
\Gamma_{ij}=
\begin{cases}
\mathbf n_i\cdot\mathbf n_j\,, & (i,j)\neq(2,4),(4,2)\,,\\[0.12cm]
\mathbf n_2\cdot R_1\mathbf n_4
=\mathbf n_2\cdot R_3^{-1}\mathbf n_4\,, & (i,j)=(2,4),(4,2)\,,
\end{cases}
\label{eq:SU2_Gram}
\ee
and
\be
(\mathbf n_j\times_E\mathbf n_k)\cdot\mathbf n_l>0\,,
\qquad
\{i,j,k,l\}=\{1,2,3,4\}\,,
\label{eq:SU2_triple_signs}
\ee
with the same transported insertion for the special pair $(2,4)$.  Passing from $\SO(3)$ to its spin cover gives the $\SU(2)$ holonomies of \cite{Haggard:2015ima},
\be
U_i=\exp\big(\Theta_i\,\mathbf n_i\cdot\vec\tau_{\SU(2)}\big)\,,
\qquad
U_4U_3U_2U_1=\pm\id\,.
\label{eq:SU2_spin_closure}
\ee
The sign is the central lift of the identity in $\SO(3)$.  Thus the compact real-form statement has the same closure, Gram-matrix, branch-selection, and holonomy-matching structure as the Lorentzian theorem, but it is a different real form rather than a literal image of the original $\SL(2,\mathbb R)$ holonomies.  If one wants an honest $\SU(2)$ flat connection on $S_{0,4}$, the central signs of the lifts are chosen so that $U_4U_3U_2U_1=\id$, exactly as in the central-sign rule \eqref{eq:spin_lift_sign_change}.

\subsection{All-timelike AdS sector and its polar dual}
\label{subsec:all_timelike}

We next consider the sector in which the reconstructed model is $\AdS^3$ and all four faces of our tetrahedron are timelike.  Equivalently, in the notation of Theorem \ref{theorem:unified_minkowski}, this is the branch with $\det G>0$ and hence $\sigma=-1$.  Then the transported intrinsic normals are spacelike,
\be
\tanip{n_i}{n_i}=+1\,,
\qquad i=1,\ldots,4\,,
\ee
and the non-central spin holonomies are hyperbolic.  After choosing spin lifts, this means $|\trspin(H_i)|>2$.  By Proposition \ref{prop:dual_vertex_causal_type}, the polar dual has vertices in $\AdS^{3*}$.
It lies in the standard truncated-hyperideal locus only when
its four normal lines admit coherent lifts satisfying
\eqref{eq:coherent_hyperideal_segment_condition}.  This is the point of
contact with the hyperideal AdS tetrahedra and four-holed-sphere character
varieties studied in \cite{Liu:2025tzv}.  In this subsection we write
\be
(X,Y)_-:=X^\top\mathrm{diag}(-1,-1,1,1)Y\,{,}
\ee
for the ambient pairing of $\mathbb R^{2,2}$. 

The standard hyperideal dual sublocus of this all-timelike sector is
particularly relevant to recent developments connecting Virasoro TQFT
with three-dimensional quantum gravity at negative cosmological constant
\cite{CollierEberhardtZhang2023,CollierEberhardtZhang2024,
Hartman2025,Hartman2025Triangulating}. In triangulation-based
formulations, Virasoro fusion kernels and the closely related
modular-double $b$-$6j$ symbols provide tetrahedral amplitudes. Their
semiclassical asymptotics have recently been related directly to the
volumes and co-volumes of truncated hyperideal tetrahedra in Lorentzian
$\AdS^3$ \cite{Liu:2025tzv}. Hence, on the coherent segment-incidence
sublocus identified below, the polar dual of an all-timelike tetrahedron
belongs to the same geometric class that appears in these asymptotic
formulae. Note that this is a geometric point of contact rather than an identification of
the four face holonomies with the six $b$-$6j$ labels, or of the volume
of the primal tetrahedron with the volume or co-volume of its
hyperideal dual.

\begin{prop}%
\label{prop:edge_adapted_hyperideal_polar_locus}
Let $T\subset\AdS^3$ be an all-timelike finite tetrahedron and let $G$ be
the Gram matrix of unit spacelike representatives of its four face-normal
lines.  Its projective dual is a standard hyperideal AdS tetrahedron in the
edge-length convention of \cite{Liu:2025tzv} if and only if there is a sign
matrix
\be
S_{\varepsilon}=\diag(\varepsilon_1,\ldots,\varepsilon_4)
\ee
such that
\be
(S_{\varepsilon}GS_{\varepsilon})_{ij}<-1
\qquad(i\neq j)\,.
\label{eq:edge_adapted_hyperideal_test}
\ee
The signs are unique up to simultaneous reversal, and the six dual edge
lengths are
\be
\ell_{ij}=\operatorname{arcosh}
\bigl(-(S_{\varepsilon}GS_{\varepsilon})_{ij}\bigr)\,.
\label{eq:polar_edge_lengths_from_normal_gram}
\ee
Equivalently, \eqref{eq:edge_adapted_hyperideal_test} holds precisely when
$|G_{ij}|>1$ for every pair and
\be
\sgnop(G_{ij}G_{jk}G_{ki})=-1
\ee
for every triple of distinct labels.
\end{prop}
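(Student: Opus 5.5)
The proof will reduce everything to the projective-dual description. By Definition~\ref{def:polar_dual_tetrahedron}, the vertex lines of $T^\ast_{\mathbb P}$ are $[N_i]$. Since all faces are timelike, unit representatives satisfy $(N_i,N_i)_-=+1$, so $N_i\in\AdS^{3*}$ by \eqref{eq:AdS_dual_quadric}. The only freedom in choosing representatives $X_i$ of these hyperideal lines inside $\AdS^{3*}$ is the sign, $X_i=\varepsilon_iN_i$, because $\AdS^{3*}$ meets each spacelike line in exactly an antipodal pair. Then $(X_i,X_j)_-=\varepsilon_i\varepsilon_jG_{ij}=(S_\varepsilon GS_\varepsilon)_{ij}$. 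The coherent segment condition \eqref{eq:coherent_hyperideal_segment_condition} in Definition~\ref{def:ordinary_ideal_hyperideal_vertices} therefore reads exactly \eqref{eq:edge_adapted_hyperideal_test}. So the first equivalence is a direct translation, and I will check separately that it matches the edge-length convention of \cite{Liu:2025tzv}.

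For the edge lengths, I will use that for two unit spacelike vectors in $\mathbb R^{2,2}$ with $(X_i,X_j)_-<-1$, the plane they span has signature $(1,1)$. The segment in $\AdS^{3*}$ joining them is a hyperbola arc parametrized by $\cosh$, and its length $\ell_{ij}$ satisfies $\cosh\ell_{ij}=-(X_i,X_j)_-$. This gives \eqref{eq:polar_edge_lengths_from_normal_gram}. I expect this step to be the main obstacle: it is where the external convention enters. In particular, one must check that negativity of the pairing, rather than positivity, corresponds to the segment through the region bounded by the truncation planes $X_i^\perp$. I will state it as the convention of \cite{Liu:2025tzv} and verify only the plane-signature computation.

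Uniqueness of the signs up to simultaneous reversal follows from the sign structure. If $\varepsilon$ and $\varepsilon'$ both work, then $\varepsilon_i\varepsilon_j$ and $\varepsilon'_i\varepsilon'_j$ both equal $-\sgnop G_{ij}$ for every pair. Hence $\varepsilon'_i\varepsilon_i=\varepsilon'_j\varepsilon_j$ for all $i,j$, so $\varepsilon'=\pm\varepsilon$.

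For the final equivalence, necessity is immediate. The condition $|G_{ij}|=|(S_\varepsilon GS_\varepsilon)_{ij}|>1$ holds, and for a triple,
\[
G_{ij}G_{jk}G_{ki}=(S_\varepsilon GS_\varepsilon)_{ij}(S_\varepsilon GS_\varepsilon)_{jk}(S_\varepsilon GS_\varepsilon)_{ki}<0,
\]
since the $\varepsilon$'s appear squared. For sufficiency, I will set $\varepsilon_4=1$ and $\varepsilon_i:=-\sgnop G_{i4}$ for $i\le3$. This makes every entry $(i,4)$ negative. For $i,j\le3$ we then get
\[
\varepsilon_i\varepsilon_jG_{ij}=\sgnop(G_{i4}G_{j4})G_{ij},
\]
whose sign is $\sgnop(G_{ij}G_{j4}G_{4i})=-1$ by the triple condition. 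Combined with $|G_{ij}|>1$, this gives all off-diagonal entries below $-1$. This is the standard switching-class argument for signed complete graphs.
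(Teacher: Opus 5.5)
Your proposal is correct and follows the paper's proof essentially step for step: the substitution $X_i=\varepsilon_iN_i$ with $(X_i,X_j)_-=(S_\varepsilon GS_\varepsilon)_{ij}$, the uniqueness argument showing $\varepsilon_i'\varepsilon_i$ is independent of $i$, and the switching construction from one base label with the triangle sign condition are the paper's argument (you base at label $4$, the paper at label $1$). The paper likewise simply takes $\cosh\ell_{ij}=-(X_i,X_j)_-$ as the standard distance formula; one small correction to your side remark is that $(X_i,X_j)_-<-1$ places $X_i$ and $X_j$ on opposite branches of the unit hyperbola in their $(1,1)$ span, so the arc realizing $\ell_{ij}$ joins $X_i$ to $-X_j$ (equivalently, $\ell_{ij}$ is the length of the spacelike $\AdS^3$ geodesic meeting the truncation planes $X_i^\perp$ and $X_j^\perp$ orthogonally), which gives the same formula.
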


\begin{proof}
Replacing $N_i$ by $X_i=\varepsilon_iN_i$ does not change the
projective vertex $[N_i]$, and gives
\be
(X_i,X_j)_-
=
\varepsilon_i\varepsilon_jG_{ij}
=
(S_{\varepsilon}GS_{\varepsilon})_{ij}\,.
\ee
By Definition \ref{def:ordinary_ideal_hyperideal_vertices}, the polar dual is
a standard hyperideal AdS tetrahedron precisely when these representatives
can be chosen so that
\be
\varepsilon_i\varepsilon_jG_{ij}<-1
\qquad
(i\neq j)\,.
\ee

Put $s_{ij}:=\sgnop(G_{ij})$. Since $|G_{ij}|>1$, the preceding
inequalities are equivalent to
\be
\varepsilon_i\varepsilon_js_{ij}=-1
\qquad
(i\neq j)\,.
\label{eq:coherent_vertex_sign_equations}
\ee
Multiplying the three equations around a triangle gives
\be
s_{ij}s_{jk}s_{ki}=-1\,.
\ee
Conversely, suppose that this triangle condition holds. Set
$\varepsilon_1=1$ and
\be
\varepsilon_j=-s_{1j}
\qquad
(j=2,3,4)\,.
\ee
The equations involving vertex $1$ then hold by construction. For
$j,k\neq1$, the triangle condition for $(1,j,k)$ gives
\be
\varepsilon_j\varepsilon_ks_{jk}
=
s_{1j}s_{1k}s_{jk}
=-1\,.
\ee
Thus
\be
(S_{\varepsilon}GS_{\varepsilon})_{ij}
=
-|G_{ij}|<-1
\qquad
(i\neq j)\,,
\ee
which proves the equivalence.

If $\varepsilon_i'$ is another solution, then
$\delta_i:=\varepsilon_i'\varepsilon_i$ satisfies
$\delta_i\delta_j=1$ for every $i\neq j$. Hence all $\delta_i$ are
equal, so the vertex signs are unique up to simultaneous reversal.

Finally, for $c_{ij}:=(X_i,X_j)_-<-1$, the standard hyperboloid distance formula is
\be
c_{ij}=-\cosh\ell_{ij}\,,
\ee
which proves \eqref{eq:polar_edge_lengths_from_normal_gram}. 
\end{proof}

The converse requires a
separate global check: starting from a standard hyperideal edge length Gram
matrix, its ordinary dual candidate must satisfy the domain 
condition of Section \ref{sec:orientation_convexity}. The following
corollary shows that this condition is automatic. In fact, the
appropriate signed inverse Gram form is entrywise positive, which is
stronger than strict copositivity.

\begin{prop}
\label{cor:standard_hyperideal_polar_copositivity}
Let $G_\ell$ be the edge length Gram matrix of a non-degenerate hyperideal AdS tetrahedron.
Then {it admits a finite domain-contained AdS realization.}
\end{prop}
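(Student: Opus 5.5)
The plan is to take the four hyperideal vertex representatives themselves as face normals, and to check the three hypotheses of Theorem~\ref{theorem:unified_minkowski} directly on $G_\ell$. Write $G:=G_\ell$, with $G_{ii}=1$ and $G_{ij}=-c_{ij}$, where $c_{ij}=\cosh\ell_{ij}>1$ for $i\neq j$. By Definition~\ref{def:ordinary_ideal_hyperideal_vertices} there are representatives $X_i\in\AdS^{3*}$ with Gram matrix $G$. Nondegeneracy of the hyperideal tetrahedron means that the $X_i$ are linearly independent. Lemma~\ref{lemma:ambient_gram_reconstruction} then gives $\Inertia(G)=(2,2)$, hence $\det G>0$ and $\sigma_G=-1$.

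For the vertex minors I would use the symmetric $3\times3$ determinant formula from the proof of Lemma~\ref{lemma:character_principal_minor_squares}:
\[
\det G_{\hat l}=1-2c_{ij}c_{ik}c_{jk}-c_{ij}^2-c_{ik}^2-c_{jk}^2<0\,.
\]
By the interlacing argument of Lemma~\ref{lemma:lorentzian_gram_signature}, this gives $\Inertia(G_{\hat l})=(1,2)$. Lemma~\ref{lemma:vertex_reality} then shows that the dual basis $W_i$, with Gram matrix $G^{-1}$, spans four finite, timelike candidate vertex lines. The realization is therefore the projective dual of Proposition~\ref{prop:dual_vertex_causal_type}, read in reverse: its face normals are $N_i=\varepsilon_iX_i$, and every face is timelike.

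The remaining hypothesis is the domain condition, which I expect to be the main step. The stated claim is that for a suitable sign matrix $S_\varepsilon$ the matrix $S_\varepsilon(-G^{-1})S_\varepsilon$ is entrywise positive, and entrywise positivity with positive diagonal implies strict copositivity. The diagonal is already positive: $(-G^{-1})_{ii}=-\det G_{\hat i}/\det G>0$.

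For the off-diagonal entries, write $C=\operatorname{adj}G$. I would apply the Desnanot--Jacobi identity
\[
C_{ii}C_{jj}-C_{ij}^2=\det G\,\det G_{\widehat{ij}}\,,\qquad \det G_{\widehat{ij}}=1-c_{kl}^2<0\,.
\]
This gives $C_{ij}^2>C_{ii}C_{jj}>0$. So every off-diagonal cofactor is nonzero, and in the normalization \eqref{eq:normalized_domain_form} we have $|B_{ij}|>1$.

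It remains to show that the signs of the $C_{ij}$ are coherent, that is, $\sgnop(C_{ij}C_{jk}C_{ki})=+1$ on every triangle. Granted this, the sign-fixing argument of Proposition~\ref{prop:edge_adapted_hyperideal_polar_locus} produces a unique $S_\varepsilon$, up to global reversal, with $S_\varepsilon CS_\varepsilon$ entrywise negative, so that $S_\varepsilon(-G^{-1})S_\varepsilon$ is entrywise positive.

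To prove coherence, my first attempt is a continuity argument. The region $\{c_{ij}>1,\ \det G>0\}$ carries nonvanishing off-diagonal cofactors by the bound above, so each triangle sign is locally constant on it. I would then evaluate it at one explicit point in each connected component, or show that the region is connected and evaluate at a single point. A convenient base point should have unequal edges, since the regular configuration has $\det G<0$ and is excluded. If connectivity is awkward to establish, the fallback is a direct cofactor expansion, using $\det G>0$ to dominate the terms of indefinite sign.

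Once the correct signs are fixed, the branch-selection rule (Lemma~\ref{lemma:branch_selection}) must agree with $S_\varepsilon$, so that the resulting $\chi_i$ are all positive. Lemma~\ref{lemma:simplicial_halfspace_realization} then yields the finite domain-contained AdS tetrahedron, and Lemma~\ref{lemma:finite_domain_admissibility_criterion} confirms the conclusion, since the AdS interior test is automatic.
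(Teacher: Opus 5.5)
Your reduction is correct up to the final step. The inertia $(2,2)$, the negative vertex minors, the positive diagonal of $-G^{-1}$, and the Desnanot--Jacobi bound $C_{ij}^2>C_{ii}C_{jj}>0$ all hold. The last is a self-contained proof of the bound $|U_{ij}|>1$ (your $|B_{ij}|>1$), which the paper imports from \cite{Liu:2025tzv}.

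The gap is the sign-coherence step, and it carries the whole content of the proposition. Once $|B_{ij}|>1$ holds, the edge test $\varepsilon_i\varepsilon_jB_{ij}>-1$ forces $\varepsilon_i\varepsilon_jB_{ij}>1$. So an admissible realization exists if and only if the off-diagonal signs can be switched to all positive.

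The condition you state is mis-signed. $S_\varepsilon CS_\varepsilon$ has negative off-diagonal entries exactly when $\varepsilon_i\varepsilon_j\sgnop C_{ij}=-1$. Around every triangle you therefore need $\sgnop(C_{ij}C_{jk}C_{ki})=-1$, as in Proposition~\ref{prop:edge_adapted_hyperideal_polar_locus}. Your $+1$ is false. Take $c_{12}=c_{34}=5$ and the other four $c_{ij}=3/2$. Then $\det G=252$, $C_{12}=C_{34}=-93$, and the remaining four off-diagonal cofactors equal $54$, so every triangle product is negative.

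The continuity plan also fails as described, because $\mathcal R:=\{c_{ij}>1,\ \det G>0\}$ is not connected. Relabeling $2\leftrightarrow3$ in this example gives another point of $\mathcal R$ with $C_{12}=54>0$, while $C_{12}$ never vanishes on $\mathcal R$. Finding a base point in every component amounts to the classification into three opposite-pair patterns, which is exactly the input the paper cites from \cite{Liu:2025tzv}. Your fallback cofactor expansion is not specified.

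You can close the gap with your own ingredients by passing through the determinant wall instead of avoiding it.
\begin{enumerate}
\item Desnanot--Jacobi is a polynomial identity, so $C_{ij}^2\geq C_{ii}C_{jj}>0$ on the whole closed set $\{c_{ij}>1,\ \det G\geq0\}$.
\item On $\det G=0$ the matrix has rank three, since its $3\times3$ principal minors are negative. Hence $\operatorname{adj}G=\kappa kk^\top$ with $\ker G=\mathbb Rk$. Then $C_{ii}<0$ forces $\kappa<0$ and every $k_i\neq0$, so $C_{ij}C_{jk}C_{ki}=\kappa^3(k_ik_jk_k)^2<0$.
\item From any point of $\mathcal R$, follow the straight segment to the regular configuration, which you correctly noted has $\det G<0$. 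The segment stays in the convex set $c_{ij}>1$.
\item By step~1, no $C_{ij}$ vanishes along the segment up to and including the first zero of $\det G$. So the triangle signs at the starting point equal the wall values, namely $-1$.
\end{enumerate}
Applying the switching argument of Proposition~\ref{prop:edge_adapted_hyperideal_polar_locus} to $C$ then gives $S_\varepsilon$ with $S_\varepsilon(-G^{-1})S_\varepsilon$ entrywise positive. The rest proceeds as in the paper: take $V_i\propto-W_i$ to get negative supports, then apply Lemma~\ref{lemma:simplicial_halfspace_realization}. Lemma~\ref{lemma:branch_selection} is not needed, because $h_i<0$ already gives all $\chi_i$ a common sign by Proposition~\ref{prop:det_chi}.
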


\begin{proof}
$G_\ell$ satisfies \cite{Liu:2025tzv}
\be
(G_\ell)_{ii}=1\,,
\qquad
(G_\ell)_{ij}=-\cosh\ell_{ij}<-1\,,
\qquad
\det G_\ell>0\,.
\ee
Then there are signs
$\varepsilon_i^{\rm pol}\in\{\pm1\}$, unique up to simultaneous
reversal, with exactly two positive and two negative signs, such that,
for
\be
S_{\rm pol}
:=
\diag\bigl(
\varepsilon_1^{\rm pol}\,,
\ldots\,,
\varepsilon_4^{\rm pol}
\bigr)\,,
\ee
the matrix
\be
-S_{\rm pol}G_\ell^{-1}S_{\rm pol}
\ee
is entrywise positive.
Its equal-sign pairs form one of the
three opposite-pair patterns, as determined by $G_\ell$.
Let $D=\diag(\sqrt{-\lb G_\ell^{-1}\rb_{ii}})$.  The normalized inverse
$U=D^{-1}G_\ell^{-1}D^{-1}$ has diagonal entries $-1$.  The cofactor formulas and the
three standard AdS dihedral-angle patterns give $|U_{ij}|>1$, with two
opposite pairings negative and the other four positive
\cite{Liu:2025tzv}.  Choose $S_{\rm pol}$ constant on the two negative
pairs and opposite across the remaining four pairs.  Then every entry of
$-S_{\rm pol}US_{\rm pol}$ is positive, and
\be
-S_{\rm pol}G_\ell^{-1}S_{\rm pol}
=D\bigl[-S_{\rm pol}US_{\rm pol}\bigr]D
\ee
is entrywise positive as well.  It is therefore strictly copositive.  If
$N_i$ have Gram matrix $S_{\rm pol}G_\ell S_{\rm pol}$ and $W_i$ are the
dual basis, the normalized choices $V_i=-W_i/\sqrt{-\lb G_\ell^{-1}\rb_{ii}}$ have negative
opposite supports.  Lemma \ref{lemma:simplicial_halfspace_realization}
then gives the finite dual tetrahedron.
\end{proof}

The inclusion supplied by {Proposition}
\ref{cor:standard_hyperideal_polar_copositivity} is strict. 
That is, a finite all-timelike AdS tetrahedron {is} not necessarily dual to a standard hyperideal AdS tetrahedron. For example, with $I_2$ denoting the $2\times2$ identity matrix, take
\be
G_*=
\begin{pmatrix}
I_2&-\sqrt2 I_2\\
-\sqrt2 I_2&I_2
\end{pmatrix},
\qquad
-G_*^{-1}=
\begin{pmatrix}
I_2&\sqrt2 I_2\\
\sqrt2 I_2&I_2
\end{pmatrix}.
\ee
Here $G_*$ has inertia $(2,2)$ and every vertex principal minor equals $-1$. Its signed inverse has nonnegative entries and positive diagonal, hence is strictly copositive. Lemma \ref{lemma:simplicial_halfspace_realization} therefore gives a finite all-timelike AdS tetrahedron, whereas $(G_*)_{12}=0$ excludes \eqref{eq:edge_adapted_hyperideal_test}.
}

Following \cite{Liu:2025tzv}, identify $\SL(2,\mathbb R)$ with $\AdS^{3*}$ by the linear map
\be
\Phi(M)=
\left(
\frac{a-d}{2}\,,
\frac{b+c}{2}\,,
\frac{a+d}{2}\,,
\frac{b-c}{2}
\right)\,,
\qquad
M=
\begin{pmatrix}
a&b\\ c&d
\end{pmatrix}
\in\SL(2,\mathbb R)\,.
\label{eq:Phi_SL2R_AdS_dual}
\ee
With our ambient pairing, this normalization gives
\be
(\Phi(A),\Phi(B))_-
=
\frac12\trspin(AB^{-1})\,.
\label{eq:Phi_trace_pairing}
\ee
The group-manifold identification $\Phi$ must not be confused
with the Lie-algebra identification $\mathcal T:\mathbb R^{1,2}\to
\mathfrak{sl}(2,\mathbb R)$ used earlier to extract invariant normal axes.
In the hyperbolic-surface convention of \cite{Liu:2025tzv}, let $A_1,A_2,A_3$ be boundary-curve lifts of an oriented hyperbolic four-holed sphere, chosen so that $\trspin(A_i)<-2$ for $i=1,2,3$, and set
\be
A_4=(A_3A_2A_1)^{-1}\,,
\qquad
A_5=(A_2A_1)^{-1}\,,
\qquad
A_6=(A_3A_2)^{-1}\,.
\label{eq:six_hyperbolic_curves}
\ee

The Fuchsian, maximal-relative-Euler-class hypothesis implies
that the induced lifts $A_4,A_5,A_6$ also satisfy
$\trspin(A_k)<-2$ \cite{Goldman1988}.  The four points
\be
v_1=\Phi(\id)\,,
\qquad
v_2=\Phi(A_2A_1)\,,
\qquad
v_3=\Phi(A_1)\,,
\qquad
v_4=\Phi(A_3A_2A_1)
\label{eq:hyperideal_vertices_from_SL2R}
\ee
then satisfy
\be
\begin{array}{lll}
(v_1,v_2)_-=\frac12\trspin(A_5)\,,&
(v_1,v_3)_-=\frac12\trspin(A_1)\,,&
(v_1,v_4)_-=\frac12\trspin(A_4)\,,\\[0.12cm]
(v_2,v_3)_-=\frac12\trspin(A_2)\,,&
(v_2,v_4)_-=\frac12\trspin(A_3)\,,&
(v_3,v_4)_-=\frac12\trspin(A_6)\,.
\end{array}
\label{eq:hyperideal_pairing_traces}
\ee
Hence $(v_i,v_j)_-<-1$ for all $i\neq j$.  The projective line segment joining $v_i$ and $v_j$ therefore meets the projective model of $\AdS^3$, and the four points determine a truncated hyperideal AdS tetrahedron \cite{Liu:2025tzv}.

The construction in
\eqref{eq:hyperideal_vertices_from_SL2R} is a standard Fuchsian
parametrization of a hyperideal tetrahedron, whereas the present
reconstruction first produces the polar lines $[N_i]$ from the invariant
axes of four different based face holonomies $H_i$.  

\subsection{All-null sector and its polar dual}
\label{subsec:all_null}

Finally, we consider the sector in which every face holonomy is nontrivial {and }parabolic.  In spin {representation}, this means
{\be
H_i=\epsilon_i\big(\id+\mathcal T(\kappa_i)\big)\,,
\qquad
\epsilon_i=\pm1\,,
\qquad
\tanip{\kappa_i}{\kappa_i}=0\,,
\label{eq:all_null_spin_data}
\ee}
as in \eqref{eq:spin_closed_form}, and equivalently, $|\trspin(H_i)|=2$.  For each face one may choose any nonzero null representative $n_i$ on the line $\mathbb R\kappa_i$ and write $\kappa_i=\Theta_i n_i$.

If all four faces are null, then the four dual vertices $[N_i]$ lie on the projectivized light cone \eqref{eq:projective_boundary_lightcone}.  Thus the polar dual tetrahedron of an all-null-face tetrahedron is an ideal tetrahedron. This relation between null tetrahedra and ideal tetrahedra is similar to that in \cite{MeusburgerScarinci2022}. 

There is a parallel statement on the character-variety side.  A nontrivial parabolic element of $\SL(2,\mathbb R)$ has a unique invariant line in $\mathbb R^2$, but for cluster and Chern--Simons coordinates one usually works with framed local systems: together with a representation $\rho:\pi_1(S_{0,4})\to\SL(2,\mathbb R)$, one chooses an invariant flag
\be
\xi_i\in\mathbb{R}\mathrm{P}^1\,,
\qquad
\rho(\gamma_i)\xi_i=\xi_i
\ee
at each hole {of $S_{0,4}$}.  The framed parabolic character space is therefore schematically
\be
\mathcal X_{\mathrm{fr}}(S_{0,4},\SL(2,\mathbb R))
=
\left\{
(\rho,\xi_1,\ldots,\xi_4)
\mid
\rho(\gamma_i)\ \hbox{is parabolic and}\ \rho(\gamma_i)\xi_i=\xi_i
\right\}/\SL(2,\mathbb R)\,.
\label{eq:framed_parabolic_character_space}
\ee

Here $\xi_i$ encodes the projective null-normal line, while the
nilpotent logarithm vector $\kappa_i$ is its trace representative
\cite{DimofteGaiottoVanderVeen:2015}.  Choosing $n_i$ and the compensating
coordinate $\Theta_i$ fixes a geometric representative without changing the
determinant or principal-minor tests.  Thus the all-null sector is naturally
real framed parabolic $\SL(2,\mathbb R)$ data.
After complexification to the usual $\PSL(2,\mathbb C)$
state-integral setting,  analogous complexified
ideal-tetrahedron data are described by shape parameters and gluing
equations. This is the standard geometric input behind the volume conjecture,
quantum hyperbolic invariants, and complex Chern--Simons state integrals
\cite{Kashaev:1997,MurakamiMurakami:2001,Gukov:2005,
BaseilhacBenedetti:2004,DimofteGaroufalidis:2013,
murakami2010introduction,costantino20076j}. In the present real theorem, this
comparison is only a guide: the actual null reconstruction data remain real
framed parabolic data together with their nilpotent logarithm vectors
$\kappa_i$.

\section{Discussion and Outlook}
\label{sec:conclusion_discussion}

{Theorems~\ref{theorem:unified_minkowski} and
\ref{theorem:character_geometric_chambers} relate finite Lorentzian
tetrahedra to their based face holonomies and characterize the
corresponding loci in the relative $\SL(2,\mathbb R)$ character variety
by polynomial inequalities. The reconstruction allows spacelike,
timelike, and null faces, is unique up to ambient isometry, and recovers
the prescribed Levi--Civita face holonomies exactly. Its essential global
ingredient is strict copositivity of the signed inverse Gram form on the
outward branch. This controls the whole tetrahedron, not just its
vertices. The degenerate Gram strata distinguish null candidate vertex lines,
 rank-three normal configurations, and lower-rank collapse.  }

The chamber description also raises global questions.
{The polynomial tests allow one to investigate nonemptiness,
connected components, and their changes as the boundary conjugacy
classes vary. The boundary of a finite reconstruction locus can arise
from failure of the strict domain condition even when the Gram matrix
and all vertex principal minors remain nonzero. It would be useful to
determine which limiting geometries occur at these different boundaries.
Extending exact holonomy reconstruction to null vertex lines or to
branches passing through infinity (\eg similar to the reconstruction in \cite{Haggard:2015ima}) would also require suitable boundary
and transport conventions.}

{Projective duality gives a geometric route to volume observables.
On the locus where the projective dual tetrahedron is a standard
hyperideal AdS tetrahedron, one can seek expressions for its volume,
co-volume, and their Schl\"afli variations in the original trace
coordinates \cite{milnor1994schlafli,luo20083,luo2014volume,bobenko2015discrete}. This would permit a direct comparison with the tetrahedral
geometry in modular-double $b$-$6j$ asymptotics
\cite{TeschnerVartanov:2014,Liu:2025tzv}. Such a comparison requires an
explicit relation between the original face holonomies and the variables
of the dual tetrahedron.}

The reconstruction chambers provide a natural classical domain for
quantizing Lorentzian curved tetrahedra.
{On the smooth locus represented by irreducible flat connections,
the relative character variety carries the Goldman--Atiyah--Bott
symplectic structure
\cite{Goldman:1984,AtiyahBott:1983,GuruprasadHuebschmannJeffreyWeinstein1997}
and is the reduced Chern--Simons phase space. The trace polynomials are
functions of Wilson-loop observables, so the theorem identifies which
classical boundary data admit a finite tetrahedral interpretation.
The compact construction using quantum-group intertwiners and coherent
states provides a useful comparison
\cite{HanHsiaoPan:2024,HsiaoPan:2025}. For the Lorentzian theory, a central
problem is to 
construct quantum states describing finite tetrahedra for which, in the
semiclassical regime, the spectral values of the relevant geometric operators
satisfy the reconstruction conditions,
and to implement their domain inequalities in a
noncommutative observable algebra.}

Four-holed-sphere flat connections also occur as local boundary phase
spaces in four-dimensional spinfoam models with nonzero cosmological constant
\cite{Han:2021tzw,Han:2025mkc}.
{When the relevant boundary data take values in the Lorentzian real
form, the present theorem supplies a test for a finite tetrahedral cell
with spacelike, timelike, or null faces. Applying it to several cells
requires compatibility of their shared geometry.}
A full spinfoam application must additionally impose gluing, shape
matching, and four-dimensional critical-point equations
\cite{Han:2024dpa,Pan:2025geo}. In such  cases, configurations in local boundary phase spaces will be integrated out so that semiclassical geometric solutions are determined by the critical-point equations.
{A concrete next step is to determine whether these
geometric solutions lie in the reconstruction loci found here and how
the domain inequalities constrain the allowed causal configurations.}

\section*{Acknowledgements}
The authors thank Shuang Ming and Diandian Wang for multiple inspiring and useful discussions, especially on classical and quantum hyperbolic geometry, $b$-$6j$ symbols and related topics. 
QP acknowledges the support from the Shuimu Tsinghua Scholar Program of Tsinghua University. HL acknowledges the support from the National Natural Science Foundation of China (Grant No. 12505081) and the start-up funding from Westlake University.

\appendix
\renewcommand\thesection{\Alph{section}}

\section{Coordinate-support domain obstructions}
\label{subsec:character_copositivity_counterexample}

{This appendix explains why the global domain condition in
Theorem~\ref{theorem:unified_minkowski} is needed in addition to
closure and the local Gram conditions. We first organize failures
of the domain condition according to their location in the coefficient
simplex. We then give an exact rational closing holonomy quadruple
whose candidate vertices are finite but whose domain condition fails
for every choice of vertex rays.}

\subsection{Coordinate-support obstruction}
\label{subsec:coordinate_support_obstructions}

{This subsection explains where the domain condition
\eqref{eq:strict_inverse_gram_copositivity} can fail in the coefficient
simplex, complementing the finite tests of
Lemma~\ref{lemma:finite_domain_admissibility_criterion}.}

{Let $N_1,\ldots,N_4\in\mathbb V_\sigma$ be linearly independent
normals with Gram matrix $G$, and let $W_i$ be their dual basis.
Choose signs $\varepsilon_i\in\{\pm1\}$ selecting the rays
$\mathbb R_{>0}(\varepsilon_iW_i)$, and let
$$
S_\varepsilon:=\diag(\varepsilon_1,\ldots,\varepsilon_4)\,,
\qquad
A:=\sigma S_\varepsilon G^{-1}S_\varepsilon\,,
$$
where $\sigma=-\sgnop(\det G)$.
This is the signed quadratic form used in
\eqref{eq:character_signed_vertex_gram}.
We do not assume that $A_{ii}>0$.}

{For $0\neq\lambda\in\mathbb R_{\geq0}^4$, put
$$
X(\lambda):=\sum_i\lambda_i\varepsilon_iW_i\,,
\qquad
\sigma\ambip{X(\lambda)}{X(\lambda)}
=\lambda^\top A\lambda\,.
$$
Positive, zero, and negative values correspond respectively to rays
in the selected domain, null rays, and rays outside that domain.}

  For each nonempty  {index set}
$I\subseteq\{1,2,3,4\}$ define
\be
\Delta_I:=\left\{\lambda\in\mathbb R_{\geq0}^4\ \middle|\
\lambda_j=0\ (j\notin I)\,,\quad
\sum_{i\in I}\lambda_i=1\right\}\,,
\qquad
c_I:=\min_{\lambda\in\Delta_I}\lambda^{\top}A\lambda\,.
\label{eq:appendix_support_minima}
\ee
{Here $I$ specifies the coordinates allowed to be nonzero, and
$\dim\Delta_I=|I|-1$. The $2^4-1=15$ nonempty index sets correspond to
four vertices, six edges, four triangular faces, and the full simplex.}
Strict copositivity is equivalent to $c_I>0$ for every nonempty $I$.

{Since $\Delta_J\subseteq\Delta_I$ for $J\subseteq I$, we have
$c_I\leq c_J$. The table considers $c_I\leq0$ with $c_J>0$ for every
nonempty proper subset $J\subsetneq I$. The failure therefore occurs
in the relative interior of $\Delta_I$. Each $I$ is tested separately,
although the table groups the outcomes by $|I|$.}

\begin{center}
\begingroup
\setlength{\tabcolsep}{4pt}
\renewcommand{\arraystretch}{1.2}
{\begin{tabular}{@{}c l l l@{}}
$|I|$ & $\Delta_I$ & $c_I=0$ & $c_I<0$\\
\hline
$1$ & vertex & null candidate vertex ray & candidate ray outside the domain\\
$2$ & edge & null ray in edge interior & negative interval in edge interior\\
$3$ & triangular face & null ray in face interior & negative region in face interior\\
$4$ & full simplex & excluded & negative interior region (dS only)
\end{tabular}}
\endgroup
\end{center}

{If $|I|=4$ and $c_I=0$, positivity on proper coordinate faces
places a minimizer $\lambda_*$ in the interior. Constrained minimization gives
$$
A\lambda_*=\nu\mathbf1\,,
\qquad
\nu=\lambda_*^\top A\lambda_*=0\,,
$$
contradicting the invertibility of $A$.
The table concerns domain obstructions, not additional reconstruction
statements.}

{On the closing $\SL(2,\mathbb R)$ character image with
$\Delta_{\mathbf t}\neq0$, the identity $m_i=-4P_i^2\leq0$ excludes
the negative vertex case. When every $P_i\neq0$, Proposition
\ref{prop:character_polynomial_domain_criterion} gives explicit domain
tests, with a possible additional interior obstruction only in dS.}

\subsection{An exact finite character obstruction}
\label{subsec:exact_finite_character_obstruction}

{We now show that} the strict domain condition in
\eqref{eq:character_dS_chamber}--\eqref{eq:character_AdS_chamber} is not a
consequence of closure, Gram nondegeneracy, and the four nonzero {vertex} principal
minors.  We give {an exact rational counterexample with four finite candidate vertices.}
{The following rational
$\SL(2,\mathbb R)$ quadruple has a nondegenerate de Sitter Gram
matrix and four finite candidate vertices, but no choice of
vertex rays gives a copositive domain form.
All inequalities below are verified by exact rational arithmetic.}

{For rational $a,b,c$ with $a\neq0$, put
}\begin{equation}
{M(a,b,c):=
\begin{pmatrix}
a&b\\
c&(1+bc)/a
\end{pmatrix}\in\SL(2,\mathbb Q)\,.
\label{eq:counterexample_Mabc}
}\end{equation}
  {Every terminating decimal in the next display denotes the corresponding
exact rational number, not a rounded matrix entry.  Define
}\begin{equation}
{\begin{aligned}
H_1&:=M(0.39595562,-1.83345258,1.55212433)\,,\\
H_2&:=M(-1.64287192,1.83989348,1.36514782)\,,\\
H_3&:=M(-15.87627484,0.97269495,-7.96947054)\,,\\
H_4&:=(H_3H_2H_1)^{-1}\,.
\end{aligned}
\label{eq:counterexample_exact_H}
}\end{equation}
  {The identity $a(1+bc)/a-bc=1$ proves } $\det H_i=1$   {for
$i=1,2,3$, and hence also for $H_4$.  By definition,
}\begin{equation}
{H_4H_3H_2H_1=\id
\label{eq:counterexample_exact_closure}
}\end{equation}
  {exactly.  Direct rational comparison gives
}\begin{equation}
{\begin{gathered}
-4.266<\trspin(H_1)<-4.265\,,
\qquad -3.781<\trspin(H_2)<-3.780\,,\\
-15.452<\trspin(H_3)<-15.450\,,
\qquad 9.807<\trspin(H_4)<9.808\,,
\end{gathered}
}\end{equation}
  {so all four matrices are noncentral.
}

{Construct the exact rational traces $\mathbf t,x,y,z$ and the matrix
$\mathcal G_{\mathbf t}$ from
\eqref{eq:adapted_fricke_coordinates} and
\eqref{eq:character_connected_gram}.  Integer arithmetic gives the exact {rational bounds}
}\begin{equation}
{-33328378<\Delta_{\mathbf t}<-33328377
\label{eq:counterexample_delta_interval}
}\end{equation}
 and
  \begin{equation}
{\begin{aligned}
-1469380&<m_1<-1469379\,,\\
-2470776&<m_2<-2470775\,,\\
-49013&<m_3<-49012\,,\\
-\frac{11}{10^7}&<m_4<-\frac{27}{25000000}\,.
\end{aligned}
\label{eq:counterexample_minor_intervals}
}\end{equation}
  {Every comparison here and below is exact: the terminating decimal endpoints
mean the corresponding rationals and the inequalities are checked by
clearing positive denominators.  Thus $\Delta_{\mathbf t}<0$ and all four
$m_i$ are strictly negative.  This is a finite nondegenerate de Sitter
candidate.
}

{Set $Q:=\mathcal G_{\mathbf t}^{-1}$.  Exact evaluation via
$Q=\operatorname{adj}(\mathcal G_{\mathbf t})/\Delta_{\mathbf t}$ gives
}\begin{equation}
{\begin{aligned}
0.04408&<Q_{11}<0.04409\,,
&0.07413&<Q_{22}<0.07414\,,\\
0.001470&<Q_{33}<0.001471\,,
&3.27\mathbin{\times}10^{-14}&<Q_{44}<3.28\mathbin{\times}10^{-14}\,,\\
0.006023&<Q_{41}<0.006024\,,
&0.004935&<Q_{42}<0.004936\,,\\
0.002018&<Q_{43}<0.002019\,,
\end{aligned}
\label{eq:counterexample_Q_intervals}
}\end{equation}
  {and
}\begin{equation}
{\begin{aligned}
0.00003628&<Q_{41}^2-Q_{44}Q_{11}<0.00003629\,,\\
0.00002435&<Q_{42}^2-Q_{44}Q_{22}<0.00002436\,,\\
0.000004074&<Q_{43}^2-Q_{44}Q_{33}<0.000004075\,.
\end{aligned}
\label{eq:counterexample_binary_obstructions}
}\end{equation}

{Since \eqref{eq:counterexample_delta_interval} gives
} $\sigma_{\mathcal G}=+1$,   {a ray-sign choice
$\varepsilon\in\{\pm1\}^4$ has domain matrix
$A_\varepsilon=S_\varepsilon QS_\varepsilon$.  Copositivity of a symmetric
$2\times2$ principal block
\mbox{%
$\left(\begin{smallmatrix}a&b\\b&c\end{smallmatrix}\right)$
}%
with
{$a,c>0$} requires
}\begin{equation}
{b+\sqrt{ac}\geq0\,.
\label{eq:binary_copositive_necessary}
}\end{equation}

{Indeed, if $b+\sqrt{ac}<0$, then
\be
\begin{pmatrix}\sqrt c&\sqrt a\end{pmatrix}
\begin{pmatrix}a&b\\b&c\end{pmatrix}
\begin{pmatrix}\sqrt c\\\sqrt a\end{pmatrix}
=2\sqrt{ac}\bigl(b+\sqrt{ac}\bigr)<0\,.
\ee
}
  If
$\varepsilon_j\neq\varepsilon_4$, then the $(4,j)$ entry of
$A_\varepsilon$ is $-Q_{4j}$.  Equations
\eqref{eq:counterexample_Q_intervals}--
\eqref{eq:counterexample_binary_obstructions} give
\begin{equation}
{-Q_{4j}+\sqrt{Q_{44}Q_{jj}}<0\,,
\qquad j=1,2,3\,,
}\end{equation}
 {so \eqref{eq:binary_copositive_necessary} fails.  Hence copositivity would
force $\varepsilon_1=\varepsilon_2=\varepsilon_3=\varepsilon_4$, in which
case $A_\varepsilon=Q$.
}

{Finally take the exact rational vector
}\begin{equation}
{\lambda=\frac{1}{1000}(293,230,477,0)^{\top}
\in\mathbb R_{\geq0}^4\,.
}\end{equation}
{Exact rational evaluation gives
}\begin{equation}
{-0.001233<\lambda^{\top}Q\lambda<-0.001232\,.
\label{eq:counterexample_witness_interval}
}\end{equation}
 Thus $Q$ is not copositive either.  No one of the sixteen sign choices is
copositive, and in particular the uniquely selected outward branch is not
strictly copositive.  This closing rational character point proves that the
finite domain-admissibility hypothesis
is not implied by closure, the determinant sign, and the four
displayed vertex principal-minor tests.

\section{{Numerical reconstruction and
domain-obstruction tests}}
\label{subsec:global_branch_examples}

 {This appendix presents two numerical tests of
Theorem~\ref{theorem:unified_minkowski}.
The first supports reconstruction of a finite anti-de Sitter
tetrahedron with mixed causal face types. The second passes the
local Gram and branch-sign tests but fails the global domain
condition numerically. The holonomies are defined exactly,
while the reconstruction tests are evaluated numerically.
No proof in the paper depends on these computations.}

We work in the fixed based tangent model
\begin{equation}
(T_v\mathcal M,\eta_T)\simeq(\mathbb R^{1,2},\mathrm{diag}(-1,1,1))
\end{equation}
with the cross-product convention of Section \ref{sec:unified_notation}.  In this basis
\begin{equation}
J(u^0,u^1,u^2)=
\begin{pmatrix}
0&u^2&-u^1\\
u^2&0&-u^0\\
-u^1&u^0&0
\end{pmatrix}\,.
\label{eq:J_matrix_examples}
\end{equation}
For each example we prescribe $O_1,O_2,O_3$ explicitly and set
\begin{equation}
O_4:=(O_3O_2O_1)^{-1}\,.
\label{eq:example_O4_by_closure}
\end{equation}
Thus the based closure relation $O_4O_3O_2O_1=\id$ is exact.
{The exponential definitions and
\eqref{eq:example_O4_by_closure} specify the exact input data.
Matrices marked with $\approx$ are decimal approximations. From the fixed lines of the $O_i$, after the
branch choice indicated in each example, we
form the candidate Gram matrix $G$ using
\eqref{eq:reconstructed_gram}, including the exceptional entry
$G_{24}=\tanip{n_2}{O_1n_4}$.

\subsection{{An anti-de Sitter reconstruction test}}
 \label{subsec:mixed_ads_reconstruction_example}

{Consider the outward representatives
}\begin{equation}
{n_1=(-1,-1,0)\,,\qquad
n_2=(\cosh0.4,\,\sinh0.4,\,0)\,,\qquad
n_3=(0,0,1)\,,
\label{eq:mixed_reconstruction_normals}
}\end{equation}
{with squared norms $0,-1,+1$, respectively.  For the null face we use the
scale convention $\Theta_1=-1$, so its nilpotent logarithm vector is
$\kappa_1=\Theta_1n_1=-n_1=(1,1,0)$.  Define
}\begin{equation}
{\begin{aligned}
O_1&=\exp\big(-J(n_1)\big)\,,&
O_2&=\exp\big(-0.4J(n_2)\big)\,,\\
O_3&=\exp\big(-0.4J(n_3)\big)\,,&
O_4&=(O_3O_2O_1)^{-1}\,.
\end{aligned}
\label{eq:mixed_reconstruction_O}
}\end{equation}
{Thus $O_4O_3O_2O_1=\id$ exactly.  In particular,
}\begin{equation}
{O_1=
\begin{pmatrix}
1.5&-0.5&-1\\
0.5&0.5&-1\\
-1&1&1
\end{pmatrix}\,,
\qquad
O_4\approx
\begin{pmatrix}
1.416232395&0.459043464&0.891623965\\
0.745912349&1.076449112&0.630589044\\
0.670320046&-0.227987306&1.182095915
\end{pmatrix}\,.
}\end{equation}
The first three inputs are, respectively, nontrivial parabolic, elliptic,
and hyperbolic elements.  Moreover,
$\trvec(O_4)\approx3.674777422>3$, so
$O_4$ lies numerically in the nontrivial hyperbolic class.
Its selected candidate unit spacelike fixed representative is
\begin{equation}
n_4\approx(0.483412839,\,0.879436977,\,-0.678438337)\,,
\qquad
\tanip{n_4}{n_4}\approx1\,.
\end{equation}
{The exceptional transport equality is numerically
}\begin{equation}
{\tanip{n_2}{O_1n_4}
=\tanip{n_2}{O_3^{-1}n_4}
\approx-0.483412839\,,
}\end{equation}
where the equality follows exactly from closure.  Hence
the resulting numerical candidate outward Gram matrix is
\begin{equation}
{{G}\approx
\begin{pmatrix}
0&0.670320046&0&-0.396024138\\
0.670320046&-1&0&-0.483412839\\
0&0&1&-0.678438337\\
-0.396024138&-0.483412839&-0.678438337&1
\end{pmatrix}\,.
\label{eq:mixed_reconstruction_gram}
}\end{equation}
{The determinant, vertex principal determinants, and transported triples are
}\begin{align}
{\det {G}}&{\approx0.170979016>0\,,\nonumber}\\
{\bigl(\det G_{\hat i}\bigr)_{i=1}^4}
&{\approx(-0.773409396,\,-0.156835118,\,-0.035837481,\,-0.449328964)\,,
\label{eq:mixed_reconstruction_determinants}}\\
{(\chi_1,\chi_2,\chi_3,\chi_4)
}&{\approx(0.879436977,\,0.396024138,\,0.189307899,\,0.670320046)\,.
\label{eq:mixed_reconstruction_triples}
}\end{align}
At the displayed precision, the representatives have the
common-positive candidate branch-sign pattern.
The identities
{$\det G_{\hat i}=-\chi_i^2$} are numerically satisfied
entry by entry.

{For the anti-de Sitter sign, the global domain form is
}\begin{equation}
{A:=-G^{-1}}
\approx
\begin{pmatrix}
4.523417062&0.996279782&2.857192783&4.211425899\\
0.996279782&0.917276995&1.053346928&1.552605256\\
2.857192783&1.053346928&0.209601632&1.782920519\\
4.211425899&1.552605256&1.782920519&2.627977256
\end{pmatrix}\,.
\label{eq:mixed_reconstruction_signed_inverse}
\end{equation}

An 80-digit numerical evaluation gives $A_{ij}>0.2096$ for all entries,
with the smallest at $(3,3)$, approximately $0.209601631751$.
Thus, numerically, for every $0\neq\lambda\in\mathbb R_{\geq0}^4$,
\begin{equation}
{\lambda^\top {A}\lambda
=\sum_{i,j}{A_{ij}}\lambda_i\lambda_j
>0.2096\left(\sum_i\lambda_i\right)^2>0\,.
\label{eq:mixed_reconstruction_copositivity}
}\end{equation}

{The numerical tests above support membership in the anti-de Sitter reconstruction chamber and illustrate} the unique labeled finite
domain-admissible tetrahedron supplied by Theorem
\ref{theorem:unified_minkowski}, realizing the exact holonomies in
\eqref{eq:mixed_reconstruction_O}.

{Using a numerical Gram factorization of $G$
in the ambient metric $\diag(-1,-1,1,1)$, the four candidate
vertex rays may be normalized individually as follows.}
\begin{equation}
{\begin{aligned}
V_1&\approx(0.163167,\,1.035641,\,-0.304735,\,-0.079454)\,,\\
V_2&\approx(-0.806957,\,0.728893,\,-0.403552,\,-0.140039)\,,\\
V_3&\approx(-0.067939,\,2.362880,\,1.895378,\,-0.997677)\,,\\
V_4&\approx(-0.066996,\,1.085967,\,0.265838,\,0.336366)\,.
\end{aligned}
\label{eq:mixed_reconstruction_vertices}
}\end{equation}
To the displayed precision, they satisfy
$\ambip{V_i}{V_i}=-1$, while the opposite outward supports are
\begin{equation}
\bigl(\ambip{V_i}{N_i}\bigr)_{i=1}^4
\approx(-0.470183,\,-1.044118,\,-2.184252,\,-0.616864)\,.
\end{equation}
{Here $N_i$ denotes the corresponding ambient normal
in the chosen numerical Gram factorization.}

\subsection{An anti-de Sitter domain-obstruction test}
\label{subsec:mixed_ads_domain_obstruction}

{Choose the normal representatives}
$$
\begin{aligned}
n_1:=\kappa_1&=(-1\,,-1\,,0)\,,\\
n_2&=(\cosh0.8\,,-\sinh0.8\,,0)\,,\\
n_3&=\left(\sinh0.8\,,
\frac{\cosh0.8}{\sqrt2}\,,
\frac{\cosh0.8}{\sqrt2}\right)\,.
\end{aligned}
$$
{Their squared norms are $0$, $-1$, and $+1$, respectively.
We use the null scale convention $\Theta_1=1$.}
The prescribed holonomies are
$$
\begin{aligned}
O_1&=\exp\big(J(\kappa_1)\big)\,,
&
O_2&=\exp\big(J(n_2)\big)\,,\\
O_3&=\exp\big(0.8\,J(n_3)\big)\,,
&
O_4&=(O_3O_2O_1)^{-1}\,.
\end{aligned}
$$
Thus $O_1$ is parabolic, $O_2$ is elliptic, and $O_3$ is
hyperbolic.
{The value $\trvec(O_4)\approx2.301884$ indicates that
$O_4$ is elliptic. Its selected fixed representative is}
$$
n_4\approx(-1.448075\,,0.271183\,,1.011623)\,,
\qquad
\tanip{n_4}{n_4}\approx-1\,.
$$

{The reconstructed Gram matrix is}
$$
G\approx
\begin{pmatrix}
0&2.225541&-0.057603&-1.719258\\
2.225541&-1&-2.027674&1.357598\\
-0.057603&-2.027674&1&2.499205\\
-1.719258&1.357598&2.499205&-1
\end{pmatrix}\,.
$$
The determinant and vertex principal determinants are
$$
\begin{aligned}
\det G&\approx2.232581>0\,,\\
\bigl(\det G_{\hat i}\bigr)_{i=1}^4
&\approx
(-4.245038\,,-2.457511\,,-2.480216\,,-4.429824)\,.
\end{aligned}
$$
The four transported triples are
$$
(\chi_1\,,\chi_2\,,\chi_3\,,\chi_4)
\approx
(2.060349\,,1.567645\,,1.574870\,,2.104715)\,.
$$
{These signs indicate anti-de Sitter inertia, four finite
candidate vertices, and the positive-triple outward branch.
It remains to check the global domain condition.}

{Set $A:=-G^{-1}$. Its $(1,2)$ principal block is}
$$
\begin{pmatrix}
A_{11}&A_{12}\\
A_{12}&A_{22}
\end{pmatrix}
\approx
\begin{pmatrix}
1.901404&-2.310531\\
-2.310531&1.100749
\end{pmatrix}\,.
$$
{Evaluation at the nonnegative coefficient vector}
$$
\lambda=\left(\frac9{20}\,,\frac{11}{20}\,,0\,,0\right)^\top
$$
{gives}
$$
\lambda^\top A\lambda\approx-0.425702<0\,.
$$
{This gives numerical evidence that $A$ is not copositive.
Because $\lambda$ has only its first two components nonzero,
the obstruction occurs on the projective edge joining the
first two candidate vertex rays. Thus the local Gram and
branch-sign tests pass, while the global domain-containment
test fails numerically.}

\bibliographystyle{alpha}
\bibliography{GSF-short}

\end{document}